\newtheorem{theorem}{Theorem}[section]
\newtheorem{proposition}[theorem]{Proposition}
\newtheorem{corollary}[theorem]{Corollary}
\newtheorem{definition}[theorem]{Definition}
\newcommand{\ds}{\displaystyle}
\numberwithin{equation}{section}
\title{Direct and inverse scattering problems for the first-order discrete system associated with the derivative NLS system}
\author{T. Aktosun and R. Ercan\\
Department of Mathematics\\
University of Texas at Arlington\\
Arlington, TX 76019-0408, USA}
\date{}
\begin{document}

\maketitle

\begin{abstract}
The direct and inverse scattering problems are analyzed
for a first-order discrete system
associated with the semi-discrete version of the derivative NLS system.
The Jost solutions, the scattering coefficients, the bound-state dependency and norming
constants are investigated and related to the corresponding quantities for two
particular discrete linear systems associated with the semi-discrete
version of the NLS system.
The bound-state data set with any multiplicities is described
in an elegant manner in terms of a pair of constant matrix triplets.
Several methods are presented to the solve the inverse problem.
One of these methods involves a discrete Marchenko system
using as input the scattering data set consisting
of the scattering coefficients and the bound-state information,
and this method is presented in a way generalizable
to other first-order systems both in the discrete and continuous cases
for which a Marchenko system is not yet available.
Finally, using the time-evolved scattering data set,
the inverse scattering transform is applied on the corresponding semi-discrete derivative NLS system, and in the reflectionless case
certain explicit solution formulas are presented
in closed form expressed in terms of the two matrix triplets.
\end{abstract}

\section{Introduction}
\label{sec:section1}

In this paper we are interested in analyzing the direct and inverse scattering problems for the  first-order discrete system
\begin{equation}\label{1.1}
\begin{bmatrix}
\alpha_n\\
\noalign{\medskip}
\beta_n
\end{bmatrix}=
\begin{bmatrix}
z & \left(z-\ds\frac{1}{z}\right)q_n\\
\noalign{\medskip}
z\, r_n &\ds\frac{1}{z}+ \left(z-\ds\frac{1}{z}\right)q_n\,r_n
\end{bmatrix}
\begin{bmatrix}
\alpha_{n+1}\\
\noalign{\medskip}
\beta_{n+1}
\end{bmatrix},\qquad n\in\mathbb{Z},
\end{equation}
where $z$ is the spectral parameter taking values on the unit circle $\mathbb{T}$ in the complex $z$-plane $\mathbb{C},$ $n$ is the discrete independent variable taking values in the set of integers $\mathbb{Z}$,  the complex-valued scalar quantities
$q_n$ and $r_n$ correspond the respective values evaluated at
$n$ for the potential pair $(q,r),$ and
$\begin{bmatrix}
\alpha_n\\
\beta_n
\end{bmatrix}$ corresponds to the value of the wavefunction at the spacial location
$n.$
We assume that $q_n$ and $r_n$ are rapidly decaying in the sense that they vanish faster than any negative powers of $|n|$ as $n\to\pm\infty$.  We also assume that
\begin{equation}\label{1.1a}
1-q_nr_n\ne 0,\quad 1+q_nr_{n+1}\ne 0, \qquad n\in\mathbb{Z}.
\end{equation}
The complex-valued quantities $\alpha_n$ and $\beta_n$ depend on
the spectral parameter $z,$ but in our notation we usually suppress
that $z$-dependence.

The system in \eqref{1.1} is used as a model for an infinite lattice where
the particle with an internal structure at the lattice point $n$ experiences local forces
from the potential values $q_n$ and $r_n.$ Since we assume
that $q_n$ and $r_n$ vanish sufficiently fast as $n\to\pm\infty$, a scattering scenario can be established for \eqref{1.1}.\par

The direct scattering problem for \eqref{1.1} is described as
the determination of the scattering data set consisting of the scattering coefficients and bound-state information when the potential pair $(q,r)$ is known. The inverse scattering problem for \eqref{1.1} consists of the recovery of the potential pair $(q,r)$ when the scattering data set is given. Since $q_n$ and $r_n$  vanish sufficiently fast as $n\to\pm\infty$, it follows from \eqref{1.1} that any solution to \eqref{1.1} has the asymptotic behavior
\begin{equation}
\label{1.2}
\begin{bmatrix}
\alpha_n\\
\noalign{\medskip}
\beta_n
\end{bmatrix}=
\begin{bmatrix}
a_{\pm}z^{-n}\left[1+o(1)\right]\\
\noalign{\medskip}
b_{\pm}z^{n}\left[1+o(1)\right]
\end{bmatrix}, \qquad n\to\pm\infty,
\end{equation}
for some constants $a_{\pm}$ and $b_{\pm}$
that may depend on $z$ but not on $n$. By choosing two of the four coefficients $a_{+}$, $a_{-}$, $b_{+}$, $b_{-}$ appearing in
\eqref{1.2} in a specific way, we obtain a particular solution to \eqref{1.1}.
Note that \eqref{1.1} has two linearly independent solutions, and its general solution
can be expressed as a linear combination of
any two linearly independent solutions.

The discrete system \eqref{1.1} is related to the integrable semi-discrete system
\begin{equation}\label{1.2a}
\begin{cases}
i\dot{q}_n+\ds\frac{q_{n+1}}{1-q_{n+1}r_{n+1}}-
\ds\frac{q_{n}}{1-q_{n}r_{n}}-\ds\frac{q_{n}}{1+q_{n}r_{n+1}}+
\ds\frac{q_{n-1}}{1+q_{n-1}r_{n}}=0,\\
\noalign{\medskip}
i\dot{r}_n-\ds\frac{r_{n+1}}{1+q_{n}r_{n+1}}+\ds\frac{r_{n}}{1+q_{n-1}r_{n}}
+\ds\frac{r_{n}}{1-q_{n}r_{n}}-\ds\frac{r_{n-1}}{1-q_{n-1}r_{n-1}}=0,
\end{cases}
\end{equation}
which is known as  the semi-discrete derivative NLS (nonlinear Schr\"odinger) system or the semi-discrete Kaup-Newell system \cite{tsuchida2002integrable,tsuchida,tsuchida2010new}. From the denominators in \eqref{1.2a} we see why we need the restriction \eqref{1.1a}. Note that an overdot in \eqref{1.2a} denotes the derivative with respect to
the independent variable $t,$ which is interpreted as the time variable
and is suppressed in \eqref{1.2a}.
In our analysis of \eqref{1.1}, without loss of generality
we can either assume that $q_n$ and
$r_n$ are independent of $t$ or they
contain $t$ as a parameter.

We analyze the direct and inverse scattering problems for \eqref{1.1}
 by using the connection to the two first-order discrete systems
\begin{equation}\label{1.2aa}
\begin{bmatrix}
\xi_n\\
\noalign{\medskip}
\eta_n
\end{bmatrix}=
\begin{bmatrix}
z &z\, u_n\\
\noalign{\medskip}
\ds\frac{1}{z}\, v_n &\ds\frac{1}{z}
\end{bmatrix}
\begin{bmatrix}
\xi_{n+1}\\
\noalign{\medskip}
\eta_{n+1}
\end{bmatrix},\qquad n\in\mathbb{Z},
\end{equation}
	\begin{equation}\label{1.2ab}
\begin{bmatrix}
\gamma_n\\
\noalign{\medskip}
\epsilon_n
\end{bmatrix}=
\begin{bmatrix}
z &z\,p_n\\
\noalign{\medskip}
\ds\frac{1}{z}\, s_n &\ds\frac{1}{z}
\end{bmatrix}
\begin{bmatrix}
\gamma_{n+1}\\
\noalign{\medskip}
\epsilon_{n+1}
\end{bmatrix},\qquad n\in\mathbb{Z},
\end{equation}
where $u_n$ and $v_n$ are the values for the potential pair $(u,v)$ and $p_n$ and $s_n$ are the values for $(p,s)$. By choosing $(u,v)$ and $(p,s)$  as in \eqref{x3.1}--\eqref{x3.4}, we relate
 the relevant quantities for \eqref{1.1}, \eqref{1.2aa}, \eqref{1.2ab} to each other. Such relevant quantities include the Jost solutions, the scattering coefficients, and the bound-state data sets for each of \eqref{1.1},  \eqref{1.2aa}, \eqref{1.2ab}.

 We remark that in the literature it is always assumed that the bound states for  \eqref{1.1}, \eqref{1.2aa}, \eqref{1.2ab} are simple. In our paper
 we do not make such an artificial
assumption because we easily and in an elegant way handle the bound states of any multiplicities, and this is done by using a pair of constant matrix triplets
describing the bound-state values of the spectral parameter $z$ and the corresponding norming constants.

The systems \eqref{1.2aa} and \eqref{1.2ab} are of importance also in their own, and they are known as the Ablowitz-Ladik systems or as the discrete AKNS systems. It is
 possible \cite{tsuchida2010new}
 to transform \eqref{1.2aa} and \eqref{1.2ab} into
	\begin{equation}\label{1.2an}
\begin{bmatrix}
\tilde{\xi}_{n+1}\\
\noalign{\medskip}
\tilde{\eta}_{n+1}
\end{bmatrix}=
\begin{bmatrix}
z &u_n\\
\noalign{\medskip}
 v_n &\ds\frac{1}{z}
\end{bmatrix}
\begin{bmatrix}
\tilde{\xi}_{n}\\
\noalign{\medskip}
\tilde{\eta}_{n}
\end{bmatrix},\qquad n\in\mathbb{Z},
\end{equation}
\begin{equation}\label{1.2am}
\begin{bmatrix}
\tilde{\gamma}_{n+1}\\
\noalign{\medskip}
\tilde{\epsilon}_{n+1}
\end{bmatrix}=
\begin{bmatrix}
z &p_n\\
\noalign{\medskip}
s_n &\ds\frac{1}{z}
\end{bmatrix}
\begin{bmatrix}
\tilde{\gamma}_{n}\\
\noalign{\medskip}
\tilde{\epsilon}_{n}
\end{bmatrix},\qquad n\in\mathbb{Z}.
\end{equation}
Note that \eqref{1.2aa} and \eqref{1.2an} also differ from each other by the fact that the appearances of the wavefunction values evaluated at $n$ and $n+1$ are switched. The same remark also applies to \eqref{1.2ab} and \eqref{1.2am}.

As already pointed out by Tsuchida \cite{tsuchida2010new}, the analysis of the direct and inverse scattering problems for an Ablowitz-Ladik system written in the form of \eqref{1.2an} and \eqref{1.2am} is unnecessarily complicated. For example, the analysis provided in \cite{ablowitzPrinari2003} for \eqref{1.2an} involves separating the scattering data into two parts containing even and odd integer powers of $z,$ respectively. This unnecessarily makes the analysis cumbersome. Furthermore, if we use \eqref{1.2an} with the roles of $n$ and $n+1$ switched compared to \eqref{1.2aa} and use the scattering coefficients from the right instead of the scattering coefficients from the left as input, then the analysis of the inverse scattering problem for \eqref{1.2an} by the Marchenko method becomes unnecessarily complicated.

The researchers who are mainly interested in nonlinear evolution equations use only the scattering coefficient from the right without referring to the scattering coefficients from the left. In this paper, we are careful in making a distinction between the right and left scattering data sets. The right and left transmission coefficients in a first-order discrete linear system are unequal unless the coefficient matrix in that system has determinant equal to $1.$ One can verify that the coefficient matrix in \eqref{1.1} has its determinant equal to $1,$ whereas the corresponding determinants for \eqref{1.2aa} and \eqref{1.2ab} are given by $1-u_nv_n$ and $1-p_ns_n$, respectively. Thus, the left and right transmission coefficients for each of  \eqref{1.2aa} and \eqref{1.2ab} are unequal.

 The scattering and inverse scattering problems for \eqref{1.1} have partially been analyzed by Tsuchida in \cite{tsuchida2010new}. Our own analysis is complementary to Tsuchida's work in the following sense. Tsuchida's main interest in \eqref{1.1} is confined to its relation to \eqref{1.2a}, and he only deals with the right scattering coefficients. Tsuchida exploits certain gauge transformations to relate \eqref{1.1} to two discrete Ablowitz-Ladik systems, and he
 assumes that the bound states are all simple.
 Tsuchida's expressions for the scattering coefficients not only involve the Jost solutions to the relevant linear system but also the Jost solutions to the corresponding adjoint system, whereas in our case the scattering coefficients are expressed in terms of the Jost solutions to the relevant linear system only. In our opinion the latter description of the scattering coefficients provides physical insight and intuition into the analysis of direct and inverse problems. Tsuchida
  formulates a Marchenko system
 given in (4.12c) and (4.12d) of \cite{tsuchida2010new},
 somehow similar to our own alternate
 Marchenko system \eqref{6.22d} and \eqref{6.23}, but it lacks the
 appropriate symmetries existing in our alternate
 Marchenko system.
  In formulating his Marchenko system Tsuchida uses a Fourier transformation with respect to $z^2$ and not with respect to $z$. Furthermore, in Tsuchida's formulation it is not quite clear how the scattering data sets for \eqref{1.1}, \eqref{1.2aa}, \eqref{1.2ab} are related to each other.

One of the important accomplishments of our paper is the introduction of a standard Marchenko formalism for
\eqref{1.1} using as input the scattering data from
\eqref{1.1} only. The formulation
of our standard Marchenko system \eqref{Z.0} is a significant
generalization step to solve inverse problems
for various other discrete and continuous systems for which
a standard Marchenko theory has not yet been formulated.
As mentioned already, we also introduce
an alternate Marchenko formalism for \eqref{1.1}
using as input the scattering data sets
 from \eqref{1.2aa} and \eqref{1.2ab}.
Both our standard and alternate Marchenko systems we introduce have the appropriate symmetry properties and resemble the standard Marchenko systems arising in other continuous and
discrete systems.
The alternate Marchenko method in our paper
corresponds to the discrete analog of the systematic approach \cite{AE19} we presented to solve the inverse scattering problem for the energy-dependent AKNS system given in (1.1) of \cite{AE19}. Besides \cite{AE19} the most relevant reference  for our current work is the important paper by Tsuchida \cite{tsuchida2010new}.

Our paper is organized as follows. In Section~\ref{sec:section2} we introduce the Jost solutions and the scattering coefficients for each of \eqref{1.1}, \eqref{1.2aa}, \eqref{1.2ab} and we present some relevant properties of
those Jost solutions and scattering coefficients. In that section we also prove that the linear dependence of the appropriate pairs of Jost solutions occurs
at the poles of the corresponding transmission coefficients for each of
\eqref{1.1}, \eqref{1.2aa}, \eqref{1.2ab}.
In Section~\ref{sec:section3} when the corresponding potential pairs
are related to each other as in
\eqref{x3.1}--\eqref{x3.4}, we relate the
Jost solutions and scattering coefficients
for \eqref{1.1} to those for \eqref{1.2aa} and \eqref{1.2ab}.
In that section we also present certain relevant properties of
the Jost solutions to \eqref{1.1} and
express the potentials $q_n$ and $r_n$ in terms of the values at $z=1$ of the Jost solutions to \eqref{1.2aa} and \eqref{1.2ab}.
In Section~\ref{sec:section4} we describe the bound-state data sets
for each of \eqref{1.1}, \eqref{1.2aa}, \eqref{1.2ab} in terms of two
matrix triplets, which allows us to handle bound states of any multiplicities
in a systematic manner that can also be used for other systems both in the continuous and discrete cases. In the formulation of the Marchenko method
we show how the Marchenko kernels contain the matrix triplets
in a simple and elegant manner. Also in that section, when
the potential pairs for \eqref{1.1}, \eqref{1.2aa}, \eqref{1.2ab} are related as in
\eqref{x3.1}--\eqref{x3.4}, we show how the corresponding bound-state
data sets are related to each other.
In Section~\ref{sec:section5} we outline the steps to solve the direct
problem for \eqref{1.1}.
In Section~\ref{sec:section6} we introduce the Marchenko system
\eqref{Z.0} using as input the scattering data directly related to
\eqref{1.1} and we describe how the potentials
$q_n$ and $r_n$ are recovered from the solution
\eqref{Z.0}.
In Section~\ref{sec:section7} we present
our alternate Marchenko system given in \eqref{6.22d} and \eqref{6.23}
using as input the scattering data sets from
\eqref{1.2aa} and \eqref{1.2ab}, as we also show how
$q_n$ and $r_n$ are recovered from the solution to the
alternate Marchenko system.
In Section~\ref{sec:section8} we describe various
methods to solve the inverse problem for
\eqref{1.1} by using as input the scattering data for
\eqref{1.1} and outline how the
potentials $q_n$ and $r_n$ are recovered.
Finally, in Section~\ref{sec:section9}
we provide the solution to the integrable
nonlinear system \eqref{1.2a} via the inverse scattering transform.
This is done by providing the time evolution of the scattering data
for \eqref{1.1} and by determining the corresponding time-evolved
potentials $q_n$ and $r_n.$
In that section we also present some explicit solution formulas for
\eqref{1.2a} corresponding to time-evolved reflectionless
scattering data for
\eqref{1.1}, and such solutions are explicitly expressed in terms of
the two matrix triplets describing the time-evolved bound-state
data for \eqref{1.1}.

\section{The Jost solutions and scattering coefficients}
\label{sec:section2}

In this section we introduce the Jost solutions and the scattering coefficients for each of the linear systems given in \eqref{1.1}, \eqref{1.2aa}, \eqref{1.2ab}, and we present some of their relevant properties. For clarification, we use the superscript $(q,r)$ to denote the quantities relevant to \eqref{1.1}, use $(u,v)$ for those relevant to \eqref{1.2aa}, and use $(p,s)$ for those relevant to \eqref{1.2ab}. When these three potential pairs decay rapidly in their respective equations as $n\to\pm\infty$, the corresponding coefficient matrices all reduce to the same unperturbed coefficient matrix. In other words, each of \eqref{1.1}, \eqref{1.2aa}, \eqref{1.2ab} corresponds to the same unperturbed system
	\begin{equation*}
\label{x2.1}
\mathring{\Psi}_n=
\begin{bmatrix}
z &0\\
\noalign{\medskip}
0&\ds\frac{1}{z}
\end{bmatrix}
\mathring{\Psi}_{n+1},\qquad n\in\mathbb{Z},
\end{equation*}
where the general solution is a linear combination of the two linearly independent solutions  $\begin{bmatrix}
z^{-n}\\0
\end{bmatrix}$ and $\begin{bmatrix}
0\\z^{n}
\end{bmatrix}$, i.e. we have
\begin{equation}\label{x2.2}
\mathring{\Psi}_n=a\begin{bmatrix}
z^{-n}\\
\noalign{\medskip}
0
\end{bmatrix}+b\begin{bmatrix}
0\\
\noalign{\medskip}
z^{n}
\end{bmatrix},\qquad n\in\mathbb{Z},
\end{equation}
with $a$ and $b$ being two complex-valued scalars that are independent of $n$ but may depend on $z.$

There are four Jost solutions for each of \eqref{1.1}, \eqref{1.2aa}, \eqref{1.2ab}, and they are obtained by assigning specific values to $a$ and $b$ as $n\to+\infty$ or $n\to-\infty$. We uniquely define the four Jost solutions $\psi_n,$ $\phi_n,$ $\bar{ \psi}_n,$ $\bar{ \phi}_n$ to each of \eqref{1.1}, \eqref{1.2aa}, \eqref{1.2ab} so that they satisfy the  respective asymptotics
\begin{equation}\label{x2.3}
\psi_n=\begin{bmatrix}
o(1)\\
\noalign{\medskip}
z^n[1+o(1)]
\end{bmatrix} ,\qquad  n\to+\infty,
\end{equation}
\begin{equation}\label{x2.4}
\phi_n=\begin{bmatrix}
z^{-n}[1+o(1)]\\
\noalign{\medskip}
o(1)
\end{bmatrix} ,\qquad   n\to-\infty,
\end{equation}
\begin{equation}\label{x2.5}
\bar{\psi}_n=\begin{bmatrix}
z^{-n}[1+o(1)]\\
\noalign{\medskip}
o(1)
\end{bmatrix} ,\qquad  n\to+\infty,
\end{equation}
\begin{equation}\label{x2.6}
\bar{\phi}_n=\begin{bmatrix}
o(1)\\
\noalign{\medskip}
z^{n}[1+o(1)]
\end{bmatrix} ,\qquad  n\to-\infty.
\end{equation}
We remark that an overbar does not denote complex conjugation. We will use the notation $\psi_n^{(q,r)},$ $\phi_n^{(q,r)},$ $\bar{ \psi}_n^{(q,r)},$ $\bar{ \phi}_n^{(q,r)}$ to refer to the respective Jost solutions for \eqref{1.1}; use $\psi_n^{(u,v)},$ $\phi_n^{(u,v)},$ $\bar{ \psi}_n^{(u,v)},$ $\bar{ \phi}_n^{(u,v)}$ for the respective Jost solutions for \eqref{1.2aa}; and use $\psi_n^{(p,s)},$ $\phi_n^{(p,s)},$ $\bar{ \psi}_n^{(p,s)},$ $\bar{ \phi}_n^{(p,s)}$ for the respective Jost solutions for \eqref{1.2ab}.

The asymptotics of the Jost solutions complementary to \eqref{x2.3}--\eqref{x2.6} are used to define  the corresponding scattering coefficients compatible with \eqref{x2.2}. We have
\begin{equation}\label{x2.7}
\psi_n=\begin{bmatrix} \displaystyle\frac{L}{T_{\rm l}}\,z^{-n}\left[1+o(1)\right]\\
\noalign{\medskip}
\displaystyle\frac{1}{T_{\rm l}}\,z^{n}\left[1+o(1)\right]
\end{bmatrix}, \qquad   n\to-\infty,
\end{equation}
\begin{equation}\label{x2.8}
\phi_n=\begin{bmatrix}
\displaystyle\frac{1}{T_{\rm r}}\,z^{-n}\left[1+o(1)\right]\\
\noalign{\medskip}
\displaystyle\frac{R}{T_{\rm r}}\,z^{n}\left[1+o(1)\right]
\end{bmatrix} , \qquad   n\to+\infty,
\end{equation}
\begin{equation}\label{x2.10}
\bar{\psi}_n=\begin{bmatrix}
\displaystyle\frac{1}{\bar{T}_{\rm l}}\,z^{-n}\left[1+o(1)\right]\\
\noalign{\medskip}
\displaystyle\frac{\bar{L}}{\bar{T}_{\rm l}}\,z^{n}\left[1+o(1)\right]
\end{bmatrix}, \qquad  n\to-\infty,
\end{equation}
\begin{equation}\label{x2.9}
\bar{\phi}_n=\begin{bmatrix}
\displaystyle\frac{\bar{R}}   {\bar{T}_{\rm r}}\,z^{-n}\left[1+o(1)\right]\\\noalign{\medskip}
\displaystyle\frac{1}{\bar{T}_{\rm r}}\,z^{n}\left[1+o(1)\right]
\end{bmatrix}, \qquad   n\to+\infty,
\end{equation}
where $T_{\rm l}$ and $\bar{T}_{\rm l}$ are the transmission coefficients from the left, $T_{\rm r}$ and $\bar{T}_{\rm r}$ are the transmission coefficients from the right, $R$ and $\bar{R}$ are the reflection coefficients from the right, and $L$ and $\bar{L}$ are the reflection coefficients from the left. We will also say left scattering coefficients instead of scattering coefficients from the left, and similarly we will use right scattering coefficients and scattering coefficients from the right interchangeably.

Note that we will use $T_{\rm r}^{(q,r)},$ $T_{\rm l}^{(q,r)},$ $R^{(q,r)},$ $L^{(q,r)},$ $\bar{T}_{\rm r}^{(q,r)},$ $\bar{T}_{\rm l}^{(q,r)},$  $\bar{R}^{(q,r)},$ $\bar{L}^{(q,r)}$ to refer to the scattering coefficients for \eqref{1.1}; use $T_{\rm r}^{(u,v)},$ $T_{\rm l}^{(u,v)},$ $R^{(u,v)},$ $L^{(u,v)},$ $\bar{T}_{\rm r}^{(u,v)},$ $\bar{T}_{\rm l}^{(u,v)},$  $\bar{R}^{(u,v)},$ $\bar{L}^{(u,v)}$ for the scattering coefficients for \eqref{1.2aa}; and use $T_{\rm r}^{(p,s)},$ $T_{\rm l}^{(p,s)},$ $R^{(p,s)},$ $L^{(p,s)},$ $\bar{T}_{\rm r}^{(p,s)},$ $\bar{T}_{\rm l}^{(p,s)},$  $\bar{R}^{(p,s)},$ $\bar{L}^{(p,s)}$ for the scattering coefficients for \eqref{1.2ab}.

 Related to the linear system \eqref{1.2aa}, let us introduce the quantities $D_n^{(u,v)}$ and $D_\infty^{(u,v)}$ as
\begin{equation}\label{x2.D_n}
D_n^{(u,v)}:=\ds\prod_{j=-\infty}^{n}(1-u_j\,v_j),\quad D_\infty^{(u,v)}:=\ds\prod_{j=-\infty}^{\infty}(1-u_j\,v_j).
\end{equation}
 From the fact that $u_n$ and $v_n$ are rapidly decaying and that $1-u_nv_n\ne 0$ for $n\in\mathbb{Z},$ it follows that $D_n^{(u,v)}$ and $D_\infty^{(u,v)}$ are each well defined and nonzero.
Similarly, related to the linear system \eqref{1.2ab}, we let
\begin{equation}\label{x2.D_na}
D_n^{(p,s)}:=\ds\prod_{j=-\infty}^{n}(1-p_j\,s_j),\quad D_\infty^{(p,s)}:=\ds\prod_{j=-\infty}^{\infty}(1-p_j\,s_j).
\end{equation}
 From the fact that $p_n$ and $s_n$ are  decaying rapidly and that $1-p_ns_n\ne 0$ for $n\in\mathbb{Z},$ we see that $D_n^{(p,s)}$ and $D_\infty^{(p,s)}$ are each well defined and nonzero.

 In the next theorem we list some relevant analyticity properties of the Jost solutions to \eqref{1.2aa}.

 \begin{theorem}
	\label{thm:theorem x2.1}
	Assume that the potentials $u_n$ and $v_n$ appearing in \eqref{1.2aa} are rapidly decaying and $1-u_nv_n\ne 0$ for $n\in\mathbb{Z}$. Then, the corresponding Jost solutions to \eqref{1.2aa} satisfy following:
	
	\begin{enumerate}
		\item[\text{\rm(a)}] For each $n\in\mathbb{Z}$ the quantities $z^{-n}\,\psi_n^{(u,v)},$ $z^{n}\,\phi_n^{(u,v)},$ $z^{n}\,\bar{\psi}_n^{(u,v)},$  $z^{-n}\,\bar{\phi}_n^{(u,v)}$ are even in $z$ in their respective domains.
		
		\item[\text{\rm(b)}] The quantity $z^{-n}\,\psi_n^{(u,v)}$  is analytic in $|z|<1$ and
		continuous in $|z|\le 1$.
		
		\item[\text{\rm(c)}] The quantity $z^{n}\,\phi_n^{(u,v)}$ is analytic in $|z|<1$ and
		continuous in $|z|\le 1$.
		
		\item[\text{\rm(d)}] The quantity $z^{n}\,\bar{\psi}_n^{(u,v)}$ is analytic in $|z|>1$ and
		continuous in $|z|\ge 1$.
		
		\item[\text{\rm(e)}] The quantity $z^{-n}\,\bar{\phi}_n^{(u,v)}$ is analytic in $|z|>1$ and
		continuous in $|z|\ge 1$.
		
		\item[\text{\rm(f)}] The Jost solution $\psi_{n}^{(u,v)}$ has the expansion
		\begin{equation}\label{x2.11}
	\psi_{n}^{(u,v)}=\sum_{l=n}^{\infty}K_{nl}^{(u,v)}z^l, \qquad |z|\le 1,
		\end{equation}
		with the double-indexed quantities $K_{nl}^{(u,v)}$ for which we have
		\begin{equation}\label{x2.12}
			K_{nn}^{(u,v)}= \begin{bmatrix}
		0\\
		\noalign{\medskip}
		1
		\end{bmatrix},\quad
		K_{n(n+2)}^{(u,v)}= \begin{bmatrix}
		u_n\\
		\noalign{\medskip}
	\ds	\sum_{k=n}^{\infty} u_{k+1}\,v_k
		\end{bmatrix},
		\end{equation}
and that $K_{nl}^{(u,v)}=0 $ when $n+l$ is odd or  $l<n$.

		\item[\text{\rm(g)}] The Jost solution $\bar{\psi}_{n}^{(u,v)}$ has the expansion	\begin{equation}\label{x2.13}
	\bar{\psi}_{n}^{(u,v)}=\sum_{l=n}^{\infty}\bar{K}_{nl}^{(u,v)}\ds\frac{1}{z^l}, \qquad |z|\ge 1,
		\end{equation}
		with the double-indexed quantities $\bar{K}_{nl}^{(u,v)}$ for which we have
		\begin{equation}\label{x2.14}
	\bar{K}_{nn}^{(u,v)}= \begin{bmatrix}
		1\\
		\noalign{\medskip}
		0
		\end{bmatrix},\quad
			\bar{K}_{n(n+2)}^{(u,v)}= \begin{bmatrix}
		\ds\sum_{k=n}^{\infty} u_k\,v_{k+1}\\
		\noalign{\medskip}
		v_n
		\end{bmatrix},
		\end{equation}
and that $\bar{K}_{nl}^{(u,v)}=0 $ when $n+l$ is odd or  $l<n$.

\item[\text{\rm(h)}] For the Jost solution $\phi_{n}^{(u,v)}$ we have the expansion
\begin{equation}\label{x2.16}
z^{n}\,\phi_{n}^{(u,v)}=\sum_{l=0}^{\infty}P_{nl}^{(u,v)}\ds z^{l}, \qquad |z|\le 1,
\end{equation}
 with the double-indexed quantities $P_{nl}^{(u,v)}$ for which we have
\begin{equation}\label{x2.17}
P_{n0}^{(u,v)}=\ds\frac{1}{D_{n-1}^{(u,v)}}
\begin{bmatrix}
1\\
\noalign{\medskip}
-v_{n-1}
\end{bmatrix},
\end{equation}
\begin{equation*}
P_{n2}^{(u,v)}=\ds\frac{1}{D_{n-1}^{(u,v)}}
\begin{bmatrix}
\ds\sum_{k=-\infty}^{n-2} u_{k+1}\,v_{k}\\
\noalign{\medskip}
-v_{n-2}-v_{n-1}\ds\sum_{k=-\infty}^{n-3} u_{k+1}\,v_{k}
\end{bmatrix},
\end{equation*}
with $D_{n-1}^{(u,v)}$ being the quantity defined in \eqref{x2.D_n} and that $P_{nl}^{(u,v)}=0 $ when $l$ is odd or  $l<0$.
\item[\text{\rm(i)}] For the Jost solution $\bar{\phi}_{n}^{(u,v)}$ we have the expansion
\begin{equation*}
z^{-n}\,\bar{\phi}_{n}^{(u,v)}=\sum_{l=0}^{\infty}\bar{P}_{nl}^{(u,v)}\ds\frac{1}{z^{l}}, \qquad |z|\ge 1,
\end{equation*}
with the double-indexed quantities $\bar{P}_{nl}^{(u,v)}$ for which we have
\begin{equation*}
\bar{P}_{n0}^{(u,v)}=\ds\frac{1}{D_{n-1}^{(u,v)}}
\begin{bmatrix}
-u_{n-1}\\
\noalign{\medskip}
1
\end{bmatrix},
\end{equation*}
\begin{equation*}
\bar{P}_{n2}^{(u,v)}=\ds\frac{1}{D_{n-1}^{(u,v)}}
\begin{bmatrix}
-u_{n-2}-u_{n-1}\ds\sum_{k=-\infty}^{n-3} u_{k}\,v_{k+1}\\
\noalign{\medskip}
\ds\sum_{k=-\infty}^{n-2} u_{k}\,v_{k+1}
\end{bmatrix},
\end{equation*}
and that $\bar{P}_{nl}^{(u,v)}=0 $ when $l$ is odd or  $l<0$.

\item[\text{\rm(j)}] The scattering coefficients for \eqref{1.2aa} are even in $z$ in their respective domains. The domain for the reflection coefficients
    is the unit circle $\mathbb{T}$ and the domains for the transmission coefficients consist of the union of $\mathbb{T}$ and their regions of extensions.

\item[\text{\rm(k)}] The quantities $1/T_{\rm l}^{(u,v)}$ and $1/T_{\rm r}^{(u,v)}$ have analytic extensions in $z$ from $z\in\mathbb{T}$ to $|z|<1$ and those extensions are continuous for $|z|\le 1.$ Similarly, the quantities $1/\bar{T}_{\rm l}^{(u,v)}$ and $1/\bar{T}_{\rm r}^{(u,v)}$ have extensions from $z\in\mathbb{T}$ so that they are analytic in $|z|>1$ and continuous in $|z|\ge 1.$
	\end{enumerate}
\end{theorem}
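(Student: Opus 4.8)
The plan is to reduce every assertion to the Neumann iteration of a discrete Volterra (summation) equation obtained by telescoping the two scalar components of \eqref{1.2aa} against the prescribed boundary value at the appropriate end. I would begin with $\psi_n^{(u,v)}$. Writing out the two scalar equations of \eqref{1.2aa} and introducing the normalized components $\hat\xi_n := z^{-n}\xi_n$ and $\hat\eta_n := z^{-n}\eta_n$, the system becomes
\[
\hat\xi_n = z^2\hat\xi_{n+1} + z^2 u_n\hat\eta_{n+1}, \qquad \hat\eta_n = \hat\eta_{n+1} + v_n\hat\xi_{n+1}.
\]
Telescoping the second equation from $n$ to $+\infty$ with $\hat\eta_{+\infty}=1$, and solving the first from $+\infty$ with $\hat\xi_{+\infty}=0$, yields a closed summation system whose kernel carries only even powers $z^{2m}$ with $m\ge 1$, all of which arise from the $\hat\xi$-equation. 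Because $u_n$ and $v_n$ decay faster than any power of $|n|$, the Neumann series for this system converges absolutely and uniformly on the closed disk $|z|\le 1$; this simultaneously gives the analyticity in $|z|<1$ together with continuity up to $|z|\le 1$ of $z^{-n}\psi_n^{(u,v)}$ (part (b)), the fact that $z^{-n}\psi_n^{(u,v)}$ is a power series in $z^2$ and hence even (part (a)), and the expansion \eqref{x2.11} with $K_{nl}^{(u,v)}=0$ for $l<n$ and for $n+l$ odd (part (f)); reading the first two iterates off the system produces $K_{nn}^{(u,v)}$ and $K_{n(n+2)}^{(u,v)}$ in \eqref{x2.12}. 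The solution $\bar\psi_n^{(u,v)}$ is handled identically after interchanging the roles of $z$ and $1/z$, which moves the region of convergence to $|z|\ge 1$ and establishes parts (d) and (g).

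For $\phi_n^{(u,v)}$ and $\bar\phi_n^{(u,v)}$ the normalization sits at $-\infty$, so I would instead set $\tilde\xi_n := z^n\xi_n$ and $\tilde\eta_n := z^n\eta_n$, turning \eqref{1.2aa} into
\[
\tilde\xi_n = \tilde\xi_{n+1} + u_n\tilde\eta_{n+1}, \qquad \tilde\eta_{n+1} = z^2\tilde\eta_n - v_n\tilde\xi_{n+1},
\]
and telescope from $-\infty$ using $\tilde\xi_{-\infty}=1$ and $\tilde\eta_{-\infty}=0$. Here lies the main obstacle: in contrast with the $\psi$-case, the summation equation for $\tilde\eta_n$ contains the term $-v_{n-1}\tilde\xi_n$ at order $z^0$, so the two components are coupled at the \emph{same} power of $z$ rather than through a kernel that raises it. Consequently the $O(z^0)$ part of the system is not the trivial vector $[1,0]^{\mathsf T}$; collapsing it produces the genuine first-order recurrence $\tilde\xi_{n-1}^{[0]} = (1-u_{n-1}v_{n-1})\,\tilde\xi_n^{[0]}$, whose solution telescopes to $\tilde\xi_n^{[0]} = 1/D_{n-1}^{(u,v)}$ and $\tilde\eta_n^{[0]} = -v_{n-1}/D_{n-1}^{(u,v)}$. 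This is precisely the source of the prefactor $1/D_{n-1}^{(u,v)}$ in \eqref{x2.17} and in the stated $P_{n2}^{(u,v)}$. Once this leading recurrence is solved, the higher coefficients again follow from iterating a kernel that now advances the power of $z$ by two, which delivers parts (c), (e), (h), (i) together with the vanishing of the coefficients for odd $l$ and $l<0$.

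Finally, for the scattering coefficients I would use the opposite-end asymptotics \eqref{x2.7}--\eqref{x2.9} and the two-dimensionality of the solution space: expanding the inside-analytic solution $\phi_n^{(u,v)}$ in the basis $\{\psi_n^{(u,v)},\bar\psi_n^{(u,v)}\}$ as $n\to+\infty$ identifies each reciprocal scattering coefficient with a discrete Wronskian of two Jost solutions. Since the coefficient matrix in \eqref{1.2aa} has determinant $1-u_nv_n$, this Wronskian is not $n$-independent but picks up the factor $1-u_nv_n$ at each step, so it equals a fixed multiple of $D_\infty^{(u,v)}$ times a value that may be evaluated in any convenient region. Pairing the $z^{\pm n}$ factors so that each Wronskian entry is a product of one $z^{-n}(\cdot)$ and one $z^{n}(\cdot)$ quantity makes these factors cancel and exhibits $1/T_{\rm l}^{(u,v)}$ and $1/T_{\rm r}^{(u,v)}$ as combinations of quantities analytic in $|z|<1$ and continuous on $|z|\le 1$, with the $\bar T$ statements following the same way in $|z|>1$; this is part (k). The evenness in (j) is inherited directly from parts (a)--(i), the reflection coefficients being ratios of such Wronskians restricted to $\mathbb{T}$. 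The one point requiring care throughout is the uniformity of the Neumann convergence up to the unit circle, which is what upgrades analyticity on the open region to continuity on its closure; this is guaranteed by the rapid decay of $u_n$ and $v_n$ together with the uniform bound $|z^{2m}|\le 1$ on the relevant closed region.
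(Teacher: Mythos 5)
Your proof is correct, and for parts (a)--(i) it is essentially the paper's own argument: your normalization $\hat\xi_n=z^{-n}\xi_n,$ $\hat\eta_n=z^{-n}\eta_n$ reproduces the paper's rewriting \eqref{x.401}, and the uniformly convergent iteration in nonnegative powers of $z^{2}$ from the boundary data at $n=+\infty$ (respectively, the $z^{n}$-normalization iterated from $n=-\infty$) is exactly how the paper obtains (a), (b), (f) and then treats the remaining Jost solutions. You in fact make explicit the one point the paper hides behind ``in a similar manner'': for $\phi_n^{(u,v)}$ the two components couple already at order $z^0,$ and your recurrence $\tilde\xi_n^{[0]}=(1-u_nv_n)\,\tilde\xi_{n+1}^{[0]},$ telescoped from $\tilde\xi_{-\infty}^{[0]}=1,$ is precisely the source of the prefactor $1/D_{n-1}^{(u,v)}$ in \eqref{x2.17}. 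Where you genuinely depart from the paper is in (j)--(k). The paper proves (k) by letting $n\to-\infty$ in the second component of the analytic quantity $z^{-n}\psi_n^{(u,v)}$ and reading off $1/T_{\rm l}^{(u,v)}$ from \eqref{x2.7} (this requires the $o(1)$ there to be uniform in $z,$ which the iteration bounds supply), whereas you express $1/T_{\rm l}^{(u,v)}$ and $1/T_{\rm r}^{(u,v)}$ as $D_n$-corrected Wronskians of $\phi_n^{(u,v)}$ and $\psi_n^{(u,v)}$ with the $z^{\pm n}$ factors paired so that they cancel; that representation is exactly the one the paper establishes only later, in Theorem~\ref{thm:theorem ta.5008} and \eqref{ta.5015}--\eqref{ta.5015a}. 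Your route has the advantage of exhibiting each reciprocal transmission coefficient as a finite algebraic combination, at a single fixed $n,$ of functions already known to be analytic, so no limit in $n$ is needed at that stage; its cost is that you must first prove the determinant iteration identity (the paper's \eqref{x.6}) and still pass to a limit on $\mathbb{T}$ to tie the Wronskian to the scattering coefficients defined through \eqref{x2.7}--\eqref{x2.9}. Both organizations are sound, so this is a difference of bookkeeping rather than of substance.
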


\begin{proof}
	We can write \eqref{1.2aa} for $\psi_n^{(u,v)}$ in the equivalent form
	\begin{equation}\label{x.401}
	z^{-n}\,\psi_n^{(u,v)}=\begin{bmatrix}
	z^{2} &z^{2}\, u_n\\
	\noalign{\medskip}
	v_n&1
	\end{bmatrix}z^{-n-1}\,\psi_{n+1}^{(u,v)}, \qquad n\in \mathbb{Z}.
	\end{equation}
 From \eqref{x2.3} and the iteration of \eqref{x.401} in $n,$ it follows that $z^{-n}\,\psi_n^{(u,v)}$ is an even function of $z.$ By proceeding in a similar manner for the remaining Jost solutions, we complete the proof of (a). The expansion of $z^{-n}\,\psi_n^{(u,v)}$ obtained in (a) contains only nonnegative integer powers of $z^{2}$ and is uniformly convergent in $z$ for $|z|\le1,$ from which we conclude (b) and (f). The proofs for (c), (d), (e), (g), (h), (i) are obtained in a similar manner. Using (a)--(e) in \eqref{x2.7}--\eqref{x2.9} we establish (j). Finally, using (b) and the second component of the column-vector asymptotic in \eqref{x2.7}, we establish (k) for $1/T_{\rm l}^{(u,v)}.$ The remaining proofs for (k) are obtained in a similar manner.
\end{proof}

We remark that the results in Theorem~\ref{thm:theorem x2.1} holds also for \eqref{1.2ab}. For the convenience of citing those results,
we present the
following corollary.

\begin{corollary}
	\label{thm:theorem x2.1a}
	Assume that the potentials $p_n$ and $s_n$ appearing in \eqref{1.2ab} are rapidly decaying and $1-p_n s_n\ne 0$ for $n\in\mathbb{Z}$. Then, the corresponding Jost solutions to \eqref{1.2ab} satisfy all the properties stated in
Theorem~\ref{thm:theorem x2.1}. In particular we have the following:
	
	\begin{enumerate}

\item[\text{\rm(a)}] The Jost solution $\psi_{n}^{(p,s)}$ has the expansion
		\begin{equation}\label{x2.11aa}
	\psi_{n}^{(p,s)}=\sum_{l=n}^{\infty}K_{nl}^{(p,s)}z^l, \qquad |z|\le 1,
		\end{equation}
		with the double-indexed quantities $K_{nl}^{(p,s)}$ for which we have
		\begin{equation}\label{x2.12aa}
			K_{nn}^{(p,s)}= \begin{bmatrix}
		0\\
		\noalign{\medskip}
		1
		\end{bmatrix},\quad
		K_{n(n+2)}^{(p,s)}= \begin{bmatrix}
		p_n\\
		\noalign{\medskip}
	\ds	\sum_{k=n}^{\infty} p_{k+1}\,s_k
		\end{bmatrix},
		\end{equation}
and that $K_{nl}^{(p,s)}=0 $ when $n+l$ is odd or  $l<n$.

		\item[\text{\rm(b)}] The Jost solution $\bar{\psi}_{n}^{(p,s)}$ has the expansion	\begin{equation}\label{x2.13aa}
	\bar{\psi}_{n}^{(p,s)}=\sum_{l=n}^{\infty}\bar{K}_{nl}^{(p,s)}\ds\frac{1}{z^l}, \qquad |z|\ge 1,
		\end{equation}
		with the double-indexed quantities $\bar{K}_{nl}^{(p,s)}$ for which we have
		\begin{equation}\label{x2.14aa}
	\bar{K}_{nn}^{(p,s)}= \begin{bmatrix}
		1\\
		\noalign{\medskip}
		0
		\end{bmatrix},\quad
			\bar{K}_{n(n+2)}^{(p,s)}= \begin{bmatrix}
		\ds\sum_{k=n}^{\infty} p_k\,s_{k+1}\\
		\noalign{\medskip}
		s_n
		\end{bmatrix},
		\end{equation}
and that $\bar{K}_{nl}^{(p,s)}=0 $ when $n+l$ is odd or  $l<n$.

\end{enumerate}

\end{corollary}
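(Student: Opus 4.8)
The plan is to obtain the Corollary purely by transport of structure: the linear system \eqref{1.2ab} has exactly the same form as \eqref{1.2aa}, differing only in the labels of the potential pair, with $(u,v)$ replaced by $(p,s)$. Since the hypotheses of Theorem~\ref{thm:theorem x2.1}, namely that the potentials are rapidly decaying and that $1-p_n s_n\ne 0$ for $n\in\mathbb{Z}$, are precisely the standing assumptions imposed on $(p,s)$, every conclusion of Theorem~\ref{thm:theorem x2.1} applies verbatim to \eqref{1.2ab} after the symbol substitution. Thus the bulk of the proof is a one-line observation: the map $(u,v)\mapsto(p,s)$ is a relabeling under which \eqref{1.2aa} becomes \eqref{1.2ab}, so the Jost solutions $\psi_n^{(p,s)},\phi_n^{(p,s)},\bar\psi_n^{(p,s)},\bar\phi_n^{(p,s)}$ inherit the analyticity, evenness, and continuity properties (a)--(e) and (j)--(k), and admit expansions of the same shape as in (f)--(i).

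Concretely, I would first invoke Theorem~\ref{thm:theorem x2.1} applied to the system \eqref{1.2ab} with potential pair $(p,s)$ in place of $(u,v)$, justified by the identical algebraic form of the coefficient matrices and the matching hypotheses. This immediately yields the existence of the expansions \eqref{x2.11aa} and \eqref{x2.13aa}, the evenness/vanishing pattern ``$K_{nl}^{(p,s)}=0$ when $n+l$ is odd or $l<n$'' (and likewise for $\bar K_{nl}^{(p,s)}$), and the leading coefficients $K_{nn}^{(p,s)}$ and $\bar K_{nn}^{(p,s)}$ in \eqref{x2.12aa} and \eqref{x2.14aa}. The only items requiring explicit verification are the two second-order coefficients $K_{n(n+2)}^{(p,s)}$ and $\bar K_{n(n+2)}^{(p,s)}$, which are stated in closed form in terms of $p_n,s_n$. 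For these I would read off the corresponding formulas \eqref{x2.12} and \eqref{x2.14} from Theorem~\ref{thm:theorem x2.1}(f),(g) and perform the substitution $u_j\mapsto p_j$, $v_j\mapsto s_j$ throughout, matching each entry. Since those formulas were themselves derived by iterating the recursion \eqref{x.401} (with $u_n,v_n$ entering the coefficient matrix), the same recursion for \eqref{1.2ab} produces identical expressions with the substituted symbols, confirming the displayed entries.

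The main obstacle here is not mathematical depth but rather making the symbol-substitution argument airtight and avoiding the appearance of hand-waving; in particular one must confirm that nothing in the derivation of Theorem~\ref{thm:theorem x2.1} used a special property of $(u,v)$ beyond rapid decay and the nonvanishing condition $1-u_nv_n\ne0$. A quick inspection of that proof shows it relies only on the recursion \eqref{x.401}, the asymptotics \eqref{x2.3}--\eqref{x2.6}, and uniform convergence of the resulting power series, all of which hold identically for \eqref{1.2ab} under the stated hypotheses. I would therefore write the proof as a short paragraph stating that \eqref{1.2ab} is obtained from \eqref{1.2aa} by the replacement $(u,v)\to(p,s)$, that all hypotheses of Theorem~\ref{thm:theorem x2.1} are met, and hence all its conclusions transfer; the explicit coefficient formulas in \eqref{x2.12aa} and \eqref{x2.14aa} then follow by substituting the corresponding symbols into \eqref{x2.12} and \eqref{x2.14}.
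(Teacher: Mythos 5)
Your proposal is correct and matches the paper's own treatment: the paper simply remarks that Theorem~\ref{thm:theorem x2.1} holds verbatim for \eqref{1.2ab} since it has the identical form as \eqref{1.2aa} with $(u,v)$ relabeled as $(p,s)$, and states the corollary without further proof. Your extra step of checking that the proof of Theorem~\ref{thm:theorem x2.1} uses nothing about $(u,v)$ beyond rapid decay and $1-u_nv_n\ne 0$ is exactly the right way to make that relabeling argument airtight.
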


In the next theorem we summarize the relevant properties of the scattering coefficients for \eqref{1.2aa}.

\begin{theorem}
\label{thm:theorem x2.2}
	Assume that the potentials $u_n$ and $v_n$ appearing in \eqref{1.2aa} are rapidly decaying and that $1-u_nv_n\ne 0$ for $n\in\mathbb{Z}$. The corresponding scattering coefficients in their respective domains satisfy
\begin{equation}\label{x2.26}
T_{\rm r}^{(u,v)}=D_\infty^{(u,v)}\,T_{\rm l}^{(u,v)}, \quad \bar{T}_{\rm r}^{(u,v)}=D_\infty^{(u,v)}\,\bar{T}_{\rm l}^{(u,v)},
\end{equation}
\begin{equation}\label{x2.27}
T_{\rm r}^{(u,v)}\,\bar{T}_{\rm r}^{(u,v)}=D_\infty^{(u,v)}\big[1-R^{(u,v)}\,\bar{R}^{(u,v)}\big],
\end{equation}
\begin{equation}\label{x2.28}
T_{\rm l}^{(u,v)}\,\bar{T}_{\rm l}^{(u,v)}=D_\infty^{(u,v)}\big[1-L^{(u,v)}\,\bar{L}^{(u,v)}\big],
\end{equation}
\begin{equation}\label{x2.29}
\frac{L^{(u,v)}}{T_{\rm l}^{(u,v)}}=-D_\infty^{(u,v)}\,\frac{\bar{R}^{(u,v)}}{\bar{T}_{\rm r}^{(u,v)}},
\end{equation}
\begin{equation}\label{x2.30}
\frac{\bar{L}^{(u,v)}}{\bar{T}_{\rm l}^{(u,v)}}=-D_\infty^{(u,v)}\,\frac{R^{(u,v)}}{T_{\rm r}^{(u,v)}},
\end{equation}
where  $D_\infty^{(u,v)}$ is the quantity defined in \eqref{x2.D_n}.
\end{theorem}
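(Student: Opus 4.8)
My plan rests on the discrete Wronskian of two solutions of \eqref{1.2aa}. For two column solutions $f_n$ and $g_n$ I set $\mathcal{W}[f,g]_n:=\det\!\begin{bmatrix} f_n & g_n\end{bmatrix}$, the determinant of the $2\times 2$ matrix whose columns are $f_n$ and $g_n$. Since both columns are propagated by the same coefficient matrix in \eqref{1.2aa}, whose determinant is $1-u_nv_n$, the Wronskian obeys the one-step scaling law $\mathcal{W}[f,g]_n=(1-u_nv_n)\,\mathcal{W}[f,g]_{n+1}$. Iterating this from $n\to-\infty$ to $n\to+\infty$ and using \eqref{x2.D_n}, I obtain the global relation
\begin{equation*}
\mathcal{W}[f,g]_{-\infty}=D_\infty^{(u,v)}\,\mathcal{W}[f,g]_{+\infty},
\end{equation*}
where the two one-sided limits exist because $u_n,v_n$ decay rapidly and $1-u_nv_n\neq0$, so the defining product in \eqref{x2.D_n} converges to a nonzero value.

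The plan is then to read off each of \eqref{x2.26}--\eqref{x2.30} by choosing $f$ and $g$ to be a particular pair of Jost solutions and evaluating the two one-sided limits from the asymptotics \eqref{x2.3}--\eqref{x2.9}. Concretely, I would use $\mathcal{W}[\phi^{(u,v)},\psi^{(u,v)}]$ and $\mathcal{W}[\bar\phi^{(u,v)},\bar\psi^{(u,v)}]$ for the two identities in \eqref{x2.26}; $\mathcal{W}[\phi^{(u,v)},\bar\phi^{(u,v)}]$ for \eqref{x2.27}; $\mathcal{W}[\bar\psi^{(u,v)},\psi^{(u,v)}]$ for \eqref{x2.28}; and $\mathcal{W}[\psi^{(u,v)},\bar\phi^{(u,v)}]$ and $\mathcal{W}[\bar\psi^{(u,v)},\phi^{(u,v)}]$ for \eqref{x2.29} and \eqref{x2.30}. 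In each case one limit evaluates to a bare combination of scattering coefficients (for example $\mathcal{W}[\phi^{(u,v)},\psi^{(u,v)}]_{-\infty}=1/T_{\rm l}^{(u,v)}$ and $\mathcal{W}[\phi^{(u,v)},\psi^{(u,v)}]_{+\infty}=1/T_{\rm r}^{(u,v)}$), and the global scaling relation above converts the equality of the two evaluations into the desired identity. Here a word of caution is in order: whether $D_\infty^{(u,v)}$ enters as a factor or as a reciprocal is dictated by which end carries the trivial value. For the pair $\{\phi,\bar\phi\}$ behind \eqref{x2.27} the value $1$ sits at $-\infty$, whereas for the pair $\{\bar\psi,\psi\}$ behind \eqref{x2.28} it sits at $+\infty$; consequently one is led to $T_{\rm r}^{(u,v)}\bar T_{\rm r}^{(u,v)}=D_\infty^{(u,v)}[1-R^{(u,v)}\bar R^{(u,v)}]$ but to the \emph{reciprocal} placement $T_{\rm l}^{(u,v)}\bar T_{\rm l}^{(u,v)}=[1-L^{(u,v)}\bar L^{(u,v)}]/D_\infty^{(u,v)}$, and the precise position of this determinant factor is the one point demanding care and cross-checking against \eqref{x2.26}.

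The main technical obstacle is justifying that only the leading asymptotic terms survive in each limit, i.e.\ that the cross products of a leading term with an $o(1)$ correction drop out. This is exactly where the restriction $z\in\mathbb{T}$ enters: on the unit circle $|z^{\pm n}|=1$, so a product of the form $z^{\pm n}[1+o(1)]\cdot o(1)$ tends to $0$, while each surviving product pairs a $z^{-n}$ factor against a $z^{n}$ factor and is therefore $O(1)$ with a finite limit. The transmission-only identities in \eqref{x2.26} may then be propagated off $\mathbb{T}$ using the analytic-extension statements in Theorem~\ref{thm:theorem x2.1}(k). Once this vanishing is established, each displayed identity reduces to a one-line determinant computation, and the evenness in $z$ recorded in Theorem~\ref{thm:theorem x2.1}(j) guarantees that the relations hold throughout the stated even-in-$z$ domains.
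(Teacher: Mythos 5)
Your mechanism is correct, and for most of the theorem it is a genuinely different route from the paper's. The paper invokes the Wronskian iteration (its \eqref{x.6}--\eqref{x.7}, leading to \eqref{ta2001}--\eqref{ta2002}) only to prove \eqref{x2.26}; the remaining identities \eqref{x2.27}--\eqref{x2.30} are then obtained algebraically, by expressing $\phi_n^{(u,v)},\bar{\phi}_n^{(u,v)}$ in terms of $\bar{\psi}_n^{(u,v)},\psi_n^{(u,v)}$ and conversely (\eqref{x.10}--\eqref{x.11}) and imposing the compatibility condition \eqref{x.12} together with \eqref{x2.26}. You instead run the single scaling law $\mathcal{W}_{-\infty}=D_\infty^{(u,v)}\,\mathcal{W}_{+\infty}$ over six choices of Jost pairs. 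Your route is more uniform, since each identity is one determinant evaluation with the $o(1)$ bookkeeping on $\mathbb{T}$ handled once, and you justify that bookkeeping correctly; the paper's route concentrates the limiting analysis in \eqref{x2.26} and gets the rest by matrix algebra.

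The caution you flag at \eqref{x2.28} is not a weakness of your argument: your ``reciprocal placement'' is the correct identity, and \eqref{x2.28} as printed is erroneous. Your Wronskian for the pair $\{\bar{\psi}^{(u,v)},\psi^{(u,v)}\}$ equals $1$ at $+\infty$ and $[1-L^{(u,v)}\bar{L}^{(u,v)}]/[T_{\rm l}^{(u,v)}\bar{T}_{\rm l}^{(u,v)}]$ at $-\infty$, so the scaling law gives
\begin{equation*}
D_\infty^{(u,v)}\,T_{\rm l}^{(u,v)}\,\bar{T}_{\rm l}^{(u,v)}=1-L^{(u,v)}\,\bar{L}^{(u,v)},
\end{equation*}
not \eqref{x2.28}. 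This is also what the paper's own relations force: \eqref{x2.26} and \eqref{x2.27} give $T_{\rm l}^{(u,v)}\bar{T}_{\rm l}^{(u,v)}=[1-R^{(u,v)}\bar{R}^{(u,v)}]/D_\infty^{(u,v)}$, while \eqref{x2.26} combined with \eqref{x2.29}--\eqref{x2.30} yields $L^{(u,v)}\bar{L}^{(u,v)}=R^{(u,v)}\bar{R}^{(u,v)}$, so \eqref{x2.28} as displayed could hold only if $(D_\infty^{(u,v)})^2=1$. A one-site example confirms it: take $u_0=a$, $v_0=b$, all other entries zero; then $T_{\rm l}^{(u,v)}=\bar{T}_{\rm l}^{(u,v)}=1$, $L^{(u,v)}=az^2$, $\bar{L}^{(u,v)}=b/z^2$, $D_\infty^{(u,v)}=1-ab$, so the left side of \eqref{x2.28} equals $1$ while its right side equals $(1-ab)^2$. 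The same misplacement propagates into the rewriting of \eqref{x2.28} at the start of the proof of Theorem~\ref{thm:theorem x2.3a}. In short: treat \eqref{x2.28} as a misprint whose corrected form is the display above; your instinct to cross-check the position of the $D_\infty^{(u,v)}$ factor against \eqref{x2.26} was exactly right, and resolving the conflict in favor of the Wronskian computation is the correct call.
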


\begin{proof}
From \eqref{1.2aa} we get the matrix equations
\begin{equation}\label{x.4}
\begin{bmatrix}
\phi_n^{(u,v)}&\psi_n^{(u,v)}
\end{bmatrix}=
\begin{bmatrix}
z &z\, u_n\\
\noalign{\medskip}
\ds\frac{1}{z}\, v_n &\ds\frac{1}{z}
\end{bmatrix}
\begin{bmatrix}
\phi_{n+1}^{(u,v)}&\psi_{n+1}^{(u,v)}
\end{bmatrix},\qquad n\in\mathbb{Z},
\end{equation}
\begin{equation}\label{x.5}
\begin{bmatrix}
\bar{\psi}_n^{(u,v)}&\bar{\phi}_n^{(u,v)}
\end{bmatrix}=
\begin{bmatrix}
z &z\,u_n\\
\noalign{\medskip}
\ds\frac{1}{z}\, v_n &\ds\frac{1}{z}
\end{bmatrix}
\begin{bmatrix}
\bar{\psi}_{n+1}^{(u,v)}&\bar{\phi}_{n+1}^{(u,v)}
\end{bmatrix},\qquad n\in\mathbb{Z}.
\end{equation}
Using iteration on the determinants of both sides of \eqref{x.4} and \eqref{x.5}, respectively, for any pair of integers $n$ and $m$ with $m>n$  we get
\begin{equation}\label{x.6}
\det \begin{bmatrix}
\phi_n^{(u,v)}&\psi_n^{(u,v)}
\end{bmatrix}=(1-u_nv_n)\cdots (1-u_mv_m)\,\det\begin{bmatrix}
\phi_{m+1}^{(u,v)}&\psi_{m+1}^{(u,v)}
\end{bmatrix},
\end{equation}
\begin{equation}\label{x.7}
\det \begin{bmatrix}
\bar{\psi}_n^{(u,v)}&\bar{\phi}_n^{(u,v)}
\end{bmatrix}=(1-u_nv_n)\cdots (1-u_mv_m)\,\det\begin{bmatrix}
\bar{\psi}_{m+1}^{(u,v)}&\bar{\phi}_{m+1}^{(u,v)}
\end{bmatrix}.
\end{equation}
Letting $n\to -\infty$ and $m\to+\infty$ in \eqref{x.6},
with the help of \eqref{x2.3},
\eqref{x2.4},
\eqref{x2.7},
\eqref{x2.8}, and
\eqref{x2.D_n} we obtain
\begin{equation}
\label{ta2001}
\begin{vmatrix}
 1&\ds\frac{L^{(u,v)}}{T_{\rm l}^{(u,v)}}\\
\noalign{\medskip}
 0&\ds\frac{1}{T_{\rm l}^{(u,v)}}
\end{vmatrix}=D_\infty^{(u,v)}\begin{vmatrix}
\ds\frac{1}{T_{\rm r}^{(u,v)}}&0\\
\noalign{\medskip}
\ds\frac{R^{(u,v)}}{T_{\rm r}^{(u,v)}}&1
\end{vmatrix}.
\end{equation}
Similarly, letting $n\to -\infty$ and $m\to+\infty$ in \eqref{x.7}
and using \eqref{x2.5},
\eqref{x2.6},
\eqref{x2.10},
\eqref{x2.9}, and
\eqref{x2.D_n} we get
\begin{equation}
\label{ta2002}
\begin{vmatrix}
\ds\frac{1}{\bar{T}_{\rm l}^{(u,v)}}&0\\
\noalign{\medskip}
\ds\frac{\bar{L}^{(u,v)}}{\bar{T}_{\rm l}^{(u,v)}}&1
\end{vmatrix}=D_\infty^{(u,v)}\begin{vmatrix}
1&\ds\frac{\bar{R}^{(u,v)}}{\bar{T}_{\rm r}^{(u,v)}}\\
\noalign{\medskip}
0&\ds\frac{1}{\bar{T}_{\rm r}^{(u,v)}}
\end{vmatrix}.
\end{equation}
 From \eqref{ta2001} and \eqref{ta2002} we get \eqref{x2.26}.
 On the other hand, with the help of
Theorem~\ref{thm:theorem x2.1} we conclude that any two of the four
Jost solutions $\psi_n^{(u,v)},$ $\phi_n^{(u,v)},$ $\bar{ \psi}_n^{(u,v)},$
$\bar{ \phi}_n^{(u,v)}$ form a linearly independent set of solutions
to \eqref{1.2aa} when $z$ is on the unit circle $\mathbb{T}$.
We can express $\phi_n^{(u,v)}$ and $\bar{\phi}_n^{(u,v)}$ as
linear combinations of $\psi_n^{(u,v)}$ and $\bar{\psi}_n^{(u,v)}$ in a matrix form as
\begin{equation}\label{x.10}
\begin{bmatrix}
\phi_{n}^{(u,v)}\\
\noalign{\medskip}
\bar{ \phi}_n^{(u,v)}
\end{bmatrix}=\begin{bmatrix}
\ds\frac{1}{T_{\rm r}^{(u,v)}}&\ds\frac{R^{(u,v)}}{T_{\rm r}^{(u,v)}}\\
\noalign{\medskip}
\ds\frac{\bar{R}^{(u,v)}}{\bar{T}_{\rm r}^{(u,v)}}&\ds\frac{1}{\bar{T}_{\rm r}^{(u,v)}}
\end{bmatrix}\begin{bmatrix}
\bar{\psi}_{n}^{(u,v)}\\
\noalign{\medskip}
\psi_n^{(u,v)}
\end{bmatrix},\qquad z\in\mathbb{T},
\end{equation}
where the entries in the coefficient matrix are obtained with the help of
\eqref{x2.3}, \eqref{x2.5}, \eqref{x2.8}, \eqref{x2.9}
for the Jost solutions to \eqref{1.2aa}. In a
similar way, with the help of \eqref{x2.4}, \eqref{x2.6},
\eqref{x2.7}, \eqref{x2.10} for the Jost solutions to \eqref{1.2aa} we get
\begin{equation}\label{x.11}
\begin{bmatrix}
\bar{ \psi}_{n}^{(u,v)}\\
\noalign{\medskip}
\psi_n^{(u,v)}
\end{bmatrix}=\begin{bmatrix}
\ds\frac{1}{\bar{T}_{\rm l}^{(u,v)}}&\ds\frac{\bar{L}^{(u,v)}}{\bar{T}_{\rm l}^{(u,v)}}\\
\noalign{\medskip}
\ds\frac{L^{(u,v)}}{T_{\rm l}^{(u,v)}}&\ds\frac{1}{T_{\rm l}^{(u,v)}}
\end{bmatrix}\begin{bmatrix}
\phi_{n}^{(u,v)}\\
\noalign{\medskip}
\bar{\phi}_n^{(u,v)}
\end{bmatrix},\qquad z\in\mathbb{T}.
\end{equation}
For the compatibility of \eqref{x.10} and \eqref{x.11} we must have
\begin{equation}\label{x.12}
\begin{bmatrix}
\ds\frac{1}{T_{\rm r}^{(u,v)}}&\ds\frac{R^{(u,v)}}{T_{\rm r}^{(u,v)}}\\
\noalign{\medskip}
\ds\frac{\bar{R}^{(u,v)}}{\bar{T}_{\rm r}^{(u,v)}}&\ds\frac{1}{\bar{T}_{\rm r}^{(u,v)}}
\end{bmatrix}\begin{bmatrix}
\ds\frac{1}{\bar{T}_{\rm l}^{(u,v)}}&\ds\frac{\bar{L}^{(u,v)}}{\bar{T}_{\rm l}^{(u,v)}}\\
\noalign{\medskip}
\ds\frac{L^{(u,v)}}{T_{\rm l}^{(u,v)}}&\ds\frac{1}{T_{\rm l}^{(u,v)}}
\end{bmatrix}=\begin{bmatrix}
1&0\\
\noalign{\medskip}
0&1
\end{bmatrix},\qquad z\in\mathbb{T}.
\end{equation}
Then, using \eqref{x2.26} and \eqref{x.12} we obtain \eqref{x2.27}--\eqref{x2.30}.
\end{proof}
The above theorem indicates that the set of left
scattering coefficients $T_{\rm l}^{(u,v)},$
$\bar{T}_{\rm l}^{(u,v)},$ $L^{(u,v)},$
$\bar{L}^{(u,v)}$ can be expressed  in
terms of the set of right scattering
coefficients $T_{\rm r}^{(u,v)},
\bar{T}_{\rm r}^{(u,v)}, R^{(u,v)}, \bar{R}^{(u,v)},$ and vice versa.

In the next proposition we provide the asymptotics of the transmission coefficients for \eqref{1.2aa}.

\begin{proposition}
	\label{thm:theorem x2.3}
	Assume that the potentials $u_n$ and $v_n$ appearing in \eqref{1.2aa} are rapidly decaying and that $1-u_nv_n\ne 0$ for $n\in\mathbb{Z}$. Then, the transmission coefficients for \eqref{1.2aa} have their asymptotics given by
	\begin{equation}\label{x2.30a}
	T_{\rm l}^{(u,v)}=1-z^2\ds\sum_{k=-\infty}^{\infty} u_{k+1}\,v_{k}+O\left(z^4\right),\qquad z\to 0,
	\end{equation}
	\begin{equation}\label{x2.30b}
	T_{\rm r}^{(u,v)}=D_\infty^{(u,v)}\bigg[1-z^2\ds\sum_{k=-\infty}^{\infty} u_{k+1}\,v_{k}+O\left(z^4\right)\bigg],\qquad z\to 0,
	\end{equation}
		\begin{equation}\label{x2.30d}
	\bar{T}_{\rm l}^{(u,v)}=1-\frac{1}{z^2}\ds\sum_{k=-\infty}^{\infty} u_{k}\,v_{k+1}+O\left(\frac{1}{z^4}\right),\qquad z\to\infty,
	\end{equation}
	\begin{equation}\label{x2.30c}
	\bar{T}_{\rm r}^{(u,v)}=D_\infty^{(u,v)}\bigg[1-\frac{1}{z^2}\ds\sum_{k=-\infty}^{\infty} u_{k}\,v_{k+1}+O\left(\frac{1}{z^4}\right)\bigg],\qquad z\to\infty,
	\end{equation}
where  $D_\infty^{(u,v)}$ is the quantity defined in \eqref{x2.D_n}.
\end{proposition}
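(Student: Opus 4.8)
The plan is to read off the left transmission coefficients $T_{\rm l}^{(u,v)}$ and $\bar{T}_{\rm l}^{(u,v)}$ directly from the power-series expansions of the Jost solutions $\psi_n^{(u,v)}$ and $\bar{\psi}_n^{(u,v)}$ established in Theorem~\ref{thm:theorem x2.1}(f),(g), and then to obtain $T_{\rm r}^{(u,v)}$ and $\bar{T}_{\rm r}^{(u,v)}$ from the relations \eqref{x2.26}. The key observation is that the reciprocal transmission coefficient $1/T_{\rm l}^{(u,v)}$ is recovered as the $n\to-\infty$ limit of the appropriate component of the normalized Jost solution, and this limit can be compared term-by-term with the small-$z$ expansion coming from \eqref{x2.11}.

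First I would treat $T_{\rm l}^{(u,v)}$. From \eqref{x2.11}--\eqref{x2.12}, the second component of $z^{-n}\,\psi_n^{(u,v)}$ has the expansion
\begin{equation*}
z^{-n}\big[\psi_n^{(u,v)}\big]_2 = 1+\bigg(\sum_{k=n}^{\infty}u_{k+1}\,v_k\bigg)z^2+O(z^4),\qquad z\to0,
\end{equation*}
since $K_{nl}^{(u,v)}=0$ when $n+l$ is odd, so that the coefficient of $z^2$ is the second component of $K_{n(n+2)}^{(u,v)}$. On the other hand, the second component of \eqref{x2.7} shows that $z^{-n}\big[\psi_n^{(u,v)}\big]_2\to 1/T_{\rm l}^{(u,v)}$ as $n\to-\infty$ for $|z|\le1$. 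Letting $n\to-\infty$ replaces the partial sum $\sum_{k=n}^{\infty}$ by the full series $\sum_{k=-\infty}^{\infty}$, so that
\begin{equation*}
\frac{1}{T_{\rm l}^{(u,v)}}=1+\bigg(\sum_{k=-\infty}^{\infty}u_{k+1}\,v_k\bigg)z^2+O(z^4),\qquad z\to0,
\end{equation*}
and expanding the reciprocal yields \eqref{x2.30a}. In exactly the same way, using \eqref{x2.13}--\eqref{x2.14} for the first component of $z^{n}\,\bar{\psi}_n^{(u,v)}$ together with the first component of \eqref{x2.10}, I would obtain $1/\bar{T}_{\rm l}^{(u,v)}=1+\big(\sum_{k=-\infty}^{\infty}u_k\,v_{k+1}\big)z^{-2}+O(z^{-4})$ as $z\to\infty$, and inversion gives \eqref{x2.30d}. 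Multiplying the expansions just obtained for $T_{\rm l}^{(u,v)}$ and $\bar{T}_{\rm l}^{(u,v)}$ by $D_\infty^{(u,v)}$ and invoking \eqref{x2.26} then immediately produces \eqref{x2.30b} and \eqref{x2.30c}.

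The step I expect to require the most care is the interchange of the two limiting processes $z\to0$ (respectively $z\to\infty$) and $n\to-\infty$: a priori the coefficient-wise limit of the expansion of $z^{-n}\big[\psi_n^{(u,v)}\big]_2$ need not coincide with the expansion of its limit $1/T_{\rm l}^{(u,v)}$. To justify this one must show that the remainder beyond the $z^2$ term is $O(z^4)$ uniformly in $n$ as $n\to-\infty$, which follows from the rapid decay of $u_n$ and $v_n$ and from the uniform convergence of the series in \eqref{x2.11} guaranteed by Theorem~\ref{thm:theorem x2.1}; the convergence of the partial sums $\sum_{k=n}^{\infty}u_{k+1}\,v_k$ to the absolutely convergent full sum is then immediate from the same decay hypothesis.
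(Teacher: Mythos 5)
Your proposal is correct and follows essentially the same route as the paper's proof: read off $1/T_{\rm l}^{(u,v)}$ from the second component of $z^{-n}\,\psi_n^{(u,v)}$ via \eqref{x2.11}, \eqref{x2.12}, and \eqref{x2.7} in the limit $n\to-\infty$, do the analogous computation with \eqref{x2.13}, \eqref{x2.14}, \eqref{x2.10} for $1/\bar{T}_{\rm l}^{(u,v)}$, and then transfer to the right transmission coefficients through \eqref{x2.26}. Your explicit attention to the uniformity needed to interchange the $n\to-\infty$ limit with the small-$z$ (respectively large-$z$) expansion is a point the paper leaves implicit, and it is a welcome addition rather than a deviation.
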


\begin{proof} From Theorem~\ref{thm:theorem x2.1}(k) we know that $1/T_{\rm l}^{(u,v)}$ and $1/T_{\rm r}^{(u,v)}$ are analytic in $|z|<1$ and that $1/\bar{T}_{\rm l}^{(u,v)}$ and $1/\bar{T}_{\rm r}^{(u,v)}$ are analytic in $|z|>1.$
Premultiplying both sides of \eqref{x2.11} by $z^{-n}[
0\quad 1
],$ then letting $n\to -\infty$	 in the resulting equation, and using \eqref{x2.7} with $\psi_{n}^{(u,v)}$ and \eqref{x2.12}  we obtain \eqref{x2.30a}. Similarly, premultiplying both sides of \eqref{x2.13} by $z^{n}[
1\quad 0],$ then letting $n\to -\infty$	 in the resulting equation, and using \eqref{x2.10} with $\bar{\psi}_{n}^{(u,v)}$  and \eqref{x2.14} we obtain \eqref{x2.30d}. Finally, with the help of \eqref{x2.26}, \eqref{x2.30a}, \eqref{x2.30d} we get \eqref{x2.30b} and \eqref{x2.30c}.
\end{proof}

In the next theorem we provide various other relevant properties of the
transmission coefficients for \eqref{1.2aa}.

\begin{theorem}
	\label{thm:theorem x2.3a}
	Assume that the potentials $u_n$ and $v_n$ appearing in \eqref{1.2aa} are rapidly decaying and that $1-u_nv_n\ne 0$ for $n\in\mathbb{Z}$. Then, for the transmission coefficients of \eqref{1.2aa} we have the following:

	\begin{enumerate}
		\item[\text{\rm(a)}] None of $T_{\rm l}^{(u,v)},$ $T_{\rm r}^{(u,v)},$ $\bar{T}_{\rm l}^{(u,v)},$ $\bar{T}_{\rm r}^{(u,v)}$ can vanish when $z\in\mathbb{T}.$
		
		\item[\text{\rm(b)}] We have
		\begin{equation}\label{x.501}
		\ds\frac{1}{T_{\rm l}^{(u,v)}(0)}=1,\quad \ds\frac{1}{T_{\rm r}^{(u,v)}(0)}=\ds\frac{1}{D_\infty^{(u,v)}}\ne 0,
		\end{equation}
		\begin{equation*}
		\ds\frac{1}{\bar{T}_{\rm l}^{(u,v)}(\infty)}=1,\quad \ds\frac{1}{\bar{T}_{\rm r}^{(u,v)}(\infty)}=\ds\frac{1}{D_\infty^{(u,v)}}\ne 0.
		\end{equation*}
	
		\item[\text{\rm(c)}] The quantity $1/T_{\rm l}^{(u,v)}$ has at most a finite number of zeros in $0<|z|<1$ and the multiplicity of each such zero is finite. The zeros of $1/T_{\rm r}^{(u,v)}$ and the multiplicities of those zeros are the same as those for $1/T_{\rm l}^{(u,v)}.$
		
		\item[\text{\rm(d)}] The quantity $1/\bar{T}_{\rm l}^{(u,v)}$ has at most a finite number of zeros in $|z|>1$ and the multiplicity of each such zero is finite. The zeros of $1/\bar{T}_{\rm r}^{(u,v)}$ and the multiplicities of those zeros are the same as those for $1/\bar{T}_{\rm l}^{(u,v)}.$
		
		\item[\text{\rm(e)}] The quantities $T_{\rm l}^{(u,v)}$ and $T_{\rm r}^{(u,v)}$ are meromorphic in $|z|<1.$ The number of their poles and the multiplicities of those poles are both finite. Similarly, the quantities $\bar{T}_{\rm l}^{(u,v)}$ and $\bar{T}_{\rm r}^{(u,v)}$ are meromorphic in $|z|>1,$ and the number of their poles and the multiplicities of those poles are both finite.
		
		\item[\text{\rm(f)}] If $z_j$ is a pole of $T_{\rm l}^{(u,v)}$ and $T_{\rm r}^{(u,v)}$ in $0<|z|<1,$ then $-z_j$ is also a pole of those two transmission coefficients. Similarly, if $\bar{z}_j$ is a pole of $\bar{T}_{\rm l}^{(u,v)}$ and $\bar{T}_{\rm r}^{(u,v)}$ in $|z|>1$ then $-\bar{z}_j$ is also a pole  of $\bar{T}_{\rm l}^{(u,v)}$ and $\bar{T}_{\rm r}^{(u,v)}.$
		\end{enumerate}
\end{theorem}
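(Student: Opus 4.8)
The plan is to derive all six assertions from the analyticity and evenness already established in Theorem~\ref{thm:theorem x2.1}, the asymptotics of Proposition~\ref{thm:theorem x2.3}, and the transmission-coefficient identities \eqref{x2.26}. First I would dispose of (a) and (b). For (a), since Theorem~\ref{thm:theorem x2.1}(k) guarantees that $1/T_{\rm l}^{(u,v)}$ is continuous on the compact set $|z|\le 1$, it is bounded there, so $|T_{\rm l}^{(u,v)}|$ is bounded away from $0$ on $\mathbb{T}$; the same reasoning applied to $1/T_{\rm r}^{(u,v)}$ on $|z|\le1$ and to $1/\bar T_{\rm l}^{(u,v)},\,1/\bar T_{\rm r}^{(u,v)}$ on $|z|\ge1$ (all of which include $\mathbb{T}$) finishes (a). For (b), I would simply read off the leading coefficients in the expansions \eqref{x2.30a}, \eqref{x2.30b}, \eqref{x2.30d}, \eqref{x2.30c} at $z=0$ and $z=\infty$, using that $D_\infty^{(u,v)}\ne0$ by \eqref{x2.D_n}.

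The heart of the argument is (c) and (d). Here I would view $1/T_{\rm l}^{(u,v)}$ as a function analytic in $|z|<1$ and continuous on $|z|\le1$ [Theorem~\ref{thm:theorem x2.1}(k)], which is not identically zero since $1/T_{\rm l}^{(u,v)}(0)=1$ by (b). The crucial extra input is that $1/T_{\rm l}^{(u,v)}$ does not vanish \emph{on} $\mathbb{T}$: this follows from the linear independence on $\mathbb{T}$ of the pair $\phi_n^{(u,v)},\psi_n^{(u,v)}$ (established in the course of proving Theorem~\ref{thm:theorem x2.2}), because the $n\to-\infty$ limit of the Wronskian $\det[\phi_n^{(u,v)}\ \psi_n^{(u,v)}]$ equals $1/T_{\rm l}^{(u,v)}$. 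Together with $1/T_{\rm l}^{(u,v)}(0)=1\ne0$, this confines the zeros of $1/T_{\rm l}^{(u,v)}$ to a compact subset of the punctured disk $0<|z|<1$: a sequence of zeros could otherwise accumulate only at $0$, on $\mathbb{T}$, or at an interior point, each of which is excluded (the first two by nonvanishing and continuity, the last by the identity theorem). Hence there are finitely many zeros, each of finite order. The corresponding statement for $1/T_{\rm r}^{(u,v)}$ is then immediate from \eqref{x2.26}, which gives $1/T_{\rm r}^{(u,v)}=(1/D_\infty^{(u,v)})\,(1/T_{\rm l}^{(u,v)})$ with $D_\infty^{(u,v)}$ a nonzero constant, so the two reciprocals have identical zeros with identical multiplicities. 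For (d) I would apply the very same argument after the change of variable $w=1/z$, which carries $|z|>1$ onto $|w|<1$ and $z=\infty$ onto $w=0$; the value $1/\bar T_{\rm l}^{(u,v)}(\infty)=1\ne0$ from (b) plays the role previously played by the value at the origin, and \eqref{x2.26} again transfers the conclusion to $1/\bar T_{\rm r}^{(u,v)}$.

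Finally, (e) follows by inverting: since $1/T_{\rm l}^{(u,v)}$ is analytic in $|z|<1$, the quotient $T_{\rm l}^{(u,v)}=1/(1/T_{\rm l}^{(u,v)})$ is meromorphic there with poles precisely at the zeros of $1/T_{\rm l}^{(u,v)}$ and with pole orders equal to the corresponding zero orders, so (c) yields finitely many poles of finite multiplicity; the same holds for $T_{\rm r}^{(u,v)}$, and for $\bar T_{\rm l}^{(u,v)},\,\bar T_{\rm r}^{(u,v)}$ in $|z|>1$ via (d). For (f) I would invoke the evenness of the scattering coefficients [Theorem~\ref{thm:theorem x2.1}(j)], i.e.\ $T_{\rm l}^{(u,v)}(z)=T_{\rm l}^{(u,v)}(-z)$ and similarly for the other three; evenness forces the pole set to be symmetric under $z\mapsto -z$ and preserves multiplicities, which is exactly (f). I expect the single genuinely substantive step to be the nonvanishing of $1/T_{\rm l}^{(u,v)}$ (and of $1/\bar T_{\rm l}^{(u,v)}$) on $\mathbb{T}$, on which the entire finiteness count in (c)--(d) rests; everything else reduces to reading off asymptotics, inverting, and exploiting evenness.
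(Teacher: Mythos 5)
Your treatment of (b), (e), and (f) coincides with the paper's: (b) is read off from the asymptotics \eqref{x2.30a}--\eqref{x2.30c}, (e) follows by inverting the analytic functions $1/T_{\rm l}^{(u,v)}$ and $1/\bar{T}_{\rm l}^{(u,v)}$, and (f) comes from the evenness in Theorem~\ref{thm:theorem x2.1}(j); likewise the finiteness mechanism in (c)--(d) (zeros confined away from the boundary and the origin, no interior accumulation by the identity theorem, transfer to $T_{\rm r}^{(u,v)}$ via \eqref{x2.26}) is exactly the paper's argument. The genuine divergence is in (a) and in where the boundary nonvanishing comes from. The paper proves (a) from the identity \eqref{x2.28} together with the continuity of $L^{(u,v)}/T_{\rm l}^{(u,v)}$, and then, in its proof of (c), it cites (a) in the upgraded form ``$1/T_{\rm l}^{(u,v)}$ cannot vanish on $\mathbb{T}$.'' You instead prove the literal statement of (a) by compactness (correctly, and essentially for free), observe that this literal statement is too weak to drive (c), and then source the substantive fact --- the nonvanishing of $1/T_{\rm l}^{(u,v)}$ on $\mathbb{T}$ --- from the linear-independence assertion made in the proof of Theorem~\ref{thm:theorem x2.2}, via the $n\to-\infty$ limit of the Wronskian. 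Your organization is the more transparent one: it makes explicit that (a) as worded and (a) as used in (c) are different statements, a distinction the paper blurs.

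Two caveats. First, linear independence at each finite $n$ does not by itself pass to the $n\to-\infty$ limit of the Wronskian, since a limit of nonzero numbers can vanish; to conclude that the limit $1/T_{\rm l}^{(u,v)}$ is nonzero you need the exact relation $\det\begin{bmatrix}\phi_n^{(u,v)}&\psi_n^{(u,v)}\end{bmatrix}=D_{n-1}^{(u,v)}\big/T_{\rm l}^{(u,v)}$, i.e.\ \eqref{ta.5015a} (equivalently \eqref{x.6} plus the nonvanishing of the partial products $D_n^{(u,v)}$ from \eqref{x2.D_n}); this should be cited explicitly. Second --- and this affects the paper's own proof equally --- the pairwise linear independence of the Jost solutions on all of $\mathbb{T}$ is only asserted, never derived, in the proof of Theorem~\ref{thm:theorem x2.2}, and it can actually fail under the stated hypotheses: taking $u_n=\delta_{n,1}$ and $v_n=\delta_{n,0}$ (Kronecker deltas), a direct iteration of \eqref{1.2aa} gives $1/T_{\rm l}^{(u,v)}=1+z^2$, which vanishes at $z=\pm i\in\mathbb{T}$ even though the potentials are compactly supported and $1-u_nv_n\equiv 1$. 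So the substantive nonvanishing on $\mathbb{T}$ (absence of spectral singularities) is genuinely an additional hypothesis rather than a consequence of rapid decay; your proof is conditional on it in exactly the same way the paper's is, no more and no less.
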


	\begin{proof}
We can write \eqref{x2.28} as
	\begin{equation*}
	\ds\frac{1}{D_\infty^{(u,v)}}=\ds\frac{1}{T_{\rm l}^{(u,v)}\,\bar{T}_{\rm l}^{(u,v)}}-\ds\frac{L^{(u,v)}\,\bar{L}^{(u,v)}}{T_{\rm l}^{(u,v)}\,\bar{T}_{\rm l}^{(u,v)}},\qquad z\in\mathbb{T}.
	\end{equation*}
Because of the continuity of $L^{(u,v)}/T_{\rm l}^{(u,v)},$ we would conclude that	if $T_{\rm l}^{(u,v)}$ vanished at some point on $\mathbb{T}$ then $L^{(u,v)}$ would have to vanish at that same point on $\mathbb{T}.$ However, this cannot happen because it would contradict \eqref{x2.28} as we have $D_\infty^{(u,v)}\ne 0.$ The remaining proofs in (a) are established in a similar manner. Note that (b) is obtained directly from \eqref{x2.30a}--\eqref{x2.30c}. The proof of (c) for $T_{\rm l}^{(u,v)}$ can be given as follows. From Theorem~\ref{thm:theorem x2.1}(k) we know that $1/T_{\rm l}^{(u,v)}$ is analytic in $|z|<1$ and continuous in $|z|\le 1$, and from (a) we know that $1/T_{\rm l}^{(u,v)}$ cannot vanish on $\mathbb{T}.$ Hence, any zeros of $1/T_{\rm l}^{(u,v)}$ must occur in the bounded region $|z|<1.$ Thus, the zeros of $1/T_{\rm l}^{(u,v)}$ in $|z|<1$ must be finite in number and each such zero must have a finite multiplicity. The remaining proof of (c) is obtained by using the first equality in \eqref{x2.26}. The proofs of (d) and (e) are obtained in a manner similar to the proof of (c). Finally, we note that (f) follows from (c), (d), (e), and Theorem~\ref{thm:theorem x2.1}(j).
	\end{proof}

We remark that the analogs of the results presented for the potential pair $(u,v)$ in Proposition~\ref{thm:theorem x2.3} and Theorems~\ref{thm:theorem x2.1}, \ref{thm:theorem x2.2}, and \ref{thm:theorem x2.3a} are also valid for the potential pair $(p,s)$.

Next, let us consider the properties of the scattering coefficients for \eqref{1.1}. Because the coefficient matrix in \eqref{1.1} has determinant equal to $1,$ in this case we can express the scattering coefficients for \eqref{1.1} in terms of the Wronskians each defined as a determinant of a 2$\times$2 matrix where the two columns are  the appropriate Jost solutions to \eqref{1.1}. We define the Wronskian of two column-vector solutions $\begin{bmatrix}
\alpha_n\\
\beta_n
\end{bmatrix} $ and $\begin{bmatrix}
\hat{\alpha}_n\\
\hat{\beta}_n
\end{bmatrix} $ to \eqref{1.1} as
\begin{equation}\label{x2.31}
\begin{bmatrix}
\begin{bmatrix}
\alpha_n\\
\noalign{\medskip}
\beta_n
\end{bmatrix};\begin{bmatrix}
\hat{\alpha}_n\\
\noalign{\medskip}
\hat{\beta}_n
\end{bmatrix}
\end{bmatrix}: =\det
\begin{bmatrix}
\alpha_n & \hat{\alpha}_n\\
\noalign{\medskip}
\beta_n &\hat{\beta}_n
\end{bmatrix}=\begin{vmatrix}
\alpha_n & \hat{\alpha}_n\\
\noalign{\medskip}
\beta_n &\hat{\beta}_n
\end{vmatrix}.
\end{equation}

We recall that the scattering coefficients for \eqref{1.2aa} cannot be obtained from the Wronskians of any two solutions to \eqref{1.2aa}  because the coefficient matrix in \eqref{1.2aa} does not have the determinant equal to $1.$ In that case, in order to obtain the scattering coefficients one can use the Wronskians of a solution to \eqref{1.2aa} and a solution to the adjoint equation corresponding to \eqref{1.2aa}. However, we prefer to express the scattering coefficients via the asymptotics of the Jost solutions as in \eqref{x2.7}--\eqref{x2.9} and this allows us to investigate the scattering coefficients in a unified way for any of the three systems \eqref{1.1}, \eqref{1.2aa}, \eqref{1.2ab}.

With the help of  \eqref{1.1} and \eqref{x2.31} one can directly verify that the determinant used in \eqref{x2.31} is independent of $n$. In terms of the Jost solutions $\psi_n^{(q,r)},$ $\phi_n^{(q,r)},$ $\bar{\psi}_n^{(q,r)},$  $\bar{\phi}_n^{(q,r)}$ satisfying \eqref{1.1} and the respective asymptotics given in \eqref{x2.3}--\eqref{x2.6}, with the help of \eqref{x2.7}--\eqref{x2.9} we express the scattering coefficients $T_{\rm l}^{(q,r)},$ $\bar{T}_{\rm l}^{(q,r)},$ $T_{\rm r}^{(q,r)},$ $\bar{T}_{\rm r}^{(q,r)},$ $R^{(q,r)},$ $\bar{R}^{(q,r)},$ $L^{(q,r)},$ $\bar{L}^{(q,r)}$ as
\begin{equation}\label{x2.32}
\frac{1}{T_{\rm l}^{(q,r)}}=\begin{vmatrix}
\phi_n^{(q,r)}&\psi_n^{(q,r)}
\end{vmatrix},\quad \frac{1}{\bar{T}_{\rm l}^{(q,r)}}=\begin{vmatrix}
\bar{\psi}_n^{(q,r)}&\bar{\phi}_n^{(q,r)}
\end{vmatrix},
\end{equation}
\begin{equation}\label{x2.33}
\frac{1}{T_{\rm r}^{(q,r)}}=\begin{vmatrix}
\phi_n^{(q,r)}&\psi_n^{(q,r)}
\end{vmatrix},\quad \frac{1}{\bar{T}_{\rm r}^{(q,r)}}=\begin{vmatrix}
\bar{\psi}_n^{(q,r)}&\bar{\phi}_n^{(q,r)}
\end{vmatrix},
\end{equation}
\begin{equation}\label{x2.34}
\frac{L^{(q,r)}}{T_{\rm l}^{(q,r)}}=\begin{vmatrix}
\psi_n^{(q,r)}&\bar{\phi}_n^{(q,r)}
\end{vmatrix},\quad \frac{\bar{L}^{(q,r)}}{\bar{T}_{\rm l}^{(q,r)}}=\begin{vmatrix}
\phi_n^{(q,r)}&\bar{\psi}_{n}^{(q,r)}
\end{vmatrix},
\end{equation}
\begin{equation}\label{x2.35}
\frac{R^{(q,r)}}{T_{\rm r}^{(q,r)}}=\begin{vmatrix}
\bar{\psi}_n^{(q,r)}&\phi_n^{(q,r)}
\end{vmatrix},\quad \frac{\bar{R}^{(q,r)}}{\bar{T}_{\rm r}^{(q,r)}}=\begin{vmatrix}
\bar{\phi}_n^{(q,r)}&\psi_{n}^{(q,r)}
\end{vmatrix}.
\end{equation}

In the next theorem we list some relevant properties of the scattering coefficients for \eqref{1.1}.

\begin{theorem}
		\label{thm:theorem x2.4}
	Assume that the potentials $q_n$ and $r_n$ appearing in \eqref{1.1} are rapidly decaying and satisfy \eqref{1.1a}. Then, we have the following:

	\begin{enumerate}
	
		\item[\text{\rm(a)}] The left and right transmission coefficients for \eqref{1.1} are equal to each other, i.e. we have
		\begin{equation}\label{x2.37}
		T_{\rm l}^{(q,r)}=T_{\rm r}^{(q,r)},\quad \bar{T}_{\rm l}^{(q,r)}=\bar{T}_{\rm r}^{(q,r)}.
		\end{equation}
		We will use $T^{(q,r)}$ to denote the common value of $T_{\rm l}^{(q,r)}$ and $T_{\rm r}^{(q,r)},$ and we will use $\bar{T}^{(q,r)}$ to denote the common value of $\bar{T}_{\rm l}^{(q,r)}$ and $\bar{T}_{\rm r}^{(q,r)}.$ The transmission coefficient $T^{(q,r)}$ has a meromorphic extension from $z\in\mathbb{T}$ to $|z|<1$ and the transmission coefficient $\bar{T}^{(q,r)}$ has a meromorphic extension from $z\in\mathbb{T}$ to $|z|>1$.
	
		\item[\text{\rm(b)}] For $z\in\mathbb{T},$ the left and right reflection coefficients for \eqref{1.1} satisfy
		\begin{equation}\label{x2.38}
		\frac{L^{(q,r)}}{T^{(q,r)}}=-\frac{\bar{R}^{(q,r)}}{\bar{T}^{(q,r)}}, \quad \frac{\bar{L}^{(q,r)}}{\bar{T}^{(q,r)}}=-\frac{R^{(q,r)}}{T^{(q,r)}},
		\end{equation}
		\begin{equation}\label{x2.39}	T^{(q,r)}\,\bar{T}^{(q,r)}=1-L^{(q,r)}\,\bar{L}^{(q,r)}=1-R^{(q,r)}\,\bar{R}^{(q,r)}.
		\end{equation}
	\end{enumerate}
\end{theorem}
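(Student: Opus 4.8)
The plan is to lean on the single structural fact that distinguishes \eqref{1.1} from \eqref{1.2aa} and \eqref{1.2ab}: the coefficient matrix in \eqref{1.1} has determinant $1$, so every Wronskian of two solutions is independent of $n$, as noted just after \eqref{x2.31}. Part (a) then falls out immediately. Comparing the Wronskian representations, I observe that in \eqref{x2.32} and \eqref{x2.33} the coefficients $1/T_{\rm l}^{(q,r)}$ and $1/T_{\rm r}^{(q,r)}$ are given by the very same determinant $\begin{vmatrix}\phi_n^{(q,r)} & \psi_n^{(q,r)}\end{vmatrix}$, and likewise $1/\bar{T}_{\rm l}^{(q,r)}$ and $1/\bar{T}_{\rm r}^{(q,r)}$ are both equal to $\begin{vmatrix}\bar{\psi}_n^{(q,r)} & \bar{\phi}_n^{(q,r)}\end{vmatrix}$; hence \eqref{x2.37} follows at once. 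Conceptually this is the statement that the single $n$-independent Wronskian $\begin{vmatrix}\phi_n^{(q,r)} & \psi_n^{(q,r)}\end{vmatrix}$, evaluated as $n\to-\infty$ via \eqref{x2.4} and \eqref{x2.7} reproduces $1/T_{\rm l}^{(q,r)}$, while evaluated as $n\to+\infty$ via \eqref{x2.8} and \eqref{x2.3} it reproduces $1/T_{\rm r}^{(q,r)}$, forcing them to coincide. For the meromorphic-extension claim I would invoke the analyticity of the Jost solutions to \eqref{1.1}, established in analogy with Theorem~\ref{thm:theorem x2.1}: since $z^{n}\phi_n^{(q,r)}$ and $z^{-n}\psi_n^{(q,r)}$ extend analytically to $|z|<1$, the determinant defining $1/T^{(q,r)}$ is analytic there, so $T^{(q,r)}$ is meromorphic in $|z|<1$, and symmetrically $\bar{T}^{(q,r)}$ is meromorphic in $|z|>1$.

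For part (b), I would first derive \eqref{x2.38} purely from the antisymmetry of the $2\times2$ determinant. Indeed, \eqref{x2.34} and \eqref{x2.35} give $L^{(q,r)}/T^{(q,r)}=\begin{vmatrix}\psi_n^{(q,r)} & \bar{\phi}_n^{(q,r)}\end{vmatrix}$ and $\bar{R}^{(q,r)}/\bar{T}^{(q,r)}=\begin{vmatrix}\bar{\phi}_n^{(q,r)} & \psi_n^{(q,r)}\end{vmatrix}$, which are negatives of one another; this yields the first identity in \eqref{x2.38}, and the second follows the same way from the remaining pair $\begin{vmatrix}\phi_n^{(q,r)} & \bar{\psi}_n^{(q,r)}\end{vmatrix}$ and $\begin{vmatrix}\bar{\psi}_n^{(q,r)} & \phi_n^{(q,r)}\end{vmatrix}$. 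Here I have already used \eqref{x2.37} to write $T^{(q,r)}$ and $\bar{T}^{(q,r)}$ in place of the left and right transmission coefficients.

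To obtain \eqref{x2.39} I would compute the two remaining constant Wronskians directly from the normalizations: letting $n\to+\infty$ in \eqref{x2.3} and \eqref{x2.5} gives $\begin{vmatrix}\psi_n^{(q,r)} & \bar{\psi}_n^{(q,r)}\end{vmatrix}=-1$, and letting $n\to-\infty$ in \eqref{x2.4} and \eqref{x2.6} gives $\begin{vmatrix}\phi_n^{(q,r)} & \bar{\phi}_n^{(q,r)}\end{vmatrix}=1$, both legitimate since these Wronskians are $n$-independent. Then I would apply the Plücker (Jacobi) identity for the four column solutions (superscripts suppressed),
\begin{equation*}
\begin{vmatrix}\phi_n & \psi_n\end{vmatrix}\begin{vmatrix}\bar{\psi}_n & \bar{\phi}_n\end{vmatrix}
-\begin{vmatrix}\phi_n & \bar{\psi}_n\end{vmatrix}\begin{vmatrix}\psi_n & \bar{\phi}_n\end{vmatrix}
+\begin{vmatrix}\phi_n & \bar{\phi}_n\end{vmatrix}\begin{vmatrix}\psi_n & \bar{\psi}_n\end{vmatrix}=0,
\end{equation*}
and substitute the values from \eqref{x2.32}, \eqref{x2.34} together with the two constants just found. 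This collapses to $1/(T^{(q,r)}\bar{T}^{(q,r)})-L^{(q,r)}\bar{L}^{(q,r)}/(T^{(q,r)}\bar{T}^{(q,r)})-1=0$, i.e. $T^{(q,r)}\bar{T}^{(q,r)}=1-L^{(q,r)}\bar{L}^{(q,r)}$. Multiplying the two identities in \eqref{x2.38} then gives $L^{(q,r)}\bar{L}^{(q,r)}=R^{(q,r)}\bar{R}^{(q,r)}$, which converts the last relation into the second equality of \eqref{x2.39}.

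I expect the algebra to be entirely routine; the only genuinely substantive ingredient is the analyticity of the Jost solutions to \eqref{1.1} needed for the meromorphic-extension claim in (a), which I would handle exactly as in the proof of Theorem~\ref{thm:theorem x2.1} by iterating \eqref{1.1} from the asymptotic normalizations and checking uniform convergence of the resulting series. An alternative to the Plücker step, closer in spirit to the proof of Theorem~\ref{thm:theorem x2.2}, would be to write the two scattering relations expressing $\phi_n^{(q,r)},\bar{\phi}_n^{(q,r)}$ in terms of $\psi_n^{(q,r)},\bar{\psi}_n^{(q,r)}$ and conversely, and to impose that the product of the two scattering matrices equals the identity; the diagonal entries then give \eqref{x2.39} and the off-diagonal entries re-derive \eqref{x2.38}, with the simplification that the factor $D_\infty^{(u,v)}$ appearing in Theorem~\ref{thm:theorem x2.2} is here replaced by $1$ because the coefficient matrix in \eqref{1.1} has determinant $1$.
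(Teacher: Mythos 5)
Your proposal is correct and takes essentially the same route as the paper's own proof: the paper likewise obtains \eqref{x2.37} from the coincidence of the Wronskian representations in \eqref{x2.32}--\eqref{x2.33}, obtains \eqref{x2.38} from the antisymmetry of the determinants in \eqref{x2.34}--\eqref{x2.35}, gets the meromorphic extensions from the analyticity of the Jost solutions established by iterating \eqref{1.1} with \eqref{x2.3}--\eqref{x2.6}, and notes that the whole theorem can alternatively be proved as in Theorem~\ref{thm:theorem x2.2}, which is exactly your closing scattering-matrix remark. The only difference is cosmetic and concerns \eqref{x2.39}: the paper equates the $n\to-\infty$ and $n\to+\infty$ evaluations of the single constant Wronskian $\begin{vmatrix}\bar{\psi}_n^{(q,r)}&\psi_n^{(q,r)}\end{vmatrix}$, whereas you reach the same identity through the Pl\"ucker relation together with the constants $\begin{vmatrix}\psi_n^{(q,r)}&\bar{\psi}_n^{(q,r)}\end{vmatrix}=-1$ and $\begin{vmatrix}\phi_n^{(q,r)}&\bar{\phi}_n^{(q,r)}\end{vmatrix}=1$, an equivalent bookkeeping of the same two ingredients (constancy of Wronskians for the determinant-one system and the Jost asymptotics).
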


\begin{proof}
The proof can be obtained as in the proof of Theorem~\ref{thm:theorem x2.2}. As an alternate proof, we see that \eqref{x2.37} follows from \eqref{x2.32} and \eqref{x2.33}; \eqref{x2.38} follows from \eqref{x2.34} and \eqref{x2.35}; and that \eqref{x2.39} is established by using the fact that the Wronskian of $\bar{ \psi}_n^{(q,r)}$ and $\psi_{n}^{(q,r)}$ as in \eqref{x2.31} yields the same value as $n \to -\infty$ and $n \to +\infty$. Finally, the aforementioned meromorphic extensions for the transmission coefficients follow from the fact that the Jost solutions $\psi_{n}^{(q,r)}$ and $\phi_{n}^{(q,r)}$ have analytic extensions in $z$ to $|z|<1$ and that the Jost solutions $\bar{\psi}_{n}^{(q,r)}$ and $\bar{\phi}_{n}^{(q,r)}$ have analytic extensions in $z$ to $|z|>1,$ where the analytic extensions can be established by iterating \eqref{1.1} and by using \eqref{x2.3}--\eqref{x2.6}.
\end{proof}

We see from \eqref{x2.37} that the left and right transmission coefficients for \eqref{1.1} are equal whereas the same does not hold for \eqref{1.2aa}. Similar to \eqref{x2.D_n} we define the related quantities $D_n^{(q,r)}$ and $D_\infty^{(q,r)}$ for \eqref{1.1} as
\begin{equation}\label{x2.42}
D_n^{(q,r)}:=\prod_{j=-\infty}^{n}\left(1-q_j\,r_j\right),\quad D_\infty^{(q,r)}:=\prod_{j=-\infty}^{\infty}\left(1-q_j\,r_j\right).
\end{equation}
For \eqref{1.1} we also define the quantities $E_n^{(q,r)}$ and $E_\infty^{(q,r)}$ as
\begin{equation}\label{x2.43}
E_n^{(q,r)}:=\prod_{j=-\infty}^{n}\left(1+q_j\,r_{j+1}\right),\quad 	E_\infty^{(q,r)}:=\prod_{j=-\infty}^{\infty}\left(1+q_j\,r_{j+1}\right).
\end{equation}
From \eqref{1.1a} and the fact that $q_n$ and $r_n$ decay rapidly, it follows that the quantities $D_n^{(q,r)},$  $D_\infty^{(q,r)},$ $E_n^{(q,r)},$  $D_\infty^{(q,r)}$ are each well defined and nonzero.

Let us introduce the scalar quantities $S_n^{(q,r)}$ and $Q_n^{(q,r)}$ as
\begin{equation}\label{Sn}
S_n^{(q,r)}:=\sum_{k=-\infty}^{n}\ds\frac{r_{k}\left(q_{k}-q_{k+1}-q_{k}
\,q_{k+1}\,r_{k+1}\right)}{(1-q_{k}\,r_{k})(1-q_{k+1}\,r_{k+1})},
\end{equation}	
	\begin{equation}\label{Qn}
Q_n^{(q,r)}:=\sum_{k=-\infty}^{n}\ds\frac{r_{k+2}\left(q_{k}-q_{k+1}-q_{k}
\,q_{k+1}\,r_{k+1}\right)}{(1+q_{k}\,r_{k+1})(1+q_{k+1}\,r_{k+2})}.
\end{equation}
Letting $n\to +\infty$ in \eqref{Sn} and \eqref{Qn} we get
	\begin{equation}\label{Sn1}
S_{\infty}^{(q,r)}:=\sum_{k=-\infty}^{\infty}\ds\frac{r_{k}\left(q_{k}-q_{k+1}-q_{k}
\,q_{k+1}\,r_{k+1}\right)}{(1-q_{k}\,r_{k})(1-q_{k+1}\,r_{k+1})},
\end{equation}	
\begin{equation}\label{Qn1}
Q_{\infty}^{(q,r)}:=\sum_{k=-\infty}^{\infty}\ds\frac{r_{k+2}\left(q_{k}-q_{k+1}-q_{k}
\,q_{k+1}\,r_{k+1}\right)}{(1+q_{k}\,r_{k+1})(1+q_{k+1}\,r_{k+2})}.
\end{equation}

 As stated in Theorem~\ref{thm:theorem x2.4}(a), the transmission coefficient $T^{(q,r)}$ for \eqref{1.1} has a meromorphic extension to $|z|<1$ and the transmission coefficient $\bar{T}^{(q,r)}$  for \eqref{1.1} has a meromorphic extension to $|z|>1.$ The asymptotics of those two transmission coefficients are presented next.

\begin{proposition}
	\label{thm:theorem x2.5}
		Assume that the potentials $q_n$ and $r_n$ appearing in \eqref{1.1} are rapidly decaying and satisfy \eqref{1.1a}. Then, the small-$z$ asymptotics of transmission coefficient $T^{(q,r)}$ for \eqref{1.1} is given by
\begin{equation}\label{x2.40} T^{(q,r)}=\ds\frac{1}{D_\infty^{(q,r)}}\left[1-z^2\,S_{\infty}^{(q,r)}
+O\left(z^4\right)\right],\qquad z \to 0,
\end{equation}
where  $D_\infty^{(q,r)}$ and $S_{\infty}^{(q,r)}$ are the scalar quantities defined in \eqref{x2.42} and \eqref{Sn1}, respectively.
Similarly,	the large-$z$ asymptotics of transmission coefficient $\bar{T}^{(q,r)}$ for \eqref{1.1} is given by
\begin{equation}\label{x2.41}	\bar{T}^{(q,r)}=\ds\frac{1}{E_\infty^{(q,r)}}\left[1-\frac{1}{z^2}
\,Q_{\infty}^{(q,r)}+O\left(\frac{1}{z^4}\right)\right],
\qquad z \to \infty,	\end{equation}
	where $E_\infty^{(q,r)}$ and $Q_{\infty}^{(q,r)}$ are the scalar quantities defined in \eqref{x2.43} and \eqref{Qn1}, respectively.
\end{proposition}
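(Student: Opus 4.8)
The plan is to read each transmission coefficient off a single scalar entry of a suitably renormalized Jost solution and to expand that entry in powers of $z^{2}$ (respectively $z^{-2}$). By \eqref{x2.37} we have $T_{\rm l}^{(q,r)}=T_{\rm r}^{(q,r)}=:T^{(q,r)}$ and $\bar T_{\rm l}^{(q,r)}=\bar T_{\rm r}^{(q,r)}=:\bar T^{(q,r)}$, so the second component of \eqref{x2.7} shows that $1/T^{(q,r)}$ is the $n\to-\infty$ limit of the lower entry of $z^{-n}\psi_n^{(q,r)}$, while the first component of \eqref{x2.10} shows that $1/\bar T^{(q,r)}$ is the $n\to-\infty$ limit of the upper entry of $z^{n}\bar\psi_n^{(q,r)}$. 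Setting $u_n:=z^{-n}\psi_n^{(q,r)}$ turns \eqref{1.1} into $u_n=\bigl[\begin{smallmatrix} z^{2}&(z^{2}-1)q_n\\ z^{2}r_n&1+(z^{2}-1)q_nr_n\end{smallmatrix}\bigr]u_{n+1}$, whose coefficient matrix is a polynomial in $z^{2}$, while \eqref{x2.3} supplies the $z$-independent datum $u_n\to\bigl[\begin{smallmatrix}0\\1\end{smallmatrix}\bigr]$ as $n\to+\infty$; iterating exactly as in Theorem~\ref{thm:theorem x2.1}(f) shows that $u_n$ is a power series in $z^{2}$ with nonnegative powers, convergent for $|z|\le1$. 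The substitution $w_n:=z^{n}\bar\psi_n^{(q,r)}$ is handled identically: it satisfies $w_n=\bigl[\begin{smallmatrix}1&(1-z^{-2})q_n\\ r_n&z^{-2}+(1-z^{-2})q_nr_n\end{smallmatrix}\bigr]w_{n+1}$, a polynomial in $z^{-2}$, with $w_n\to\bigl[\begin{smallmatrix}1\\0\end{smallmatrix}\bigr]$ as $n\to+\infty$ by \eqref{x2.5}, so $w_n$ is a power series in $z^{-2}$ convergent for $|z|\ge1$.

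First I would extract the leading terms. Writing $u_n=u_n^{(0)}+z^{2}u_n^{(1)}+O(z^{4})$ with $u_n^{(j)}=\bigl[\begin{smallmatrix}a_n^{(j)}\\ b_n^{(j)}\end{smallmatrix}\bigr]$, the order-$z^{0}$ equations read $a_n^{(0)}=-q_nb_{n+1}^{(0)}$ and $b_n^{(0)}=(1-q_nr_n)b_{n+1}^{(0)}$; with $b_{+\infty}^{(0)}=1$ this gives $b_n^{(0)}=\prod_{j=n}^{\infty}(1-q_jr_j)$, hence $b_{-\infty}^{(0)}=D_\infty^{(q,r)}$ and $T^{(q,r)}(0)=1/D_\infty^{(q,r)}$. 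For $w_n=w_n^{(0)}+z^{-2}w_n^{(1)}+O(z^{-4})$ with $w_n^{(j)}=\bigl[\begin{smallmatrix}A_n^{(j)}\\ B_n^{(j)}\end{smallmatrix}\bigr]$, the order-$z^{0}$ equations first yield the algebraic relation $B_n^{(0)}=r_nA_n^{(0)}$, after which $A_n^{(0)}=(1+q_nr_{n+1})A_{n+1}^{(0)}$, so $A_n^{(0)}=\prod_{j=n}^{\infty}(1+q_jr_{j+1})$ and $\bar T^{(q,r)}(\infty)=1/E_\infty^{(q,r)}$; this reproduces the constant prefactors in \eqref{x2.40} and \eqref{x2.41}.

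Next I would compute the next-order coefficient. In each case the relevant scalar obeys an inhomogeneous first-order recursion whose homogeneous part is precisely the leading-order recursion already solved (namely $x_n=(1-q_nr_n)x_{n+1}$ for $b_n^{(1)}$ and $x_n=(1+q_nr_{n+1})x_{n+1}$ for $A_n^{(1)}$). Dividing the inhomogeneous recursion by the corresponding leading coefficient ($b_n^{(0)}$ or $A_n^{(0)}$) converts it into a pure difference $g_n-g_{n+1}=(\text{source})/(\text{leading coefficient})$, which I sum from $n$ to $+\infty$ using the vanishing of the correction at $+\infty$. For $b_n^{(1)}$ the source, after inserting $a_{n+1}^{(0)}=-q_{n+1}b_{n+2}^{(0)}$ and simplifying the ratios $b_{n+1}^{(0)}/b_n^{(0)}$ and $b_{n+2}^{(0)}/b_n^{(0)}$, collapses to exactly the summand of $S_\infty^{(q,r)}$ in \eqref{Sn1}, giving $b_{-\infty}^{(1)}=D_\infty^{(q,r)}\,S_\infty^{(q,r)}$ and hence $1/T^{(q,r)}=D_\infty^{(q,r)}\bigl[1+z^{2}S_\infty^{(q,r)}+O(z^{4})\bigr]$; inverting this series yields \eqref{x2.40}. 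For the barred system the two order-$z^{-2}$ equations are coupled, so I would first decouple them by verifying the ansatz $B_n^{(1)}=r_nA_n^{(1)}+B_{n+1}^{(0)}$, which reduces everything to the single recursion $A_n^{(1)}=(1+q_nr_{n+1})A_{n+1}^{(1)}+q_n\bigl(r_{n+2}A_{n+2}^{(0)}-r_{n+1}A_{n+1}^{(0)}\bigr)$.

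The one genuinely non-routine point is the final reconciliation in the barred case. Running the integrating-factor and telescoping step for $A_n^{(1)}$ produces the sum $\sum_k q_k\bigl(r_{k+2}-r_{k+1}-q_{k+1}r_{k+1}r_{k+2}\bigr)\big/\bigl[(1+q_kr_{k+1})(1+q_{k+1}r_{k+2})\bigr]$, whose summand is \emph{not} term-by-term equal to the summand of $Q_\infty^{(q,r)}$ in \eqref{Qn1}. Their difference simplifies to $\bigl(q_{k+1}r_{k+2}-q_kr_{k+1}\bigr)\big/\bigl[(1+q_kr_{k+1})(1+q_{k+1}r_{k+2})\bigr]$, which, on setting $x_k:=q_kr_{k+1}$, equals $1/(1+x_k)-1/(1+x_{k+1})$ and therefore telescopes; since $x_k\to0$ as $k\to\pm\infty$ by the rapid decay of the potentials, the telescoped sum vanishes, so the two sums coincide and equal $Q_\infty^{(q,r)}$. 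This gives $A_{-\infty}^{(1)}=E_\infty^{(q,r)}\,Q_\infty^{(q,r)}$, hence $1/\bar T^{(q,r)}=E_\infty^{(q,r)}\bigl[1+z^{-2}Q_\infty^{(q,r)}+O(z^{-4})\bigr]$, and inverting produces \eqref{x2.41}. Beyond this identity the remaining work is bookkeeping, the only analytic inputs being the uniform convergence of the $z^{2}$- and $z^{-2}$-expansions and the legitimacy of the $n\to\pm\infty$ limits, both of which follow from the rapid decay of $q_n$ and $r_n$ as in Theorem~\ref{thm:theorem x2.1}.
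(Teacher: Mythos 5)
Your proposal is correct and takes essentially the same route as the paper's proof: the paper likewise premultiplies the renormalized Jost solutions $z^{-n}\psi_n^{(q,r)}$ and $z^{n}\bar{\psi}_n^{(q,r)}$ by $[0\;\;1]$ and $[1\;\;0]$, iterates the transfer relation with a matrix $\Xi_n$ that is exactly your $z^{2}$-polynomial coefficient matrix, lets $n\to-\infty$ and $m\to+\infty$, and then expands in powers of $z^{\pm 2}$. Your order-by-order recursions, the integrating-factor/telescoping step, and especially the reconciliation of the barred-case sum with $Q_{\infty}^{(q,r)}$ via the telescoping identity in $x_k=q_k r_{k+1}$ are a correct and careful execution of what the paper compresses into ``lengthy but straightforward'' algebra.
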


\begin{proof}
The proof is lengthy but straightforward. To obtain \eqref{x2.40} we use \eqref{1.1} with the Jost solution $\psi_{n}^{(q,r)},$  premultiply both sides of \eqref{1.1} with $z^{-n}[
0\quad 1],$ iterate the resulting equation, and for $n<m$ we get
\begin{equation}\label{x.201}
\begin{bmatrix}
0&1
\end{bmatrix}z^{-n}\,\psi_{n}^{(q,r)}=\begin{bmatrix}
0&1
\end{bmatrix}\Xi_n\,\Xi_{n+1}\cdots \Xi_m\,z^{-m-1}\,\psi_{m+1}^{(q,r)},
\end{equation}
where we have defined
\begin{equation*}
\Xi_n:=\begin{bmatrix}
0&-q_n\\
\noalign{\medskip}
0&1-q_nr_n
\end{bmatrix}+z^{2}\begin{bmatrix}
1&q_n\\
\noalign{\medskip}
r_n&q_nr_n
\end{bmatrix}.
\end{equation*}
We note that in the limit $n\to-\infty$ the left-hand side of \eqref{x.201} yields $1/T^{(q,r)}$. Letting $n\to-\infty$ and $m\to+\infty$  in \eqref{x.201}, using \eqref{x2.3} and \eqref{x2.7} for $\psi_{n}^{(q,r)}$ and also using $D_\infty^{(q,r)}$ defined in \eqref{x2.42} and $S_{\infty}^{(q,r)}$ defined in \eqref{Sn1}, after some straightforward algebra we get \eqref{x2.40}. The proof of \eqref{x2.41} is similarly obtained by using \eqref{1.1} with the Jost solution $\bar{\psi}_{n}^{(q,r)}$, premultiplying both sides of \eqref{1.1} with $z^{n}[
1\quad 0],$ iterating the resulting equation, and using \eqref{x2.5} and \eqref{x2.10} for $\bar{\psi}_{n}^{(q,r)}$ and also using $E_\infty^{(q,r)}$ defined in \eqref{x2.43} and $Q_\infty^{(q,r)}$ defined in \eqref{Qn1}.
\end{proof}

In the next theorem we provide some further relevant properties of the transmission coefficients for \eqref{1.1}.

\begin{theorem}
	\label{thm:theorem x2.6}
	Assume that the potentials $q_n$ and $r_n$ appearing in \eqref{1.1} are rapidly decaying and satisfy \eqref{1.1a}. Then, for the transmission coefficients $T^{(q,r)}$ and $\bar{T}^{(q,r)}$ of \eqref{1.1} we have the following:
	
\begin{enumerate}
		\item[\text{\rm(a)}] Neither $T^{(q,r)}$ nor $\bar{T}^{(q,r)}$ can vanish when $z\in\mathbb{T}.$
		
		\item[\text{\rm(b)}] We have
		\begin{equation}\label{x.1}
		\ds\frac{1}{T^{(q,r)}(0)}=D_\infty^{(q,r)},\quad \ds\frac{1}{\bar{T}^{(q,r)}(\infty)}=E_\infty^{(q,r)}.
		\end{equation}
	
		\item[\text{\rm(c)}] The quantity $1/T^{(q,r)}$ has at most a finite number of zeros in $0<|z|<1$ and the multiplicity of each such zero is finite.
		
 	   \item[\text{\rm(d)}]	The quantity $1/\bar{T}^{(q,r)}$ has at most a finite number of zeros in $|z|>1$ and the multiplicity of each such zero is finite.
		
		\item[\text{\rm(e)}] The transmission coefficient $T^{(q,r)}$  is meromorphic in $|z|<1,$ and the number of its poles and the multiplicities of those poles are both finite. Similarly, $\bar{T}^{(q,r)}$  is meromorphic in $|z|>1,$ and the number of its poles and the multiplicities of those poles are both finite.
		
		\item[\text{\rm(f)}] If $z_j$ is a pole of $T^{(q,r)}$ in $0<|z|<1,$ then $-z_j$ is also a pole of $T^{(q,r)}$. Similarly, if $\bar{z}_j$ is a pole of $\bar{T}^{(q,r)}$ in $|z|>1,$ then $-\bar{z}_j$ is also a pole  of $\bar{T}^{(q,r)}.$
\end{enumerate}
\end{theorem}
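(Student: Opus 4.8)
The plan is to obtain (a)--(e) by the same mechanism used for the pair $(u,v)$ in Theorem~\ref{thm:theorem x2.3a}, now exploiting the extra structure that the coefficient matrix of \eqref{1.1} has determinant $1$, so that the transmission coefficients are themselves Wronskians via \eqref{x2.32}--\eqref{x2.33}. For (a) I would note that the determinant defining $1/T^{(q,r)}$ in \eqref{x2.32} is independent of $n$ and, for $z\in\mathbb{T}$, is a finite number because the Jost solutions $\phi_n^{(q,r)}$ and $\psi_n^{(q,r)}$ are bounded on the unit circle; hence $1/T^{(q,r)}$ is finite there and $T^{(q,r)}$ cannot vanish, and the second determinant in \eqref{x2.32} disposes of $\bar T^{(q,r)}$ in the same way. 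Part (b) is read off directly from Proposition~\ref{thm:theorem x2.5}: the $z\to 0$ expansion \eqref{x2.40} gives $1/T^{(q,r)}(0)=D_\infty^{(q,r)}$, while the $z\to\infty$ expansion \eqref{x2.41} gives $1/\bar T^{(q,r)}(\infty)=E_\infty^{(q,r)}$, both nonzero by \eqref{1.1a}.

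For (c) and (d) I would use that $1/T^{(q,r)}$, being the Wronskian \eqref{x2.32} of $\phi_n^{(q,r)}$ and $\psi_n^{(q,r)}$, inherits their analytic extension to $|z|<1$ and continuity up to $|z|\le 1$ (established in the proof of Theorem~\ref{thm:theorem x2.4}(a)); by (a) it does not vanish on $\mathbb{T}$ and by (b) it is nonzero at $z=0$, so its zeros are isolated, confined to a compact subset of $0<|z|<1$ bounded away from $\mathbb{T}$, and therefore finite in number and of finite multiplicity, with the analogous statement holding for $1/\bar T^{(q,r)}$ in $|z|>1$. Part (e) is then immediate: $T^{(q,r)}=1/\bigl(1/T^{(q,r)}\bigr)$ is meromorphic in $|z|<1$ with poles exactly at the zeros of $1/T^{(q,r)}$ and with the same multiplicities, and similarly for $\bar T^{(q,r)}$ in $|z|>1$.

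The genuinely new point, and the step I expect to be the main obstacle, is the parity statement (f). The key observation is that the coefficient matrix of \eqref{1.1}, say $A_n(z)$, satisfies $A_n(-z)=-A_n(z)$; this is verified entrywise, the only nonobvious input being that $z-1/z$ is odd in $z$. It follows that if $\Psi_n$ solves \eqref{1.1} at parameter $z$, then $(-1)^n\,\Psi_n$ evaluated at $-z$ solves \eqref{1.1} at $z$. Comparing the defining asymptotics \eqref{x2.3} and \eqref{x2.4} (the factor $z^{\pm n}$ picks up $(-1)^n$ under $z\mapsto -z$, which is cancelled by the prefactor $(-1)^n$) and invoking uniqueness of the Jost solutions, I would deduce $\psi_n^{(q,r)}(z)=(-1)^n\psi_n^{(q,r)}(-z)$ and $\phi_n^{(q,r)}(z)=(-1)^n\phi_n^{(q,r)}(-z)$. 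Substituting these into \eqref{x2.32}, the two factors $(-1)^n$ multiply to $1$ and give $1/T^{(q,r)}(z)=1/T^{(q,r)}(-z)$, so $T^{(q,r)}$ is even in $z$; running the same computation with \eqref{x2.5}--\eqref{x2.6} shows $\bar T^{(q,r)}$ is even. Evenness forces the poles to be symmetric under $z\mapsto -z$, which is precisely (f).

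The care needed in (f) is to ensure that the parity relations for the Jost solutions, derived from the boundary behavior, persist throughout the relevant domains of analyticity rather than merely on $\mathbb{T}$; this follows because both sides are analytic in $z$ on $|z|<1$ (respectively $|z|>1$) and are pinned down by the same asymptotics, so they agree wherever the Jost solutions are defined. Once $T^{(q,r)}(z)=T^{(q,r)}(-z)$ is known as an identity of meromorphic functions, the matching of a pole at $z_j$ with a pole of equal multiplicity at $-z_j$ is automatic, completing the argument.
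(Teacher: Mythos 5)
The central difficulty in this theorem is concentrated in parts (c)--(e), and your treatment of it contains a genuine gap. Your argument for (a) shows only that the Wronskian $1/T^{(q,r)}$ in \eqref{x2.32} is a \emph{finite} continuous function on $\mathbb{T}$ (the Jost solutions being finite there), so that $T^{(q,r)}$, being its reciprocal, cannot take the value $0$. But in (c) you write ``by (a) it does not vanish on $\mathbb{T}$,'' where ``it'' is now $1/T^{(q,r)}$. That is a different and much stronger statement: non-vanishing of the Wronskian on the unit circle, i.e.\ absence of poles of $T^{(q,r)}$ on $\mathbb{T}$ (absence of spectral singularities). Finiteness of a continuous function says nothing about its zeros, so this step is a non sequitur obtained by conflating ``$T^{(q,r)}\ne 0$'' with ``$1/T^{(q,r)}\ne 0$.'' The boundary non-vanishing is exactly what your compactness argument requires: $1/T^{(q,r)}$ is analytic only in $|z|<1$ and merely continuous up to $\mathbb{T}$, so if it vanished at a point of $\mathbb{T}$ its zeros in the open disk could accumulate there, and the finiteness assertions of (c), (d), and hence (e) would collapse. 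The paper does not get this ingredient from boundedness of the Jost solutions; it gets it by transplanting the proof of Theorem~\ref{thm:theorem x2.3a}(a), i.e.\ from the scattering relation \eqref{x2.39}, $T^{(q,r)}\,\bar{T}^{(q,r)}=1-L^{(q,r)}\,\bar{L}^{(q,r)}=1-R^{(q,r)}\,\bar{R}^{(q,r)}$ on $\mathbb{T}$, combined with the continuity of the Wronskian ratios $L^{(q,r)}/T^{(q,r)}$, $\bar{L}^{(q,r)}/\bar{T}^{(q,r)}$ (in Theorem~\ref{thm:theorem x2.3a} the same role is played by \eqref{x2.28} and $D_\infty^{(u,v)}\ne 0$); note that part (a) is invoked in the proof of Theorem~\ref{thm:theorem x2.3a}(c) precisely in the form ``$1/T$ cannot vanish on $\mathbb{T}$.'' Your proposal is missing an argument of this kind, and without it (c)--(e) remain unproven.

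The remaining parts are sound. Part (b) is read off from \eqref{x2.40} and \eqref{x2.41} exactly as in the paper, and (e) follows from (c)--(d) once those are secured, since $1/T^{(q,r)}\not\equiv 0$ by (b). Your part (f) is correct and is a nice self-contained version of the paper's route: the coefficient matrix of \eqref{1.1} is odd under $z\mapsto -z$, whence $\psi_n^{(q,r)}(-z)=(-1)^n\psi_n^{(q,r)}(z)$ and $\phi_n^{(q,r)}(-z)=(-1)^n\phi_n^{(q,r)}(z)$ by uniqueness of the Jost solutions, so the Wronskians \eqref{x2.32} are even in $z$; this is precisely the evenness statement of Theorem~\ref{thm:theorem x3.4a}(a),(j) that the paper cites for its (f), and your remark that the identity persists throughout the regions of analyticity (both sides being analytic with the same asymptotics) is the right way to close that step.
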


\begin{proof}
We note that \eqref{x.1} follows from \eqref{x2.40} and \eqref{x2.41}. The rest of the proof can be given in a way similar to the proof of Theorem~\ref{thm:theorem x2.3a}.
\end{proof}

Finally, in this section we clarify the relationship between the poles of the transmission coefficients and the linear dependence of the relevant Jost solutions for each of the linear systems \eqref{1.1}, \eqref{1.2aa}, and \eqref{1.2ab}. This clarification has many important consequences. It allows us to introduce the dependency constants at the bound states. It also allows us to deal with bound states of any multiplicities in an elegant manner. The treatment given here for the linear systems \eqref{1.1}, \eqref{1.2aa}, and \eqref{1.2ab} can be readily generalized to other linear systems both in the discrete and continuous cases.

In terms of the Jost solutions $\psi_n,$ $\phi_n,$ $\bar{\psi}_n,$ $\bar{\phi}_n$ appearing in \eqref{x2.3}--\eqref{x2.6}
for each of the linear systems
\eqref{1.1}, \eqref{1.2aa}, and \eqref{1.2ab}, we define
\begin{equation}\label{ta.5001}
a_n^{(q,r)}:=\begin{vmatrix}
\phi_n^{(q,r)}&\psi_n^{(q,r)}
\end{vmatrix},\quad \bar{a}_n^{(q,r)}:=\begin{vmatrix}
\bar{\phi}_n^{(q,r)}&\bar{\psi}_n^{(q,r)}
\end{vmatrix},
\end{equation}
\begin{equation}\label{ta.5002}
a_n^{(u,v)}:=\begin{vmatrix}
\phi_n^{(u,v)}&\psi_n^{(u,v)}
\end{vmatrix},\quad \bar{a}_n^{(u,v)}:=\begin{vmatrix}
\bar{\phi}_n^{(u,v)}&\bar{\psi}_n^{(u,v)}
\end{vmatrix},
\end{equation}
\begin{equation}\label{ta.5003}
a_n^{(p,s)}:=\begin{vmatrix}
\phi_n^{(p,s)}&\psi_n^{(p,s)}
\end{vmatrix},\quad \bar{a}_n^{(p,s)}:=\begin{vmatrix}
\bar{\phi}_n^{(p,s)}&\bar{\psi}_n^{(p,s)}
\end{vmatrix},
\end{equation}
where on the right-hand sides we have the Wronskian determinants.

The relationships among $a_n^{(q,r)},$ $\bar{a}_n^{(q,r)},$ and the transmission coefficients $T^{(q,r)}$ and
$\bar{T}^{(q,r)}$ for \eqref{1.1} are clarified in the following theorem.

\begin{theorem}
\label{thm:theorem ta.5004}
Assume that the potentials $q_n$ and $r_n$ appearing in \eqref{1.1} are rapidly decaying and satisfy \eqref{1.1a}. Then, we have the following:

\begin{enumerate}

\item[\text{\rm(a)}] The scalar quantities $a_n^{(q,r)}$ and $\bar{a}_n^{(q,r)}$ defined in \eqref{ta.5001} are independent of $n,$ depend only on $z,$ and are related to the transmission coefficients $T^{(q,r)}$ and
$\bar{T}^{(q,r)}$ appearing in \eqref{x2.32}, \eqref{x2.33}, \eqref{x2.37} as
\begin{equation}\label{ta.5005}
a_n^{(q,r)}=\ds\frac{1}{T^{(q,r)}}, \quad \bar{a}_n^{(q,r)}=-\ds\frac{1}{\bar{T}^{(q,r)}}.
\end{equation}

\item[\text{\rm(b)}] Consequently, the linear dependence of the Jost solutions $\phi_n^{(q,r)}$ and $\psi_n^{(q,r)}$
occurs at the poles of $T^{(q,r)}$ in $0<|z|<1.$ Similarly, the linear dependence of the Jost solutions $\bar{\phi}_n^{(q,r)}$ and $\bar{\psi}_n^{(q,r)}$ occurs at the poles of $\bar{T}^{(q,r)}$ in $|z|>1.$

\item[\text{\rm(c)}] In particular, if $T^{(q,r)}$ has a pole at $z=\pm z_j$ with multiplicity $m_j,$ then we have
\begin{equation}\label{ta.5006}
\ds\frac{d^k a_n^{(q,r)}}{dz^k}\bigg |_{z=\pm z_j}=0,\qquad 0\le k\le m_{j-1},
\quad n\in \mathbb{Z}.
\end{equation}
Similarly, if $\bar{T}^{(q,r)}$ has a pole at $z=\pm \bar{z}_j$ with multiplicity $\bar{m}_j,$ we then have
\begin{equation}\label{ta.5007}
\ds\frac{d^k \bar{a}_n^{(q,r)}}{dz^k}\bigg |_{z=\pm \bar{z}_j}=0,\qquad 0\le k\le \bar{m}_{j-1},\quad n\in \mathbb{Z}.
\end{equation}
\end{enumerate}
\end{theorem}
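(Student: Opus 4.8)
The plan is to prove part (a) first, since (b) and (c) follow from it almost immediately. For the $n$-independence I would invoke the fact recorded just before \eqref{x2.31} that the coefficient matrix in \eqref{1.1} has determinant $1$. Writing \eqref{1.1} for the column pair $\phi_n^{(q,r)},\psi_n^{(q,r)}$ in matrix form (exactly as \eqref{x.4} does for \eqref{1.2aa}) and taking determinants of both sides, the multiplicative factor is precisely this determinant, namely $1$; hence $a_n^{(q,r)}=a_{n+1}^{(q,r)}$ for every $n$, so $a_n^{(q,r)}$ is constant in $n$, and the identical argument applies to $\bar{a}_n^{(q,r)}$. Thus both quantities depend on $z$ alone.

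To identify them with the transmission coefficients I would match the definitions \eqref{ta.5001} against the Wronskian formulas \eqref{x2.32}--\eqref{x2.33}. For the unbarred quantity the match is direct: $a_n^{(q,r)}=\begin{vmatrix}\phi_n^{(q,r)}&\psi_n^{(q,r)}\end{vmatrix}=1/T_{\rm l}^{(q,r)}=1/T^{(q,r)}$, where the last step uses \eqref{x2.37}. For the barred quantity the columns in \eqref{ta.5001} appear in the opposite order to those defining $1/\bar{T}_{\rm l}^{(q,r)}$ in \eqref{x2.32}, so swapping the two columns produces a sign: $\bar{a}_n^{(q,r)}=\begin{vmatrix}\bar{\phi}_n^{(q,r)}&\bar{\psi}_n^{(q,r)}\end{vmatrix}=-\begin{vmatrix}\bar{\psi}_n^{(q,r)}&\bar{\phi}_n^{(q,r)}\end{vmatrix}=-1/\bar{T}_{\rm l}^{(q,r)}=-1/\bar{T}^{(q,r)}$, again by \eqref{x2.37}. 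This establishes \eqref{ta.5005}. As a self-contained alternative I could evaluate the now $n$-independent Wronskians in the limit $n\to+\infty$ using the asymptotics \eqref{x2.8}--\eqref{x2.9}; on $\mathbb{T}$ the cross terms involving the $o(1)$ entries vanish and the surviving leading terms give exactly $1/T^{(q,r)}$ and $-1/\bar{T}^{(q,r)}$.

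For part (b) I would invoke the elementary fact that two column solutions of a first-order $2\times 2$ system are linearly dependent precisely when their $2\times 2$ Wronskian vanishes. Since $a_n^{(q,r)}$ does not depend on $n$ by part (a), linear dependence of $\phi_n^{(q,r)}$ and $\psi_n^{(q,r)}$ at one value of $n$ is equivalent to its holding for all $n$, and by \eqref{ta.5005} this occurs exactly where $1/T^{(q,r)}=0$, i.e. at the poles of $T^{(q,r)}$. Because $\phi_n^{(q,r)}$ and $\psi_n^{(q,r)}$ extend analytically to $|z|<1$ (as noted in the proof of Theorem~\ref{thm:theorem x2.4}), the Wronskian $a_n^{(q,r)}=1/T^{(q,r)}$ is analytic there, so these poles lie in $0<|z|<1$; the barred statement follows identically in $|z|>1$ from the analyticity of $\bar{\phi}_n^{(q,r)}$ and $\bar{\psi}_n^{(q,r)}$.

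Finally, part (c) is just the order-of-vanishing reading of a pole. If $T^{(q,r)}$ has a pole of multiplicity $m_j$ at $z=\pm z_j$ (the two points appearing together by Theorem~\ref{thm:theorem x2.6}(f)), then $1/T^{(q,r)}=a_n^{(q,r)}$ has a zero of order $m_j$ there, which by definition means $a_n^{(q,r)}$ together with its derivatives of orders $0$ through $m_j-1$ in $z$ all vanish at $z=\pm z_j$; since $a_n^{(q,r)}$ is independent of $n$, this yields \eqref{ta.5006} for every $n\in\mathbb{Z}$, and the same computation for $\bar{a}_n^{(q,r)}=-1/\bar{T}^{(q,r)}$ gives \eqref{ta.5007}. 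The only point needing genuine care throughout is the bookkeeping of the column order in the barred Wronskian, which is the source of the minus sign in \eqref{ta.5005}, together with the use of \eqref{x2.37} to collapse the left and right transmission coefficients into the single quantities $T^{(q,r)}$ and $\bar{T}^{(q,r)}$; the remainder is routine.
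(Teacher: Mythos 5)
Your proposal is correct and follows essentially the same route as the paper's own proof: identify $a_n^{(q,r)}$ and $\bar{a}_n^{(q,r)}$ with $1/T^{(q,r)}$ and $-1/\bar{T}^{(q,r)}$ by matching \eqref{ta.5001} against \eqref{x2.32}, \eqref{x2.33}, \eqref{x2.37} (the column swap giving the minus sign), use the equivalence of linear dependence with the vanishing of the Wronskian for solutions of the same system \eqref{1.1}, and read \eqref{ta.5006}--\eqref{ta.5007} off as the order of vanishing of $1/T^{(q,r)}$ and $1/\bar{T}^{(q,r)}$ at the poles. Your explicit determinant-equals-one argument for $n$-independence is the same fact the paper records just after \eqref{x2.31}, so nothing is genuinely different.
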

\begin{proof}
Note that \eqref{ta.5005} is obtained directly from \eqref{x2.32}, \eqref{x2.33}, \eqref{x2.37}, and \eqref{ta.5001}.
Since each of $\psi_n^{(q,r)},$ $\phi_n^{(q,r)},$ $\bar{\psi}_n^{(q,r)},$  $\bar{\phi}_n^{(q,r)}$
satisfies the same linear system \eqref{1.1}, the linear dependence
and the vanishing of the Wronskian determinant are equivalent.
We also note that \eqref{ta.5006} and \eqref{ta.5007} directly follow \eqref{ta.5005}.
\end{proof}

In the next theorem we clarify the relationships among $a_n^{(u,v)},$ $\bar{a}_n^{(u,v)},$ and the transmission coefficients for \eqref{1.2aa}.
\begin{theorem}
	\label{thm:theorem ta.5008}
Assume that the potentials $u_n$ and $v_n$ appearing in \eqref{1.2aa}
are rapidly decaying and that $1-u_nv_n\ne 0$ for $n\in\mathbb{Z}$.
Then, we have the following:
\begin{enumerate}
	
	\item[\text{\rm(a)}] The scalar quantities $a_n^{(u,v)}$ and $\bar{a}_n^{(u,v)}$
	defined in \eqref{ta.5002} depend both on $n$ and $z,$ and
	they are related to the transmission coefficients $T_{\rm r}^{(u,v)}$ and
	$\bar{T}_{\rm r}^{(u,v)}$ appearing in \eqref{x2.26} and \eqref{x2.27} as

	\begin{equation}\label{ta.5009}
	a_n^{(u,v)}=\ds\frac{D_\infty^{(u,v)}}{D_{n-1}^{(u,v)}}\,\ds\frac{1}{T_{\rm r}^{(u,v)}}, \quad \bar{a}_n^{(u,v)}=-\ds\frac{D_\infty^{(u,v)}}{D_{n-1}^{(u,v)}}\,\ds\frac{1}{\bar{T}_{\rm r}^{(u,v)}},
	\end{equation}
	where $D_n^{(u,v)}$ and $D_\infty^{(u,v)}$ are the scalar quantities defined in
	\eqref{x2.D_n}.
	
	\item[\text{\rm(b)}] Since $D_\infty^{(u,v)}\ne 0$ and $D_n^{(u,v)}\ne 0$
	 for $n\in\mathbb{Z}$ and these quantities
 do not contain $z,$ we conclude from \eqref{ta.5002} and \eqref{ta.5009} that
	 the linear dependence of the Jost solutions $\phi_n^{(u,v)}$ and $\psi_n^{(u,v)}$
	 occurs at the poles of $T_{\rm r}^{(u,v)}$ and that the linear dependence of the Jost solutions $\bar{\phi}_n^{(u,v)}$ and $\bar{\psi}_n^{(u,v)}$ occurs at the poles of $\bar{T}_{\rm r}^{(u,v)}.$
	
	\item[\text{\rm(c)}] In particular, if $T_{\rm r}^{(u,v)}$ has a pole at $z=\pm z_j$
	of multiplicity $m_j,$ then we have
	\begin{equation}\label{ta.5010}
	\ds\frac{d^k a_n^{(u,v)}}{dz^k}\bigg|_{z=\pm z_j}=0,\qquad 0\le k\le m_{j-1},
\quad n\in \mathbb{Z}.
	\end{equation}
	Similarly, if $\bar{T}_{\rm r}^{(u,v)}$ has a pole at $z=\pm \bar{z}_j$ of multiplicity $\bar{m}_j,$ then we have
	\begin{equation}\label{ta.5011}
	\ds\frac{d^k \bar{a}_n^{(u,v)}}{dz^k}\bigg|_{z=\pm \bar{z}_j}=0,\qquad 0\le k\le \bar{m}_{j-1},
\quad n\in \mathbb{Z}.
	\end{equation}
\end{enumerate}
\end{theorem}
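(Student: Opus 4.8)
The plan is to exploit the transformation behavior of the Wronskian determinant under the recursion \eqref{1.2aa}, together with the large-$n$ asymptotics of the Jost solutions, to determine $a_n^{(u,v)}$ and $\bar a_n^{(u,v)}$ explicitly; parts (b) and (c) then follow as immediate consequences of the resulting formula.

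First I would establish the $n$-recursion for $a_n^{(u,v)}$. Taking determinants in the matrix equation \eqref{x.4} and using that the coefficient matrix in \eqref{1.2aa} has determinant $1-u_nv_n$ gives $a_n^{(u,v)}=(1-u_nv_n)\,a_{n+1}^{(u,v)}$; iterating this is exactly the content of \eqref{x.6}, so that for each fixed $n$ and every $m>n$ we have $a_n^{(u,v)}=(1-u_nv_n)\cdots(1-u_mv_m)\,a_{m+1}^{(u,v)}$. Unlike the situation for \eqref{1.1}, where the coefficient matrix has determinant $1$ and the Wronskian is $n$-independent, here the Wronskian genuinely depends on $n$ through the product of the factors $1-u_jv_j$, which is precisely why $a_n^{(u,v)}$ and $\bar a_n^{(u,v)}$ carry $n$-dependence.

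Next I would compute $\lim_{m\to\infty}a_{m+1}^{(u,v)}$ on $\mathbb{T}$. Substituting the $n\to+\infty$ asymptotics \eqref{x2.8} for $\phi_n^{(u,v)}$ and \eqref{x2.3} for $\psi_n^{(u,v)}$ into the determinant, the surviving diagonal product contributes $1/T_{\rm r}^{(u,v)}$ while the off-diagonal contribution is of the form $z^{\,m+1}\cdot o(1)$; since $|z^{\,m+1}|=1$ for $z\in\mathbb{T}$, this term vanishes in the limit, yielding $\lim_{m\to\infty}a_{m+1}^{(u,v)}=1/T_{\rm r}^{(u,v)}$. Letting $m\to\infty$ in the iterated recursion and recognizing $\prod_{j=n}^{\infty}(1-u_jv_j)=D_\infty^{(u,v)}/D_{n-1}^{(u,v)}$ from \eqref{x2.D_n} then gives the first identity in \eqref{ta.5009}. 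The computation for $\bar a_n^{(u,v)}$ is identical in structure: the determinant in \eqref{x.5} obeys the same recursion, and the $n\to+\infty$ asymptotics \eqref{x2.9} for $\bar\phi_n^{(u,v)}$ and \eqref{x2.5} for $\bar\psi_n^{(u,v)}$ produce $\lim_{m\to\infty}\bar a_{m+1}^{(u,v)}=-1/\bar T_{\rm r}^{(u,v)}$, the minus sign coming from the surviving off-diagonal term in the determinant expansion; this establishes the second identity in \eqref{ta.5009}. Both identities, proved on $\mathbb{T}$, extend to the relevant disks by analytic continuation, using the analyticity of $z^{n}\phi_n^{(u,v)}$ and $z^{-n}\psi_n^{(u,v)}$ from Theorem~\ref{thm:theorem x2.1} together with that of $1/T_{\rm r}^{(u,v)}$ and $1/\bar T_{\rm r}^{(u,v)}$ from Theorem~\ref{thm:theorem x2.3a}.

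Finally, parts (b) and (c) are read off \eqref{ta.5009}. Since $D_\infty^{(u,v)}$ and $D_{n-1}^{(u,v)}$ are nonzero and independent of $z$, the zeros of $a_n^{(u,v)}$ coincide in both location and multiplicity with the zeros of $1/T_{\rm r}^{(u,v)}$, that is, with the poles of $T_{\rm r}^{(u,v)}$; because $a_n^{(u,v)}=\det[\phi_n^{(u,v)}\ \psi_n^{(u,v)}]$ and the two columns solve the same system \eqref{1.2aa}, vanishing of this determinant is equivalent to linear dependence, which gives (b). For (c), a pole of $T_{\rm r}^{(u,v)}$ of multiplicity $m_j$ at $z=\pm z_j$ is a zero of $1/T_{\rm r}^{(u,v)}$ of the same multiplicity, hence a zero of $a_n^{(u,v)}$ of multiplicity $m_j$, and this is precisely the vanishing of $a_n^{(u,v)}$ together with its first $m_j-1$ derivatives as in \eqref{ta.5010}; the statement \eqref{ta.5011} for $\bar a_n^{(u,v)}$ follows in the same way. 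I expect the only delicate point to be the limit in the third step—specifically, justifying that the off-diagonal term genuinely vanishes—which is where the restriction $z\in\mathbb{T}$, keeping $z^{\pm(m+1)}$ bounded, is essential before the result is propagated off the circle by analytic continuation.
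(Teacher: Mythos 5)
Your proposal is correct and follows essentially the same route as the paper's own proof: iterate the determinant identity $a_n^{(u,v)}=(1-u_nv_n)\,a_{n+1}^{(u,v)}$ coming from \eqref{x.4}--\eqref{x.5}, pass to the limit $m\to+\infty$ using the asymptotics \eqref{x2.3}, \eqref{x2.5}, \eqref{x2.8}, \eqref{x2.9} to produce $1/T_{\rm r}^{(u,v)}$ and $-1/\bar{T}_{\rm r}^{(u,v)}$, identify the infinite product with $D_\infty^{(u,v)}/D_{n-1}^{(u,v)}$ via \eqref{x2.D_n}, and read off (b) and (c) from \eqref{ta.5009}. Your explicit justification that the off-diagonal $z^{m+1}\,o(1)$ term vanishes on $\mathbb{T}$, and the subsequent extension of \eqref{ta.5009} off the circle by analyticity, are points the paper leaves implicit, but they refine rather than change the argument.
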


\begin{proof}
Let us use $\mathcal{X}_n$ and $|\mathcal{X}_n|$ to denote the coefficient matrix in \eqref{1.2aa} and its determinant, respectively, i.e.
\begin{equation}\label{ta.5012}
\mathcal{X}_n:=\begin{bmatrix}
z &z\, u_n\\
\noalign{\medskip}
\ds\frac{1}{z}\, v_n &\ds\frac{1}{z}
\end{bmatrix}, \quad |\mathcal{X}_n|:=1-u_nv_n.
\end{equation}
From \eqref{1.2aa} and \eqref{ta.5002} we get
\begin{equation*}
a_n^{(u,v)}=\begin{vmatrix}
\mathcal{X}_n\,\phi_{n+1}^{(u,v)}&\mathcal{X}_n\,\psi_{n+1}^{(u,v)}
\end{vmatrix},
\end{equation*}
or equivalently
\begin{equation*}
a_n^{(u,v)}=\begin{vmatrix}
\mathcal{X}_n
\end{vmatrix}\begin{vmatrix}
\phi_{n+1}^{(u,v)}&\psi_{n+1}^{(u,v)}
\end{vmatrix}.
\end{equation*}
Iterating in this manner, from \eqref{1.2aa} and \eqref{ta.5002}, for $m>n$ we obtain
\begin{equation}\label{ta.5013}
a_n^{(u,v)}=\begin{vmatrix}
\mathcal{X}_n
\end{vmatrix}\begin{vmatrix}
\mathcal{X}_{n+1}
\end{vmatrix}
\cdots
\begin{vmatrix}
\mathcal{X}_m
\end{vmatrix}
\begin{vmatrix}
\phi_{m+1}^{(u,v)}&\psi_{m+1}^{(u,v)}
\end{vmatrix}.
\end{equation}
Letting $m\to + \infty$ in \eqref{ta.5013} and using \eqref{x2.3} and \eqref{x2.8}
 for the potential pair $(u,v),$ with the help of \eqref{ta.5012} we get
\begin{equation*}
a_n^{(u,v)}=\left(\ds\prod_{k=n}^{\infty}(1-u_k\,v_k)\right)\,\ds\frac{1}{T_{\rm r}^{(u,v)}},
\end{equation*}
which is equivalent to
\begin{equation}\label{ta.5014}
a_n^{(u,v)}=\ds\frac{\ds\prod_{k=-\infty}^{\infty}(1-u_k\,v_k)}
{\ds\prod_{k=-\infty}^{n-1}(1-u_k\,v_k)}
\ds\frac{1}{T_{\rm r}^{(u,v)}}.
\end{equation}
Using \eqref{x2.D_n} in \eqref{ta.5014} we get the first equality in \eqref{ta.5009}. The second equality in \eqref{ta.5009} is obtained by iterating \eqref{1.2aa} as
\begin{equation*}
\begin{vmatrix}
\bar{\phi}_n^{(u,v)}&\bar{\psi}_n^{(u,v)}
\end{vmatrix}=\begin{vmatrix}
\mathcal{X}_n
\end{vmatrix}\begin{vmatrix}
\mathcal{X}_{n+1}
\end{vmatrix}
\cdots
\begin{vmatrix}
\mathcal{X}_m
\end{vmatrix}
\begin{vmatrix}
\bar{\phi}_{m+1}^{(u,v)}&\bar{\psi}_{m+1}^{(u,v)}
\end{vmatrix},
\end{equation*}
and letting $m\to + \infty$ and using \eqref{x2.5} and \eqref{x2.9} for the potential pair $(u,v)$. We finally remark that (b) and (c) are direct consequences of (a).
\end{proof}

 The result of Theorem~\ref{thm:theorem ta.5008} is remarkable in the sense that, even though the scattering coefficients for \eqref{1.2aa} cannot be defined as
 some Wronskians of Jost solutions to \eqref{1.2aa} as in \eqref{x2.32}--\eqref{x2.35}, we have the relations given by
\begin{equation}\label{ta.5015}
\frac{1}{T_{\rm r}^{(u,v)}}=\ds\frac{D_{n-1}^{(u,v)}}{D_\infty^{(u,v)}}\begin{vmatrix}
\phi_n^{(u,v)}&\psi_n^{(u,v)}
\end{vmatrix},\quad \frac{1}{\bar{T}_{\rm r}^{(u,v)}}=\ds\frac{D_{n-1}^{(u,v)}}{D_\infty^{(u,v)}}\begin{vmatrix}
\bar{\psi}_n^{(u,v)}&\bar{\phi}_n^{(u,v)}
\end{vmatrix},
\end{equation}
\begin{equation}\label{ta.5015a}
\frac{1}{T_{\rm l}^{(u,v)}}=\ds\frac{1}{D_{n-1}^{(u,v)}}\begin{vmatrix}
\phi_n^{(u,v)}&\psi_n^{(u,v)}
\end{vmatrix},\quad \frac{1}{\bar{T}_{\rm l}^{(u,v)}}=\ds\frac{1}{D_{n-1}^{(u,v)}}\begin{vmatrix}
\bar{\psi}_n^{(u,v)}&\bar{\phi}_n^{(u,v)}
\end{vmatrix},
\end{equation}
\begin{equation}\label{ta.5015c}
\frac{R^{(u,v)}}{T_{\rm r}^{(u,v)}}=\ds\frac{D_{n-1}^{(u,v)}}{D_\infty^{(u,v)}}\begin{vmatrix}
\bar{\psi}_n^{(u,v)}&\phi_n^{(u,v)}
\end{vmatrix},\quad \frac{\bar{R}^{(u,v)}}{\bar{T}_{\rm r}^{(u,v)}}=\ds\frac{D_{n-1}^{(u,v)}}{D_\infty^{(u,v)}}\begin{vmatrix}
\bar{\phi}_n^{(u,v)}&\psi_{n}^{(u,v)}
\end{vmatrix},
\end{equation}
\begin{equation}\label{ta.5015b}
\frac{L^{(u,v)}}{T_{\rm l}^{(u,v)}}=\ds\frac{1}{D_{n-1}^{(u,v)}}\begin{vmatrix}
\psi_n^{(u,v)}&\bar{\phi}_n^{(u,v)}
\end{vmatrix},\quad \frac{\bar{L}^{(u,v)}}{\bar{T}_{\rm l}^{(u,v)}}=\ds\frac{1}{D_{n-1}^{(u,v)}}\begin{vmatrix}
\phi_n^{(u,v)}&\bar{\psi}_{n}^{(u,v)}
\end{vmatrix}.
\end{equation}
Note that \eqref{ta.5015} and \eqref{ta.5015c} can be obtained
by using a forward iteration as in \eqref{ta.5013} and that
\eqref{ta.5015a} and \eqref{ta.5015b} can be obtained
via a backward iteration on \eqref{1.2aa}. For example, the first equality in \eqref{ta.5015a}
 is obtained by letting
 $m\to -\infty$ in
 \begin{equation*}
\begin{vmatrix}
\phi_n^{(u,v)}&\psi_n^{(u,v)}
\end{vmatrix}=\begin{vmatrix}
\mathcal{X}_{n-1}^{-1}
\end{vmatrix}\begin{vmatrix}
\mathcal{X}_{n-2}^{-1}
\end{vmatrix}
\cdots
\begin{vmatrix}
\mathcal{X}_m^{-1}
\end{vmatrix}
\begin{vmatrix}
\phi_m^{(u,v)}&\psi_m^{(u,v)}
\end{vmatrix}.
\end{equation*}
 Note that the scalar quantity $D_\infty^{(u,v)}$ in \eqref{ta.5015}
is independent of $z$ and nonzero. Furthermore, each $D_n^{(u,v)}$
 for $n\in\mathbb{Z}$ is independent of  $z$ and nonzero.
 Thus, \eqref{ta.5015} allows us to directly relate the poles of $T_{\rm r}^{(u,v)}$ to the zeros of the Wronskian determinant
 $|\phi_n^{(u,v)}\quad \psi_n^{(u,v)}|,$ and similarly we can directly relate the poles of $\bar{T}_{\rm r}^{(u,v)}$ to the zeros of the Wronskian determinant
 $|\bar{\phi}_n^{(u,v)}\quad \bar{\psi}_n^{(u,v)}|.$

 The results stated in Theorem~\ref{thm:theorem ta.5008} hold for \eqref{1.2ab} as well.
 In the following corollary we state those results without a proof
 since that proof is similar to the proof of Theorem~\ref{thm:theorem ta.5008}.

 \begin{corollary}
 		\label{thm:theorem ta.5016}
 	Assume that the potentials $p_n$ and $s_n$ appearing in \eqref{1.2ab}
 	are rapidly decaying and that $1-p_ns_n\ne 0$ for $n\in\mathbb{Z}$.
 	Then, we have the following:
 	\begin{enumerate}
 		
 		\item[\text{\rm(a)}] The scalar quantities $a_n^{(p,s)}$ and $\bar{a}_n^{(p,s)}$
 		defined in \eqref{ta.5003} depend both on $n$ and $z,$ and
 		they are related to the transmission coefficients $T_{\rm r}^{(p,s)}$ and
 		$\bar{T}_{\rm r}^{(p,s)}$ appearing in \eqref{x2.8} and \eqref{x2.9} for the potential pair $(p,s)$ as 		
 		\begin{equation}\label{ta.5017}
 		a_n^{(p,s)}=\ds\frac{D_\infty^{(p,s)}}{D_{n-1}^{(p,s)}}\,\ds\frac{1}{T_{\rm r}^{(p,s)}}, \quad \bar{a}_n^{(p,s)}=-\ds\frac{D_\infty^{(p,s)}}{D_{n-1}^{(p,s)}}\,\ds\frac{1}{\bar{T}_{\rm r}^{(p,s)}},
 		\end{equation}
 		where $D_n^{(p,s)}$ and $D_\infty^{(p,s)}$ are the scalar quantities defined in
 		\eqref{x2.D_na}.
 		
 		\item[\text{\rm(b)}] Since $D_\infty^{(p,s)}\ne 0$ and $D_n^{(p,s)}\ne 0$
 		for $n\in\mathbb{Z}$ and these quantities do not contain $z,$
  we conclude from \eqref{ta.5003} and \eqref{ta.5017} that
 		the linear dependence of the Jost solutions $\phi_n^{(p,s)}$ and $\psi_n^{(p,s)}$
 		occurs at the poles of $T_{\rm r}^{(p,s)}$ and that the linear dependence of the Jost solutions $\bar{\phi}_n^{(p,s)}$ and $\bar{\psi}_n^{(p,s)}$ occurs at the poles of $\bar{T}_{\rm r}^{(p,s)}.$
 		
 		\item[\text{\rm(c)}] In particular, if $T_{\rm r}^{(p,s)}$ has a pole at $z=\pm z_j$
 		of multiplicity $m_j,$ then we have
 		\begin{equation}\label{ta.5018}
 		\ds\frac{d^k a_n^{(p,s)}}{dz^k}\bigg |_{z=\pm z_j}=0,\qquad 0\le k\le m_{j-1},
 \quad n\in \mathbb{Z}.
 		\end{equation}
 		Similarly, if $\bar{T}_{\rm r}^{(p,s)}$ has a pole at $z=\pm \bar{z}_j$ of multiplicity $\bar{m}_j,$ then we have
 		\begin{equation}\label{ta.5019}
 		\ds\frac{d^k \bar{a}_n^{(p,s)}}{dz^k}\bigg |_{z=\pm \bar{z}_j}=0,\qquad 0\le k\le \bar{m}_{j-1},
 \quad n\in \mathbb{Z}.
 		\end{equation}
 	\end{enumerate}
 	 \end{corollary}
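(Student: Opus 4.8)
The plan is to transcribe the proof of Theorem~\ref{thm:theorem ta.5008} line by line, replacing the pair $(u,v)$ and the system \eqref{1.2aa} by the pair $(p,s)$ and the system \eqref{1.2ab} throughout. The invocation of Corollary~\ref{thm:theorem x2.1a} guarantees that every parity, analyticity, and asymptotic property used in that proof is available verbatim for $(p,s)$, so the transcription is legitimate. First I would introduce the coefficient matrix of \eqref{1.2ab} together with its determinant,
\begin{equation*}
\mathcal{Y}_n := \begin{bmatrix}
z & z\, p_n\\
\noalign{\medskip}
\ds\frac{1}{z}\, s_n & \ds\frac{1}{z}
\end{bmatrix}, \qquad |\mathcal{Y}_n| = 1 - p_n s_n,
\end{equation*}
which is the exact analog of $\mathcal{X}_n$ and $|\mathcal{X}_n|$ in \eqref{ta.5012}.

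Next, applying one step of \eqref{1.2ab} to the pair $\phi_n^{(p,s)},\psi_n^{(p,s)}$ rewrites it as $\mathcal{Y}_n$ acting on the pair at $n+1$, and the multiplicativity of the determinant gives $a_n^{(p,s)} = |\mathcal{Y}_n|\,a_{n+1}^{(p,s)}$. Iterating forward for $m>n$ produces the analog of \eqref{ta.5013}, namely
\begin{equation*}
a_n^{(p,s)} = |\mathcal{Y}_n|\,|\mathcal{Y}_{n+1}|\cdots|\mathcal{Y}_m|\,
\begin{vmatrix}
\phi_{m+1}^{(p,s)} & \psi_{m+1}^{(p,s)}
\end{vmatrix}.
\end{equation*}
Letting $m\to+\infty$ and inserting the asymptotics \eqref{x2.3} and \eqref{x2.8} for the pair $(p,s)$, the $z^{\pm n}$ factors cancel so that the limiting Wronskian equals $1/T_{\rm r}^{(p,s)}$, while $\ds\prod_{k=n}^{\infty}(1-p_k s_k) = D_\infty^{(p,s)}/D_{n-1}^{(p,s)}$ by \eqref{x2.D_na}; together these yield the first equality in \eqref{ta.5017}. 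The second equality follows identically by iterating \eqref{1.2ab} for the barred Jost solutions and using \eqref{x2.5} and \eqref{x2.9}.

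Finally, parts (b) and (c) would follow word-for-word as in Theorem~\ref{thm:theorem ta.5008}: since $D_\infty^{(p,s)}$ and each $D_{n-1}^{(p,s)}$ are nonzero and free of $z$ by the remarks following \eqref{x2.D_na}, equation \eqref{ta.5017} identifies the zeros of the Wronskians $a_n^{(p,s)}$ and $\bar{a}_n^{(p,s)}$, equivalently the points of linear dependence of the corresponding Jost pairs, precisely with the poles of $T_{\rm r}^{(p,s)}$ and $\bar{T}_{\rm r}^{(p,s)}$; differentiating \eqref{ta.5017} in $z$ then delivers the vanishing conditions \eqref{ta.5018} and \eqref{ta.5019}. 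I expect no genuine obstacle, as the argument is a direct transcription. The only point deserving a moment's care is the bookkeeping of the $z^{\pm n}$ factors when combining \eqref{x2.3} with \eqref{x2.8}, so as to confirm that the limiting determinant is exactly $1/T_{\rm r}^{(p,s)}$ and not merely a nonzero multiple of it; but this is identical to the corresponding step in Theorem~\ref{thm:theorem ta.5008} and is resolved there.
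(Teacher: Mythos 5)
Your proposal is correct and is exactly what the paper intends: the paper omits the proof of this corollary, stating only that it is similar to the proof of Theorem~\ref{thm:theorem ta.5008}, and your line-by-line transcription with $(u,v)\mapsto(p,s)$, $\mathcal{X}_n\mapsto\mathcal{Y}_n$, and \eqref{x2.D_n}$\mapsto$\eqref{x2.D_na} is precisely that argument, including the correct identification $\prod_{k=n}^{\infty}(1-p_k s_k)=D_\infty^{(p,s)}/D_{n-1}^{(p,s)}$ and the passage to the limit $m\to+\infty$ using \eqref{x2.3}, \eqref{x2.8} (and \eqref{x2.5}, \eqref{x2.9} for the barred pair, which is where the minus sign in \eqref{ta.5017} originates from the column ordering of the Wronskian).
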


\section{The transformations}
\label{sec:section3}

In this section we relate the linear systems \eqref{1.1}, \eqref{1.2aa}, \eqref{1.2ab} to each other by choosing the potential pairs $(u,v)$ and $(p,s)$ in terms of the potential pair $(q,r)$ in a particular way, namely as
	\begin{equation}\label{x3.1}
u_n=q_n\,\displaystyle\frac{E_{n-1}^{(q,r)}}{D_n^{(q,r)}},
\end{equation}
\begin{equation}\label{x3.2}
v_n=\left(-r_n+r_{n+1}-q_n\,r_n\,r_{n+1}\right)\,\displaystyle\frac{D_{n-1}^{(q,r)}}{E_n^{(q,r)}},
\end{equation}
\begin{equation}\label{x3.3} p_n=\left(q_n-q_{n+1}-q_n\,q_{n+1}\,r_{n+1}\right)\,\displaystyle\frac{E_{n-1}^{(q,r)}}{D_{n+1}^{(q,r)}},
\end{equation}
\begin{equation}\label{x3.4}
s_n=r_{n+1}\,\displaystyle\frac{D_{n}^{(q,r)}}{E_n^{(q,r)}},
\end{equation}
where $D_n^{(q,r)}$ and $E_n^{(q,r)}$ are the quantities defined in \eqref{x2.42} and \eqref{x2.43}, respectively. This helps us to express the Jost solutions and the scattering coefficients for \eqref{1.2aa} and \eqref{1.2ab} in terms of the corresponding quantities for \eqref{1.1}. In this section we also present certain relevant properties of the Jost solutions to \eqref{1.1}, and we express $q_n$ and $r_n$ in terms
 of the values at $z=1$ of the Jost solutions to
\eqref{1.2aa} and \eqref{1.2ab}.
The results presented in this section play a crucial role in solving the direct and inverse scattering problems for \eqref{1.1} by exploiting the techniques for the corresponding problems for \eqref{1.2aa} and \eqref{1.2ab}.

In the next proposition, when \eqref{x3.1}--\eqref{x3.4} are valid we present some relationships among the quantities for the potential pairs $(q,r),$ $(u,v),$ and $(p,s).$

\begin{proposition}
	\label{thm:theorem x3.1}
		Assume that the potentials $q_n$ and $r_n$ appearing in \eqref{1.1} are rapidly decaying and satisfy \eqref{1.1a}. Let the potential pairs $(u,v)$ and $(p,s)$ are related to $(q,r)$ as in \eqref{x3.1}--\eqref{x3.4}. We then have
	\begin{equation}\label{x.101}
	1-u_nv_n=\ds\frac{1}{(1-q_n\,r_n)(1+q_{n}\,r_{n+1})},
		\end{equation}
			\begin{equation}\label{x.102}
		1-p_ns_n=\ds\frac{1}{(1-q_{n+1}\,r_{n+1})(1+q_{n}\,r_{n+1})},
		\end{equation}
	\begin{equation}\label{x.103}
D_n^{(u,v)}=\ds\frac{1}{D_n^{(q,r)}\,E_n^{(q,r)}}, \quad D_\infty^{(u,v)}=\ds\frac{1}{D_\infty^{(q,r)}\,E_\infty^{(q,r)}},
\end{equation}
\begin{equation}\label{x.104}
D_n^{(p,s)}=\ds\frac{1}{D_{n+1}^{(q,r)}\,E_n^{(q,r)}}, \quad D_\infty^{(p,s)}=\ds\frac{1}{D_\infty^{(q,r)}\,E_\infty^{(q,r)}},
\end{equation}			
where we recall that $D_n^{(u,v)}$ and $D_\infty^{(u,v)}$ are as in \eqref{x2.D_n}, $D_n^{(p,s)}$ and $D_\infty^{(p,s)}$ are as in \eqref{x2.D_na}, $D_n^{(q,r)}$ and $D_\infty^{(q,r)}$ are as in \eqref{x2.42}, and $E_n^{(q,r)}$ and $E_\infty^{(q,r)}$ are as in \eqref{x2.43}.
\end{proposition}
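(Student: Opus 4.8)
The plan is to establish \eqref{x.101}--\eqref{x.104} by direct algebraic substitution, since the substitution rules \eqref{x3.1}--\eqref{x3.4}, together with the product structure of $D_n^{(q,r)}$ and $E_n^{(q,r)}$ in \eqref{x2.42}--\eqref{x2.43}, reduce everything to telescoping ratios followed by a verification that two polynomials in $q_n,$ $r_n,$ $r_{n+1}$ agree. First I would record the one-step ratios that follow immediately from \eqref{x2.42} and \eqref{x2.43}, namely $D_{n-1}^{(q,r)}/D_n^{(q,r)}=1/(1-q_n r_n),$ $D_n^{(q,r)}/D_{n+1}^{(q,r)}=1/(1-q_{n+1} r_{n+1}),$ and $E_{n-1}^{(q,r)}/E_n^{(q,r)}=1/(1+q_n r_{n+1}).$

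To prove \eqref{x.101}, I would multiply \eqref{x3.1} and \eqref{x3.2}; the factor $D_{n-1}^{(q,r)}E_{n-1}^{(q,r)}$ over $D_n^{(q,r)}E_n^{(q,r)}$ collapses via these ratios to $1/[(1-q_n r_n)(1+q_n r_{n+1})],$ leaving
\[
u_n v_n=\frac{q_n\left(-r_n+r_{n+1}-q_n r_n r_{n+1}\right)}{(1-q_n r_n)(1+q_n r_{n+1})}.
\]
Then $1-u_n v_n$ is a single fraction whose numerator is $(1-q_n r_n)(1+q_n r_{n+1})-q_n(-r_n+r_{n+1}-q_n r_n r_{n+1});$ expanding both pieces, every term except the constant $1$ cancels, giving \eqref{x.101}. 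The argument for \eqref{x.102} is identical in spirit: multiplying \eqref{x3.3} and \eqref{x3.4} and using $E_{n-1}^{(q,r)}/E_n^{(q,r)}$ and $D_n^{(q,r)}/D_{n+1}^{(q,r)}$ yields
\[
p_n s_n=\frac{r_{n+1}\left(q_n-q_{n+1}-q_n q_{n+1} r_{n+1}\right)}{(1-q_{n+1} r_{n+1})(1+q_n r_{n+1})},
\]
and the same cancellation in the numerator of $1-p_n s_n$ gives \eqref{x.102}.

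With \eqref{x.101} and \eqref{x.102} in hand, \eqref{x.103} and \eqref{x.104} follow by substituting into the definitions \eqref{x2.D_n} and \eqref{x2.D_na}. For \eqref{x.103} I would write $D_n^{(u,v)}=\prod_{j=-\infty}^{n}(1-u_j v_j)=\prod_{j=-\infty}^{n} 1/[(1-q_j r_j)(1+q_j r_{j+1})]$ and recognize the two factored products as $1/D_n^{(q,r)}$ and $1/E_n^{(q,r)}$ directly from \eqref{x2.42}--\eqref{x2.43}; letting $n\to\infty$ gives the second identity. For \eqref{x.104} the only subtlety is an index shift: using \eqref{x.102}, the product $\prod_{j=-\infty}^{n}(1-q_{j+1} r_{j+1})$ equals $\prod_{k=-\infty}^{n+1}(1-q_k r_k)=D_{n+1}^{(q,r)}$ after reindexing $k=j+1,$ while $\prod_{j=-\infty}^{n}(1+q_j r_{j+1})=E_n^{(q,r)},$ which produces $D_n^{(p,s)}=1/[D_{n+1}^{(q,r)} E_n^{(q,r)}].$ Since $D_{n+1}^{(q,r)}\to D_\infty^{(q,r)}$ as $n\to\infty,$ the limiting identity matches the one for $(u,v).$

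The computation has no genuine obstacle; it is bookkeeping throughout. The only place to be careful is the reindexing in \eqref{x.104}, where the shifted factor $1-q_{j+1} r_{j+1}$ raises the upper product limit from $n$ to $n+1$ and thereby accounts for the asymmetry between \eqref{x.103} and \eqref{x.104} at finite $n$ while still collapsing to the same infinite product in the limit. Convergence of all products and the validity of the ratio manipulations are guaranteed by the rapid decay of $q_n,$ $r_n$ and the nonvanishing conditions \eqref{1.1a}, as already noted after \eqref{x2.43}.
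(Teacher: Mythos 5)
Your proof is correct and follows essentially the same route as the paper's: direct substitution of \eqref{x3.1}--\eqref{x3.4} into $1-u_nv_n$ and $1-p_ns_n$ with algebraic cancellation to get \eqref{x.101}--\eqref{x.102}, then insertion into the product definitions \eqref{x2.D_n} and \eqref{x2.D_na} to obtain \eqref{x.103}--\eqref{x.104}. You merely make explicit the one-step ratio identities, the cancellation, and the index shift that the paper compresses into ``a brief simplification,'' and all of those details check out.
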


\begin{proof}
	We evaluate the left-hand side of \eqref{x.101} with the help of \eqref{x2.42},  \eqref{x2.43}, \eqref{x3.1}, \eqref{x3.2}, and after a brief simplification we establish \eqref{x.101}. Similarly, we obtain \eqref{x.102} with the help of \eqref{x2.42},  \eqref{x2.43}, \eqref{x3.3}, and \eqref{x3.4}. Then, we establish \eqref{x.103} by using \eqref{x.101} in \eqref{x2.D_n}, \eqref{x2.42}, and  \eqref{x2.43}. Similarly, we get \eqref{x.104} by using \eqref{x.102} in \eqref{x2.D_na}, \eqref{x2.42}, and  \eqref{x2.43}.
\end{proof}

The following proposition will be useful in solving the inverse problem for \eqref{1.1} by using the method introduced in (e) of Section~\ref{sec:section8}.

\begin{proposition}
	\label{thm:theorem x3.1a}
	Assume that the potentials $q_n$ and $r_n$ appearing in \eqref{1.1} are rapidly decaying and satisfy \eqref{1.1a}. Let the potential pairs $(u,v)$ and $(p,s)$ be related to $(q,r)$ as in \eqref{x3.1}--\eqref{x3.4}. Then, we have
		\begin{equation}\label{x.601}
	D_n^{(q,r)}=\ds\frac{1}{\ds\prod_{k=-\infty}^{n-1}\left(1+u_{k+1}\,s_{k}\right)},
	\end{equation}
	\begin{equation}\label{x.602}
	E_n^{(q,r)}=\ds\frac{1}{\ds\prod_{k=-\infty}^{n}\left(1-u_k\,s_{k}\right)},
	\end{equation}
	\begin{equation}\label{x.603}
	q_n=u_n\ds\prod_{k=-\infty}^{n-1}\ds\frac{1-u_k\,s_k}{1+u_{k+1}\,s_k},
	\end{equation}
	\begin{equation}\label{x.604}
	r_n=s_{n-1}\ds\prod_{k=-\infty}^{n-1}\ds\frac{1+u_k\,s_{k-1}}{1-u_{k}\,s_k},
	\end{equation}
where $D_n^{(q,r)}$ and $E_n^{(q,r)}$ are the scalar quantities defined in
\eqref{x2.42} and \eqref{x2.43}, respectively.
\end{proposition}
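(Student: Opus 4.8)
The plan is to observe that all four identities \eqref{x.601}--\eqref{x.604} involve only the potentials $u$ and $s$, never $v$ or $p$. Consequently the only defining relations needed are \eqref{x3.1} and \eqref{x3.4}, and the whole proof reduces to computing the two products $u_k\,s_k$ and $u_{k+1}\,s_k$ in terms of $(q,r)$ and then telescoping the resulting infinite products against the definitions \eqref{x2.42} and \eqref{x2.43} of $D_n^{(q,r)}$ and $E_n^{(q,r)}$.

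First I would substitute \eqref{x3.1} and \eqref{x3.4}. Writing $u_{k+1}=q_{k+1}\,E_k^{(q,r)}/D_{k+1}^{(q,r)}$ and $s_k=r_{k+1}\,D_k^{(q,r)}/E_k^{(q,r)}$ and using the recursion $D_{k+1}^{(q,r)}=D_k^{(q,r)}(1-q_{k+1}\,r_{k+1})$ immediate from \eqref{x2.42}, the factor $E_k^{(q,r)}$ cancels and one finds
\begin{equation*}
u_{k+1}\,s_k=\ds\frac{q_{k+1}\,r_{k+1}}{1-q_{k+1}\,r_{k+1}},\qquad\text{hence}\qquad 1+u_{k+1}\,s_k=\ds\frac{1}{1-q_{k+1}\,r_{k+1}}.
\end{equation*}
Similarly, using $E_k^{(q,r)}=E_{k-1}^{(q,r)}(1+q_k\,r_{k+1})$ from \eqref{x2.43}, the factor $D_k^{(q,r)}$ cancels in $u_k\,s_k$ and one gets
\begin{equation*}
u_k\,s_k=\ds\frac{q_k\,r_{k+1}}{1+q_k\,r_{k+1}},\qquad\text{hence}\qquad 1-u_k\,s_k=\ds\frac{1}{1+q_k\,r_{k+1}}.
\end{equation*}
Taking the product of the first relation over $-\infty<k\le n-1$ and shifting the index by $k\mapsto k+1$ reproduces exactly $1/D_n^{(q,r)}$, which is \eqref{x.601}; taking the product of the second over $-\infty<k\le n$ reproduces $1/E_n^{(q,r)}$, which is \eqref{x.602}.

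With \eqref{x.601} and \eqref{x.602} in hand, the remaining two identities follow by solving the defining relations for the potentials and substituting. Solving \eqref{x3.1} gives $q_n=u_n\,D_n^{(q,r)}/E_{n-1}^{(q,r)}$; writing $D_n^{(q,r)}$ via \eqref{x.601} and $1/E_{n-1}^{(q,r)}$ as the product $\ds\prod_{k=-\infty}^{n-1}(1-u_k\,s_k)$ coming from \eqref{x.602}, the two products combine into a single product of the ratio $(1-u_k\,s_k)/(1+u_{k+1}\,s_k)$, yielding \eqref{x.603}. Likewise, solving \eqref{x3.4} at index $n-1$ gives $r_n=s_{n-1}\,E_{n-1}^{(q,r)}/D_{n-1}^{(q,r)}$, and substituting \eqref{x.601} and \eqref{x.602} -- now using the reindexed form $1/D_{n-1}^{(q,r)}=\ds\prod_{k=-\infty}^{n-1}(1+u_k\,s_{k-1})$ -- produces \eqref{x.604}.

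There is no deep obstacle here: the computation is elementary once the two cancellations above are spotted. The only point demanding care is the bookkeeping of the index shifts in the infinite products, in particular the shift $k\mapsto k+1$ that converts $\ds\prod(1+u_{k+1}\,s_k)$ into a product over $1-q_j\,r_j$, and the identity $1/D_{n-1}^{(q,r)}=\ds\prod_{k=-\infty}^{n-1}(1+u_k\,s_{k-1})$ used for \eqref{x.604}. I would also note in passing that the rapid decay of $q_n,r_n$ together with \eqref{1.1a} guarantees that all of these infinite products converge and are nonzero, so that every rearrangement and cancellation performed above is legitimate.
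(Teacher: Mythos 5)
Your proposal is correct and follows essentially the same route as the paper's own proof: computing $1-u_k s_k=1/(1+q_k r_{k+1})$ and $1+u_{k+1}s_k=1/(1-q_{k+1}r_{k+1})$ (the paper writes the latter as $1+u_n s_{n-1}=1/(1-q_n r_n)$), telescoping the infinite products against \eqref{x2.42} and \eqref{x2.43} to get \eqref{x.601} and \eqref{x.602}, and then substituting back into \eqref{x3.1} and \eqref{x3.4} to obtain \eqref{x.603} and \eqref{x.604}. Your added remark on convergence of the infinite products is a nice touch but not a substantive difference.
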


\begin{proof}
From \eqref{x3.1} and \eqref{x3.4}, we obtain
\begin{equation*}
1-u_n\,s_n=1-q_n\,\ds\frac{E_{n-1}^{(q,r)}}{D_n^{(q,r)}}\,r_{n+1}\,\frac{D_n^{(q,r)}}{E_n^{(q,r)}},
\end{equation*}
which, with the help of \eqref{x2.43}, simplifies to
\begin{equation}\label{Y.1}
1-u_n\,s_n=\ds\frac{1}{1+q_n\,r_{n+1}}.
\end{equation}
Similarly, from \eqref{x3.1} and \eqref{x3.4} we get
\begin{equation*}
1+u_n\,s_{n-1}=1+q_n\,\ds\frac{E_{n-1}^{(q,r)}}{D_n^{(q,r)}}\,r_{n}\,\frac{D_{n-1}^{(q,r)}}{E_{n-1}^{(q,r)}},
\end{equation*}
which, with the help of \eqref{x2.42}, simplifies to
\begin{equation}\label{Y.2}
1+u_n\,s_{n-1}=\ds\frac{1}{1-q_n\,r_{n}}.
\end{equation}
From \eqref{Y.1} and \eqref{Y.2}, we respectively get
\begin{equation*}
\ds\prod_{k=-\infty}^{n}\left(1-u_k\,s_k\right)=\ds\frac{1}{\ds\prod_{k=-\infty}^{n}\left(1+q_k\,r_{k+1}\right)},
\end{equation*}
\begin{equation*}
\ds\prod_{k=-\infty}^{n}\left(1+u_k\,s_{k-1}\right)=\ds\frac{1}{\ds\prod_{k=-\infty}^{n}\left(1-q_k\,r_{k}\right)},
\end{equation*}
which yield \eqref{x.602} and \eqref{x.601}, respectively. Finally, by using \eqref{x.601} and \eqref{x.602} in \eqref{x3.1} and \eqref{x3.4}, we obtain \eqref{x.603} and \eqref{x.604}, respectively.
\end{proof}

In the next theorem, when the potential pairs are related to each other as in \eqref{x3.1}--\eqref{x3.4}, the Jost solutions to \eqref{1.1} are
related to the Jost solutions to \eqref{1.2aa}
and also to the Jost solutions to \eqref{1.2ab}.

\begin{theorem}
\label{thm:theorem x3.2}
	Assume that the potentials $q_n$ and $r_n$ appearing in \eqref{1.1} are rapidly decaying and satisfy \eqref{1.1a}, and let the potential pairs $(u,v)$ and $(p,s)$ be related to the potential pair $(q,r)$ as in \eqref{x3.1}--\eqref{x3.4}.
Then, the four Jost solutions
$\psi_n^{(q,r)},$ $\phi_n^{(q,r)},$	$\bar{\psi}_n^{(q,r)},$  $\bar{\phi}_n^{(q,r)}$ to \eqref{1.1} are related to the Jost solutions
$\psi_n^{(u,v)},$ $\phi_n^{(u,v)},$ $\bar{\psi}_n^{(u,v)},$  $\bar{\phi}_n^{(u,v)}$ to \eqref{1.2aa} and the
Jost solutions $\psi_n^{(p,s)},$ $\phi_n^{(p,s)},$ $\bar{\psi}_n^{(p,s)},$  $\bar{\phi}_n^{(p,s)}$ to \eqref{1.2ab}
as
	\begin{equation}\label{x3.5}
\psi_n^{(q,r)}=D_\infty^{(q,r)}
\begin{bmatrix}
\left(1-\ds\frac{1}{z^2}\right) \,\ds\frac{1}{E_{n-1}^{(q,r)}} &0\\
\noalign{\medskip}
\ds\frac{r_n}{E_{n-1}^{(q,r)}} &\ds\frac{1}{D_{n-1}^{(q,r)}}
\end{bmatrix}
\psi_n^{(u,v)},
\end{equation}
\begin{equation}\label{x3.5a}
\psi_n^{(q,r)}=D_\infty^{(q,r)}
\begin{bmatrix}
\ds\frac{1}{E_{n-1}^{(q,r)}} &-\ds\frac{q_n}{D_n^{(q,r)}}\\
\noalign{\medskip}
\ds\frac{r_n}{E_{n-1}^{(q,r)}} &\ds\frac{1}{D_{n-1}^{(q,r)}}
\end{bmatrix}
\psi_n^{(p,s)},
\end{equation}
\begin{equation}\label{x3.7}
\phi_n^{(q,r)}=\ds\frac{1}{1-\ds\frac{1}{z^2}}
\begin{bmatrix}
\left(1-\ds\frac{1}{z^2}\right) \,\ds\frac{1}{E_{n-1}^{(q,r)}} &0\\
\noalign{\medskip}
\ds\frac{r_n}{E_{n-1}^{(q,r)}} &\ds\frac{1}{D_{n-1}^{(q,r)}}
\end{bmatrix}
\phi_n^{(u,v)},
\end{equation}
\begin{equation}\label{x3.7a}
\phi_n^{(q,r)}=\begin{bmatrix}
\ds\frac{1}{E_{n-1}^{(q,r)}} &-\ds\frac{q_n}{D_n^{(q,r)}}\\
\noalign{\medskip}
\ds\frac{r_n}{E_{n-1}^{(q,r)}} &\ds\frac{1}{D_{n-1}^{(q,r)}}
\end{bmatrix}
\phi_n^{(p,s)},
\end{equation}
\begin{equation}\label{x3.6}
\bar{\psi}_n^{(q,r)}=\ds\frac{E_\infty^{(q,r)}}{1-\ds\frac{1}{z^2}}
\begin{bmatrix}
\left(1-\ds\frac{1}{z^2}\right) \,\ds\frac{1}{E_{n-1}^{(q,r)}} &0\\
\noalign{\medskip}
\ds\frac{r_n}{E_{n-1}^{(q,r)}} &\ds\frac{1}{D_{n-1}^{(q,r)}}
\end{bmatrix}
\bar{\psi}_n^{(u,v)},
\end{equation}
\begin{equation}\label{x3.6a}
\bar{\psi}_n^{(q,r)}=E_\infty^{(q,r)}
\begin{bmatrix}
\ds\frac{1}{E_{n-1}^{(q,r)}} &-\ds\frac{q_n}{D_n^{(q,r)}}\\
\noalign{\medskip}
\ds\frac{r_n}{E_{n-1}^{(q,r)}} &\ds\frac{1}{D_{n-1}^{(q,r)}}
\end{bmatrix}
\bar{\psi}_n^{(p,s)},
\end{equation}
\begin{equation}\label{x3.8}
\bar{\phi}_n^{(q,r)}=
\begin{bmatrix}
\left(1-\ds\frac{1}{z^2}\right) \,\ds\frac{1}{E_{n-1}^{(q,r)}} &0\\
\noalign{\medskip}
\ds\frac{r_n}{E_{n-1}^{(q,r)}} &\ds\frac{1}{D_{n-1}^{(q,r)}}
\end{bmatrix}
\bar{\phi}_n^{(u,v)},
\end{equation}
\begin{equation}\label{x3.8a}
\bar{\phi}_n^{(q,r)}=\begin{bmatrix}
\ds\frac{1}{E_{n-1}^{(q,r)}} &-\ds\frac{q_n}{D_n^{(q,r)}}\\
\noalign{\medskip}
\ds\frac{r_n}{E_{n-1}^{(q,r)}} &\ds\frac{1}{D_{n-1}^{(q,r)}}
\end{bmatrix}
\bar{\phi}_n^{(p,s)},
\end{equation}
where we recall that $D_\infty^{(q,r)}$ and $E_\infty^{(q,r)}$ are the constants defined in \eqref{x2.42} and \eqref{x2.43}, respectively.
\end{theorem}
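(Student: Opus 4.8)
The plan is to recognize all eight identities as arising from two $z$-dependent gauge transformations, one carrying solutions of \eqref{1.2aa} to solutions of \eqref{1.1} and one carrying solutions of \eqref{1.2ab} to solutions of \eqref{1.1}. Observe that the four formulas \eqref{x3.5}, \eqref{x3.7}, \eqref{x3.6}, \eqref{x3.8} all feature the single matrix
\begin{equation*}
\mathcal{G}_n^{(u,v)}:=\begin{bmatrix}
\left(1-\ds\frac{1}{z^2}\right)\ds\frac{1}{E_{n-1}^{(q,r)}} & 0\\
\noalign{\medskip}
\ds\frac{r_n}{E_{n-1}^{(q,r)}} & \ds\frac{1}{D_{n-1}^{(q,r)}}
\end{bmatrix},
\end{equation*}
differing only in a scalar prefactor, and likewise \eqref{x3.5a}, \eqref{x3.7a}, \eqref{x3.6a}, \eqref{x3.8a} all feature
\begin{equation*}
\mathcal{G}_n^{(p,s)}:=\begin{bmatrix}
\ds\frac{1}{E_{n-1}^{(q,r)}} & -\ds\frac{q_n}{D_n^{(q,r)}}\\
\noalign{\medskip}
\ds\frac{r_n}{E_{n-1}^{(q,r)}} & \ds\frac{1}{D_{n-1}^{(q,r)}}
\end{bmatrix}.
\end{equation*}
Writing $\mathcal{A}_n^{(q,r)},$ $\mathcal{A}_n^{(u,v)},$ $\mathcal{A}_n^{(p,s)}$ for the coefficient matrices of \eqref{1.1}, \eqref{1.2aa}, \eqref{1.2ab}, the key point is that once the intertwining relations
\begin{equation*}
\mathcal{G}_n^{(u,v)}\,\mathcal{A}_n^{(u,v)}=\mathcal{A}_n^{(q,r)}\,\mathcal{G}_{n+1}^{(u,v)},\qquad
\mathcal{G}_n^{(p,s)}\,\mathcal{A}_n^{(p,s)}=\mathcal{A}_n^{(q,r)}\,\mathcal{G}_{n+1}^{(p,s)}
\end{equation*}
are established, any solution $\Psi_n=\mathcal{A}_n\Psi_{n+1}$ of an Ablowitz--Ladik system yields $\mathcal{G}_n\Psi_n=\mathcal{G}_n\mathcal{A}_n\Psi_{n+1}=\mathcal{A}_n^{(q,r)}\mathcal{G}_{n+1}\Psi_{n+1},$ so that $\mathcal{G}_n\Psi_n$ solves \eqref{1.1}.

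The first step is to prove these two intertwining relations by direct computation. Here I would substitute the potential relations \eqref{x3.1}--\eqref{x3.4} and exploit the telescoping recurrences
\begin{equation*}
D_n^{(q,r)}=(1-q_nr_n)\,D_{n-1}^{(q,r)},\qquad E_n^{(q,r)}=(1+q_nr_{n+1})\,E_{n-1}^{(q,r)},
\end{equation*}
which follow at once from \eqref{x2.42} and \eqref{x2.43}, together with the identity $z-\frac{1}{z}=z\left(1-\frac{1}{z^2}\right).$ Each matrix relation reduces to four scalar identities; for instance the $(1,1)$ entry of the first relation equals $z\left(1-\frac{1}{z^2}\right)/E_{n-1}^{(q,r)}$ on the left, while on the right one obtains $z\left(1-\frac{1}{z^2}\right)(1+q_nr_{n+1})/E_n^{(q,r)},$ which collapses to the same expression by the $E$-recurrence. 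Completing all eight entry checks, with the $(p,s)$ case using \eqref{x3.3}, \eqref{x3.4} in place of \eqref{x3.1}, \eqref{x3.2}, establishes that $\mathcal{G}_n^{(u,v)}$ and $\mathcal{G}_n^{(p,s)}$ map solutions of the respective Ablowitz--Ladik systems to solutions of \eqref{1.1}.

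With the intertwining relations in hand, the second step pins down the scalar prefactors by matching normalization asymptotics. Since $q_n$ and $r_n$ decay rapidly, $D_{n-1}^{(q,r)},E_{n-1}^{(q,r)}\to 1$ as $n\to-\infty$ and $D_{n-1}^{(q,r)}\to D_\infty^{(q,r)},$ $E_{n-1}^{(q,r)}\to E_\infty^{(q,r)}$ as $n\to+\infty,$ while $q_n,r_n\to 0$ in both limits, so each gauge matrix tends to a diagonal matrix. Applying $\mathcal{G}_n^{(u,v)}$ to the Jost solution $\psi_n^{(u,v)}$ normalized as in \eqref{x2.3} and letting $n\to+\infty$ gives a column whose lower entry is $\frac{1}{D_\infty^{(q,r)}}z^n[1+o(1)];$ comparing with the normalization \eqref{x2.3} required of $\psi_n^{(q,r)}$ forces the prefactor $D_\infty^{(q,r)},$ which is exactly \eqref{x3.5}. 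The remaining prefactors are read off the same way, using the $n\to+\infty$ limit for $\psi$ and $\bar\psi$ and the $n\to-\infty$ limit for $\phi$ and $\bar\phi;$ the factor $\left(1-\frac{1}{z^2}\right)$ surviving in the diagonal limit of $\mathcal{G}_n^{(u,v)}$ is precisely what produces the $\frac{1}{1-1/z^2}$ prefactors in \eqref{x3.7} and \eqref{x3.6}. Because the Jost solutions of \eqref{1.1} are uniquely determined by their prescribed asymptotics (the coefficient matrix of \eqref{1.1} having determinant $1$ and hence being invertible), matching leading asymptotics identifies each gauge-transformed solution with the corresponding Jost solution of \eqref{1.1}.

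I expect the main obstacle to be the entrywise verification of the two intertwining relations. Although conceptually routine, it demands careful bookkeeping of the $z$-powers and of the shifted product quantities $D_n^{(q,r)},D_{n-1}^{(q,r)},E_n^{(q,r)},E_{n-1}^{(q,r)},$ and it is exactly the specific form of the substitutions \eqref{x3.1}--\eqref{x3.4} that makes every entry collapse to the required value.
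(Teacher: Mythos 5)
Your proposal is correct and follows essentially the same route as the paper: the paper's proof posits $\psi_n^{(q,r)}=\Gamma_n^{(q,r)}\psi_n^{(u,v)}$ and determines $\Gamma_n^{(q,r)}$ from the two linear systems (which is exactly your intertwining relation), and its stated alternate proof is the direct verification of compatibility with \eqref{1.1}, \eqref{1.2aa}, the asymptotics \eqref{x2.3}--\eqref{x2.6}, and \eqref{x3.1}--\eqref{x3.4} that you carry out. Your two-step organization (entrywise intertwining check via the recurrences $D_n^{(q,r)}=(1-q_nr_n)D_{n-1}^{(q,r)}$, $E_n^{(q,r)}=(1+q_nr_{n+1})E_{n-1}^{(q,r)}$, then fixing the scalar prefactors by matching the $n\to\pm\infty$ asymptotics and invoking uniqueness of the Jost solutions) simply makes explicit what the paper leaves implicit.
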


\begin{proof}
We only present the proof for \eqref{x3.5} because the proofs for \eqref{x3.5a}--\eqref{x3.8a} can be obtained in a similar manner. To establish \eqref{x3.5} we let
\begin{equation}\label{x.301}
\psi_{n}^{(q,r)}=\Gamma_{n}^{(q,r)}\,\psi_{n}^{(u,v)},
\end{equation}
where $\Gamma_{n}^{(q,r)}$ is a 2$\times$2 matrix to be determined. Since $\psi_{n}^{(q,r)}$ satisfies \eqref{1.1} and $\psi_{n}^{(u,v)}$ satisfies \eqref{1.2aa}, from \eqref{1.1}, \eqref{1.2aa}, \eqref{x.301} we obtain $\Gamma_{n}^{(q,r)}$ as listed in \eqref{x3.5}. As an alternate proof we remark that the reader can directly verify that each of \eqref{x3.5}--\eqref{x3.8a} is compatible with \eqref{1.1}, \eqref{1.2aa}, \eqref{x2.3}--\eqref{x2.6}, and \eqref{x3.1}--\eqref{x3.4}.
\end{proof}

In the next theorem we relate the scattering coefficients for \eqref{1.1}, \eqref{1.2aa}, \eqref{1.2ab} to each other.

\begin{theorem}
\label{thm:theorem x3.3}
Assume that the potential pair $(q,r)$ is rapidly decaying and satisfy \eqref{1.1a}. Assume also that the potential pairs $(u,v)$ and $(p,s)$ are related to $(q,r)$ as in \eqref{x3.1}--\eqref{x3.4}. Then, the scattering coefficients $T^{(q,r)},$ $\bar{T}^{(q,r)},$  $R^{(q,r)},$ $\bar{R}^{(q,r)},$ $L^{(q,r)},$ $\bar{L}^{(q,r)}$ for \eqref{1.1} are related to the scattering coefficients $T_{\rm l}^{(u,v)},$ $T_{\rm r}^{(u,v)},$
$\bar{T}_{\rm l}^{(u,v)},$  $\bar{T}_{\rm r}^{(u,v)},$ $R^{(u,v)},$ $\bar{R}^{(u,v)},$ $L^{(u,v)},$ $\bar{L}^{(u,v)}$ for \eqref{1.2aa} and $T_{\rm l}^{(p,s)},$ $T_{\rm r}^{(p,s)},$ $\bar{T}_{\rm l}^{(p,s)},$ $\bar{T}_{\rm r}^{(p,s)},$ $R^{(p,s)},$ $\bar{R}^{(p,s)},$ $L^{(p,s)},$ $\bar{L}^{(p,s)}$ for \eqref{1.2ab} as
\begin{equation}\label{x3.9}
T_{\rm l}^{(u,v)}=D_\infty^{(q,r)}\,T^{(q,r)}, \quad T_{\rm l}^{(p,s)}=D_\infty^{(q,r)}\,T^{(q,r)},
\end{equation}
\begin{equation}\label{x3.9a}
T_{\rm r}^{(u,v)}=\ds\frac{1}{E_\infty^{(q,r)}}\,T^{(q,r)}, \quad T_{\rm r}^{(p,s)}=\ds\frac{1}{E_\infty^{(q,r)}}\,T^{(q,r)},
\end{equation}
\begin{equation}\label{x3.10}
\bar{T}_{\rm l}^{(u,v)}=E_\infty^{(q,r)}\,\bar{T}^{(q,r)}, \quad \bar{T}_{\rm l}^{(p,s)}=E_\infty^{(q,r)}\,\bar{T}^{(q,r)},
\end{equation}
\begin{equation}\label{x3.10a}
\bar{T}_{\rm r}^{(u,v)}=\ds\frac{1}{D_\infty^{(q,r)}}\,\bar{T}^{(q,r)}, \quad \bar{T}_{\rm r}^{(p,s)}=\ds\frac{1}{D_\infty^{(q,r)}}\,\bar{T}^{(q,r)},
\end{equation}
\begin{equation}\label{x3.11}
R^{(u,v)}=\left(1-\ds\frac{1}{z^2}\right)\,\ds\frac{D_\infty^{(q,r)}}{E_\infty^{(q,r)}}\,R^{(q,r)}, \quad R^{(p,s)}=\ds\frac{D_\infty^{(q,r)}}{E_\infty^{(q,r)}}\,R^{(q,r)},
\end{equation}
 \begin{equation}\label{x3.11a}
\bar{R}^{(u,v)}=\ds\frac{1}{1-\ds\frac{1}{z^2}}\,\ds\frac{E_\infty^{(q,r)}}{D_\infty^{(q,r)}}\,\bar{R}^{(q,r)}, \quad \bar{R}^{(p,s)}=\ds\frac{E_\infty^{(q,r)}}{D_\infty^{(q,r)}}\,\bar{R}^{(q,r)},
\end{equation}
\begin{equation}\label{x3.12}
L^{(u,v)}=\ds\frac{1}{1-\ds\frac{1}{z^2}}\,L^{(q,r)}, \quad L^{(p,s)}= L^{(q,r)},
\end{equation}
\begin{equation}\label{x3.12a}
\bar{L}^{(u,v)}=\left(1-\ds\frac{1}{z^2}\right)\,\bar{L}^{(q,r)}, \quad \bar{L}^{(p,s)}=\bar{L}^{(q,r)},
\end{equation}
where we recall that $D_\infty^{(q,r)}$ and $E_\infty^{(q,r)}$ are the constants defined in \eqref{x2.42} and \eqref{x2.43}, respectively.
	\end{theorem}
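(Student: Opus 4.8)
The plan is to read off each scattering-coefficient identity directly from the Jost-solution relations \eqref{x3.5}--\eqref{x3.8a} of Theorem~\ref{thm:theorem x3.2} by taking one-sided limits in $n$ and matching the coefficients of $z^{-n}$ and $z^{n}$ in the asymptotic expansions \eqref{x2.3}--\eqref{x2.10}. The observation that makes this routine is the limiting behaviour of the $n$-dependent factors in the conversion matrices: since $q_n$ and $r_n$ decay rapidly, we have $r_n\to 0$ and $q_n\to 0$ as $n\to\pm\infty$, while from \eqref{x2.42} and \eqref{x2.43} we have $D_{n-1}^{(q,r)}\to 1$ and $E_{n-1}^{(q,r)}\to 1$ as $n\to-\infty$, whereas $D_{n-1}^{(q,r)}\to D_\infty^{(q,r)}$ and $E_{n-1}^{(q,r)}\to E_\infty^{(q,r)}$ as $n\to+\infty$. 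Hence in each limit the conversion matrix becomes diagonal and the off-diagonal entries containing $r_n$ or $q_n$ drop out.

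First I would establish the relations for the $(u,v)$ system. For the left coefficients $T_{\rm l}^{(u,v)}$ and $L^{(u,v)}$ I would send $n\to-\infty$ in \eqref{x3.5}, where the conversion matrix tends to $D_\infty^{(q,r)}\,\mathrm{diag}(1-1/z^2,\,1)$; inserting the $n\to-\infty$ asymptotics \eqref{x2.7} of $\psi_n^{(u,v)}$ on the right and of $\psi_n^{(q,r)}$ on the left, the second components match to give $T_{\rm l}^{(u,v)}=D_\infty^{(q,r)}\,T^{(q,r)}$ (using $T_{\rm l}^{(q,r)}=T^{(q,r)}$ from \eqref{x2.37}), and the first components then yield $L^{(u,v)}=(1-1/z^2)^{-1}L^{(q,r)}$. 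Sending $n\to+\infty$ in \eqref{x3.7} and using \eqref{x2.8} produces $T_{\rm r}^{(u,v)}$ and $R^{(u,v)}$; sending $n\to-\infty$ in \eqref{x3.6} and using \eqref{x2.10} produces $\bar{T}_{\rm l}^{(u,v)}$ and $\bar{L}^{(u,v)}$; and sending $n\to+\infty$ in \eqref{x3.8} and using \eqref{x2.9} produces $\bar{T}_{\rm r}^{(u,v)}$ and $\bar{R}^{(u,v)}$. In each case the transmission-coefficient identity comes from matching the component carrying the factor $1/T$ (so the reflection coefficient does not enter), and the reflection-coefficient identity then follows from the other component after substituting the transmission-coefficient identity just obtained. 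This gives the first equalities in \eqref{x3.9}--\eqref{x3.12a}.

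Then I would repeat the argument verbatim for the $(p,s)$ system, now using the conversion matrices of \eqref{x3.5a}, \eqref{x3.7a}, \eqref{x3.6a}, \eqref{x3.8a}. The only structural difference is that the $(1,1)$ entry of the $(p,s)$ conversion matrix is $1/E_{n-1}^{(q,r)}$ rather than $(1-1/z^2)/E_{n-1}^{(q,r)}$, so the factor $1-1/z^2$ is absent; this is precisely why the second equalities in \eqref{x3.11}, \eqref{x3.11a}, \eqref{x3.12}, \eqref{x3.12a} carry no such factor while their $(u,v)$ counterparts do. The limits of the scalar prefactors $D_\infty^{(q,r)}$ and $E_\infty^{(q,r)}$ are identical to the $(u,v)$ case, so the transmission-coefficient identities \eqref{x3.9}--\eqref{x3.10a} take the same form for both systems.

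No single step is genuinely difficult; the main obstacle is purely organizational bookkeeping — keeping track of which Jost-solution relation and which limit produce each of the eight scattering coefficients, and ensuring that the scalar prefactors ($D_\infty^{(q,r)}$, $E_\infty^{(q,r)}$, and the explicit $1\pm 1/z^2$ factors) are combined correctly when passing from the $1/T$-component to the reflection component. As an internal consistency check I would verify that the resulting identities respect Theorems~\ref{thm:theorem x2.2} and \ref{thm:theorem x2.4}; for instance, combining \eqref{x2.26} for $(u,v)$ with the to-be-proved \eqref{x3.9} and \eqref{x3.9a} forces $D_\infty^{(u,v)}=1/(D_\infty^{(q,r)}E_\infty^{(q,r)})$, which is exactly \eqref{x.103} of Proposition~\ref{thm:theorem x3.1}.
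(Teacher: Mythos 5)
Your proposal is correct and takes essentially the same route as the paper's own proof: the paper likewise obtains all the identities by letting $n\to\pm\infty$ in the Jost-solution relations \eqref{x3.5}--\eqref{x3.8a} and matching against the asymptotics \eqref{x2.7}--\eqref{x2.9} (it states only the $n\to-\infty$ limit of \eqref{x3.5} with \eqref{x2.7} explicitly and handles the rest ``in a similar manner''). Your write-up simply makes the bookkeeping explicit, including the diagonal limits of the conversion matrices and the consistency check against \eqref{x2.26} and \eqref{x.103}.
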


\begin{proof}
We use the asymptotics of $\psi_{n}^{(q,r)},$ $\psi_{n}^{(u,v)},$ $\psi_{n}^{(p,s)}$ given in \eqref{x2.7} and we let $n\to -\infty$ in \eqref{x3.5}, which helps us to establish \eqref{x3.9} and \eqref{x3.9a}, respectively. We establish \eqref{x3.10}--\eqref{x3.12a} in a similar manner.
\end{proof}

When \eqref{x3.1}--\eqref{x3.4} hold, from \eqref{x.501} and \eqref{x3.9}--\eqref{x3.10a} we obtain the result stated in the following corollary.

\begin{corollary}
\label{thm:theorem x3.4}
Assume that the potentials $q_n$ and $r_n$ appearing in \eqref{1.1} are rapidly decaying and satisfy \eqref{1.1a}. Assume further that the potential pairs $(u,v)$ and $(p,s)$ are related to $(q,r)$ as in \eqref{x3.1}--\eqref{x3.4}. Then, the transmission coefficients $T^{(q,r)},$
$T_{\rm l}^{(u,v)},$  $T_{\rm r}^{(u,v)},$  $T_{\rm l}^{(p,s)},$
$T_{\rm r}^{(p,s)}$  have coinciding poles in $0<|z|<1$ and
the coinciding multiplicity for each of those poles. Similarly, $\bar{T}^{(q,r)},$  $\bar{T}_{\rm l}^{(u,v)},$  $\bar{T}_{\rm r}^{(u,v)},$  $\bar{T}_{\rm l}^{(p,s)},$  $\bar{T}_{\rm r}^{(p,s)}$ have their coinciding poles in $|z|>1$
with the coinciding multiplicity for each of those poles.
\end{corollary}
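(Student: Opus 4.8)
The plan is to read the result off directly from the proportionality relations established in Theorem~\ref{thm:theorem x3.3}, using the single crucial fact that every proportionality factor appearing there is nonzero and independent of $z$. First I would collect from \eqref{x3.9} and \eqref{x3.9a} the relations
\[
T_{\rm l}^{(u,v)}=T_{\rm l}^{(p,s)}=D_\infty^{(q,r)}\,T^{(q,r)},\qquad
T_{\rm r}^{(u,v)}=T_{\rm r}^{(p,s)}=\ds\frac{1}{E_\infty^{(q,r)}}\,T^{(q,r)},
\]
so that each of the four coefficients $T_{\rm l}^{(u,v)},$ $T_{\rm r}^{(u,v)},$ $T_{\rm l}^{(p,s)},$ $T_{\rm r}^{(p,s)}$ is a fixed scalar multiple of $T^{(q,r)}.$ By the definitions \eqref{x2.42} and \eqref{x2.43}, the scalars $D_\infty^{(q,r)}$ and $E_\infty^{(q,r)}$ are the infinite products of the factors $1-q_jr_j$ and $1+q_jr_{j+1}$; under the hypothesis \eqref{1.1a} together with the rapid decay of $q_n$ and $r_n,$ these products converge to nonzero values and, being built only from the potentials, carry no dependence on $z.$

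The second step is to pass from proportionality to the statement about poles. Since multiplying a meromorphic function by a nonzero constant neither creates nor removes poles and leaves every pole multiplicity unchanged, the poles of each of $T_{\rm l}^{(u,v)},$ $T_{\rm r}^{(u,v)},$ $T_{\rm l}^{(p,s)},$ $T_{\rm r}^{(p,s)}$ coincide exactly, and with identical multiplicities, with those of $T^{(q,r)}$ in $0<|z|<1.$ The meromorphy of $T^{(q,r)}$ in that region and the finiteness of its pole set are already guaranteed by Theorem~\ref{thm:theorem x2.4}(a) and Theorem~\ref{thm:theorem x2.6}(e), so this transfer of pole data is legitimate. The role of \eqref{x.501} here is complementary: it records that the reciprocal transmission coefficients take nonzero values at the boundary point $z=0,$ which confirms that no pole escapes to or is spuriously introduced at $z=0$ and that the comparison is genuinely confined to the open punctured disk $0<|z|<1.$ The barred statement is then proved identically, now invoking \eqref{x3.10} and \eqref{x3.10a} to write $\bar{T}_{\rm l}^{(u,v)},$ $\bar{T}_{\rm r}^{(u,v)},$ $\bar{T}_{\rm l}^{(p,s)},$ $\bar{T}_{\rm r}^{(p,s)}$ as $z$-independent nonzero multiples of $\bar{T}^{(q,r)}$ in $|z|>1,$ with the endpoint information at $z=\infty$ supplied by the second line of \eqref{x.501}.

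I do not anticipate any genuine obstacle: once the proportionality factors are identified as nonzero $z$-independent constants, the conclusion about coinciding poles and multiplicities is immediate. The only point deserving a moment's care is the verification that $D_\infty^{(q,r)}$ and $E_\infty^{(q,r)}$ are nonzero, but this is exactly what the constraint \eqref{1.1a} and the rapid-decay hypothesis secure, as already noted following \eqref{x2.43}.
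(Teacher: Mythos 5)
Your proposal is correct and follows essentially the same route as the paper, which derives the corollary directly from \eqref{x.501} and the proportionality relations \eqref{x3.9}--\eqref{x3.10a}, observing that the factors $D_\infty^{(q,r)}$ and $E_\infty^{(q,r)}$ are nonzero $z$-independent constants so that poles and multiplicities transfer unchanged. Your added remarks on the nonvanishing of these constants and the role of the endpoint values simply make explicit what the paper leaves implicit.
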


When \eqref{x3.1}--\eqref{x3.4} hold, based on Corollary~\ref{thm:theorem x3.4} we will
use $\{ {\pm z}_{j},m_j\}_{j=1}^N$ to denote the common set of poles in $0<|z|<1$ and their
multiplicities for $T^{(q,r)},$
$T_{\rm l}^{(u,v)},$  $T_{\rm r}^{(u,v)},$  $T_{\rm l}^{(p,s)},$  $T_{\rm r}^{(p,s)},$ and similarly we will use $\{ {\pm \bar{z}}_{j},\bar{m}_j\}_{j=1}^{\bar{N}}$ to denote the common set of poles in $|z|>1$ and their
multiplicities for  $\bar{T}^{(q,r)},$ $\bar{T}_{\rm l}^{(u,v)},$  $\bar{T}_{\rm r}^{(u,v)},$  $\bar{T}_{\rm l}^{(p,s)},$  $\bar{T}_{\rm r}^{(p,s)}.$

 We present some relevant properties of the Jost solutions to \eqref{1.1} in the next theorem, which is the analog of Theorem~\ref{thm:theorem x2.1} that lists the relevant properties of the Jost solutions to \eqref{1.2aa}.

\begin{theorem}
	\label{thm:theorem x3.4a}
	Assume that the potentials $q_n$ and $r_n$ appearing in \eqref{1.1} are rapidly decaying and satisfy \eqref{1.1a}. Then, the corresponding Jost solutions to \eqref{1.1} satisfy the following:
	\begin{enumerate}
	 	
	 	   \item[\text{\rm(a)}] For each $n\in\mathbb{Z}$ the quantities $z^{-n}\,\psi_n^{(q,r)},$ $z^{n}\,\phi_n^{(q,r)},$ $z^{n}\,\bar{\psi}_n^{(q,r)},$  $z^{-n}\,\bar{\phi}_n^{(q,r)}$ are even in $z$ in their respective domains.
		
			\item[\text{\rm(b)}] The quantity $z^{-n}\,\psi_n^{(q,r)}$  is analytic in $|z|<1$ and
		continuous in $|z|\le 1$.
		
		   \item[\text{\rm(c)}] The quantity $z^{n}\,\phi_n^{(q,r)}$ is analytic in $|z|<1$ and
		continuous in $|z|\le 1$.
		
		\item[\text{\rm(d)}] The quantity $z^{n}\,\bar{\psi}_n^{(q,r)}$ is analytic in $|z|>1$ and
		continuous in $|z|\ge 1$.
		
		\item[\text{\rm(e)}] The quantity $z^{-n}\,\bar{\phi}_n^{(q,r)}$ is analytic in $|z|>1$ and
		continuous in $|z|\ge 1$.
		
		\item[\text{\rm(f)}] The Jost solution $\psi_{n}^{(q,r)}$ has the expansion
		\begin{equation}\label{x5.1}
			\psi_{n}^{q,r)}=\sum_{l=n}^{\infty}K_{nl}^{(q,r)}z^l, \qquad |z|\le 1,
		\end{equation}
		with the double-indexed quantities $K_{nl}^{(q,r)}$ for which we have
		\begin{equation}\label{x5.2}
		K_{nn}^{(q,r)}=D_\infty^{(q,r)}
		\begin{bmatrix}
		-\ds\frac{q_n}{D_n^{(q,r)}}\\
		\noalign{\medskip}
		\ds\frac{1}{D_{n-1}^{(q,r)}}
		\end{bmatrix},
		\end{equation}
		\begin{equation*}
		K_{n(n+2)}^{(q,r)}=D_\infty^{(q,r)}
		\begin{bmatrix}
		\ds\frac{q_n}{D_{n}^{(q,r)}}-\ds\frac{q_{n+1}}
{D_{n+1}^{(q,r)}}-\ds\frac{q_n\left(S_{\infty}^{(q,r)}-S_n^{(q,r)}\right)}{D_n^{(q,r)}}\\
		\noalign{\medskip}
		\ds\frac{S_n^{(q,r)}}{D_{n-1}^{(q,r)}}
		\end{bmatrix},
		\end{equation*}
	with $D_{n}^{(q,r)},$ $D_{\infty}^{(q,r)},$ $S_n^{(q,r)},$ $S_{\infty}^{(q,r)}$ being the scalar quantities defined in \eqref{x2.42}, \eqref{Sn}, \eqref{Sn1}, respectively, and that $K_{nl}^{(q,r)}=0 $ when $n+l$ is odd or  $l<n$.

		\item[\text{\rm(g)}] The Jost solution $\bar{\psi}_{n}^{(q,r)}$ has the expansion		\begin{equation}\label{x5.4}
        \bar{\psi}_{n}^{q,r)}=\sum_{l=n}^{\infty}\bar{K}_{nl}^{(q,r)}\,\ds\frac{1}{z^{l}}, \qquad |z|\ge 1,
		\end{equation}
		with the double-indexed quantities $\bar{K}_{nl}^{(q,r)}$ for which we have
		\begin{equation}\label{x5.5}
		\bar{K}_{nn}^{(q,r)}=\ds\frac{E_\infty^{(q,r)}}{E_{n-1}^{(q,r)}}
		\begin{bmatrix}
		1\\
		\noalign{\medskip}
		r_n
		\end{bmatrix},
		\end{equation}
		\begin{equation*}
		\bar{K}_{n(n+2)}^{(q,r)}=E_\infty^{(q,r)}
		\begin{bmatrix}
		-\ds\frac{q_n
r_{n+1}}{E_n^{(q,r)}}+\ds\frac{Q_\infty^{(q,r)}-Q_{n-1}^{(q,r)}}{E_{n-1}^{(q,r)}}\\
		\noalign{\medskip}
		\ds\frac{r_{n+1}(1-q_nr_n)}{E_n^{(q,r)}}
+\ds\frac{r_n\left(Q_\infty^{(q,r)}-Q_{n-1}^{(q,r)}\right)}{E_{n-1}^{(q,r)}}
		\end{bmatrix},
		\end{equation*}
		with $E_{n}^{(q,r)},$ $E_\infty^{(q,r)},$ $Q_n^{(q,r)},$ $Q_\infty^{(q,r)}$ being the scalar quantities defined in \eqref{x2.43}, \eqref{Qn}, \eqref{Qn1}, respectively, and that $\bar{K}_{nl}^{(q,r)}=0 $ when $n+l$ is odd or  $l<n$.

		\item[\text{\rm(h)}] For the Jost solution $\phi_{n}^{(q,r)}$ we have the expansion
	\begin{equation*}
	z^{n}\,\phi_{n}^{(q,r)}=\ds\sum_{l=0}^{\infty}P_{nl}^{(q,r)}\ds z^{l}, \qquad |z|\le 1,
	\end{equation*}
   with the double-indexed quantities $P_{nl}^{(q,r)}$ for which we have
		\begin{equation*}
		P_{n0}^{(q,r)}=D_{n-1}^{(q,r)}
		\begin{bmatrix}
		1\\
		\noalign{\medskip}
		0
		\end{bmatrix},
		\end{equation*}
		\begin{equation*}
		P_{n2}^{(q,r)}=D_{n-2}^{(q,r)}
		\begin{bmatrix}
		q_{n-1}r_{n-1}+\left(1-q_{n-1}r_{n-1}\right)S_{n-2}^{(q,r)}\\
		\noalign{\medskip}
		-r_{n-1}
		\end{bmatrix},
		\end{equation*}
		and that $P_{nl}^{(q,r)}=0 $ when $l$ is odd or  $l<0$.	

		\item[\text{\rm(i)}] For the Jost solution $\bar{\phi}_{n}^{(q,r)}$ we have the expansion
	\begin{equation*}
	z^{-n}\,\bar{\phi}_{n}^{(q,r)}=\sum_{l=0}^{\infty}\bar{P}_{nl}^{(q,r)}\ds\frac{1}{z^{l}}, \qquad |z|\ge 1,
	\end{equation*}
	with the double-indexed quantities $\bar{P}_{nl}^{(q,r)}$ for which we have
		\begin{equation*}
		\bar{P}_{n0}^{(q,r)}=E_{n-2}^{(q,r)}
		\begin{bmatrix}
		-q_{n-1}\\
		\noalign{\medskip}
		1
		\end{bmatrix},
		\end{equation*}
		\begin{equation*}
		\bar{P}_{n2}^{(q,r)}=E_{n-2}^{(q,r)}
		\begin{bmatrix}
		q_{n-1}-\ds\frac{q_{n-2}}{1+q_{n-2}r_{n-1}}-q_{n-1}\,Q_{n-3}^{(q,r)}\\
		\noalign{\medskip}
		Q_{n-3}^{(q,r)}
		\end{bmatrix},
		\end{equation*}
		and that $\bar{P}_{nl}^{(q,r)}=0 $ when $l$ is odd or  $l<0$.
		
		\item[\text{\rm(j)}] The scattering coefficients for \eqref{1.1} are even in $z$ in their respective domains. The domain for the reflection coefficients
		is the unit circle $\mathbb{T}$ and the domains for the transmission coefficients consist the union of $\mathbb{T}$ and their regions of extensions.
		
		\item[\text{\rm(k)}] The quantity $1/T^{(q,r)}$  has an extension from $z\in\mathbb{T}$ to $|z|<1$ and that extension is analytic for $|z|<1$ and continuous for $|z|\le 1.$ Similarly, the quantity $1/\bar{T}^{(q,r)}$ has an extension from $z\in\mathbb{T}$ so that it is analytic in $|z|>1$ and continuous in $|z|\ge 1.$
	\end{enumerate}
\end{theorem}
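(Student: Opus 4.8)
The plan is to avoid iterating \eqref{1.1} directly and instead read off every assertion from the corresponding statement for the Ablowitz--Ladik system \eqref{1.2ab}, transported through the relations of Theorem~\ref{thm:theorem x3.2}. The essential observation is that the transformation matrix appearing in \eqref{x3.5a}, \eqref{x3.7a}, \eqref{x3.6a}, \eqref{x3.8a}, namely
\[
M_n:=\begin{bmatrix}\ds\frac{1}{E_{n-1}^{(q,r)}}&-\ds\frac{q_n}{D_n^{(q,r)}}\\ \noalign{\medskip}\ds\frac{r_n}{E_{n-1}^{(q,r)}}&\ds\frac{1}{D_{n-1}^{(q,r)}}\end{bmatrix},
\]
together with the scalars $D_\infty^{(q,r)}$ and $E_\infty^{(q,r)}$, is independent of $z$ and has finite, nonzero entries by \eqref{1.1a}. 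Hence each Jost solution to \eqref{1.1} is a fixed $z$-independent linear combination of the components of the corresponding Jost solution to \eqref{1.2ab}, and all $z$-structure is inherited verbatim. I prefer the $(p,s)$ relations over the $(u,v)$ relations \eqref{x3.5}--\eqref{x3.8} precisely because the latter carry the extra factor $1-1/z^2$, which shifts powers of $z$.

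For (a)--(e) I would argue as follows. Since $z^{-n}\psi_n^{(q,r)}=D_\infty^{(q,r)}M_n\,(z^{-n}\psi_n^{(p,s)})$ by \eqref{x3.5a}, and $z^{-n}\psi_n^{(p,s)}$ is even in $z$, analytic in $|z|<1$ and continuous in $|z|\le1$ by Corollary~\ref{thm:theorem x2.1a} (Theorem~\ref{thm:theorem x2.1}(a),(b) for $(p,s)$), multiplication by the constant matrix $D_\infty^{(q,r)}M_n$ preserves both properties; this yields (a) and (b), and (c)--(e) follow identically from \eqref{x3.7a}, \eqref{x3.6a}, \eqref{x3.8a}. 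For (j) and (k) I would instead invoke Theorem~\ref{thm:theorem x3.3}: each scattering coefficient for \eqref{1.1} equals a nonzero $z$-independent constant times its $(p,s)$ counterpart (e.g.\ $T^{(q,r)}=E_\infty^{(q,r)}T_{\rm r}^{(p,s)}$ and $R^{(q,r)}=\ds\frac{E_\infty^{(q,r)}}{D_\infty^{(q,r)}}R^{(p,s)}$ from \eqref{x3.9a} and \eqref{x3.11}), so that evenness and the analytic-extension/continuity statements for $1/T^{(q,r)}$ and $1/\bar T^{(q,r)}$ descend from Theorem~\ref{thm:theorem x2.1}(j),(k).

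For the expansions (f)--(i) the same constant-matrix relations convert the $(p,s)$ power series of Corollary~\ref{thm:theorem x2.1a} into power series for the $(q,r)$ solutions coefficient by coefficient: \eqref{x3.5a} gives $K_{nl}^{(q,r)}=D_\infty^{(q,r)}M_n\,K_{nl}^{(p,s)}$, and likewise $\bar K_{nl}^{(q,r)}=E_\infty^{(q,r)}M_n\,\bar K_{nl}^{(p,s)}$, $P_{nl}^{(q,r)}=M_n\,P_{nl}^{(p,s)}$, $\bar P_{nl}^{(q,r)}=M_n\,\bar P_{nl}^{(p,s)}$, with the parity/support condition (vanishing for $n+l$ odd or $l<n$, resp.\ $l$ odd or $l<0$) inherited at once. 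The leading coefficients are immediate: $M_n\,[0\ \ 1]^{\mathsf T}$ reproduces \eqref{x5.2} and $M_n\,[1\ \ 0]^{\mathsf T}$ reproduces \eqref{x5.5}, while for $P_{n0}^{(q,r)}$ the lower entry cancels using $s_{n-1}=r_nD_{n-1}^{(q,r)}/E_{n-1}^{(q,r)}$ from \eqref{x3.4} and the upper entry collapses to $D_{n-1}^{(q,r)}$ after inserting $1/D_{n-1}^{(p,s)}=D_n^{(q,r)}E_{n-1}^{(q,r)}$ from \eqref{x.104}.

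The hard part, and the only real labor, is matching the explicit second-order coefficients $K_{n(n+2)}^{(q,r)}$, $\bar K_{n(n+2)}^{(q,r)}$, $P_{n2}^{(q,r)}$, $\bar P_{n2}^{(q,r)}$, since the $(p,s)$ data contain the infinite sums $\sum_{k\ge n}p_{k+1}s_k$, $\sum_{k\ge n}p_ks_{k+1}$ and the products $D_{n}^{(p,s)}$, which must be rewritten in $(q,r)$ variables via \eqref{x.101}--\eqref{x.104}. The key reductions are: first, substituting \eqref{x3.3}--\eqref{x3.4} shows that $p_{k+1}s_k$ is exactly the index-$(k+1)$ summand of \eqref{Sn1} and $p_ks_{k+1}$ is the index-$k$ summand of \eqref{Qn1}, whence $\sum_{k\ge n}p_{k+1}s_k=S_\infty^{(q,r)}-S_n^{(q,r)}$ and $\sum_{k\ge n}p_ks_{k+1}=Q_\infty^{(q,r)}-Q_{n-1}^{(q,r)}$; second, the telescoping identity
\[
\frac{r_k\left(q_k-q_{k+1}-q_kq_{k+1}r_{k+1}\right)}{(1-q_kr_k)(1-q_{k+1}r_{k+1})}-u_{k+1}v_k=\frac{1}{1-q_kr_k}-\frac{1}{1-q_{k+1}r_{k+1}},
\]
and its companion with $1+q_kr_{k+1}$ denominators for the barred quantities, let one pass between the $(u,v)$ and $(p,s)$ descriptions and collapse the remaining telescoping sums. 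After these resummations the claimed forms emerge by routine simplification using $D_{n+1}^{(q,r)}=D_{n-1}^{(q,r)}(1-q_nr_n)(1-q_{n+1}r_{n+1})$; this bookkeeping, rather than any conceptual point, is the principal obstacle.
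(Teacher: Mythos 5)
Your route is genuinely different from the paper's: the paper proves this theorem by iterating \eqref{1.1} directly, exactly as in the proof of Theorem~\ref{thm:theorem x2.1}, whereas you transport every assertion from the $(p,s)$ system through the $z$-independent matrix $\Lambda_n^{(q,r)}$ of \eqref{x4.110} appearing in \eqref{x3.5a}--\eqref{x3.8a}. That strategy is legitimate and involves no circularity, since Proposition~\ref{thm:theorem x3.1}, Theorem~\ref{thm:theorem x3.2}, and Theorem~\ref{thm:theorem x3.3} precede the present theorem and are proved independently of it. It disposes of (a)--(e), (j), (k), the parity/support statements, and the leading coefficients \eqref{x5.2}, \eqref{x5.5}, $P_{n0}^{(q,r)}$, $\bar{P}_{n0}^{(q,r)}$ essentially for free; moreover your key identifications $\sum_{k\ge n}p_{k+1}s_k=S_\infty^{(q,r)}-S_n^{(q,r)}$ and $\sum_{k\ge n}p_k s_{k+1}=Q_\infty^{(q,r)}-Q_{n-1}^{(q,r)}$ are correct, and carrying the computation through does reproduce the stated $\bar{K}_{n(n+2)}^{(q,r)}$, $P_{n2}^{(q,r)}$, and $\bar{P}_{n2}^{(q,r)}$.

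The gap is in your closing sentence: the claim that ``the claimed forms emerge by routine simplification'' fails for the second component of $K_{n(n+2)}^{(q,r)}$, which is the one place where actual verification was required. Your own prescription gives that component as $D_\infty^{(q,r)}\bigl[r_np_n/E_{n-1}^{(q,r)}+(S_\infty^{(q,r)}-S_n^{(q,r)})/D_{n-1}^{(q,r)}\bigr]$, and since $r_np_n/E_{n-1}^{(q,r)}$ is exactly the index-$n$ summand of \eqref{Sn} divided by $D_{n-1}^{(q,r)}$, the result is $D_\infty^{(q,r)}(S_\infty^{(q,r)}-S_{n-1}^{(q,r)})/D_{n-1}^{(q,r)}$, \emph{not} the stated $D_\infty^{(q,r)}S_n^{(q,r)}/D_{n-1}^{(q,r)}$. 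In fact it is your value, not the printed one, that is correct: $\psi_n^{(q,r)}$ is obtained by iterating \eqref{1.1} downward from $+\infty$ and hence depends only on $q_k,r_k$ with $k\ge n$, while $S_n^{(q,r)}$ involves sites $k\le n+1$. Concretely, take $q_0=a$, $r_0=b$, and all other potential values zero; then $\psi_1^{(q,r)}$ has first component $0$ and second component $z$ exactly, so $K_{13}^{(q,r)}=0$, whereas the printed formula gives $ab/(1-ab)$ in the second entry. Likewise, letting $n\to-\infty$, consistency with \eqref{x2.40} forces $[K_{n(n+2)}^{(q,r)}]_2\to D_\infty^{(q,r)}S_\infty^{(q,r)}$, while the printed formula tends to $0$. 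So your method, carried out honestly, actually exposes an erratum in the theorem (the entry should read $(S_\infty^{(q,r)}-S_{n-1}^{(q,r)})/D_{n-1}^{(q,r)}$ in place of $S_n^{(q,r)}/D_{n-1}^{(q,r)}$); but as written, your proposal asserts a verification that was never performed and that cannot succeed for the formula as printed.
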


\begin{proof}
The proof is similar to the proof of Theorem~\ref{thm:theorem x2.1} and is obtained with the help of \eqref{1.1} and \eqref{x2.3}--\eqref{x2.6}.
\end{proof}

In the next theorem, at $z=1$ we present the values of the Jost solutions
to \eqref{1.1},
 \eqref{1.2aa},
  \eqref{1.2ab}
when the corresponding potential pairs $(q,r),$ $(u,v),$ $(p,s)$ are related to each other as in
\eqref{x3.1}--\eqref{x3.4}.
These results will be useful in the solution to the inverse problem for
 \eqref{1.1}.

\begin{theorem}
	\label{prop:proposition6.1}
Assume that the potentials $q_n$ and $r_n$ appearing in \eqref{1.1} are rapidly decaying and satisfy \eqref{1.1a}. Assume further that the potential pairs $(u,v)$ and $(p,s)$ are related to $(q,r)$ as in \eqref{x3.1}--\eqref{x3.4}. Then, at $z=1$ the Jost solutions
$\psi_n^{(q,r)}(1),$ $\bar{\psi}_n^{(q,r)}(1),$
$\psi_n^{(u,v)}(1),$ $\bar{\psi}_n^{(u,v)}(1),$ $\psi_n^{(p,s)}(1),$ and  $\bar{\psi}_n^{(p,s)}(1)$  have the values
\begin{equation}\label{6.1aa}
\begin{bmatrix}
\bar{\psi}_n^{(q,r)}(1)&
\psi_n^{(q,r)}(1)\end{bmatrix}=\begin{bmatrix}
1&0\\
\noalign{\medskip}
\ds\sum_{j=n}^{\infty} r_j&1
\end{bmatrix},
\end{equation}
\begin{equation}\label{6.1bb}
\begin{bmatrix}
\bar{\psi}_n^{(u,v)}(1)&
\psi_n^{(u,v)}(1)\end{bmatrix}=\begin{bmatrix}
\ds\frac{E_{n-1}^{(q,r)}}{E_\infty^{(q,r)}}&\ds\frac{E_{n-1}^{(q,r)}}{D_\infty^{(q,r)}}\,\sum_{j=n}^{\infty} q_j\\
\noalign{\medskip}
- r_n\,\ds\frac{D_{n-1}^{(q,r)}}{E_\infty^{(q,r)}}&\ds\frac{D_{n-1}^{(q,r)}}{D_\infty^{(q,r)}}\,\left(1-r_n\,\sum_{j=n}^{\infty} q_j\right)
\end{bmatrix},
\end{equation}
\begin{equation}\label{6.1cc}
\begin{bmatrix}
\bar{\psi}_n^{(p,s)}(1)&
\psi_n^{(p,s)}(1)\end{bmatrix}=\begin{bmatrix}
\ds\frac{E_{n-1}^{(q,r)}}{E_\infty^{(q,r)}}\,\left(1+q_n\,\sum_{j=n+1}^{\infty} r_j\right)&q_n\,\ds\frac{E_{n-1}^{(q,r)}}{D_\infty^{(q,r)}}\\
\noalign{\medskip}
\ds\frac{D_n^{(q,r)}}{E_\infty^{(q,r)}}\,\sum_{j=n+1}^{\infty} r_j&\ds\frac{D_n^{(q,r)}}{D_\infty^{(q,r)}}
\end{bmatrix},
\end{equation}
	where $D_n^{(q,r)},$ $D_\infty^{(q,r)},$ $E_n^{(q,r)},$ $E_\infty^{(q,r)},$are the quantities defined in \eqref{x2.42} and \eqref{x2.43}, respectively.
\end{theorem}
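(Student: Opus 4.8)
The plan is to evaluate each of the three systems at $z=1$, where the coefficient matrices degenerate, and to reconcile the outcome with the transformation formulas of Theorem~\ref{thm:theorem x3.2}. First I would note that at $z=1$ the factor $z-1/z$ vanishes, so the coefficient matrix in \eqref{1.1} collapses to the lower-triangular matrix $\begin{bmatrix} 1 & 0 \\ r_n & 1\end{bmatrix}$. Its first row forces the top component of any solution to be constant in $n$, while the second row gives $\beta_n=\beta_{n+1}+r_n\alpha_n$. Imposing the $z=1$ asymptotics from \eqref{x2.3} and \eqref{x2.5}, namely $\psi_n^{(q,r)}(1)\to[0\ \ 1]^{\mathsf T}$ and $\bar\psi_n^{(q,r)}(1)\to[1\ \ 0]^{\mathsf T}$ as $n\to+\infty$, I would read off $\psi_n^{(q,r)}(1)=[0\ \ 1]^{\mathsf T}$ and, by telescoping the second-row recursion $\beta_n-\beta_{n+1}=r_n$, obtain $\bar\psi_n^{(q,r)}(1)=[1\ \ \sum_{j=n}^\infty r_j]^{\mathsf T}$; this is precisely \eqref{6.1aa}.

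Next, for the pair $(p,s)$ I would exploit that the transformation matrices in \eqref{x3.5a} and \eqref{x3.6a} coincide and are independent of $z$. Writing that common matrix as $N_n$, a short computation using $D_n^{(q,r)}=(1-q_nr_n)D_{n-1}^{(q,r)}$ gives $\det N_n=1/\bigl(E_{n-1}^{(q,r)}D_n^{(q,r)}\bigr)$ and hence an explicit $N_n^{-1}$. Applying $N_n^{-1}$ to the already-known vectors $\psi_n^{(q,r)}(1)$ and $\bar\psi_n^{(q,r)}(1)$ and dividing by $D_\infty^{(q,r)}$ and $E_\infty^{(q,r)}$ respectively, then splitting $\sum_{j=n}^\infty r_j=r_n+\sum_{j=n+1}^\infty r_j$ to absorb the stray $r_n$ terms, yields \eqref{6.1cc}.

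The only delicate case is the pair $(u,v)$, because the transformation matrix in \eqref{x3.5} and \eqref{x3.6} carries the factor $1-1/z^2$, which vanishes at $z=1$. For $\bar\psi^{(u,v)}$ this is harmless: inverting \eqref{x3.6} places $1-1/z^2$ in the numerator, so at $z=1$ the second column of the resulting matrix is annihilated and the first column of \eqref{6.1bb} is read off directly from $\bar\psi_n^{(q,r)}(1)$. The genuine obstacle is $\psi^{(u,v)}$: inverting \eqref{x3.5} produces a factor $1/(1-1/z^2)$ multiplying the top component $\psi_{n,1}^{(q,r)}$, and since $\psi_{n,1}^{(q,r)}(1)=0$ this is a $0/0$ indeterminacy. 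To resolve it I would set $G_n(z):=\psi_{n,1}^{(q,r)}(z)/(1-1/z^2)$, which has a finite limit at $z=1$ because the numerator vanishes there (as seen from the even expansion in Theorem~\ref{thm:theorem x3.4a}(f)), and use the first row of \eqref{1.1} together with the identity $z-1/z=z(1-1/z^2)$ to derive the recursion $G_n=zG_{n+1}+zq_n\,\psi_{n+1,2}^{(q,r)}$. Evaluating at $z=1$ and using $\psi_{n+1,2}^{(q,r)}(1)=1$ from \eqref{6.1aa} gives $G_n(1)=G_{n+1}(1)+q_n$; combined with $G_n(1)\to0$ as $n\to+\infty$ (which follows from the rapid decay and the free-solution limit of the Jost solution), telescoping yields $G_n(1)=\sum_{j=n}^\infty q_j$. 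Substituting this limit into the inverted form of \eqref{x3.5} produces the second column of \eqref{6.1bb}, completing the proof. The entire difficulty is concentrated in this single $0/0$ limit, which the $G_n$ recursion renders elementary; the remaining steps are direct substitution and bookkeeping with the constants $D_n^{(q,r)}$ and $E_n^{(q,r)}$.
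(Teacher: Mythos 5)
Your strategy is sound and genuinely different from the paper's. The paper treats the three systems symmetrically: it either iterates each of \eqref{1.1}, \eqref{1.2aa}, \eqref{1.2ab} at $z=1$ separately (substituting \eqref{x3.1}--\eqref{x3.4}), or verifies directly that the stated columns solve the respective systems with the asymptotics \eqref{x2.3} and \eqref{x2.5}; since the coefficient matrices of \eqref{1.2aa} and \eqref{1.2ab} at $z=1$ are $\bigl[\begin{smallmatrix}1&u_n\\ v_n&1\end{smallmatrix}\bigr]$ and $\bigl[\begin{smallmatrix}1&p_n\\ s_n&1\end{smallmatrix}\bigr]$, no degeneracy ever appears in the paper's route. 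You instead solve only \eqref{1.1} at $z=1$ (this part coincides with the paper's treatment of \eqref{6.1aa}) and then transport the answer through the gauge transformations of Theorem~\ref{thm:theorem x3.2}. For the pair $(p,s)$ this is clean linear algebra: the matrix $\Lambda_n^{(q,r)}$ of \eqref{x4.110} is $z$-independent with $\det\Lambda_n^{(q,r)}=1/(E_{n-1}^{(q,r)}D_n^{(q,r)})$, and your inversion together with the splitting $\sum_{j\ge n}r_j=r_n+\sum_{j\ge n+1}r_j$ reproduces \eqref{6.1cc} exactly. Your treatment of $\bar{\psi}_n^{(u,v)}$ is also correct: inverting \eqref{x3.6} kills the second column at $z=1$, giving the first column of \eqref{6.1bb}. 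What your route buys is that \eqref{6.1bb} and \eqref{6.1cc} become corollaries of \eqref{6.1aa} plus Theorem~\ref{thm:theorem x3.2}; what it costs is the single $0/0$ limit you isolate, and the recursion $G_n=zG_{n+1}+zq_n[\psi_{n+1}^{(q,r)}]_2$ with telescoping is indeed a correct mechanism for it.

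However, your justification of that limit has a real hole. You assert that $G_n(z):=[\psi_n^{(q,r)}]_1/(1-1/z^2)$ has a finite limit at $z=1$ ``because the numerator vanishes there.'' For a function analytic in $|z|<1$ and merely continuous on $|z|\le 1$, vanishing at a boundary point does \emph{not} imply that the quotient by a simple zero has a finite boundary limit: $f(z)=(1-z)\log(1-z)$ is analytic in the disk, continuous up to the boundary, and vanishes at $z=1$, yet $f(z)/(1-z)$ diverges there; evenness in $z$ does not help (pass to $w=z^2$ and use the same example). The identical objection applies to your claim that $G_m(1)\to 0$ as $m\to+\infty$, which you support only by an appeal to rapid decay. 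Both statements are true, but they need an ingredient you did not supply. The cheapest repair is already in the paper: \eqref{x3.5} itself gives the identity $G_n(z)\equiv \bigl(D_\infty^{(q,r)}/E_{n-1}^{(q,r)}\bigr)[\psi_n^{(u,v)}]_1(z)$; since Proposition~\ref{thm:theorem x3.1} guarantees that $(u,v)$ is rapidly decaying with $1-u_nv_n\neq 0$, Theorem~\ref{thm:theorem x2.1}(b) makes $\psi_n^{(u,v)}$ continuous on $|z|\le 1$, so $G_n(1)$ exists, and \eqref{x2.3} applied to $\psi_n^{(u,v)}$ gives $G_m(1)\to 0$. (Alternatively, one can argue that rapid decay of the potentials forces the coefficients $K_{nl}^{(q,r)}$ in \eqref{x5.1} to decay fast enough in $l$ to difference the series term by term at $z=1$, but that decay is not part of the statement of Theorem~\ref{thm:theorem x3.4a}(f), which you cite.) With either repair inserted, your evaluation $[\psi_{n+1}^{(q,r)}]_2(1)=1$, the telescoped identity $G_n(1)=\sum_{j\ge n}q_j$, and the back-substitution into the inverted \eqref{x3.5} complete the proof.
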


\begin{proof}
One can obtain \eqref{6.1aa} via iteration
by directly solving \eqref{1.1} with $z=1$
and using \eqref{x2.3} and \eqref{x2.5}.
Similarly, one can get \eqref{6.1bb} via iteration by solving \eqref{1.2aa} with $z=1$
and	using \eqref{x2.3}, \eqref{x2.5}, \eqref{x2.42}, \eqref{x2.43}, \eqref{x3.1},
and \eqref{x3.2}.
One can obtain \eqref{6.1cc} in a similar manner.
Alternatively, one can directly verify that
the two columns of \eqref{6.1aa} satisfy \eqref{1.1} with $z=1$
with the respective asymptotics in \eqref{x2.5} and \eqref{x2.3}.
Similarly,
with the help of \eqref{x2.42}, \eqref{x2.43}, \eqref{x3.1}, \eqref{x3.2},
one can directly verify that the
two columns given in \eqref{6.1bb} have the respective asymptotics in \eqref{x2.5} and \eqref{x2.3} and that they also
	satisfy \eqref{1.2aa} with $z=1.$ In a similar way, with the help of \eqref{x2.42}, \eqref{x2.43}, \eqref{x3.3}, and \eqref{x3.4}, one can directly verify
	that the two columns given in  \eqref{6.1cc} have the respective asymptotics in \eqref{x2.5} and \eqref{x2.3} and that they each
	satisfy \eqref{1.2ab} with $z=1.$
\end{proof}

We see that at $z=1$ the Jost solutions appearing on the left-hand sides of
\eqref{6.1aa},
\eqref{6.1bb},
\eqref{6.1cc}
can be expressed by using
\eqref{x2.11}, \eqref{x2.13},
 \eqref{x2.11aa}, \eqref{x2.13aa},
\eqref{x5.1}, \eqref{x5.4} as
\begin{equation}\label{6.1aaaa}
\begin{bmatrix}
\bar{\psi}_n^{(q,r)}(1)&
\psi_n^{(q,r)}(1)\end{bmatrix}=
\begin{bmatrix}
\ds\sum_{l=n}^{\infty} \bar{K}_{nl}^{(q,r)}&\ds\sum_{l=n}^{\infty} K_{nl}^{(q,r)}
\end{bmatrix},
\end{equation}
\begin{equation}\label{6.1bbbb}
\begin{bmatrix}
\bar{\psi}_n^{(u,v)}(1)&
\psi_n^{(u,v)}(1)\end{bmatrix}=
\begin{bmatrix}
\ds\sum_{l=n}^{\infty} \bar{K}_{nl}^{(u,v)}&\ds\sum_{l=n}^{\infty} K_{nl}^{(u,v)}
\end{bmatrix},
\end{equation}
\begin{equation}\label{6.1cccc}
\begin{bmatrix}
\bar{\psi}_n^{(p,s)}(1)&
\psi_n^{(p,s)}(1)\end{bmatrix}=
\begin{bmatrix}
\ds\sum_{l=n}^{\infty} \bar{K}_{nl}^{(p,s)}&\ds\sum_{l=n}^{\infty} K_{nl}^{(p,s)}
\end{bmatrix}.
\end{equation}

For a column vector $\mathbf K$ with two components let use $[\mathbf K]_1$ and $[\mathbf K]_2$ to denote the first and second components, respectively, i.e. we let
\begin{equation}\label{ta001}
\begin{bmatrix}K\end{bmatrix}_1:=\begin{bmatrix}1&0\end{bmatrix}\mathbf K,\quad
\begin{bmatrix}K\end{bmatrix}_2:=\begin{bmatrix}0&1\end{bmatrix}\mathbf K.
\end{equation}
In the next theorem we show how to recover the potentials $q_n$ and $r_n$ from
\eqref{6.1bbbb} and \eqref{6.1cccc}, respectively.

\begin{theorem}
	\label{prop:proposition6.1ff}
Assume that the potentials $q_n$ and $r_n$ appearing in \eqref{1.1} are rapidly decaying and satisfy \eqref{1.1a}. Assume further that the potential pairs $(u,v)$ and $(p,s)$ are related to $(q,r)$ as in \eqref{x3.1}--\eqref{x3.4}. Then,
$q_n$ and $r_n$ are related to
the Jost solutions for $(u,v)$ evaluated at $z=1$ given in
\eqref{6.1bbbb} as
\begin{equation}\label{6.6}
q_n=\ds\frac{D_\infty^{(q,r)}}{E_\infty^{(q,r)}}\,\left(\ds\frac{\begin{bmatrix}\psi_{n}^{(u,v)}
(1)\end{bmatrix}_1}{\begin{bmatrix}\bar{ \psi}_n^{(u,v)}(1)\end{bmatrix}_1}-\ds\frac{\begin{bmatrix}\psi_{n+1}^{(u,v)}(1)
\end{bmatrix}_1}{\begin{bmatrix}\bar{ \psi}_{n+1}^{(u,v)}(1)\end{bmatrix}_1}\right),
\end{equation}
\begin{equation}\label{6.6kk}
r_n=-\ds\frac{E_\infty^{(q,r)}}{D_\infty^{(q,r)}}\,\ds\frac{\begin{bmatrix}\bar{ \psi}_n^{(u,v)}(1)\end{bmatrix}_1\,
\begin{bmatrix}\bar{ \psi}_n^{(u,v)}(1)\end{bmatrix}_2
}{\begin{bmatrix}\bar{ \psi}_n^{(u,v)}(1)\end{bmatrix}_1\,\begin{bmatrix}\psi_{n}^{(u,v)}
(1)\end{bmatrix}_2
-\begin{bmatrix}\bar{ \psi}_n^{(u,v)}(1)\end{bmatrix}_2\,
\begin{bmatrix}\psi_{n}^{(u,v)}
(1)\end{bmatrix}_1}.
\end{equation}
Similarly, $q_n$ and
$r_n$ are related to
the Jost solutions for $(p,s)$ evaluated at $z=1$ given in
\eqref{6.1cccc} as
\begin{equation}\label{6.6hh}
q_n=\ds\frac{D_\infty^{(q,r)}}{E_\infty^{(q,r)}}\,\ds\frac{\begin{bmatrix}\psi_{n}^{(p,s)}
(1)\end{bmatrix}_1\,
\begin{bmatrix}\psi_{n}^{(p,s)}
(1)\end{bmatrix}_2
}{\begin{bmatrix}\bar{ \psi}_n^{(p,s)}(1)\end{bmatrix}_1\,\begin{bmatrix}\psi_{n}^{(p,s)}
(1)\end{bmatrix}_2
-\begin{bmatrix}\bar{ \psi}_n^{(p,s)}(1)\end{bmatrix}_2\,
\begin{bmatrix}\psi_{n}^{(p,s)}
(1)\end{bmatrix}_1},
\end{equation}
\begin{equation}\label{6.8}
r_n=\ds\frac{E_\infty^{(q,r)}}{D_\infty^{(q,r)}}\,\left(\ds\frac{\begin{bmatrix}	\bar{\psi}_{n-1}^{(p,s)}(1)\end{bmatrix}_2}
{\begin{bmatrix}\psi_{n-1}^{(p,s)}(1)\end{bmatrix}_2}
-\ds\frac{\begin{bmatrix}
\bar{\psi}_{n}^{(p,s)}(1)\end{bmatrix}_2}
{\begin{bmatrix}
\psi_{n}^{(p,s)}(1)\end{bmatrix}_2}\right).
\end{equation}
\end{theorem}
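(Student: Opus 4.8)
The plan is to substitute the explicit closed-form values of the Jost solutions at $z=1$ provided in Theorem~\ref{prop:proposition6.1}, namely the matrices in \eqref{6.1bb} and \eqref{6.1cc}, directly into the four right-hand sides of \eqref{6.6}--\eqref{6.8} and simplify. Each expression on the right is a ratio, so the $n$-dependent scalar factors $D_{n-1}^{(q,r)},$ $D_n^{(q,r)},$ $E_{n-1}^{(q,r)}$ will cancel, leaving quantities built only from $D_\infty^{(q,r)},$ $E_\infty^{(q,r)},$ and the tails $\sum_{j\ge n} q_j$ or $\sum_{j\ge n} r_j.$

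First I would prove \eqref{6.6}. From \eqref{6.1bb} the first components are $[\psi_n^{(u,v)}(1)]_1 = (E_{n-1}^{(q,r)}/D_\infty^{(q,r)})\sum_{j=n}^\infty q_j$ and $[\bar\psi_n^{(u,v)}(1)]_1 = E_{n-1}^{(q,r)}/E_\infty^{(q,r)},$ so their ratio is $(E_\infty^{(q,r)}/D_\infty^{(q,r)})\sum_{j=n}^\infty q_j$ after $E_{n-1}^{(q,r)}$ cancels. The bracketed difference in \eqref{6.6} then telescopes to $(E_\infty^{(q,r)}/D_\infty^{(q,r)})\,q_n,$ and the prefactor $D_\infty^{(q,r)}/E_\infty^{(q,r)}$ returns $q_n.$ The formula \eqref{6.8} is handled in exactly the same spirit, now using the second components $[\psi_n^{(p,s)}(1)]_2 = D_n^{(q,r)}/D_\infty^{(q,r)}$ and $[\bar\psi_n^{(p,s)}(1)]_2 = (D_n^{(q,r)}/E_\infty^{(q,r)})\sum_{j=n+1}^\infty r_j$ from \eqref{6.1cc}; here $D_n^{(q,r)}$ cancels and the telescoping difference in the tail sums of $r_j$ produces $r_n.$

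For the remaining two formulas \eqref{6.6kk} and \eqref{6.6hh}, the key observation is that the denominator in each is precisely the $2\times 2$ determinant of the matrix in \eqref{6.1bb} or \eqref{6.1cc}, i.e. the Wronskian-type combination $[\bar\psi_n]_1[\psi_n]_2 - [\bar\psi_n]_2[\psi_n]_1$ of the two listed columns. I would compute this determinant directly: for \eqref{6.1bb} the cross terms involving $r_n\sum_{j=n}^\infty q_j$ cancel and the determinant collapses to $E_{n-1}^{(q,r)}D_{n-1}^{(q,r)}/(E_\infty^{(q,r)}D_\infty^{(q,r)}),$ while for \eqref{6.1cc} the terms with $q_n\sum_{j=n+1}^\infty r_j$ cancel, leaving $E_{n-1}^{(q,r)}D_n^{(q,r)}/(E_\infty^{(q,r)}D_\infty^{(q,r)}).$ Forming the numerator products $[\bar\psi_n^{(u,v)}(1)]_1[\bar\psi_n^{(u,v)}(1)]_2$ for \eqref{6.6kk} and $[\psi_n^{(p,s)}(1)]_1[\psi_n^{(p,s)}(1)]_2$ for \eqref{6.6hh}, dividing by these determinants, and multiplying by the stated prefactors then yields $r_n$ and $q_n,$ respectively.

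The computation is entirely mechanical and I anticipate no conceptual obstacle; the only points requiring care are the bookkeeping of the subscripts on $D_\bullet^{(q,r)}$ and $E_\bullet^{(q,r)},$ the sign conventions, and verifying that the two cancellations in the determinants (the $r_n\sum q_j$ terms in \eqref{6.1bb} and the $q_n\sum r_j$ terms in \eqref{6.1cc}) really do occur as claimed. Alternatively, one could bypass Theorem~\ref{prop:proposition6.1} and instead derive \eqref{6.6}--\eqref{6.8} directly from the transformation relations \eqref{x3.5}--\eqref{x3.8a} evaluated at $z=1,$ but substituting the ready-made values from \eqref{6.1bb} and \eqref{6.1cc} is clearly the shorter route.
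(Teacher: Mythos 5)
Your proposal is correct and follows essentially the same route as the paper: both substitute the explicit $z=1$ values from Theorem~\ref{prop:proposition6.1} (equations \eqref{6.1bb} and \eqref{6.1cc}), obtain \eqref{6.6} and \eqref{6.8} by telescoping the tail sums $\sum_{j\ge n}q_j$ and $\sum_{j\ge n+1}r_j$, and then recover \eqref{6.6kk} and \eqref{6.6hh} by algebra on the remaining entries (your determinant computation is just a slightly more explicit packaging of the paper's step ``using \eqref{6.5} in \eqref{6.1bb}''). The cancellations you flag for verification do occur, as the cross terms in each determinant cancel exactly, leaving $E_{n-1}^{(q,r)}D_{n-1}^{(q,r)}/(E_\infty^{(q,r)}D_\infty^{(q,r)})$ and $E_{n-1}^{(q,r)}D_{n}^{(q,r)}/(E_\infty^{(q,r)}D_\infty^{(q,r)}),$ respectively.
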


\begin{proof}
 From \eqref{6.1bb} we obtain
\begin{equation}\label{6.5}
\ds\frac{D_\infty^{(q,r)}}{E_\infty^{(q,r)}}\,\ds\frac{\begin{bmatrix}
\psi_{n}^{(u,v)}(1)\end{bmatrix}_1}
{\begin{bmatrix}
\bar{ \psi}_n^{(u,v)}(1)\end{bmatrix}_1}
=\sum_{j=n}^{\infty}q_j,
\end{equation}
which yields \eqref{6.6}. Then, using \eqref{6.5} in \eqref{6.1bb} we get
\eqref{6.6kk}.
Similarly, from \eqref{6.1cc} we get
\begin{equation}\label{6.7}
\ds\frac{E_\infty^{(q,r)}}{D_\infty^{(q,r)}}\,\ds\frac{\begin{bmatrix}
\bar{\psi}_{n}^{(p,s)}(1)\end{bmatrix}_2}
{\begin{bmatrix}
\psi_n^{(p,s)}(1)\end{bmatrix}_2}
=\sum_{j=n+1}^{\infty}r_j,
\end{equation}
which yields
\eqref{6.8}. Using \eqref{6.7} in \eqref{6.1cc} we get \eqref{6.6hh}.
\end{proof}

Let us remark that, as seen from \eqref{6.1aa}, we cannot determine $q_n$ from
either side of \eqref{6.1aaaa} even though we obtain $r_n$ as
$$r_n=\begin{bmatrix}
\bar{\psi}_n^{(q,r)}(1)\end{bmatrix}_2
-\begin{bmatrix}
\bar{\psi}_{n+1}^{(q,r)}(1)\end{bmatrix}_2.$$

\section{The bound states}
\label{sec:section4}

In this section we analyze the bound states for each of the three linear systems \eqref{1.1}, \eqref{1.2aa}, \eqref{1.2ab}, and we describe their bound-state data sets
in terms of the bound-state $z$-values, the multiplicity of each bound state, and the bound-state norming constants. We show how the bound-state norming constants are related to the
dependency constants and the transmission coefficients. Using a pair of constant matrix triplets
$(A,B,C)$ and $(\bar{A}, \bar{B}, \bar{C}),$ we describe in an elegant manner
each bound-state data set for any number of bound states with any multiplicities.
In the formulation of the
Marchenko method, we show how to relate the two matrix triplets
to the relevant Marchenko kernels in such a way that the
procedure is generally applicable for both continuous and discrete
linear systems. When the potential pairs $(q,r),$ $(u,v),$ and $(p,s)$ are related to
each other as in \eqref{x3.1}--\eqref{x3.4}, we describe how the corresponding
bound-state data sets are related to each other and also how the corresponding pairs of matrix triplets are related
to each other.

Let us first consider the bound states for \eqref{1.1}. By definition
a bound state for \eqref{1.1} corresponds to a square-summable solution in $n\in\mathbb{Z}$, i.e. a solution $\begin{bmatrix}
\alpha_{n}\\
\beta_n
\end{bmatrix}$ satisfying
\begin{equation}
\label{x4.1}
\sum_{n=-\infty}^{\infty}\left(|\alpha_n|^2+|\beta_n|^2\right)< +\infty.
\end{equation}
The bound states for \eqref{1.2aa} and \eqref{1.2ab} are defined in a similar way, i.e. for each of these two systems a bound state corresponds to a square-summable solution.

Let us introduce the dependency constants related to bound states for each of \eqref{1.1}, \eqref{1.2aa}, \eqref{1.2ab}. For each of these systems, at a bound state at $z=z_j$ the Jost solutions $\phi_{n}$ and $\psi_{n}$ are linearly dependent because $\phi_{n}(z_j)$ decays sufficiently fast as $n\to-\infty$ and $\psi_{n}(z_j)$ decays sufficiently fast as $n\to+\infty$ so that each of these solutions satisfy \eqref{x4.1}. Thus, a bound-state solution is a constant multiple of either of $\phi_{n}(z_j)$ and $\psi_{n}(z_j),$
and we can introduce the double-indexed dependency constant $\gamma_{j0}$ as the constant satisfying
\begin{equation}
\label{ta3001}
\phi_{n}(z_j)=\gamma_{j0}\,\psi_{n}(z_j),\qquad n\in\mathbb{Z}.
\end{equation}
As seen from any of the first equalities in \eqref{ta.5001}, \eqref{ta.5002},
and \eqref{ta.5003}, we observe that \eqref{ta3001} is equivalent to
the vanishing of the Wronskian determinant at $z=z_j$ for all $n\in\mathbb{Z},$ i.e.
\begin{equation*}
\begin{vmatrix}
\phi_{n}(z_j)&\psi_{n}(z_j)
\end{vmatrix}=0,\qquad n\in\mathbb{Z},
\end{equation*}
which is also equivalent to the linear dependence of
the Jost solutions $\phi_n$ and $\psi_n$ at $z=z_j.$

Similarly, at a bound state at $z=\bar{z}_j,$ the Jost solutions
$\bar{\phi}_n$ and $\bar{\psi}_n$ are linearly dependent
and for any of the systems \eqref{1.1}, \eqref{1.2aa}, \eqref{1.2ab},
this can be expressed in some equivalent forms such as
\begin{equation*}
\bar{\phi}_n(\bar{z}_j)=\bar{\gamma}_{j0}\,\bar{\psi}_n(\bar{z}_j),\qquad n\in\mathbb{Z},
\end{equation*}
where $\bar{\gamma}_{j0}$ is the double-indexed dependency constant, and also as
\begin{equation*}
\begin{vmatrix}
\bar{\phi}_{n}(\bar{z}_j)&\bar{\psi}_{n}(\bar{z}_j)
\end{vmatrix}=0,\qquad n\in\mathbb{Z},
\end{equation*}
indicating the linear dependence  of the
Jost solutions $\bar{\phi}_n$ and $\bar{\psi}_n$ at $z=\bar{z}_j.$

If a bound state is not simple, as seen from \eqref{ta.5006}, \eqref{ta.5007},
\eqref{ta.5010}, \eqref{ta.5011}, \eqref{ta.5018}, and \eqref{ta.5019},
the number of constraints is equivalent to the multiplicity of the bound state,
yielding as many dependency constants as the multiplicity of the bound state.
At a bound state at $z=z_j$ with multiplicity $m_j,$ by proceeding as in \cite{busse2008generalized,Ercan2018}, it follows that each of
\eqref{ta.5006}, \eqref{ta.5010},  \eqref{ta.5018} is equivalent to having $m_j$ constraints relating the Jost solutions $\phi_n$ and $\psi_n$ and their $z$-derivatives as


\begin{equation}\label{x4.103}
\ds\frac{d^k \phi_{n}(z_j)}{dz^k}=\ds\sum_{l=0}^{k}\ds\binom{k}{l}\,\gamma_{j(k-l)}\,\ds\frac{d^l\,\psi_{n}(z_j)}{dz^l},\qquad 0\le k \le m_j-1,
\end{equation}
where $\binom{k}{l}$ is the binomial coefficient
and we refer to the double-indexed scalar quantities $\gamma_{jk}$ as the dependency constants at $z=z_j.$
Note that \eqref{x4.103} holds for each of the systems \eqref{1.1}, \eqref{1.2aa}, and \eqref{1.2ab}. We can use the appropriate superscripts so that $\gamma_{jk}^{(q,r)},$ $\gamma_{jk}^{(u,v)},$ $\gamma_{jk}^{(p,s)}$ denote the corresponding dependency constants for \eqref{1.1}, \eqref{1.2aa}, \eqref{1.2ab}, respectively. In a similar way, we obtain the double-indexed dependency constants $\bar{\gamma}_{jk}$ at a bound state at $z=\bar{z}_j$ with multiplicity $\bar{m}_j,$ which relate the Jost solutions $\bar{\phi}_n$ and $\bar{\psi}_n$ and their $z$-derivatives as

\begin{equation}\label{x4.107}
\ds\frac{d^k\bar{\phi}_{n}(\bar{z}_j)}{dz^k}=\ds\sum_{l=0}^{k}\ds\binom{k}{l}\,\bar{\gamma}_{j(k-l)}\,\ds\frac{d^l\bar{\psi}_{n}(\bar{z}_j)}{dz^l},\qquad 0\le k \le \bar{m}_j-1.
\end{equation}

Let us also introduce the ``residues" $t_{jk}$ of the right transmission coefficients for each of \eqref{1.1}, \eqref{1.2aa}, and \eqref{1.2ab} when the corresponding right transmission coefficient $T_{\rm r}$ has a pole at $z=z_j$ of order $m_j$. Using the expansion
\begin{equation}\label{x4.8}
T_{\rm r}= \ds\frac{t_{jm_j}}{(z-z_j)^{m_j}}+\ds\frac{t_{j(m_j-1)}}{(z-z_j)^{m_j-1}}+\cdots+\ds\frac{t_{j1}}{(z-z_j)}+O\left(1\right),\qquad z\to z_j,
\end{equation}
we uniquely obtain the residues $t_{jk}$ for $1 \le k \le m_j$ and $1 \le j \le N$. We remark that $t_{jk}^{(q,r)},$  $t_{jk}^{(u,v)},$ $t_{jk}^{(p,s)}$ are defined as in \eqref{x4.8} by using the right transmission coefficients $T^{(q,r)},$ $T_{\rm r}^{(u,v)},$ $T_{\rm r}^{(p,s)}$ corresponding to \eqref{1.1}, \eqref{1.2aa}, \eqref{1.2ab}, respectively. In a similar way we define the ``residues" $\bar{t}_{jk}$ by letting
\begin{equation}\label{x4.9}
\bar{T}_{\rm r}= \ds\frac{\bar{t}_{j\bar{m}_j}}{(z-\bar{z}_j)^{\bar{m}_j}}+\ds\frac{\bar{t}_{j(\bar{m}_j-1)}}{(z-\bar{z}_j)^{\bar{m}_j-1}}+\cdots+\ds\frac{\bar{t}_{j1}}{(z-\bar{z}_j)}+O\left(1\right),\qquad z\to \bar{z}_j.
\end{equation}
Again using \eqref{x4.9} with $\bar{T}^{(q,r)},$ $\bar{T}_{\rm r}^{(u,v)},$ $\bar{T}_{\rm r}^{(p,s)}$ we obtain the residues $\bar{t}_{jk}^{(q,r)},$  $\bar{t}_{jk}^{(u,v)},$  $\bar{t}_{jk}^{(p,s)}$ corresponding to \eqref{1.1}, \eqref{1.2aa}, \eqref{1.2ab}, respectively.

In the next theorem we elaborate on the
bound states for \eqref{1.1}.

\begin{theorem}
\label{thm:theorem x4.1}
Assume that the potentials $q_n$ and $r_n$ appearing in \eqref{1.1} are rapidly decaying and satisfy \eqref{1.1a}. Then, we have the following:
\begin{enumerate}

	\item[\text{\rm(a)}] A bound state for \eqref{1.1} can only occur at a $z$-value for which $T^{(q,r)}$ has a pole in the region $0<|z|<1$ or $\bar{T}^{(q,r)}$ has a pole in the region $|z|>1$.

	\item[\text{\rm(b)}] The number of bound states is finite, i.e. the number of poles of $T^{(q,r)}$ in $0<|z|<1$ and the number of poles of $\bar{T}^{(q,r)}$ in $|z|>1$ each must be finite.
	
	\item[\text{\rm(c)}] A bound state is not necessarily simple, but its multiplicity
must be finite.

	\item[\text{\rm(d)}] Since each of the transmission coefficients $T^{(q,r)}$ and $\bar{T}^{(q,r)}$ are even in $z$ in their respective domains, the bound-state $z$-values are symmetrically located with respect to
the origin of the complex $z$-plane.

		\item[\text{\rm(e)}] At a bound state corresponding to a pole at
$z_j$ for $T^{(q,r)}$ with multiplicity $m_j,$ we have the two vectors
\begin{equation}\label{x4.2}
\begin{bmatrix}
\phi_n^{(q,r)}(z_j)\\\noalign{\medskip}\ds\frac{d\,\phi_{n}^{(q,r)}(z_j)}{dz}
\\\noalign{\medskip}\ds\frac{d^{2}\,\phi_{n}^{(q,r)}(z_j)}{dz^{2}}
\\\noalign{\medskip}\vdots\\
\noalign{\medskip}\ds\frac{d^{m_j-1}\,\phi_{n}^{(q,r)}(z_j)}{dz^{m_j-1}}
\end{bmatrix}, \quad
\begin{bmatrix}
\psi_n^{(q,r)}(z_j)\\
\noalign{\medskip}\ds\frac{d\,\psi_{n}^{(q,r)}(z_j)}{dz}\\
\noalign{\medskip}\ds\frac{d^{2}\,\psi_{n}^{(q,r)}(z_j)}{dz^{2}}\\
\noalign{\medskip}\vdots\\\noalign{\medskip}\ds\frac{d^{m_j-1}
\,\psi_{n}^{(q,r)}(z_j)}{dz^{m_j-1}}
\end{bmatrix},
\end{equation}
related to each other as in \eqref{x4.103} via $m_j$ dependency constants $\gamma_{jk}^{(q,r)}.$ Similarly, at the bound state at $z=\bar{z}_j$ corresponding to a pole of $\bar{T}^{(q,r)}$ in $|z|>1$, we have the two vectors
\begin{equation}\label{x4.3}
\begin{bmatrix}
\bar{\phi}_n^{(q,r)}(\bar{z}_j)\\
\noalign{\medskip}\ds\frac{d\,\bar{\phi}_{n}^{(q,r)}(\bar{z}_j)}{dz}\\
\noalign{\medskip}\ds\frac{d^{2}\,\bar{\phi}_{n}^{(q,r)}(\bar{z}_j)}{dz^{2}}\\
\noalign{\medskip}\vdots\\
\noalign{\medskip}\ds\frac{d^{\bar{m}_j-1}\,\bar{\phi}_{n}^{(q,r)}(\bar{z}_j)}
{dz^{\bar{m}_j-1}}
\end{bmatrix}, \quad
\begin{bmatrix}
\bar{\psi}_n^{(q,r)}(\bar{z}_j)\\
\noalign{\medskip}
\ds\frac{d\,\bar{\psi}_{n}^{(q,r)}(\bar{z}_j)}{dz}\\
\noalign{\medskip}\ds\frac{d^{2}\,\bar{\psi}_{n}^{(q,r)}(\bar{z}_j)}{dz^{2}}\\
\noalign{\medskip}\vdots\\\noalign{\medskip}
\ds\frac{d^{\bar{m}_j-1}\,\bar{\psi}_{n}^{(q,r)}(\bar{z}_j)}{dz^{\bar{m}_j-1}}
\end{bmatrix},
	\end{equation}
	related to each other as in \eqref{x4.107} via $\bar{m}_j$
dependency constants $\bar{\gamma}_{jk}^{(q,r)}.$  We recall that an overbar does not denote complex conjugation and that $\psi_{n}^{(q,r)}(z),$ $\phi_n^{(q,r)}(z),$
$\bar{\psi}_{n}^{(q,r)}(z),$ $\bar{\phi}_n^{(q,r)}(z)$
are the four Jost solutions to \eqref{1.1}.
\end{enumerate}
\end{theorem}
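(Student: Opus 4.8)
The plan is to dispatch parts~(a)--(d) quickly from the analytic properties of the transmission coefficients already established, and then to concentrate the real effort on the dependency structure asserted in part~(e). First I would prove (a) by a decay analysis of the Jost solutions. From the asymptotics \eqref{x2.3} and \eqref{x2.4}, the solution $\psi_n^{(q,r)}$ behaves like $z^n$ as $n\to+\infty$ and $\phi_n^{(q,r)}$ behaves like $z^{-n}$ as $n\to-\infty$, so in the region $0<|z|<1$ each of these two Jost solutions is square-summable at one end of the lattice. A solution satisfying \eqref{x4.1} on all of $\mathbb{Z}$ must decay at both ends, and since the solution space is two-dimensional this forces it to be proportional simultaneously to $\psi_n^{(q,r)}$ and to $\phi_n^{(q,r)}$; hence such a solution exists precisely when $\phi_n^{(q,r)}$ and $\psi_n^{(q,r)}$ are linearly dependent. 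By Theorem~\ref{thm:theorem ta.5004}(a) their Wronskian $a_n^{(q,r)}$ equals $1/T^{(q,r)}$, so this linear dependence occurs exactly at the poles of $T^{(q,r)}$ in $0<|z|<1$. The mirror argument using \eqref{x2.5}, \eqref{x2.6}, and the relation $\bar{a}_n^{(q,r)}=-1/\bar{T}^{(q,r)}$ locates the remaining bound states at the poles of $\bar{T}^{(q,r)}$ in $|z|>1$; since Theorem~\ref{thm:theorem x2.6}(a) rules out vanishing of the transmission coefficients on $\mathbb{T}$, no bound state can sit on the unit circle.

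Parts (b), (c), (d) then follow essentially by citation. The finiteness of the number of bound states and of each multiplicity asserted in (b) and (c) is immediate from the meromorphy statement of Theorem~\ref{thm:theorem x2.6}(e), which guarantees that $T^{(q,r)}$ and $\bar{T}^{(q,r)}$ have finitely many poles of finite order in their respective regions. The symmetry in (d) is precisely Theorem~\ref{thm:theorem x2.6}(f), equivalently the evenness in $z$ recorded in Theorem~\ref{thm:theorem x3.4a}(j): a pole at $z_j$ forces a pole at $-z_j$, so the bound-state values occur in $\pm$ pairs.

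The substance is in (e). By Theorem~\ref{thm:theorem ta.5004}(c), a pole of $T^{(q,r)}$ of order $m_j$ at $z_j$ makes $a_n^{(q,r)}=|\phi_n^{(q,r)}\ \psi_n^{(q,r)}|$ and its first $m_j-1$ derivatives in $z$ vanish at $z_j$ for every $n$. I would then argue that this vanishing is equivalent to the existence of a scalar function $g(z)$, analytic at $z_j$ and independent of $n$, with $\phi_n^{(q,r)}(z)-g(z)\,\psi_n^{(q,r)}(z)=O((z-z_j)^{m_j})$; applying the Leibniz rule and reading off $\gamma_{jk}^{(q,r)}=g^{(k)}(z_j)$ produces exactly the lower-triangular Toeplitz system \eqref{x4.103} relating the two derivative vectors displayed in \eqref{x4.2}. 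To construct $g$, since $\psi_n^{(q,r)}(z_j)\neq0$ one fixes an $n$ and a component in which $\psi_n^{(q,r)}$ does not vanish, defines $g$ as the corresponding ratio, and uses the vanishing of the Wronskian derivatives to verify that the same $g$ simultaneously governs both components and every lattice site $n$; this is the argument carried out in \cite{busse2008generalized,Ercan2018}. The conjugate relation \eqref{x4.107} for the vectors in \eqref{x4.3} follows identically from $\bar{a}_n^{(q,r)}=-1/\bar{T}^{(q,r)}$ together with Theorem~\ref{thm:theorem ta.5004}(c).

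The main obstacle I anticipate is the $n$-independence and simultaneity in the construction of $g$ when $m_j>1$: one must check that a single chain of dependency constants $\gamma_{j0}^{(q,r)},\dots,\gamma_{j(m_j-1)}^{(q,r)}$ reproduces the constraints for both components of the vector Jost solutions and for all $n$ at once. This is exactly the point at which the common linear-system structure of \eqref{1.1} and the $n$-independence of $a_n^{(q,r)}$ furnished by Theorem~\ref{thm:theorem ta.5004}(a) are indispensable, and it is the step treated carefully in the cited references.
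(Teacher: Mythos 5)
Your proposal is correct and follows essentially the same route as the paper: both reduce bound states to square-summability, invoke the Wronskian identity $a_n^{(q,r)}=1/T^{(q,r)}$ and $\bar{a}_n^{(q,r)}=-1/\bar{T}^{(q,r)}$ of Theorem~\ref{thm:theorem ta.5004} together with the derivative-vanishing conditions \eqref{ta.5006}--\eqref{ta.5007} to locate them at the poles of the transmission coefficients, obtain finiteness from Theorem~\ref{thm:theorem x2.6}(e) and the $\pm z_j$ symmetry from evenness, and defer the construction of the dependency-constant chain \eqref{x4.103} to the argument of \cite{busse2008generalized,Ercan2018}. Your added detail on constructing the scalar function $g(z)$ with $\phi_n^{(q,r)}-g\,\psi_n^{(q,r)}=O\left((z-z_j)^{m_j}\right)$ is precisely the content the paper outsources to those references.
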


\begin{proof}
By Theorem~\ref{thm:theorem x3.4a} we know that $z^{-n}\,\psi_{n}^{(q,r)}(z)$ and $z^{n}\,\phi_n^{(q,r)}(z)$ have analytic extensions from $z\in \mathbb{T}$ to $|z|<1.$ Since a bound-state solution to \eqref{1.1} must satisfy \eqref{x4.1}, with the help of the first equality in \eqref{ta.5001} and \eqref{ta.5006} we prove that the bound states located in $|z|<1$ occur if and only if the two vectors listed in \eqref{x4.2} are related as stated in (e) and that relation occurs at a pole of $T^{(q,r)}.$ By Theorem~\ref{thm:theorem x3.4a}(j) we know that $T^{(q,r)}$ contains $z$ as $z^{2},$ and hence the bound states occur at the poles of $T^{(q,r)}$  at $z=\pm z_{j}$ for $1\le j\le N$  in $0<|z|<1,$ each having the multiplicity $m_j.$ The finiteness of $N$ and of $m_j$ is already known from Theorem~\ref{thm:theorem x2.6}(e). In a similar way,
with the help of the second equality in \eqref{ta.5001} and \eqref{ta.5007} 
 we show that the bound states of \eqref{1.1} in $|z|>1$ occur at $z=\pm \bar{z}_j,$ where the two vectors listed in \eqref{x4.3} are related to each other as stated in (e) and that $\bar{T}^{(q,r)}$ has a pole at $z=\pm \bar{z}_j$ with multiplicity $\bar{m}_j.$ The number of such $\bar{z}_j$-values denoted by $\bar{N}$ and each multiplicity $\bar{m}_j$ are both finite as a consequence of Theorem~\ref{thm:theorem x2.6}(e).
\end{proof}

In Theorem~\ref{thm:theorem x4.1} and its proof, the bound-state $z$-values and their multiplicities are described by the sets $\{ {\pm z}_{j},m_j\}_{j=1}^N$ and $\{ {\pm \bar{z}}_{j},\bar{m}_j\}_{j=1}^{\bar{N}}$ without using the superscript $(q,r).$ For clarity, one must use $z_j^{(q,r)},$ $m_j^{(q,r)},$ $N^{(q,r)},$ $\bar{z}_j^{(q,r)},$ $\bar{m}_j^{(q,r)},$ $\bar{N}^{(q,r)}$ for \eqref{1.1} and similar notations to describe the bound states for \eqref{1.2aa} and \eqref{1.2ab}. Then, the bound states for \eqref{1.2aa} and \eqref{1.2ab} can be described by the corresponding version of Theorem~\ref{thm:theorem x4.1}.

Let us also remark that from \eqref{x2.26}  and the analog of \eqref{x2.26} for $(p,s),$ we conclude that the bound states for \eqref{1.2aa} and \eqref{1.2ab} can equivalently be described as in Theorem~\ref{thm:theorem x4.1} by using either the left transmission coefficients or the right transmission coefficients. If the three potential pairs $(q,r),$  $(u,v),$  $(p,s)$ are related to each other as in \eqref{x3.1}--\eqref{x3.4} then from Theorem~\ref{thm:theorem x3.3} it follows that
\begin{equation}\label{x4.4}
\begin{cases}
T^{(q,r)}=E_\infty^{(q,r)}\,T_{\rm r}^{(u,v)}=E_\infty^{(q,r)}\,T_{\rm r}^{(p,s)}=\ds\frac{1}{D_\infty^{(q,r)}}\,T_{\rm l}^{(u,v)}=\ds\frac{1}{D_\infty^{(q,r)}}\,T_{\rm l}^{(p,s)},\\
\noalign{\medskip}
\bar{T}^{(q,r)}=D_\infty^{(q,r)}\,\bar{T}_{\rm r}^{(u,v)}=D_\infty^{(q,r)}\,\bar{T}_{\rm r}^{(p,s)}=\ds\frac{1}{E_\infty^{(q,r)}}\,\bar{T}_{\rm l}^{(u,v)}=\ds\frac{1}{E_\infty^{(q,r)}}\,\bar{T}_{\rm l}^{(p,s)},
\end{cases}
\end{equation}
and hence the sets $\{ {\pm z}_{j},m_j\}_{j=1}^N$ and $\{ {\pm \bar{z}}_{j},\bar{m}_j\}_{j=1}^{\bar{N}}$ refer to the common sets of bound states and the corresponding multiplicities for \eqref{1.1}, \eqref{1.2aa}, \eqref{1.2ab}.
In that case, from \eqref{x4.4} it follows that the residues corresponding to \eqref{1.1}, \eqref{1.2aa}, \eqref{1.2ab} are related to each other as
\begin{equation}\label{x4.5}
\begin{cases}
t_{jk}^{(q,r)}=E_\infty^{(q,r)}\,t_{jk}^{(u,v)}=E_\infty^{(q,r)}\,t_{jk}^{(p,s)},\\
\noalign{\medskip}
\bar{t}_{jk}^{(q,r)}=D_\infty^{(q,r)}\,\bar{t}_{jk}^{(u,v)}=D_\infty^{(q,r)}\,\bar{t}_{jk}^{(p,s)}.
\end{cases}
\end{equation}

In the next theorem, when the potential pairs $(q,r),$ $(u,v)$, $(p,s)$ are related to each other as in \eqref{x3.1}--\eqref{x3.4}, we present the relationships among the corresponding dependency constants.
\begin{theorem}
\label{thm:theorem x4.2}
Assume that the potentials $q_n$ and $r_n$ appearing in \eqref{1.1} are rapidly decaying and satisfy \eqref{1.1a}. Assume further that the potential pairs $(u,v)$ and $(p,s)$ are related to $(q,r)$ as in \eqref{x3.1}--\eqref{x3.4}. Then, the corresponding dependency constants $\gamma_{jk}^{(q,r)},$  $\gamma_{jk}^{(u,v)},$ $\gamma_{jk}^{(p,s)}$ are related to each other for $0 \le k \le m_j-1$ and $1 \le j \le N$ as
\begin{equation}\label{x4.6}
\begin{cases}
D_\infty^{(q,r)}\,\gamma_{jk}^{(q,r)}=\gamma_{jk}^{(p,s)},\\
\noalign{\medskip}
D_\infty^{(q,r)}\,\gamma_{jk}^{(q,r)}=\ds\sum_{l=0}^{k}\ds\binom{k}{l}\,
\ds\frac{d^l\,\sigma(z_j)}{dz^l}\,\gamma_{j(k-l)}^{(u,v)},
\end{cases}
\end{equation}
where we have defined
\begin{equation}\label{x4.6a}
\sigma(z):=\ds\frac{1}{1-\ds\frac{1}{z^2}}.
\end{equation} Similarly, the corresponding dependency constants $\bar{\gamma}_{jk}^{(q,r)},$  $\bar{\gamma}_{jk}^{(u,v)},$ $\bar{\gamma}_{jk}^{(p,s)}$  are related to each other as
\begin{equation}\label{x4.7}
\begin{cases}
\bar{\gamma}_{jk}^{(p,s)}=E_\infty^{(q,r)}\,\bar{\gamma}_{jk}^{(q,r)},\\
\noalign{\medskip}
\bar{\gamma}_{jk}^{(u,v)}=E_\infty^{(q,r)}\,\ds\sum_{l=0}^{k}\ds\binom{k}{l}\,
\ds\frac{d^l\,\sigma(\bar{z}_j)}{dz^l}\,\bar{\gamma}_{j(k-l)}^{(q,r)}.
\end{cases}
\end{equation}
\end{theorem}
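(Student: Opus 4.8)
The plan is to reinterpret each dependency relation \eqref{x4.103} and \eqref{x4.107} as a single scalar proportionality holding to the appropriate order at the bound state, and then to track how that proportionality transforms under Theorem~\ref{thm:theorem x3.2}. Concretely, for the pole at $z_j$ introduce the truncated Taylor polynomial $\Gamma_j(z):=\sum_{k=0}^{m_j-1}\tfrac{\gamma_{jk}}{k!}(z-z_j)^k$ and observe, via the Leibniz rule, that \eqref{x4.103} is equivalent to $\phi_n(z)=\Gamma_j(z)\,\psi_n(z)+O((z-z_j)^{m_j})$ as $z\to z_j$, with $\gamma_{jk}=\Gamma_j^{(k)}(z_j)$; likewise, at $\bar z_j$ set $\bar\Gamma_j(z):=\sum_{k=0}^{\bar m_j-1}\tfrac{\bar\gamma_{jk}}{k!}(z-\bar z_j)^k$, so that \eqref{x4.107} reads $\bar\phi_n(z)=\bar\Gamma_j(z)\,\bar\psi_n(z)+O((z-\bar z_j)^{\bar m_j})$. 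The problem then reduces to computing how $\Gamma_j$ and $\bar\Gamma_j$ change under the substitutions relating the three systems.

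I would first dispatch the $(p,s)$ relations. The transformation matrix $M_n$ appearing in both \eqref{x3.5a} and \eqref{x3.7a} is independent of $z$, and using $D_n^{(q,r)}=D_{n-1}^{(q,r)}(1-q_nr_n)$ one finds $\det M_n=1/(E_{n-1}^{(q,r)}D_n^{(q,r)})\neq 0$, so $M_n$ is invertible. Writing $\phi_n^{(q,r)}=M_n\,\phi_n^{(p,s)}$ and $\psi_n^{(q,r)}=D_\infty^{(q,r)}M_n\,\psi_n^{(p,s)}$, inserting these into $\phi_n^{(q,r)}=\Gamma_j^{(q,r)}\,\psi_n^{(q,r)}$, and cancelling $M_n$, I obtain $\phi_n^{(p,s)}=D_\infty^{(q,r)}\,\Gamma_j^{(q,r)}\,\psi_n^{(p,s)}$ to order $m_j$; hence $\Gamma_j^{(p,s)}=D_\infty^{(q,r)}\Gamma_j^{(q,r)}$ and matching Taylor coefficients gives $\gamma_{jk}^{(p,s)}=D_\infty^{(q,r)}\gamma_{jk}^{(q,r)}$, the first line of \eqref{x4.6}. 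The first line of \eqref{x4.7} follows in exactly the same way from \eqref{x3.6a} and \eqref{x3.8a}, with $E_\infty^{(q,r)}$ replacing $D_\infty^{(q,r)}$.

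For the $(u,v)$ relations I would use \eqref{x3.5} and \eqref{x3.7}, whose common matrix $G_n(z)$ has $\det G_n(z)=(1-1/z^2)/(E_{n-1}^{(q,r)}D_{n-1}^{(q,r)})$. Since $0<|z_j|<1$ forces $z_j\neq\pm1$, this determinant is nonzero and analytic near $z_j$, so $G_n(z)^{-1}$ is analytic there and multiplication by it preserves the $O((z-z_j)^{m_j})$ remainder. Substituting $\phi_n^{(q,r)}=\sigma(z)\,G_n(z)\,\phi_n^{(u,v)}$ and $\psi_n^{(q,r)}=D_\infty^{(q,r)}\,G_n(z)\,\psi_n^{(u,v)}$ into the $(q,r)$ proportionality, cancelling $G_n(z)$, and dividing by the nonvanishing analytic factor $\sigma(z)$, I reach $\phi_n^{(u,v)}=\bigl(D_\infty^{(q,r)}\Gamma_j^{(q,r)}(z)/\sigma(z)\bigr)\psi_n^{(u,v)}$ modulo $(z-z_j)^{m_j}$, that is $\sigma(z)\,\Gamma_j^{(u,v)}(z)\equiv D_\infty^{(q,r)}\,\Gamma_j^{(q,r)}(z)$. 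Applying the Leibniz rule to this product identity at $z=z_j$ produces precisely the second line of \eqref{x4.6}. The second line of \eqref{x4.7} comes out the same way from \eqref{x3.6} and \eqref{x3.8}; there the factor $\sigma(z)$ sits on the $\bar\psi^{(q,r)}$ side rather than the $\bar\phi^{(q,r)}$ side, which is exactly why $\sigma$ and $E_\infty^{(q,r)}$ end up multiplying $\bar\Gamma_j^{(q,r)}$ on the right.

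The step I expect to demand the most care is the multiplicity bookkeeping: verifying that \eqref{x4.103} really is equivalent to the truncated scalar proportionality (this rests on the dependency constants being independent of $n$, as established in the discussion following \eqref{x4.103}), and confirming that multiplication by the analytic matrix $G_n(z)^{-1}$ and division by $\sigma(z)$ neither lower the order of the $O((z-z_j)^{m_j})$ error nor introduce spurious contributions. Once the relations between the proportionality functions are in hand, the extraction of \eqref{x4.6} and \eqref{x4.7} is a routine Leibniz computation, and the simple bound-state case $m_j=\bar m_j=1$ is recovered by keeping only the $k=0$ terms.
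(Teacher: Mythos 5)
Your proof is correct and follows essentially the same route as the paper's: substitute the Jost-solution transformations \eqref{x3.5}--\eqref{x3.8a} into the dependency relations \eqref{x4.103} and \eqref{x4.107}, cancel the invertible $z$-independent matrix $\Lambda_n^{(q,r)}$ (respectively the matrix in \eqref{x3.5}, whose determinant is nonzero near the bound states because $z_j,\bar{z}_j\neq\pm1$), and match derivatives at $z_j$ and $\bar{z}_j$, with the scalar factor $\sigma(z)$ generating the binomial sums via the Leibniz rule. Your truncated Taylor-polynomial packaging of the dependency constants is a presentational variant of the paper's direct derivative-by-derivative comparison, not a different method.
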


\begin{proof}
Using \eqref{x3.5a} and \eqref{x3.7a} in the Wronskian determinant on the right-hand side of the first equality in \eqref{ta.5001} we get
\begin{equation}\label{x4.109}
\begin{vmatrix}
\phi_{n}^{(q,r)}&\psi_{n}^{(q,r)}
\end{vmatrix}=D_\infty^{(q,r)}\left(\det\begin{bmatrix}
\Lambda_n^{(q,r)}
\end{bmatrix}\right)\begin{vmatrix}
\phi_{n}^{(p,s)}&\psi_{n}^{(p,s)}
\end{vmatrix},
\end{equation}
where $\Lambda_n^{(q,r)}$ is the coefficient matrix appearing in \eqref{x3.5a}
and \eqref{x3.7a}, i.e.
\begin{equation}\label{x4.110}
\Lambda_n^{(q,r)}:=\begin{bmatrix}
\ds\frac{1}{E_{n-1}^{(q,r)}} &-\ds\frac{q_n}{D_n^{(q,r)}}\\
\noalign{\medskip}
\ds\frac{r_n}{E_{n-1}^{(q,r)}} &\ds\frac{1}{D_{n-1}^{(q,r)}}
\end{bmatrix}.
\end{equation}
 From \eqref{x2.42} and \eqref{x4.110} we see that the determinant of $\Lambda_n^{(q,r)}$ is given by
\begin{equation}\label{x4.111}
\det\begin{bmatrix}
\Lambda_n^{(q,r)}
\end{bmatrix}=\ds\frac{1}{E_{n-1}^{(q,r)}\,D_n^{(q,r)}},
\end{equation}
and hence $\det[
\Lambda_n^{(q,r)}
]\ne 0$ for any integer $n.$ Using \eqref{x4.109}, with the help of the first equality in \eqref{ta.5003}, we obtain
\begin{equation}\label{x4.112}
a_n^{(q,r)}(z)=D_\infty^{(q,r)}\left(\det\begin{bmatrix}
\Lambda_n^{(q,r)}
\end{bmatrix}\right)a_n^{(p,s)}(z), \qquad n\in\mathbb{Z}.
\end{equation}
From \eqref{x4.112} we conclude that \eqref{ta.5006} for the potential pair $(q,r)$ occurs if and only if \eqref{ta.5018} for the potential pair $(p,s)$ occurs. Comparing \eqref{x4.103} for $(q,r)$ with \eqref{x4.103} for $(p,s)$, with the help of \eqref{x3.5a} and \eqref{x3.7a} and the fact the matrix $\Lambda_n^{(q,r)}$ defined in \eqref{x4.110} is invertible, we establish the equality in the first line of \eqref{x4.6}. In a similar way, with the help of \eqref{ta.5007},  \eqref{ta.5019}, and \eqref{x4.107} written for the pairs $(q,r)$ and $(p,s)$ and also using \eqref{x3.6a} and \eqref{x3.8a}  we obtain the equality in the first line of \eqref{x4.7}. The equality in the second line of \eqref{x4.6} is established in a similar manner by using \eqref{ta.5006}, \eqref{ta.5010},  and \eqref{x4.103} written for the pairs $(q,r)$ and $(u,v)$ and also using \eqref{x3.5} and \eqref{x3.7}. The equality in the second line of \eqref{x4.7} is established in a similar manner by using \eqref{ta.5007}, \eqref{ta.5011}, and \eqref{x4.107} written for the pairs $(q,r)$ and $(u,v)$ and also using \eqref{x3.6} and \eqref{x3.8}.
\end{proof}

As expected, for a unique solution to an inverse problem, for each bound state we need to specify a corresponding bound-state norming constant. If a bound state has a multiplicity then we must specify a separate norming constant for each multiplicity. In the case of \eqref{1.1}, \eqref{1.2aa}, and \eqref{1.2ab}, because the bound-state locations occur symmetrically with respect to the origin of the complex $z$-plane, we mention that the bound-state norming constants for those symmetric pairs coincide.

 As a summary, in specifying the bound-state data sets for each of \eqref{1.1}, \eqref{1.2aa}, \eqref{1.2ab}, in addition to providing $\{ {\pm z}_{j},m_j\}_{j=1}^N$ and $\{ {\pm \bar{z}}_{j},\bar{m}_j\}_{j=1}^{\bar{N}}$ we also need to provide the sets of bound-state norming constants $\{\{c_{jk}\}_{k=0}^{m_j-1}\}_{j=1}^{N}$ and $\{\{\bar{c}_{jk}\}_{k=0}^{\bar{m}_j-1}\}_{j=1}^{\bar{N}},$ where the double-indexed quantities $c_{jk}$ and $\bar{c}_{jk}$ denote the norming constants associated with $z_j$ and $\bar{z}_j,$ respectively. Clearly, we must use
  $c_{jk}^{(q,r)}$ and
  $\bar{c}_{jk}^{(q,r)}$ for the norming constants for \eqref{1.1},
  use
  $c_{jk}^{(u,v)}$ and
  $\bar{c}_{jk}^{(u,v)}$
   for the norming constants for \eqref{1.2aa}, and use
  $c_{jk}^{(p,s)}$ and
  $\bar{c}_{jk}^{(p,s)}$
   for the norming constants for \eqref{1.2ab}.
 In the presence of multiplicities it becomes extremely complicated to deal with bound states. That is why in the literature most researchers make the artificial assumption that the bound states are simple.

 The bound states with multiplicities can easily and in an elegant way be handled
 \cite{aktosunSymmetries,aktosun2007exact,AE19, Aktosunkdv,busse2008generalized,Busse,Ercan2018}
 for both continuous and discrete systems by using an appropriate constant matrix triplet $(A,B,C)$ for a KdV-like system or a pair of triplets $(A,B,C)$ and $(\bar{A},\bar{B},\bar{C})$ for an NLS-like system. Let us mention that the potentials appear in the block-diagonal format in the linear system in the KdV-like case and the potentials appear in the block off-diagonal format in the linear system in the NLS-like case. In all these cases, the relevant tool to solve inverse scattering problems is the Marchenko method. In the continuous case the Marchenko method involves a linear integral  equation known as the Marchenko equation or a system of linear integral equations to which we refer as the Marchenko system. In the discrete case the integrals in the Marchenko equations or in the Marchenko systems are simply replaced by the corresponding summations. In either the continuous case or the discrete case, the matrix triplet $(A,B,C)$ in the KdV-like case or the  triplets $(A,B,C)$ and $(\bar{A},\bar{B},\bar{C})$ in the NLS-like case are chosen in such way that the part of the kernel of the Marchenko system related to the bound states is expressed in a simple manner in terms of such matrix triplets.

 In this paper we deal with NLS-like discrete systems, and hence we use the pair of matrix triplets $(A,B,C)$ and $(\bar{A},\bar{B},\bar{C})$. If there is a bound state
 at $z=z_j$ with multiplicity $m_j$
 for $1\le j\le N,$ then
the triplet $(A,B,C)$ can be chosen as
	\begin{equation}
\label{TA.1}
A:=\begin{bmatrix}
A_1&0&\cdots&0\\
0&A_2&\cdots&0\\
\vdots&\vdots&\ddots&\vdots\\
0&0&\cdots&A_N
\end{bmatrix}, \quad B:=\begin{bmatrix}
B_1\\
B_2\\
\vdots\\
B_N
\end{bmatrix},\quad C:=\begin{bmatrix}
C_1&C_2&\cdots&C_N
\end{bmatrix},
\end{equation}
in such a way that $A$ is a block-diagonal matrix, $B$ is a block column vector, and $C$ is a block row vector with
	\begin{equation}\label{A1}
A_j:=\begin{bmatrix}
z_j&1&0&\cdots&0&0\\
0&z_j&1&\cdots&0&0\\
0&0&z_j&\cdots&0&0\\
\vdots&\vdots&\vdots&\ddots&\vdots&\vdots\\
0&0&0&\cdots&z_j&1\\
0&0&0&\dots&0&z_j
\end{bmatrix},\quad
B_j:=\begin{bmatrix}
0\\ \vdots \\
0\\
1
\end{bmatrix},
\end{equation}
\begin{equation}\label{A2}
C_j:=\begin{bmatrix}
c_{j(m_j-1)}&c_{j(m_j-2)}&\cdots&c_{j1}&c_{j0}
\end{bmatrix}.
\end{equation}
	As seen from \eqref{A1}, $A_j$ is an $m_j\times m_j$ matrix in the Jordan canonical form with $z_j$ appearing in the diagonal entries, and $B_j$ is an $m_j\times 1$ matrix with $0$ in the first $(m_j-1)$ entries and $1$ in the $m_j$th entry. As also seen from \eqref{A2} the $1\times m_j$ matrix $C_j$ is constructed from the norming constants $c_{jk}$.
In our paper, the matrix triplet $(A_j,B_j,C_j)$ is chosen to include the contribution
 from both $z=z_j$ and $z=-z_j,$
 and this will be seen from \eqref{R.1} and Theorem~\ref{thm:theorem x4.3ab}(d).

In a similar way, for the bound states at $z=\pm \bar{z}_j$
 with multiplicity $\bar{m}_j$ for $1\le j\le \bar{N},$ the corresponding
triplet $(\bar{A},\bar{B},\bar{C})$ can be chosen as	
\begin{equation}
\label{TA.2}
\bar{A}:=\begin{bmatrix}
\bar{A}_1&0&\cdots&0\\
0&\bar{A}_2&\cdots&0\\
\vdots&\vdots&\ddots&\vdots\\
0&0&\cdots&\bar{A}_{\bar{N}}
\end{bmatrix}, \quad \bar{B}:=\begin{bmatrix}
\bar{B}_1\\
\bar{B}_2\\
\vdots\\
\bar{B}_{\bar{N}}
\end{bmatrix},\quad \bar{C}:=\begin{bmatrix}
\bar{C}_1&\bar{C}_2&\cdots&\bar{C}_{\bar{N}}
\end{bmatrix},
\end{equation}
in such a way that $\bar{A}$ is a block-diagonal matrix, $\bar{B}$ is a block column vector, and $\bar{C}$ is a block row vector with
\begin{equation}\label{A3}
\bar{A}_j:=\begin{bmatrix}
\bar{z}_j&1&0&\cdots&0&0\\
0&\bar{z}_j&1&\cdots&0&0\\
0&0&\bar{z}_j&\cdots&0&0\\
\vdots&\vdots&\vdots&\ddots&\vdots&\vdots\\
0&0&0&\cdots&\bar{z}_j&1\\
0&0&0&\cdots&0&\bar{z}_j
\end{bmatrix}, \quad
\bar{B}_j:=\begin{bmatrix}
0\\ \vdots \\
0\\
1
\end{bmatrix},
\end{equation}
\begin{equation}\label{A4}
\bar{C}_j:=\begin{bmatrix}	\bar{c}_{j(\bar{m}_j-1)}&\bar{c}_{j(\bar{m}_j-2)}&\cdots&\bar{c}_{j1}&\bar{c}_{j0}
\end{bmatrix}.
\end{equation}
As seen from \eqref{A3}, $\bar{A}_j$ is an $\bar{m}_j\times \bar{m}_j$ matrix in the Jordan canonical form with $\bar{z}_j$ appearing in the diagonal entries, and $\bar{B}_j$ is an $m_j\times 1$ matrix with $0$ in the first $\bar{m}_j-1$ entries and $1$ in the $\bar{m}_j$th entry. As also seen from \eqref{A4} the $1\times \bar{m}_j$ matrix $\bar{C}_j$ is constructed from the norming constants $\bar{c}_{jk}.$
In our paper, the matrix triplet $(\bar{A}_j,\bar{B}_j,\bar{C}_j)$ is chosen to
include the contribution
 from both $z=\bar{z}_j$ and $z=-\bar{z}_j,$ and this will be seen from \eqref{R.2} and
Theorem~\ref{thm:theorem x4.3ab}(d).

The Marchenko system associated with either of \eqref{1.2aa} and \eqref{1.2ab}
is given by
\begin{equation}\label{Tx.1}
\begin{split}
\phantom{x}&
\begin{bmatrix}
\bar{K}_{nm}&K_{nm}
\end{bmatrix}+\begin{bmatrix}
0&\bar{\Omega}_{n+m}\\
\noalign{\medskip}
\Omega_{n+m}&0
\end{bmatrix}\\
&\phantom{xxxxxx}+\sum_{l=n+1}^{\infty}\begin{bmatrix}
\bar{K}_{nl}&K_{nl}
\end{bmatrix}\begin{bmatrix}
0&\bar{\Omega}_{l+m}\\
\noalign{\medskip}
\Omega_{l+m}&0
\end{bmatrix}=\begin{bmatrix}
0&0\\
\noalign{\medskip}
0&0
\end{bmatrix},\qquad m>n,
\end{split}
\end{equation}
where we have defined
\begin{equation}\label{Tx.2}
K_{nm}:=\ds\frac{1}{2\pi i}\ds\oint dz\,\psi_{n}\,z^{-m-1},\quad \bar{K}_{nm}:=\ds\frac{1}{2\pi i}\ds\oint dz\,\bar{\psi}_{n}\,z^{m-1},
\end{equation}
\begin{equation}\label{Tx.3}
\begin{cases}
\Omega_{k}:=\hat{R}_{k}+CA^{k-1}B,\quad \bar{\Omega}_{k}:=\hat{\bar{R}}_{k}+\bar{C}(\bar{A})^{-k-1}\bar{B},\qquad
k \text{ \rm{even}},
\\
\noalign{\medskip}
\Omega_{k}:=0,\quad \bar{\Omega}_{k}:=0,\qquad
k \text{ \rm{odd}},
\end{cases}
\end{equation}
with
\begin{equation}\label{Tx.4}
\hat{R}_{k}:=\ds\frac{1}{2\pi i}\ds\oint dz\,R\,z^{k-1},\quad \hat{\bar{R}}_{k}:=\ds\frac{1}{2\pi i}\ds\oint dz\,\bar{R}\,z^{-k-1}.
\end{equation}
We remark that $\psi_n$ and $\bar{\psi}_n$ appearing in \eqref{Tx.2} are the Jost solutions satisfying \eqref{x2.3} and \eqref{x2.5}, respectively, and that the integral in \eqref{Tx.2} denoted by $\oint$ is the contour integral along the unit circle $\mathbb{T}$ in the positive direction. In fact, for the potential pair $(u,v)$ the quantities $K_{nm}$ and $\bar{K}_{nm}$ are the column vectors appearing in \eqref{x2.11} and \eqref{x2.13}, respectively. The scalar quantities $R$ and $\bar{R}$ appearing in \eqref{Tx.4} are the right reflection coefficients, and the matrix triplets $(A,B,C)$ and $(\bar{A},\bar{B},\bar{C})$ appearing in \eqref{Tx.3} are those described in \eqref{TA.1} and \eqref{TA.2}, respectively.

Let us also remark that
$K_{nm}=0$ and $\bar{K}_{nm}=0$ when $n+m$ is odd, and this is already
stated in Theorem~\ref{thm:theorem x2.1},
Corollary~\ref{thm:theorem x2.1a},
and
Theorem~\ref{thm:theorem x3.4a} for the potential pairs
$(u,v),$ $(p,s),$ and $(q,r),$ respectively.
Similarly, we already know that the scattering coefficients
are even in $z$ for each of these three potential pairs.
Hence, from \eqref{Tx.4} we see that
$\hat{R}_{k}=0$ and
$\hat{\bar{R}}_{k}=0$ when $k$ is odd. Thus,
the second line of \eqref{Tx.3} is consistent with
\eqref{Tx.1} and \eqref{Tx.4}.

The derivation of \eqref{Tx.1} is obtained as follows. We can express the Jost solutions $\phi_n$ and $\bar{\phi}_n$ satisfying \eqref{x2.4} and \eqref{x2.6}, respectively, as linear combinations of $\psi_n$ and $\bar{\psi}_n$ as
\begin{equation}\label{Tx.5}
\begin{cases}
\phi_n\,T_{\rm r}=\bar{ \psi}_n+\psi_n\,R,\\
\noalign{\medskip}
\bar{\phi}_n\,\bar{T}_{\rm r}=\psi_n+\bar{\psi}_n\,\bar{R},
\end{cases}
\end{equation}
where 	$T_{\rm r}$ and $\bar{T}_{\rm r}$ are the right transmission coefficients appearing in \eqref{x2.8} and \eqref{x2.9}, respectively. We use the Fourier transform on \eqref{Tx.5}, and for $m>n$ we get
\begin{equation}\label{Tx.6}
\ds\frac{1}{2\pi i}\ds\oint dz\,\phi_{n}\,T_{\rm r}\,z^{m-1}=\ds\frac{1}{2\pi i}\ds\oint dz\,\bar{\psi}_{n}\,z^{m-1}+\ds\frac{1}{2\pi i}\ds\oint dz\,\psi_{n}\,R\,z^{m-1},
\end{equation}
\begin{equation}\label{Tx.7}
\ds\frac{1}{2\pi i}\ds\oint dz\,\bar{\phi}_{n}\,\bar{T}_{\rm r}\,z^{-m-1}=\ds\frac{1}{2\pi i}\ds\oint dz\,\psi_{n}\,z^{-m-1}+\ds\frac{1}{2\pi i}\ds\oint dz\,\bar{\psi}_{n}\,\bar{R}\,z^{-m-1},
\end{equation}
yielding the two columns of \eqref{Tx.1}.

Using the notation of \eqref{ta001}, from \eqref{Tx.1} we get the two uncoupled
scalar equations for $m>n$ as
\begin{equation}\label{ta002}
\begin{cases}
\begin{bmatrix} K_{nm}\end{bmatrix}_1
+\bar{\Omega}_{n+m}-
\ds\sum_{l=n+1}^{\infty}
\ds\sum_{j=n+1}^{\infty}\begin{bmatrix} K_{nj}\end{bmatrix}_1\,
\Omega_{j+l}\,\bar{\Omega}_{l+m}=0,\\
\noalign{\medskip}
\begin{bmatrix}\bar{K}_{nm}\end{bmatrix}_2
+\Omega_{n+m}-
\ds\sum_{l=n+1}^{\infty}
\ds\sum_{j=n+1}^{\infty}\begin{bmatrix}
\bar{K}_{nj}\end{bmatrix}_2\,\bar{\Omega}_{j+l}
\,
\Omega_{l+m}=0,
\end{cases}
\end{equation}
and once the system \eqref{ta002} is solved we also have
\begin{equation}\label{ta1001}
\begin{cases}
\begin{bmatrix}\bar{K}_{nm}\end{bmatrix}_1
=-\ds\sum_{l=n+1}^{\infty}
\begin{bmatrix}
K_{nl}\end{bmatrix}_1\,
\Omega_{l+m}
,\\
\noalign{\medskip}
\begin{bmatrix} K_{nm}\end{bmatrix}_2
=-\ds\sum_{l=n+1}^{\infty}
\begin{bmatrix} \bar{K}_{nl}\end{bmatrix}_2\,
\bar{\Omega}_{l+m}
.
\end{cases}
\end{equation}
Let us recall that
$K_{nm}=0$ and $\bar{K}_{nm}=0$ when $n+m$ is odd, and hence
the lower indices for the summations in
\eqref{ta002} and \eqref{ta1001} actually start with $n+2$ instead of $n+1.$
Nevertheless, we use $n+1$ there instead of
$n+2$ so that \eqref{ta002} and \eqref{ta1001} appear in the standard form as
a generic Marchenko system in the discrete case.
When we use \eqref{ta002} corresponding to \eqref{1.2aa}, we recover the potentials $u_n$ and $v_n$ as
\begin{equation}\label{Tx.8}
u_n=\begin{bmatrix}
K_{n(n+2)}^{(u,v)}
\end{bmatrix}_1,\quad v_n=\begin{bmatrix}
\bar{K}_{n(n+2)}^{(u,v)}
\end{bmatrix}_2,
\end{equation}
which are compatible with \eqref{x2.12} and \eqref{x2.14}, respectively. In the same manner, if we use \eqref{ta002} corresponding to \eqref{1.2ab}, we recover the potentials $p_n$ and $s_n$ as
\begin{equation}\label{Tx.9}
p_n=\begin{bmatrix}
K_{n(n+2)}^{(p,s)}
\end{bmatrix}_1,\quad s_n=\begin{bmatrix}
\bar{K}_{n(n+2)}^{(p,s)}
\end{bmatrix}_2,
\end{equation}
which are compatible with \eqref{x2.12aa} and \eqref{x2.14aa}, respectively.

Next, we describe the construction of the norming constants $c_{jk}$ and
$\bar{c}_{jk}$
in terms of the residues $t_{jk}$ and $\bar{t}_{jk}$
and the dependency constants $\gamma_{jk}$ and $\bar{\gamma}_{jk}.$

 \begin{theorem}
 	\label{thm:theorem x4.3}
 	Assume that the potentials $u_n$ and $v_n$ appearing in \eqref{1.2aa} are rapidly decaying and $1-u_n v_n\ne 0$ for $n\in\mathbb{Z}$. Let us use $\{ {\pm z}_{j},m_j\}_{j=1}^N$ and $\{ {\pm \bar{z}}_{j},\bar{m}_j\}_{j=1}^{\bar{N}}$ to denote the corresponding sets for the bound-state locations and their multiplicities. Then, the norming constants $c_{jk}^{(u,v)}$ appearing in \eqref{A2} are related to the residues $t_{jk}^{(u,v)}$ appearing in \eqref{x4.8} and
 the dependency constants $\gamma_{jk}^{(u,v)}$ appearing in \eqref{x4.103}  as
\begin{equation}\label{A5aa}
 c_{jk}^{(u,v)}=-2\sum_{l=0}^{m_j-1-k}t_{j(k+1+l)}^{(u,v)}
 \,\frac{\gamma_{jl}^{(u,v)}}{l!},\qquad 1\le j\le N,\quad 0\le k\le m_j-1.
 \end{equation}
 	Similarly, the norming constants $\bar{c}_{jk}^{(u,v)}$ appearing in \eqref{A4} are related to the residues $\bar{t}_{jk}^{(u,v)}$ appearing in \eqref{x4.9} and the dependency constants $\bar{\gamma}_{jk}^{(u,v)}$ appearing in \eqref{x4.107} as
 		\begin{equation}\label{A6aa}
 \bar{c}_{jk}^{(u,v)}=2\sum_{l=0}^{\bar{m}_j-1-k}\bar{t}_{j(k+1+l)}^{(u,v)}
 \,\frac{\bar{\gamma}_{jl}^{(u,v)}}{l!},\qquad 1\le j\le \bar{N},\quad 0\le k\le \bar{m}_j-1.
 	\end{equation}
\end{theorem}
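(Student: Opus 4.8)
The plan is to identify the bound-state part of the Marchenko kernel $\Omega_k$ (and $\bar\Omega_k$) appearing in \eqref{Tx.3} with the residue contribution that arises when the Fourier transform used to derive \eqref{Tx.1} is evaluated by the residue theorem, and then to read off $c_{jk}^{(u,v)}$ from the Jordan-block structure \eqref{A1}--\eqref{A2}. First I would start from the first scattering relation in \eqref{Tx.5} for the pair $(u,v)$, namely $\phi_n^{(u,v)}\,T_{\rm r}^{(u,v)}=\bar{\psi}_n^{(u,v)}+\psi_n^{(u,v)}\,R^{(u,v)}$, and apply the contour integral $\frac{1}{2\pi i}\oint dz\,(\cdot)\,z^{m-1}$ exactly as in \eqref{Tx.6}. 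For $m>n$ the integrand $\phi_n^{(u,v)}\,T_{\rm r}^{(u,v)}\,z^{m-1}$ is analytic inside $\mathbb{T}$ except at the poles of $T_{\rm r}^{(u,v)}$, so by the residue theorem the left-hand side equals the sum of residues at the bound-state locations $\pm z_j$, while the right-hand side produces $\bar{K}_{nm}^{(u,v)}$ together with the convolution $\sum_l K_{nl}^{(u,v)}\,\hat{R}_{l+m}$ coming from \eqref{x2.11} and \eqref{Tx.4}.

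Next I would compute the residue at $z_j$. The key device is the generating-function form of the dependency relation \eqref{x4.103}: writing $\Gamma_j(z):=\sum_{i\ge 0}\frac{\gamma_{ji}^{(u,v)}}{i!}\,(z-z_j)^i$, one checks that \eqref{x4.103} is precisely the statement $\phi_n^{(u,v)}(z)=\Gamma_j(z)\,\psi_n^{(u,v)}(z)+O((z-z_j)^{m_j})$ near $z_j$, and this is exactly where the factors $\gamma_{jl}^{(u,v)}/l!$ in \eqref{A5aa} originate. Since the $O((z-z_j)^{m_j})$ remainder multiplied by the order-$m_j$ pole of $T_{\rm r}^{(u,v)}$ is analytic, the residue of $\phi_n^{(u,v)}\,T_{\rm r}^{(u,v)}\,z^{m-1}$ coincides with that of $\Gamma_j(z)\,\psi_n^{(u,v)}(z)\,T_{\rm r}^{(u,v)}(z)\,z^{m-1}$. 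Using the Laurent data \eqref{x4.8}, the principal part of $\Gamma_j\,T_{\rm r}^{(u,v)}$ at $z_j$ has coefficients $\tau_{jq}:=\sum_i \frac{\gamma_{ji}^{(u,v)}}{i!}\,t_{j(i+q)}^{(u,v)}$, and inserting the expansion \eqref{x2.11} of $\psi_n^{(u,v)}$ lets me reorganize the residue as $\sum_l K_{nl}^{(u,v)}$ times a scalar of the form $\sum_{p=0}^{m_j-1}\tau_{j(p+1)}\binom{l+m-1}{p}z_j^{\,l+m-1-p}$.

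The final step is to match this scalar, evaluated at index $k=l+m$, against $C_jA_j^{\,k-1}B_j=\sum_{p=0}^{m_j-1}c_{jp}^{(u,v)}\binom{k-1}{p}z_j^{\,k-1-p}$ obtained from \eqref{A1}--\eqref{A2}; the two expressions have identical binomial-power structure, so their coefficients must agree. Carrying the residue term over to the side of $\Omega_k=\hat{R}_k+CA^{k-1}B$ in \eqref{Tx.3} accounts for the minus sign, while combining the contribution of the symmetric pole $-z_j$ with that of $z_j$ supplies the factor $2$: by the evenness of $T_{\rm r}^{(u,v)}$ established in Theorem~\ref{thm:theorem x2.1}(j) together with the corresponding symmetry of the dependency constants, the two residue contributions coincide on the even indices on which $\Omega_k$ is supported. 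This yields $c_{jk}^{(u,v)}=-2\,\tau_{j(k+1)}$, which is precisely \eqref{A5aa}. The barred identity \eqref{A6aa} follows from the same argument applied to the second relation in \eqref{Tx.5} and the integral \eqref{Tx.7}, where the poles $\pm\bar{z}_j$ of $\bar{T}_{\rm r}^{(u,v)}$ lie outside $\mathbb{T}$, so that evaluating the unit-circle integral in terms of exterior residues introduces an extra sign and produces the $+$ in \eqref{A6aa}.

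I expect the main obstacle to be the precise bookkeeping in the matching step, in particular verifying that the convolution of the principal part of $\Gamma_j$ with that of $T_{\rm r}^{(u,v)}$ reproduces the Jordan-block pattern $C_jA_j^{\,k-1}B_j$ term by term, and pinning down the factor $2$ and the sign through the $\pm z_j$ symmetry rather than merely asserting them.
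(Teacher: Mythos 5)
Your proposal is correct and follows essentially the same route as the paper's proof: both evaluate the bound-state contribution to the Fourier-transformed scattering relation \eqref{Tx.5}--\eqref{Tx.7} by residues at $\pm z_j$ (respectively $\pm\bar{z}_j$) using the Laurent data \eqref{x4.8}--\eqref{x4.9}, substitute the dependency relation \eqref{x4.103} and the expansion \eqref{x2.11}, and match the result against the Jordan-triplet form $C_jA_j^{k-1}B_j$ in \eqref{R.1}--\eqref{R.2}. Your generating-function packaging of \eqref{x4.103}, the explicit binomial-coefficient matching, and the exterior-residue explanation of the sign in \eqref{A6aa} merely make explicit the bookkeeping the paper calls straightforward and defers in part to Theorem~\ref{thm:theorem x4.3ab}.
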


\begin{proof}
For notational simplicity,
we outline the proof without using the superscript $(u,v)$ on the relevant quantities. As seen from \eqref{Tx.3} the contribution to the Marchenko kernels $\Omega_k$ and $\bar{\Omega}_k$ from the bound states are given by $CA^{k-1}B$ and $\bar{C}(\bar{A})^{-k-1}\bar{B}$, respectively. Thus, the contribution to the Marchenko kernel $\Omega_k$ from the bound state at $z=z_j$ is $C_j A_j^{k-1}B_j/2$
and the contribution to the Marchenko kernel $\bar{\Omega}_k$ from the bound state at $z=\bar{z}_j$ is $\bar{C_j}(\bar{A_j})^{-k-1}\bar{B_j}/2.$ This indicates that the contribution from the pole at $z=z_j$ of $T_{\rm r}$ to the left-hand side of \eqref{Tx.6} is evaluated as
\begin{equation}\label{R.1}
\ds\frac{1}{2\pi i}\ds\oint dz\,\phi_{n}\,T_{\rm r}\,z^{m-1}=-\ds\frac{1}{2}\,\sum_{l=n}^{\infty}K_{nl}\,C_{j}\,A_{j}^{l+m-1}\,B_{j}.
\end{equation}
Similarly, the contribution to the left-hand side of \eqref{Tx.7} from the pole at $z=\bar{z}_j$ of $\bar{T}_{\rm r}$ is evaluated as
\begin{equation}\label{R.2}
\ds\frac{1}{2\pi i}\ds\oint dz\,\bar{\phi}_{n}\,\bar{T}_{\rm r}\,z^{-m-1}=-\ds\frac{1}{2}\,\sum_{l=n}^{\infty}\bar{K}_{nl}\,\bar{C}_{j}\,(\bar{A}_{j})^{-l-m-1}\,\bar{B}_{j}.
\end{equation}
Using \eqref{x4.8} on the left-hand side of \eqref{R.1} we evaluate the aforementioned contribution as
\begin{equation}\label{R.3}
\ds\frac{1}{2\pi i}\ds\oint dz\,\phi_{n}\,T_{\rm r}\,z^{m-1}=\ds\sum_{k=0}^{m_j-1}\ds\frac{t_{jk}}{k!}\,\ds\frac{d^k(\phi_{n}\,z^{m-1})}{dz^k}\bigg |_{z=z_j}.
\end{equation}
Using \eqref{x4.103} on the right-hand side of \eqref{R.3} we write that right-hand side in terms of the residues $t_{jk}$, the dependency constants $\gamma_{jk}$, and $d^{k}\psi_n(z_j)/dz^{k}.$ Finally, we write the expansion for $d^{k}\psi_n(z_j)/dz^{k}$ in terms of the double-indexed quantities $K_{nl}$ appearing in \eqref{x2.11}. By equating the result to the right-hand side of \eqref{R.1}, we establish \eqref{A5aa}. We establish \eqref{A6aa} in a similar manner by evaluating the left-hand side of \eqref{R.2} with the help of \eqref{x4.9} and then by using \eqref{x4.107} and also by using \eqref{x2.13}.
\end{proof}

As the next corollary indicates, the result of Theorem~\ref{thm:theorem x4.3} also holds for the potential pair $(p,s)$ appearing in \eqref{1.2ab}. A proof is omitted because it is similar to the proof of Theorem~\ref{thm:theorem x4.3}.

\begin{corollary}
	\label{thm:theorem x4.3a}
		Assume that the potentials $p_n$ and $s_n$ appearing in \eqref{1.2ab} are rapidly decaying and $1-p_n s_n\ne 0$ for $n\in\mathbb{Z}$. Let us use $\{ {\pm z}_{j},m_j\}_{j=1}^N$ and $\{ {\pm \bar{z}}_{j},\bar{m}_j\}_{j=1}^{\bar{N}}$ to
 denote the corresponding sets for the bound-state locations and their multiplicities. Then, the norming constants $c_{jk}^{(p,s)}$ and $\bar{c}_{jk}^{(p,s)}$  are related to the corresponding residues $t_{jk}^{(p,s)}$ and $\bar{t}_{jk}^{(p,s)}$  and the dependency constants $\gamma_{jk}^{(p,s)}$ and $\bar{\gamma}_{jk}^{(p,s)}$  as
\begin{equation}\label{T.1}
\begin{cases}
 c_{jk}^{(p,s)}=-2\ds\sum_{l=0}^{m_j-1-k}t_{j(k+1+l)}^{(p,s)}
\,\frac{\gamma_{jl}^{(p,s)}}{l!},\qquad 1\le j\le N,\quad 0\le k\le m_j-1,\\
\noalign{\medskip}
 \bar{c}_{jk}^{(p,s)}=2\ds\sum_{l=0}^{\bar{m}_j-1-k}\bar{t}_{j(k+1+l)}^{(p,s)}
\,\frac{\bar{\gamma}_{jl}^{(p,s)}}{l!},\qquad 1\le j\le \bar{N},\quad 0\le k\le \bar{m}_j-1.
\end{cases}
\end{equation}
\end{corollary}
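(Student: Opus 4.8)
The plan is to transcribe the proof of Theorem~\ref{thm:theorem x4.3} almost verbatim, replacing every $(u,v)$ quantity by its $(p,s)$ counterpart, since all the ingredients used there have exact analogs for \eqref{1.2ab} that are already recorded in the excerpt. For notational economy I would again suppress the superscript $(p,s)$. The starting point is the Marchenko system \eqref{Tx.1}, which applies to \eqref{1.2ab} just as it does to \eqref{1.2aa}, together with the recovery formulas \eqref{Tx.9} expressing $p_n$ and $s_n$ through the Marchenko solution. By \eqref{Tx.3} the bound-state contributions to the kernels $\Omega_k$ and $\bar\Omega_k$ are $CA^{k-1}B$ and $\bar C(\bar A)^{-k-1}\bar B$, and per bound state these split as $C_jA_j^{k-1}B_j/2$ and $\bar C_j(\bar A_j)^{-k-1}\bar B_j/2$, the factor $1/2$ encoding the fact that the triplet $(A_j,B_j,C_j)$ bundles the contributions from both $z_j$ and $-z_j$.

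First I would evaluate the contribution of the pole of $T_{\rm r}^{(p,s)}$ at $z=z_j$ to the Fourier transform appearing on the left-hand side of the analog of \eqref{Tx.6}; this yields the $(p,s)$ version of \eqref{R.1} on one side and, via the Laurent expansion \eqref{x4.8}, the $(p,s)$ version of \eqref{R.3} on the other. Next I would feed the dependency relation \eqref{x4.103}, written for $(p,s)$, into the right-hand side of that expression, rewriting everything in terms of the residues $t_{jk}^{(p,s)}$, the dependency constants $\gamma_{jk}^{(p,s)}$, and the derivatives $d^k\psi_n^{(p,s)}(z_j)/dz^k$. Expanding the latter by means of the series \eqref{x2.11aa} from Corollary~\ref{thm:theorem x2.1a} and equating the result with \eqref{R.1} gives, after matching the coefficients of the $K_{nl}^{(p,s)}$, the first identity in \eqref{T.1}. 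The barred identity is obtained in the same fashion, starting from the pole of $\bar T_{\rm r}^{(p,s)}$ at $z=\bar z_j$ and using \eqref{x4.9}, the dependency relation \eqref{x4.107} for $(p,s)$, and the expansion \eqref{x2.13aa}.

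Since the structural identities in play are identical to those for $(u,v)$, no new analytic difficulty arises, and the only real work is bookkeeping. The one step that demands care is the combinatorial matching in the last paragraph: after differentiating $\phi_n^{(p,s)}\,z^{m-1}$ up to order $m_j-1$ and substituting the two nested expansions (the dependency expansion \eqref{x4.103} and the Jost-solution expansion \eqref{x2.11aa}), one must collect terms so that the coefficient of each power of $A_j$ on the right-hand side of \eqref{R.1} is reproduced. It is precisely this reorganization, together with the binomial and $1/l!$ factors coming from the derivatives, that produces the convolution-type sum $\sum_{l=0}^{m_j-1-k}t_{j(k+1+l)}^{(p,s)}\,\gamma_{jl}^{(p,s)}/l!$ and fixes the overall signs and the factor $2$ displayed in \eqref{T.1}. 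The main point to verify is that this sum truncates at $l=m_j-1-k$, so that only residues with index at most $m_j$ appear; this follows from the upper limit $m_j-1$ in \eqref{x4.103} together with the order $m_j$ of the pole in \eqref{x4.8}.
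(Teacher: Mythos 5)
Your proposal is correct and is exactly the argument the paper intends: the paper omits the proof of this corollary precisely because it is the proof of Theorem~\ref{thm:theorem x4.3} repeated with every $(u,v)$ quantity replaced by its $(p,s)$ counterpart, which is what you carry out — the $(p,s)$ versions of \eqref{R.1} and \eqref{R.3} via \eqref{Tx.6}, \eqref{x4.8}, the dependency relation \eqref{x4.103}, and the expansion \eqref{x2.11aa}, with the barred case handled through \eqref{x4.9}, \eqref{x4.107}, and \eqref{x2.13aa}. No gap; your bookkeeping remarks about the convolution sum and its truncation at $l=m_j-1-k$ are consistent with the structure of the paper's argument.
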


We note that the norming constants are related to the residues and the dependency constants in the same manner both in Theorem~\ref{thm:theorem x4.3} and Corollary~\ref{thm:theorem x4.3a}. Hence,
without loss of any generality,
for the potential pair $(q,r)$  we can define the norming constants $c_{jk}^{(q,r)}$ and $\bar{c}_{jk}^{(q,r)}$, the respective row vectors $C_j^{(q,r)}$ and $\bar{C}_j^{(q,r)}$ appearing in \eqref{A2} and \eqref{A4}, and the respective row vectors $C^{(q,r)}$ and $\bar{C}^{(q,r)}$  appearing in \eqref{TA.1} and \eqref{TA.2} in Corollary~\ref{thm:theorem x4.3a}. The result is stated next.

\begin{definition}
	\label{thm:theorem x4.3b}
Assume that the potentials $q_n$ and $r_n$ appearing in \eqref{1.1} are rapidly decaying and satisfy \eqref{1.1a}. Then, the corresponding norming constants $c_{jk}^{(q,r)}$ and $\bar{c}_{jk}^{(q,r)}$  are related to the residues $t_{jk}^{(q,r)}$ and $\bar{t}_{jk}^{(q,r)}$  and the dependency constants $\gamma_{jk}^{(q,r)}$ and $\bar{\gamma}_{jk}^{(q,r)}$  as
\begin{equation}\label{T.2}
\begin{cases}
c_{jk}^{(q,r)}:=-2\ds\sum_{l=0}^{m_j-1-k}t_{j(k+1+l)}^{(q,r)}
\,\frac{\gamma_{jl}^{(q,r)}}{l!},\qquad 1\le j\le N,\quad 0\le k\le m_j-1,\\
\noalign{\medskip}
\bar{c}_{jk}^{(q,r)}:=2\ds\sum_{l=0}^{\bar{m}_j-1-k}\bar{t}_{j(k+1+l)}^{(q,r)}
\,\frac{\bar{\gamma}_{jl}^{(q,r)}}{l!},\qquad 1\le j\le \bar{N},\quad 0\le k\le \bar{m}_j-1.
\end{cases}
\end{equation}
\end{definition}

If the potential pairs $(q,r),$ $(u,v),$ and $(p,s)$ are related
as in \eqref{x3.1}--\eqref{x3.4}, then the corresponding residues are related as in \eqref{x4.5} and the corresponding dependency constants are related as in \eqref{x4.6} and \eqref{x4.7}.
In the next theorem, when \eqref{x3.1}--\eqref{x3.4} hold, we show how the corresponding bound-state norming constants are related to each other.

\begin{theorem}
	\label{thm:theorem x4.4}
	Assume that the potentials $q_n$ and $r_n$ appearing in \eqref{1.1} are rapidly decaying and satisfy \eqref{1.1a}. Assume further that the potential pairs $(u,v)$ and $(p,s)$ are related to $(q,r)$ as in \eqref{x3.1}--\eqref{x3.4}. Then, the corresponding bound-state norming constants $c_{jk}^{(u,v)}$ and $\bar{c}_{jk}^{(u,v)}$ are related to $c_{jk}^{(p,s)}$ and $\bar{c}_{jk}^{(p,s)}$ as
	\begin{equation}\label{R.4}
	\begin{cases}
	C_j^{(u,v)}=C_j^{(p,s)}\left(I-A_j^{-2}\right), \qquad 1 \le j \le N,\\
	\noalign{\medskip}
	\bar{C}_j^{(p,s)}=\bar{C}_j^{(u,v)}\left[I-(\bar{A}_j)^{-2}\right], \qquad 1 \le j \le \bar{N},
	\end{cases}
	\end{equation}
where $(A_j, B_j, C_j)$ and $(\bar{A}_j, \bar{B}_j, \bar{C}_j)$ are the matrix triplets appearing in \eqref{A1}, \eqref{A2}, \eqref{A3}, and \eqref{A4}. Consequently, we have
	\begin{equation}\label{R.5}
\begin{cases}
C^{(u,v)}=C^{(p,s)}\left(I-A^{-2}\right),\\
\noalign{\medskip}
\bar{C}^{(p,s)}=\bar{C}^{(u,v)}\left[I-(\bar{A})^{-2}\right],
\end{cases}
\end{equation}
where $(A, B, C)$ and $(\bar{A}, \bar{B}, \bar{C})$ are the matrix triplets appearing in \eqref{TA.1} and \eqref{TA.2}, respectively. Similarly, the norming constants $c_{jk}^{(q,r)}$ and $\bar{c}_{jk}^{(q,r)}$ are related to the norming constants $c_{jk}^{(p,s)}$ and $\bar{c}_{jk}^{(p,s)}$ as
\begin{equation}\label{T.3}
\begin{cases}
C_j^{(q,r)}=\ds\frac{E_\infty^{(q,r)}}{D_\infty^{(q,r)}}\,C_j^{(p,s)}, \qquad 1 \le j \le N,\\
\noalign{\medskip}
\bar{C}_j^{(q,r)}=\ds\frac{D_\infty^{(q,r)}}{E_\infty^{(q,r)}}\,\bar{C}_j^{(p,s)}, \qquad 1 \le j \le \bar{N},
\end{cases}
\end{equation}
and hence we also have
\begin{equation}\label{T.4}
\begin{cases}
C^{(p,s)}=\ds\frac{D_\infty^{(q,r)}}{E_\infty^{(q,r)}}
\,C^{(q,r)},\\
\noalign{\medskip}
\bar{C}^{(p,s)}=\ds\frac{E_\infty^{(q,r)}}{D_\infty^{(q,r)}}\,\bar{C}^{(q,r)},
\end{cases}
\end{equation}
where $D_\infty^{(q,r)}$ and $E_\infty^{(q,r)}$ are the constants
appearing in \eqref{x2.42} and \eqref{x2.43}, respectively.
Consequently, we get
\begin{equation}\label{T.4aaa}
\begin{cases}
C^{(u,v)}=\ds\frac{D_\infty^{(q,r)}}{E_\infty^{(q,r)}}
\,C^{(q,r)}\,\left(I-A^{-2}\right),\\
\noalign{\medskip}
\bar{C}^{(u,v)}=\ds\frac{E_\infty^{(q,r)}}{D_\infty^{(q,r)}}
\,\bar{C}^{(q,r)}\left[I-(\bar{A})^{-2}\right]^{-1}.
\end{cases}
\end{equation}
\end{theorem}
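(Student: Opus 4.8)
The plan is to reduce every assertion of the theorem to a single computation about norming-constant generating functions and then read off the matrix relations. For a fixed index $j$ I would introduce the rational function $g_j(z):=C_j\,(zI-A_j)^{-1}B_j$. Since $A_j$ is a single Jordan block at $z_j$ and $B_j$ is the last standard basis vector, a direct evaluation gives $g_j(z)=\ds\sum_{k=0}^{m_j-1}c_{jk}\,(z-z_j)^{-(k+1)}$, so $g_j$ is precisely the principal part at $z_j$ whose Laurent coefficients are the norming constants. The first key step is to recognise, from \eqref{A5aa} together with the Laurent expansion \eqref{x4.8} of $T_{\rm r}$, that $c_{jk}$ equals $-2$ times the coefficient of $(z-z_j)^{-(k+1)}$ in the product $T_{\rm r}(z)\,\Gamma_j(z)$, where $\Gamma_j(z):=\ds\sum_{l}\frac{\gamma_{jl}}{l!}(z-z_j)^l$ is the Taylor polynomial built from the dependency constants; this is just the Cauchy-product rule. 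Hence $g_j=-2\,\mathcal P_{z_j}\big[T_{\rm r}\,\Gamma_j\big]$, where $\mathcal P_{z_j}$ denotes the principal part at $z_j$, and the barred analogue $\bar g_j=2\,\mathcal P_{\bar z_j}\big[\bar T_{\rm r}\,\bar\Gamma_j\big]$ follows from \eqref{A6aa} and \eqref{x4.9}.

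Next I would translate the dependency-constant relations into generating-function identities. The binomial convolutions in the second lines of \eqref{x4.6} and \eqref{x4.7}, combined with their first lines, are exactly Leibniz (Cauchy) products, so they collapse to $\Gamma_j^{(p,s)}=\sigma\,\Gamma_j^{(u,v)}$ and $\bar\Gamma_j^{(u,v)}=\sigma\,\bar\Gamma_j^{(p,s)}$ modulo $O((z-z_j)^{m_j})$, with $\sigma$ as in \eqref{x4.6a}; note the swap of the two pairs between the unbarred and barred cases. Together with $T_{\rm r}^{(u,v)}=T_{\rm r}^{(p,s)}$ and $\bar T_{\rm r}^{(u,v)}=\bar T_{\rm r}^{(p,s)}$, which follow from \eqref{x3.9a} and \eqref{x3.10a} (equivalently \eqref{x4.5}), this yields $g_j^{(p,s)}=\mathcal P_{z_j}[\sigma\,g_j^{(u,v)}]$ and $\bar g_j^{(u,v)}=\mathcal P_{\bar z_j}[\sigma\,\bar g_j^{(p,s)}]$, after using that $\sigma$ is analytic at $z_j$ (since $z_j\neq\pm1$) and may therefore be pulled through $\mathcal P_{z_j}$ acting on the already-singular factor.

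The decisive algebraic step, which I expect to be the main obstacle, is the operator identity $\mathcal P_{z_j}\big[\sigma(z)\,C_j(zI-A_j)^{-1}B_j\big]=C_j\,\sigma(A_j)\,(zI-A_j)^{-1}B_j$, valid for any $\sigma$ analytic at the single eigenvalue $z_j$ of $A_j$. I would prove it by expanding $(zI-A_j)^{-1}=\ds\sum_{q\ge1}(A_j-z_jI)^{q-1}(z-z_j)^{-q}$ and the Taylor series of $\sigma$ at $z_j$, and matching the negative-power terms against $\sigma(A_j)=\ds\sum_{p}\frac{\sigma^{(p)}(z_j)}{p!}(A_j-z_jI)^p$. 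Applying this to the two identities above and cancelling the common factor $(zI-A_j)^{-1}B_j$ — legitimate because the vectors $\{(A_j-z_jI)^{q-1}B_j\}_{q=1}^{m_j}$ span, so a row vector annihilating all of them vanishes — gives $C_j^{(p,s)}=C_j^{(u,v)}\sigma(A_j)$ and $\bar C_j^{(u,v)}=\bar C_j^{(p,s)}\sigma(\bar A_j)$. Since $\sigma(A_j)=(I-A_j^{-2})^{-1}$, invertible as $z_j\neq\pm1$, these are exactly the two lines of \eqref{R.4}, and assembling the blocks along the block-diagonal structure of $A$ and $\bar A$ produces \eqref{R.5}.

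Finally, \eqref{T.3} comes from the same bookkeeping using only the scalar relations: the first lines of \eqref{x4.6} and \eqref{x4.7} give $\Gamma_j^{(p,s)}=D_\infty^{(q,r)}\,\Gamma_j^{(q,r)}$ and $\bar\Gamma_j^{(p,s)}=E_\infty^{(q,r)}\,\bar\Gamma_j^{(q,r)}$, while \eqref{x4.4} gives $T^{(q,r)}=E_\infty^{(q,r)}\,T_{\rm r}^{(p,s)}$ and $\bar T^{(q,r)}=D_\infty^{(q,r)}\,\bar T_{\rm r}^{(p,s)}$. Multiplying these scalar factors yields $g_j^{(p,s)}=\ds\frac{D_\infty^{(q,r)}}{E_\infty^{(q,r)}}g_j^{(q,r)}$ and $\bar g_j^{(p,s)}=\ds\frac{E_\infty^{(q,r)}}{D_\infty^{(q,r)}}\bar g_j^{(q,r)}$, which are \eqref{T.3}, and hence \eqref{T.4} after block assembly. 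The combined statement \eqref{T.4aaa} then follows by substituting \eqref{T.4} into \eqref{R.5} and inverting $I-(\bar A)^{-2}$ in the barred line. As an alternative to the generating-function language one may carry out all of these steps by direct manipulation of the finite sums in \eqref{A5aa}--\eqref{A6aa}, but the transfer-function viewpoint makes the appearance of the factor $I-A_j^{-2}=\sigma(A_j)^{-1}$ transparent and avoids re-summing the binomial convolutions by hand.
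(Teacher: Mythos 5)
Your proposal is correct, and it arrives at every identity in the theorem, but it organizes the key algebra in a genuinely different way from the paper. The paper's proof is a direct finite-sum computation: it substitutes the residue equality \eqref{R.6} and the binomial convolution \eqref{R.7} into the right-hand side of the first line of \eqref{T.1}, writes out $A_j^{-2}$ explicitly as in \eqref{R.8}, and establishes the first line of \eqref{R.4} by matching terms entry by entry; the barred case, the block assembly giving \eqref{R.5}, and the scalar relations \eqref{T.3}--\eqref{T.4aaa} are handled the same way. You instead encode each block of norming constants in the transfer function $g_j(z)=C_j(zI-A_j)^{-1}B_j$, identify $g_j=-2\,\mathcal{P}_{z_j}[T_{\rm r}\,\Gamma_j]$ from \eqref{A5aa} and \eqref{x4.8}, convert the convolutions in \eqref{x4.6}--\eqref{x4.7} into the factorizations $\Gamma_j^{(p,s)}=\sigma\,\Gamma_j^{(u,v)}$ and $\bar\Gamma_j^{(u,v)}=\sigma\,\bar\Gamma_j^{(p,s)}$ modulo $(z-z_j)^{m_j}$, and then use the functional-calculus identity $\mathcal{P}_{z_j}\bigl[\sigma(z)(zI-A_j)^{-1}B_j\bigr]=\sigma(A_j)(zI-A_j)^{-1}B_j$ together with controllability of $(A_j,B_j)$ (the vectors $(A_j-z_jI)^{q-1}B_j$, $1\le q\le m_j$, are exactly the standard basis vectors) to cancel the common factor and conclude $C_j^{(p,s)}=C_j^{(u,v)}\sigma(A_j)$, which is \eqref{R.4} since $\sigma(A_j)=\left(I-A_j^{-2}\right)^{-1}$ is invertible for $z_j\ne\pm1$. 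I verified the three steps that carry the weight: the Laurent identification of $g_j$, the legitimacy of pulling the analytic factor $\sigma$ through $\mathcal{P}_{z_j}$ (valid because $\sigma f-\sigma\mathcal{P}_{z_j}[f]$ is analytic at $z_j$), and the functional-calculus lemma itself, which follows from expanding $(zI-A_j)^{-1}$ in powers of the nilpotent part; all are sound. Both arguments consume identical inputs, namely \eqref{x4.5}, \eqref{x4.6}, \eqref{x4.7}, and the definitions \eqref{T.1}/\eqref{T.2}, so the difference is purely methodological: your route makes the factor $I-A_j^{-2}=\sigma(A_j)^{-1}$ conceptually transparent and absorbs the binomial re-summation into the Cauchy-product formalism, at the cost of setting up the realization and principal-part machinery, whereas the paper's term matching is elementary but leaves that re-summation implicit. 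One point worth stating explicitly if you write this up: for the pair $(q,r)$ the identity $g_j^{(q,r)}=-2\,\mathcal{P}_{z_j}[T^{(q,r)}\Gamma_j^{(q,r)}]$ holds by Definition~\ref{thm:theorem x4.3b} rather than as a theorem, and that is exactly what your derivation of \eqref{T.3} relies on.
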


\begin{proof}
We will provide only the proof of the first line of \eqref{R.4} because the second line of \eqref{R.4} can be proved in a similar manner. We note that the first line of \eqref{x4.5} yields
\begin{equation}\label{R.6}
t_{jk}^{(p,s)}=t_{jk}^{(u,v)},
\end{equation}
and from \eqref{x4.6} we have
\begin{equation}\label{R.7}
\gamma_{jk}^{(p,s)}=\ds\sum_{l=0}^{k}\ds\binom{k}{l}\,\ds\frac{d^l \sigma(z_j)}{dz^l}\,\gamma_{j(k-l)}^{(u,v)},
\end{equation}
where we recall that $\sigma(z)$ is the scalar quantity defined \eqref{x4.6a}. For the matrix $A_j$ defined in \eqref{A1} we have
\begin{equation}\label{R.8}
A_j^{-2}=\begin{bmatrix}
\frac{1}{z_j^{2}}&-\frac{2}{z_j^{3}}&\frac{3}{z_j^{4}}&\cdots&\frac{(-1)^{m_j}(m_j-1)}{z_j^{m_j}}&\frac{(-1)^{m_j+1}m_j}{z_j^{m_j+1}}\\
\noalign{\medskip}
0&\frac{1}{z_j^{2}}&-\frac{2}{z_j^{3}}&\cdots&\frac{(-1)^{m_j-1}(m_j-2)}{z_j^{m_j-1}}&\frac{(-1)^{m_j}(m_j-1)}{z_j^{m_j}}\\
\noalign{\medskip}
0&0&\frac{1}{z_j^{2}}&\cdots&\frac{(-1)^{m_j-2}(m_j-3)}{z_j^{m_j-2}}&\frac{(-1)^{m_j-1}(m_j-2)}{z_j^{m_j-1}}\\
\vdots&\vdots&\vdots&\ddots&\vdots&\vdots\\
\noalign{\medskip}
0&0&0&\cdots&\frac{1}{z_j^{2}}&-\frac{2}{z_j^{3}}\\
\noalign{\medskip}
0&0&0&\cdots&0&\frac{1}{z_j^{2}}
\end{bmatrix}.
\end{equation}
Using \eqref{x4.6a} and \eqref{R.6}--\eqref{R.8} on the right-hand side of
the first line of \eqref{T.1}, we establish the first line of \eqref{R.4}. By using the summation over all the bound states, i.e. summing over $1 \le j \le N,$ from the first line of \eqref{R.4} we obtain the first line of \eqref{R.5}. In a similar manner, the second line of \eqref{R.4} yields the second line of \eqref{R.5}.  Finally, the proof of \eqref{T.3} and \eqref{T.4} are obtained by using \eqref{x4.5}, the first lines of \eqref{x4.6} and \eqref{x4.7}, and \eqref{T.2}, and by comparing the result with \eqref{T.1}.
\end{proof}

Recall that we use $t_{jk},$ $\gamma_{jk},$ and $c_{jk}$ to denote the residues, the dependency constants, and the norming constants, respectively, corresponding to a bound state at $z=z_j$ with multiplicity $m_j$ for each of the linear systems \eqref{1.1}, \eqref{1.2aa}, and \eqref{1.2ab}. In the next theorem we compare those quantities with the corresponding quantities related to the bound state at   $z=-z_j.$ We also show that the contributions to the Marchenko kernels from $z=z_j$ and from $z=-z_j$ are equal.

\begin{theorem}
	\label{thm:theorem x4.3ab}
	For each of the linear systems  \eqref{1.1}, \eqref{1.2aa}, and \eqref{1.2ab}, as indicated in Theorem~\ref{thm:theorem x4.1}, let the bound states and their multiplicities be described by the sets $\{ {\pm z}_{j},m_j\}_{j=1}^N$ and $\{ {\pm \bar{z}}_{j},\bar{m}_j\}_{j=1}^{\bar{N}}.$ Let the residues $t_{jk}$ and $\bar{t}_{jk}$ be defined as in \eqref{x4.8} and \eqref{x4.9}; the dependency constants $\gamma_{jk}$ and $\bar{\gamma}_{jk}$ be defined as in \eqref{x4.103} and \eqref{x4.107}, respectively;
and the norming constants $c_{jk}$ and $\bar{c}_{jk}$ be defined as in \eqref{A5aa} and \eqref{A6aa}, respectively, or equivalently as in \eqref{T.1} or \eqref{T.2}. We have the following:
	\begin{enumerate}
		\item[\text{\rm(a)}] Let $t_{jk}|_{z=z_j}$ and  $t_{jk}|_{z=-z_j}$ denote the residues at $z=z_j$ and $z=-z_j,$ respectively. Similarly, let $\bar{t}_{jk}|_{z=\bar{z}_j}$ and  $\bar{t}_{jk}|_{z=-\bar{z}_j}$ denote the residues at $z=\bar{z}_j$ and $z=-\bar{z}_j,$ respectively. We then have
		\begin{equation}\label{W.1}
		\begin{cases}
		t_{jk}\bigr|_{z=-z_j}=(-1)^{k}\,t_{jk}\bigr|_{z=z_j},\qquad 1 \le k \le m_j,\\
		\noalign{\medskip}
		\bar{t}_{jk}\bigr|_{z=-\bar{z}_j}=(-1)^{k}\,\bar{t}_{jk}\bigr|_{z=\bar{z}_j},\qquad 1 \le k \le \bar{m}_j.
		\end{cases}
		\end{equation}
		
			\item[\text{\rm(b)}] Let $\gamma_{jk}|_{z=z_j}$ and
$\gamma_{jk}|_{z=-z_j}$ denote the dependency constants at $z=z_j$ and $z=-z_j,$
respectively. Similarly, let $\bar{\gamma}_{jk}|_{z=\bar{z}_j}$ and  $\bar{\gamma}_{jk}|_{z=-\bar{z}_j}$ denote the dependency constants at $z=\bar{z}_j$ and $z=-\bar{z}_j,$ respectively. We then have
		\begin{equation}\label{W.2}
		\begin{cases}
		\gamma_{jk}\bigr|_{z=-z_j}=(-1)^{k}\,\gamma_{jk}\bigr|_{z=z_j},
\qquad 0 \le k \le m_j-1,\\
		\noalign{\medskip}	\bar{\gamma}_{jk}\bigr|_{z=-\bar{z}_j}=(-1)^{k}\,\bar{\gamma}_{jk}\bigr|_{z=\bar{z}_j},\qquad 0 \le k \le \bar{m}_j-1.
		\end{cases}
		\end{equation}
		
		  \item[\text{\rm(c)}] Let $c_{jk}|_{z=z_j}$ and  $c_{jk}|_{z=-z_j}$ denote the norming constants at $z=z_j$ and $z=-z_j,$ respectively. Similarly, let $\bar{c}_{jk}|_{z=\bar{z}_j}$ and  $\bar{c}_{jk}|_{z=-\bar{z}_j}$ denote the norming constants at $z=\bar{z}_j$ and $z=-\bar{z}_j,$ respectively. We then have
		\begin{equation}\label{W.3}
		\begin{cases}
		c_{jk}\bigr|_{z=-z_j}=(-1)^{k}\,c_{jk}\bigr|_{z=z_j},\qquad 0 \le k \le m_j-1,\\
		\noalign{\medskip}
	\bar{c}_{jk}\bigr|_{z=-\bar{z}_j}=(-1)^{k}\,\bar{c}_{jk}\bigr|_{z=\bar{z}_j},\qquad 0 \le k \le \bar{m}_j-1.
		\end{cases}
		\end{equation}
		
		\item[\text{\rm(d)}] The contribution to the Marchenko kernel $\Omega_{n+m}$ from $z=z_j$ is equal to the contribution from $z=-z_j.$ Similarly, the contribution to the Marchenko kernel $\bar{\Omega}_{n+m}$ from $z=\bar{z}_j$ is equal to the contribution from $z=-\bar{z}_j.$
\end{enumerate}
	\end{theorem}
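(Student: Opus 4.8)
The plan is to prove (a)--(c) in turn, each parity relation coming from the evenness of the relevant scattering data and Jost solutions, and then to deduce (d) from a direct residue-symmetry computation that is insensitive to the precise sign bookkeeping of (c). I would treat only the unbarred statements; the barred ones follow verbatim after replacing $z^{-n}\psi_n$, $z^n\phi_n$, and $T_{\rm r}$ by $z^n\bar\psi_n$, $z^{-n}\bar\phi_n$, and $\bar T_{\rm r}$ and invoking their evenness from Theorem~\ref{thm:theorem x3.4a}(a) together with the evenness of the transmission coefficients. For (a), I would start from the Laurent expansion \eqref{x4.8} of $T_{\rm r}$ about $z_j$. Since $T_{\rm r}$ is even, $T_{\rm r}(z)=T_{\rm r}(-z)$, and feeding $-z$ into \eqref{x4.8} gives, near $z=-z_j$, a sum of terms $t_{jk}|_{z=z_j}/(-z-z_j)^k$. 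Using $(-z-z_j)^k=(-1)^k\,(z-(-z_j))^k$ and matching with the defining expansion of $T_{\rm r}$ about $-z_j$ yields $t_{jk}|_{z=-z_j}=(-1)^k\,t_{jk}|_{z=z_j}$, i.e. the first line of \eqref{W.1}.

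For (b), the decisive inputs are that $z^{-n}\psi_n$ and $z^n\phi_n$ are even (Theorem~\ref{thm:theorem x3.4a}(a)), which give the pointwise symmetries $\psi_n(-z)=(-1)^n\psi_n(z)$ and $\phi_n(-z)=(-1)^n\phi_n(z)$. Differentiating $k$ times and applying the chain rule produces $\psi_n^{(k)}(-z_j)=(-1)^{n+k}\psi_n^{(k)}(z_j)$ and the analogous relation for $\phi_n$. I would then write the dependency relation \eqref{x4.103} at $-z_j$, substitute these symmetries, divide out the common factor $(-1)^n$, and compare with \eqref{x4.103} at $z_j$. Because \eqref{x4.103} is triangular in $k$ (the order-$k$ relation involves only $\gamma_{j0},\dots,\gamma_{jk}$), the dependency constants are determined recursively, so an induction on $k$ both legitimizes the term-by-term comparison of the coefficients of $\psi_n^{(l)}(z_j)$ and delivers $\gamma_{jk}|_{z=-z_j}=(-1)^k\,\gamma_{jk}|_{z=z_j}$.

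For (c), I would insert (a) and (b) into the definition \eqref{T.2} (equivalently \eqref{A5aa}): in the summand indexed by $l$ the residue contributes $(-1)^{k+1+l}$ and the dependency constant contributes $(-1)^l$, so the $l$-dependent signs cancel in pairs and a single $k$-dependent factor comes out of the sum, producing the parity relation \eqref{W.3}. Part (d) I would establish directly and independently of this sign: with $g(z):=\phi_n(z)\,T_{\rm r}(z)\,z^{m-1}$, the symmetries above give $g(-z)=(-1)^{n+m-1}g(z)$, and the substitution $z\mapsto -z$ in a small positively oriented loop (whose orientation is preserved) yields $\mathrm{Res}_{z=-z_j}\,g=(-1)^{n+m}\,\mathrm{Res}_{z=z_j}\,g$. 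These residues are exactly the contributions of $\pm z_j$ to the left-hand side of \eqref{Tx.6} that build the kernel $\Omega_{n+m}$; since $\Omega_{n+m}$ is supported on even $n+m$ (as recorded after \eqref{Tx.4}), we have $(-1)^{n+m}=1$ wherever the kernel is nonzero, whence the contributions from $z_j$ and $-z_j$ coincide. The analogous computation with $\bar g(z):=\bar\phi_n(z)\,\bar T_{\rm r}(z)\,z^{-m-1}$ settles the barred half of (d).

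The step I expect to be the main obstacle is the coefficient matching in (b): extracting the clean relation for $\gamma_{jk}$ requires identifying the dependency constants order by order, which leans on the recursive structure of \eqref{x4.103} and on the Jost solutions being nondegenerate enough to solve for each $\gamma_{jk}$. The sign bookkeeping in (c) is likewise delicate; I would cross-check it against (d), since the equality of the two bound-state contributions forces the parity factor of $c_{jk}$ to behave consistently precisely on the even support of $\Omega_{n+m}$, and this consistency requirement is what pins down the overall sign.
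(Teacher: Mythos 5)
Your parts (a) and (b) are essentially the paper's own proof: the paper also obtains \eqref{W.1} by substituting $z\mapsto -z$ in \eqref{x4.8} and matching principal parts at $z=-z_j$, and obtains \eqref{W.2} by inserting the parities $\psi_n(-z)=(-1)^n\psi_n(z)$ and $\phi_n(-z)=(-1)^n\phi_n(z)$ (the paper's \eqref{W.5}) into \eqref{x4.103}; your induction on $k$ just makes explicit the recursive identification of the $\gamma_{jk}$ that the paper leaves implicit. Part (d) is where you genuinely diverge. The paper deduces (d) \emph{from} (c): it rewrites $C_j|_{z=-z_j}$ and $A_j^nB_j|_{z=-z_j}$ via the diagonal-matrix identities \eqref{W.7aa} and \eqref{W.7bb} and checks that the two sign patterns cancel in \eqref{W.6} versus \eqref{R.1} when the relevant indices are even. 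You instead argue directly from the parity $g(-z)=(-1)^{n+m-1}g(z)$ of $g(z)=\phi_n(z)\,T_{\rm r}(z)\,z^{m-1}$, getting $\mathrm{Res}_{z=-z_j}\,g=(-1)^{n+m}\,\mathrm{Res}_{z=z_j}\,g$ (your sign is right: the substitution contributes one extra minus sign beyond the parity of $g$), so the two pole contributions to the left-hand side of \eqref{Tx.6} coincide on the even support of $\Omega_{n+m}$. This buys a real advantage: your (d) is independent of the sign bookkeeping in (c), whereas the paper's (d) inherits it.

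That independence matters, because (c) is where the one genuine problem sits. You correctly record that in \eqref{A5aa} the residue contributes $(-1)^{k+1+l}$ and the dependency constant contributes $(-1)^{l}$, but the product is $(-1)^{k+1+2l}=(-1)^{k+1}$, so what the computation actually yields is $c_{jk}|_{z=-z_j}=(-1)^{k+1}\,c_{jk}|_{z=z_j}$, \emph{not} \eqref{W.3} as stated; asserting that the cancellation ``produces the parity relation \eqref{W.3}'' is the one step that does not survive a careful check. To be fair, this identical computation is the paper's entire proof of (c), so the discrepancy lies in the paper's statement: the paper's own \eqref{W.7aa}, acting on $C_j=[c_{j(m_j-1)},\dots,c_{j0}]$ with diagonal entries running from $(-1)^{m_j}$ down to $(-1)^{1}$, encodes precisely the factor $(-1)^{k+1}$, and if one instead used the $(-1)^{k}$ of \eqref{W.3}, the two diagonal factors in the proof of (d) would multiply to $(-1)^{l+m-1}=-1$ on even indices and (d) would fail. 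So your instinct to cross-check (c) against (d) is exactly right, but you should actually carry it out: it pins the sign to $(-1)^{k+1}$ and reveals a typo in \eqref{W.3}, harmless downstream since only the sign pattern entering \eqref{W.7aa} and part (d) is ever used.
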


\begin{proof}
The proof of (a) is obtained as follows. We know that the transmission coefficients $T_{\rm r}$ for each of these three linear systems contain $z$ as $z^{2}.$ From  \eqref{x4.8}, using $T_{\rm r}(-z)=T_{\rm r}(z),$ as $z\to -z_j$ we obtain
\begin{equation*}
T_{\rm r}(z)= \ds\frac{(-1)^{m_j}\,t_{jm_j}}{(z+z_j)^{m_j}}+
\ds\frac{(-1)^{m_j-1}\,t_{j(m_j-1)}}{(z+z_j)^{m_j-1}}
+\cdots+\ds\frac{(-1)\,t_{j1}}{(z+z_j)}+O\left(1\right),
\end{equation*}
which yields the first line of \eqref{W.1}. The second line of \eqref{W.1} is obtained from \eqref{x4.9} by proceeding in a similar manner. This completes the proof of (a). Let us now prove (b). From Theorem~\ref{thm:theorem x2.1}(a) and its analogs in Corollary~\ref{thm:theorem x2.1a} and Theorem~\ref{thm:theorem x3.4a}(a), we get
\begin{equation}\label{W.5}
\begin{cases}
\psi_n(-z)=(-1)^{n}\,\psi_n(z),\quad \phi_n(-z)=(-1)^{n}\,\phi_n(z),\\
		\noalign{\medskip}
\bar{\psi}_n(-z)=(-1)^{n}\,\bar{\psi}_n(z),\quad \bar{\phi}_n(-z)=(-1)^{n}\,\bar{\phi}_n(z).
\end{cases}
\end{equation}
Using the first line of \eqref{W.5} in \eqref{x4.103} we determine the dependency constant $\gamma_{jk}|_{z=-z_j}$ and establish the first line of \eqref{W.2}. The second line of \eqref{W.2} is obtained in a similar way by using the second line of \eqref{W.5} in \eqref{x4.107}. This completes the proof of (b). To prove (c) we proceed as follows. Using the first lines of \eqref{W.1} and \eqref{W.2} in \eqref{A5aa} we determine the norming constant $c_{jk}\bigr|_{z=-z_j}$ and establish the first line of \eqref{W.3}. The second line of \eqref{W.3} is proved in a similar way by using the second lines of \eqref{W.1} and \eqref{W.2} in \eqref{A6aa}. This completes the proof of (c). Let us finally prove (d). The right-hand side of \eqref{R.1} is  the contribution to the Marchenko kernel $\Omega_{n+m}$ from the bound state at $z=z_j.$ Using \eqref{R.1} and \eqref{R.3}, with the help of \eqref{A1} and \eqref{A2}, we evaluate the contribution to the Marchenko kernel $\Omega_{n+m}$ from the bound state $z=-z_j$ and we obtain
\begin{equation}\label{W.6}
\ds\frac{1}{2\pi i}\ds\oint dz\,\phi_{n}\,T_{\rm r}\,z^{m-1}=-\ds\frac{1}{2}\,\sum_{l=n}^{\infty}K_{nl}\left(C_{j}\,A_{j}^{l+m-1}\,B_{j}\right)
\Bigr|_{z_j\mapsto -z_j},
\end{equation}
where $z_j\mapsto -z_j$ is used to indicate the substitution of
$-z_j$ for $z_j.$
 From \eqref{A2} and the first line of \eqref{W.3} we get
 \begin{equation}\label{W.7aa}
C_j\bigr|_{z=-z_j}=C_j\bigr|_{z=z_j}\,
\text{\rm{diag}}\left\{(-1)^{m_j},(-1)^{m_j-1},\cdots,(-1)^1
\right\}
,
\end{equation}
where $\text{\rm{diag}}$ is used to denote
the diagonal matrix.
Similarly, from \eqref{A1}, for any integer $n$ we get
\begin{equation}\label{W.7bb}
\begin{split}
A_j^n\,& B_j\bigr|_{z=-z_j}\\
&=
\text{\rm{diag}}\left\{(-1)^{n-(m_j-1)},(-1)^{n-(m_j-2)},\cdots,(-1)^n
\right\}\,A_j^n\,B_j\bigr|_{z=z_j}.
\end{split}
\end{equation}
Using \eqref{W.7aa} and \eqref{W.7bb},
when $n+m$ and $n+l$ are both even integers in \eqref{W.6},
we confirm that the right-hand side of \eqref{W.6} is equal to the
right-hand side of \eqref{R.1}. Hence the
contribution to the Marchenko kernel $\Omega_{n+m}$ from $z=z_j$ is equal to the contribution from $z=-z_j.$ Similarly, we prove that the contribution to $\bar{\Omega}_{n+m}$ from $z=-\bar{z}_j$ is equal to the contribution from $z=\bar{z}_j.$
\end{proof}

Let us remark on the simplicity and elegance of the use of matrix triplets in dealing with bound states with multiplicities. The formulas in \eqref{R.5} are very simple compared to the formulas written for the individual bound-state norming constants. In fact, to extract the formulas for $c_{jk}^{(u,v)}$ from the first line of \eqref{R.4} we postmultiply that first line by a column vector with $m_j$ components so that we get
\begin{equation*}
c_{jk}^{(u,v)}=C_j^{(p,s)}\left(I-A_j^{-2}\right)e_{m_j-1-k},\qquad 0 \le k \le m_j-1,
\end{equation*}
where we use $e_l$ for the column vector with $m_j$ components with all the entries $0$ except $1$ in the $l$th entry. In a similar way, from the second line of \eqref{R.4} we obtain
\begin{equation*}
\bar{c}_{jk}^{(p,s)}=\bar{C}_j^{(u,v)}\left[I-(\bar{A}_j)^{-2}\right]\bar{e}_{\bar{m}_j-1-k},\qquad 0 \le k \le \bar{m}_j-1,
\end{equation*}
where we use $\bar{e}_l$ for the column vector with $\bar{m}_j$ components having $1$ at the $l$th entry and $0$ elsewhere.

Let us also remark that the fundamental result given in \eqref{R.5} is compatible with \eqref{Tx.3}, from which we obtain the Marchenko kernels $\Omega_k^{(u,v)}$ and $\bar{\Omega}_k^{(u,v)}$ and the Marchenko kernels $\Omega_k^{(p,s)}$ and $\bar{\Omega}_k^{(p,s)}.$ When \eqref{x3.1}--\eqref{x3.4} hold, we see that \eqref{x3.11} and \eqref{x3.11a}, respectively, yield
\begin{equation}\label{R.10}
\begin{cases}
R^{(u,v)}=\left(1-\ds\frac{1}{z^2}\right)\,R^{(p,s)},\\
\noalign{\medskip}
\bar{R}^{(p,s)}=\left(1-\ds\frac{1}{z^2}\right)\,\bar{R}^{(u,v)}.
\end{cases}
\end{equation}
Using \eqref{R.10} in \eqref{Tx.4} we obtain
\begin{equation}\label{R.11}
\begin{cases}
\hat{R}_k^{(u,v)}=\hat{R}_k^{(p,s)}-\hat{R}_{k-2}^{(p,s)},\\
\noalign{\medskip}
\hat{\bar{R}}_k^{(p,s)}=\hat{\bar{R}}_k^{(u,v)}-\hat{\bar{R}}_{k+2}^{(p,s)}.
\end{cases}
\end{equation}
 From \eqref{Tx.3} and \eqref{R.11} we see that in the absence of bound states we have
\begin{equation}\label{R.12}
\begin{cases}
\Omega_k^{(u,v)}=\Omega_k^{(p,s)}-\Omega_{k-2}^{(p,s)},\\
\noalign{\medskip}
\bar{\Omega}_k^{(p,s)}=\bar{\Omega}_k^{(u,v)}-\bar{\Omega}_{k+2}^{(u,v)}.
\end{cases}
\end{equation}
In fact, \eqref{R.12} holds even in the presence of bound state. Then, comparing \eqref{R.11} and \eqref{R.12} we get
\begin{equation*}
\begin{cases}
C^{(u,v)} A^{k-1} B=C^{(p,s)} A^{k-1} B-C^{(p,s)} A^{k-3} B,\\
\noalign{\medskip}
\bar{C}^{(p,s)} (\bar{A})^{-k-1} B=\bar{C}^{(u,v)}
 (\bar{A})^{-k-1}\bar{B}-\bar{C}^{(u,v)} (\bar{A})^{-k-3} \bar{B},
\end{cases}
\end{equation*}
which yield the important result given in \eqref{R.5}.
Let us also mention that from
\eqref{R.12} we get
\begin{equation}\label{R.12aaaa}
\begin{cases}
\Omega_k^{(p,s)}=\ds\sum_{l=0}^\infty
\Omega_{k-2l}^{(u,v)},\\
\noalign{\medskip}
\bar{\Omega}_k^{(u,v)}=\ds\sum_{l=0}^\infty \bar{\Omega}_{k+2l}^{(p,s)}.
\end{cases}
\end{equation}

In the next theorem we show that, when the potential pairs $(u,v)$ and $(p,s)$ are related to each other as in \eqref{x3.1}--\eqref{x3.4}, their corresponding Marchenko systems hold simultaneously.

\begin{theorem}
	\label{thm:theorem x4.5}
	Assume that the potentials $q_n$ and $r_n$ appearing in \eqref{1.1} are rapidly decaying and satisfy \eqref{1.1a}. Assume further that the potential pair $(u,v)$ appearing in \eqref{1.2aa} and the potential pair $(p,s)$ appearing in \eqref{1.2ab} are related to $(q,r)$ as in \eqref{x3.1}--\eqref{x3.4}. Then, the Marchenko system related to
\eqref{1.2aa} holds if and only if the Marchenko system related to
\eqref{1.2ab} holds.
\end{theorem}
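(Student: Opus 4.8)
The plan is to reduce the asserted equivalence to the equivalence of the two pairs of scattering relations in \eqref{Tx.5}, since the Marchenko system \eqref{Tx.1} (equivalently its uncoupled form \eqref{ta002}) was produced precisely by applying the Fourier projection \eqref{Tx.6}--\eqref{Tx.7}, together with the residue evaluation of Theorem~\ref{thm:theorem x4.3}, to \eqref{Tx.5}. Thus it suffices to show that, under \eqref{x3.1}--\eqref{x3.4}, the relations
\[
\phi_n^{(u,v)}\,T_{\rm r}^{(u,v)}=\bar{\psi}_n^{(u,v)}+\psi_n^{(u,v)}\,R^{(u,v)},\qquad \bar{\phi}_n^{(u,v)}\,\bar{T}_{\rm r}^{(u,v)}=\psi_n^{(u,v)}+\bar{\psi}_n^{(u,v)}\,\bar{R}^{(u,v)}
\]
hold if and only if the same relations with the superscript $(p,s)$ hold.

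First I would record, from Theorem~\ref{thm:theorem x3.2}, that all four Jost solutions of \eqref{1.2aa} and \eqref{1.2ab} are tied together by one and the same matrix. Write $\Lambda_n^{(q,r)}$ for the matrix in \eqref{x4.110} (the coefficient matrix of \eqref{x3.5a}, \eqref{x3.7a}, \eqref{x3.6a}, \eqref{x3.8a}) and $\Theta_n^{(q,r)}$ for the coefficient matrix of \eqref{x3.5} (which also appears in \eqref{x3.7}, \eqref{x3.6}, \eqref{x3.8}). Equating the two expressions furnished by Theorem~\ref{thm:theorem x3.2} for each Jost solution of \eqref{1.1} and setting $G_n:=(\Theta_n^{(q,r)})^{-1}\Lambda_n^{(q,r)}$ gives
\[
\psi_n^{(u,v)}=G_n\,\psi_n^{(p,s)},\quad \bar{\phi}_n^{(u,v)}=G_n\,\bar{\phi}_n^{(p,s)},\quad \phi_n^{(u,v)}=(1-\tfrac{1}{z^2})\,G_n\,\phi_n^{(p,s)},\quad \bar{\psi}_n^{(u,v)}=(1-\tfrac{1}{z^2})\,G_n\,\bar{\psi}_n^{(p,s)}.
\]
By \eqref{x4.111} we have $\det\Lambda_n^{(q,r)}\neq0$, while the determinant of the matrix in \eqref{x3.5} equals $(1-1/z^2)/(E_{n-1}^{(q,r)}D_{n-1}^{(q,r)})$; hence $G_n$ is invertible for $z\in\mathbb{T}\setminus\{\pm1\}$. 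I would also recall from Theorem~\ref{thm:theorem x3.3} that $T_{\rm r}^{(u,v)}=T_{\rm r}^{(p,s)}$ and $\bar{T}_{\rm r}^{(u,v)}=\bar{T}_{\rm r}^{(p,s)}$, together with $R^{(u,v)}=(1-1/z^2)\,R^{(p,s)}$ and $\bar{R}^{(p,s)}=(1-1/z^2)\,\bar{R}^{(u,v)}$.

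The heart of the argument is then a direct substitution and cancellation. Inserting the displayed Jost-solution relations into the $(u,v)$ form of the first relation in \eqref{Tx.5} and left-multiplying by $G_n^{-1}$ turns it into $(1-1/z^2)\phi_n^{(p,s)}T_{\rm r}^{(u,v)}=(1-1/z^2)\bar{\psi}_n^{(p,s)}+\psi_n^{(p,s)}R^{(u,v)}$; using $T_{\rm r}^{(u,v)}=T_{\rm r}^{(p,s)}$ and $R^{(u,v)}=(1-1/z^2)R^{(p,s)}$ makes every term carry the common factor $1-1/z^2$, whose removal (valid on $\mathbb{T}\setminus\{\pm1\}$ and extended to $z=\pm1$ by continuity) yields exactly the $(p,s)$ form. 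Running the same computation on the second relation gives, after the factor $G_n^{-1}$, the identity $\bar{\phi}_n^{(p,s)}\bar{T}_{\rm r}^{(u,v)}=\psi_n^{(p,s)}+(1-1/z^2)\bar{\psi}_n^{(p,s)}\bar{R}^{(u,v)}$, and here the lone factor $1-1/z^2$ on the last term is absorbed through $\bar{R}^{(p,s)}=(1-1/z^2)\bar{R}^{(u,v)}$ to reproduce the $(p,s)$ form. Since $G_n$ is invertible each step reverses, so the $(u,v)$ relations \eqref{Tx.5} hold if and only if the $(p,s)$ relations do.

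The main obstacle I anticipate is the careful passage from this pointwise equivalence of \eqref{Tx.5} back to the equivalence of the algebraic Marchenko systems \eqref{ta002}: one must verify that the Fourier projection \eqref{Tx.6}--\eqref{Tx.7} is compatible on both sides. Concretely, I would check that the kernel identities \eqref{R.12} and the triplet identities \eqref{R.5}, which via Theorem~\ref{thm:theorem x4.3ab} encode exactly the factors $1-1/z^2$ in the bound-state part of the kernels, match the reflection-coefficient identities \eqref{R.11}, so that the projected $(u,v)$ and $(p,s)$ kernels are consistent and the solution coefficients $K_{nm},\bar{K}_{nm}$ transform accordingly; the bound-state contributions must be tracked through \eqref{R.1}--\eqref{R.2}, and the analyticity of the Jost solutions from Theorem~\ref{thm:theorem x2.1} is what makes the projection invertible so that ``the Marchenko system holds'' is genuinely equivalent to \eqref{Tx.5}. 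The only delicate analytic points are the isolated zeros $z=\pm1$ of $1-1/z^2$, handled by continuity; the remaining manipulations are the routine matrix algebra already packaged in Theorems~\ref{thm:theorem x3.2}, \ref{thm:theorem x3.3}, and \ref{thm:theorem x4.4}.
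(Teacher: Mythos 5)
Your central computation---that under \eqref{x3.1}--\eqref{x3.4} the $(u,v)$ form of \eqref{Tx.5} transforms into the $(p,s)$ form via $G_n=(\Theta_n^{(q,r)})^{-1}\Lambda_n^{(q,r)}$ and the coefficient identities $T_{\rm r}^{(u,v)}=T_{\rm r}^{(p,s)}$, $R^{(u,v)}=(1-1/z^2)R^{(p,s)}$, $\bar{R}^{(p,s)}=(1-1/z^2)\bar{R}^{(u,v)}$---is correct, but it is not the content of the theorem. Both copies of \eqref{Tx.5} are identities of direct scattering theory: they hold automatically for any admissible potential pair, so their ``equivalence'' carries no information. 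The theorem is a statement about the two \emph{systems of equations in the Fourier coefficients} $K_{nm},\bar{K}_{nm}$ with the kernels $\Omega,\bar{\Omega}$ as data, and your reduction of that statement to the pointwise equivalence of \eqref{Tx.5} rests on the claim that ``the Marchenko system holds'' is genuinely equivalent to \eqref{Tx.5} via an invertible Fourier projection. That claim is asserted, not proved, and it is precisely where all the work of the paper's proof lies: the Marchenko system only records the projections onto the modes $m>n$, together with the residue contributions of the transmission-coefficient poles, so recovering \eqref{Tx.5} from it is not a formal inversion.

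Concretely, the step your proposal never confronts is the boundary phenomenon created by the very factor $1-1/z^2$ you cancel. After Fourier projection that factor becomes a shift of the discrete index: the paper's identity \eqref{M.5} expresses the $(u,v)$ residual $\bar{W}_{nm}^{(u,v)}$ as an invertible matrix times $\bar{W}_{nm}^{(p,s)}$ \emph{minus} a second matrix times $\bar{W}_{n(m-2)}^{(p,s)}$. For $m>n+2$ this immediately yields the equivalence, but at $m=n+2$ the shifted term is $\bar{W}_{nn}^{(p,s)}$, which lies \emph{outside} the range $m>n$ of the Marchenko system and does \emph{not} vanish: the paper must evaluate it explicitly (equations \eqref{M.11}--\eqref{M.6}), which requires accounting for the contribution at $z=0$ of the contour integral, obtaining $\bar{W}_{nn}^{(p,s)}=\frac{D_\infty^{(p,s)}}{D_{n-1}^{(p,s)}}\bigl[\begin{smallmatrix}1\\ -s_{n-1}\end{smallmatrix}\bigr]$, and then observing that this particular vector is annihilated by the matrix $\bigl[\begin{smallmatrix}0&0\\ s_{n-1}&1\end{smallmatrix}\bigr]$ multiplying it. Dividing by $1-1/z^2$ on $\mathbb{T}\setminus\{\pm1\}$ and appealing to continuity, as you do, silently discards exactly this term; without the $z=0$ computation and the annihilation argument, the equivalence at $m=n+2$---and hence, by the induction it anchors, in one of the two directions entirely---is not established.
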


\begin{proof}
The proof is lengthy and it involves some fine estimates. Let us define
\begin{equation}\label{H.1}
W_{nm}^{(u,v)}:=K_{nm}^{(u,v)}+\ds\sum_{l=n}^{\infty}\bar{K}_{nl}^{(u,v)}\,\bar{\Omega}_{l+m}^{(u,v)},
\end{equation}
\begin{equation}\label{H.2}
\bar{W}_{nm}^{(u,v)}:=\bar{K}_{nm}^{(u,v)}+\ds\sum_{l=n}^{\infty}K_{nl}^{(u,v)}\,\Omega_{l+m}^{(u,v)},
\end{equation}
\begin{equation*}
W_{nm}^{(p,s)}:=K_{nm}^{(p,s)}+\ds\sum_{l=n}^{\infty}\bar{K}_{nl}^{(p,s)}\,\bar{\Omega}_{l+m}^{(p,s)},
\end{equation*}
\begin{equation}\label{H.4}
\bar{W}_{nm}^{(p,s)}:=\bar{K}_{nm}^{(p,s)}+\ds\sum_{l=n}^{\infty}K_{nl}^{(p,s)}\,\Omega_{l+m}^{(p,s)}.
\end{equation}
As seen from \eqref{Tx.1} and the first two equations in \eqref{x2.12} and \eqref{x2.14} we need to prove the equivalence of the Marchenko system
\begin{equation}\label{M.1}
\begin{cases}
\bar{W}_{nm}^{(u,v)}=0,\qquad m>n,\\
\noalign{\medskip}
W_{nm}^{(u,v)}=0,\qquad m>n,
\end{cases}
\end{equation}
and the Marchenko system
\begin{equation}\label{M.2}
\begin{cases}
\bar{W}_{nm}^{(p,s)}=0,\qquad m>n,\\
\noalign{\medskip}
W_{nm}^{(p,s)}=0,\qquad m>n.
\end{cases}
\end{equation}
We provide the proof by relating $\bar{W}_{nm}^{(u,v)}$ to $\bar{W}_{nm}^{(p,s)}$. The relation between $W_{nm}^{(u,v)}$ and $W_{nm}^{(p,s)}$ can be established in a similar manner and hence that proof will be omitted. Using \eqref{x3.5} and \eqref{x3.5a} we relate $\psi_{n}^{(u,v)}$and $\psi_{n}^{(p,s)}$ to each other and apply $\oint dz\,z^{-m-1}/(2\pi i)$ on the resulting equality. Similarly, using \eqref{x3.6} and \eqref{x3.6a} we relate $\bar{\psi}_{n}^{(u,v)}$and $\bar{\psi}_{n}^{(p,s)}$ to each other and apply $\oint dz\,z^{m-1}/(2\pi i)$ on the resulting equality. Then, with the help of \eqref{x3.1} and \eqref{x3.4} we obtain
\begin{equation*}
\begin{split}
\begin{bmatrix}
1&0\\
\noalign{\medskip}
s_{n-1}&1
\end{bmatrix}K_{nm}^{(u,v)}-\begin{bmatrix}
1&0\\
\noalign{\medskip}
0&0
\end{bmatrix}K_{n(m+2)}^{(u,v)}=\begin{bmatrix}
1&-u_n\\
\noalign{\medskip}
s_{n-1}&1
\end{bmatrix}K_{nm}^{(p,s)},
\end{split}
\end{equation*}
\begin{equation*}
\begin{split}
\begin{bmatrix}
1&0\\
\noalign{\medskip}
s_{n-1}&1
\end{bmatrix}\bar{K}_{nm}^{(u,v)}=\begin{bmatrix}
1&-u_n\\
\noalign{\medskip}
s_{n-1}&1
\end{bmatrix}\bar{K}_{nm}^{(p,s)}-\begin{bmatrix}
1&0\\
\noalign{\medskip}
0&0
\end{bmatrix}\bar{K}_{n(m-2)}^{(p,s)},
\end{split}
\end{equation*}
where we have also used \eqref{Tx.2} for $(u,v)$ and $(p,s).$
Using \eqref{H.1} and the first line of \eqref{R.12}, after some simplifications, for $m>n$ we write $\bar{W}_{nm}^{(u,v)}$ appearing in \eqref{H.2} as
\begin{equation}\label{M.5}
\begin{split}
\bar{W}_{nm}^{(u,v)}=\begin{bmatrix}
1&-u_n\\
\noalign{\medskip}
s_{n-1}&1+u_ns_{n-1}
\end{bmatrix}\bar{W}_{nm}^{(p,s)}-\begin{bmatrix}
0&0\\
\noalign{\medskip}
s_{n-1}&1
\end{bmatrix}\bar{W}_{n(m-2)}^{(p,s)}.
\end{split}
\end{equation}
From \eqref{M.5}, when $m>n+2$ we conclude that the first line of \eqref{M.1} holds if and only if the first line of \eqref{M.2} holds. We must analyze the case $m=n+2$ separately because of the appearance of $\bar{W}_{n(m-2)}^{(p,s)}$ in  \eqref{M.5}. Toward that goal we apply $\oint dz\,z^{n-1}/(2\pi i)$ on both sides of the first line of \eqref{Tx.5} with the potential pair $(p,s)$. We then get
\begin{equation}\label{M.11}
\begin{split}
\ds\frac{1}{2\pi i}\ds\oint dz\,\phi_{n}^{(p,s)}\,T_{\rm r}^{(p,s)}\,z^{n-1}=&\ds\frac{1}{2\pi i}\ds\oint dz\,\bar{\psi}_{n}^{(p,s)}\,z^{n-1}\\&+\ds\frac{1}{2\pi i}\ds\oint dz\,\psi_{n}^{(p,s)}\,R^{(p,s)}\,z^{n-1}.
\end{split}
\end{equation}
In this case, besides the bound-state poles of $T_{\rm r}^{(p,s)}$ in $0<|z|<1,$ also the point at $z=0$ contributes to the integral on left-hand side of \eqref{M.11}.
With the help of
\eqref{Tx.2},
\eqref{Tx.4},
\eqref{R.1},
 from \eqref{M.11} we get
\begin{equation}\label{M.12}
\begin{split}
\left[z^{n}\phi_n^{(p,s)}\right]\bigg |_{z=0}T_{\rm r}^{(p,s)}(0)-\sum_{l=n}^{\infty}&K_{nl}^{(p,s)}\,C^{(p,s)}A^{l+n-1}B\\&=\bar{K}_{nn}^{(p,s)}+\ds\sum_{l=n}^{\infty}K_{nl}^{(p,s)}\,\hat{R}_{l+n}^{(p,s)}.
\end{split}
\end{equation}
Using the analogs of
\eqref{x2.16}, \eqref{x2.17}, and \eqref{x2.30b} for the potential pair
$(p,s),$ we have
\begin{equation}\label{M.13}
\left[z^{n}\phi_n^{(p,s)}\right]\bigg |_{z=0}T_{\rm r}^{(p,s)}(0)=\ds\frac{D_\infty^{(p,s)}}{D_{n-1}^{(p,s)}}\begin{bmatrix}
1\\-s_{n-1}
\end{bmatrix}.
\end{equation}
With the help of \eqref{M.13} and the first equality in \eqref{Tx.3} we write \eqref{M.12} as
\begin{equation*}
\bar{K}_{nn}^{(p,s)}+\ds\sum_{l=n}^{\infty}K_{nl}^{(p,s)}\,\Omega_{l+n}^{(p,s)}=\ds\frac{D_\infty^{(p,s)}}{D_{n-1}^{(p,s)}}\begin{bmatrix}
1\\-s_{n-1}
\end{bmatrix},
\end{equation*}
which, with the help of \eqref{H.4}, is seen to be equivalent to
\begin{equation}\label{M.6}
\bar{W}_{nn}^{(p,s)}=\ds\frac{D_\infty^{(p,s)}}{D_{n-1}^{(p,s)}}\begin{bmatrix}
1\\-s_{n-1}
\end{bmatrix}.
\end{equation}
Because of \eqref{M.6} we see that the second term on the right-hand side of  \eqref{M.5} vanishes when $m=n+2$. Consequently, we conclude  that the first lines of \eqref{M.1} and  \eqref{M.2} hold simultaneously also when $m=n+2$.
\end{proof}

\section{The solution to the direct problem}
\label{sec:section5}

In this section, when the potentials $q_n$ and $r_n$ appearing in \eqref{1.1} are rapidly decaying and satisfy \eqref{1.1a}, we provide the solution to the direct scattering problem for \eqref{1.1}, i.e. the determination of the scattering coefficients and the bound-state information for \eqref{1.1} when the potential pair $(q,r)$ is given. The steps in the solution to the direct problem are outlined as follows:
\begin{enumerate}
	\item[\text{\rm(a)}] Using $(q_n,r_n)$ in \eqref{1.1}, we solve \eqref{1.1} with the asymptotic conditions \eqref{x2.3}--\eqref{x2.6} and uniquely construct the four Jost solutions $\psi_{n}^{(q,r)},$ $\phi_{n}^{(q,r)},$ $\bar{\psi}_{n}^{(q,r)},$ $\bar{\phi}_{n}^{(q,r)}$.

	\item[\text{\rm(b)}] We recover the scattering coefficients $T^{(q,r)},$ $\bar{T}^{(q,r)},$ $R^{(q,r)},$ $\bar{R}^{(q,r)},$ $L^{(q,r)},$ $\bar{L}^{(q,r)}$ by using the asymptotics in \eqref{x2.7}--\eqref{x2.9} of the already constructed four Jost solutions.
	
	\item[\text{\rm(c)}] Next, we determine the poles and their multiplicities for the transmission coefficient $T^{(q,r)}$ in $0<|z|<1$ and the poles and their multiplicities for the transmission coefficient $\bar{T}^{(q,r)}$ in $|z|>1$. Note that such poles occur in pairs. We use the notation that the poles of $T^{(q,r)}$ in $0<|z|<1$ occur at $z=\pm z_{j}$ and the multiplicity of the pole at each of
$z=z_j$ and $z=-z_j$
is $m_j$ for $1\le j \le N$. Thus, the  set $\{{\pm z}_{j}, m_j \}_{j=1}^N$ is uniquely determined from the poles of $T^{(q,r)}$ in $0<|z|<1$. In a similar way, we use $\bar{T}^{(q,r)}$ to determine its poles in $|z|>1$ and the multiplicity of each pole. We use the notation that the poles in $|z|>1$ occur when $z=\pm \bar{z}_{j}$ for $1\le j \le \bar{N}$ and the multiplicity of the pole at each of $z=\bar{z}_{j}$ and $z=-\bar{z}_{j}$ is $\bar{m}_{j}.$ Thus, we also obtain the set $\{{\pm \bar{z}}_{j}, \bar{m}_j \}_{j=1}^{\bar{N}}.$
	
	\item[\text{\rm(d)}] With the help of \eqref{x4.8} with $T^{(q,r)}$, we determine the residues $t_{jk}^{(q,r)}$ for $1 \le j \le N$ and $1 \le k \le m_j.$ Similarly, with the help of \eqref{x4.9} with $\bar{T}^{(q,r)},$ we obtain the residues $\bar{t}_{jk}^{(q,r)}$ for $1 \le j \le \bar{N}$ and $1 \le k \le \bar{m}_j.$
	
	\item[\text{\rm(e)}] Using \eqref{x4.103} for the potential pair $(q,r)$, we determine the dependency constants $\gamma_{jk}^{(q,r)}$ for $1 \le j \le N$ and $0 \le k \le m_j-1.$ Similarly, using \eqref{x4.107} with the potential pair $(q,r)$ we obtain the dependency constants $\bar{\gamma}_{jk}^{(q,r)}$ for $1 \le j \le \bar{N}$ and $0 \le k \le \bar{m}_j-1.$
	
	\item[\text{\rm(f)}] Using the constructed residues $t_{jk}^{(q,r)}$ and $\bar{t}_{jk}^{(q,r)}$ and the dependency constants
$\gamma_{jk}^{(q,r)}$ and  $\bar{\gamma}_{jk}^{(q,r)},$ from \eqref{T.2} we obtain the bound-state norming constants $c_{jk}^{(q,r)}$ and $\bar{c}_{jk}^{(q,r)}.$
Note that we also get the triplets
$(A,B,C^{(q,r)})$ and $(\bar{A},\bar{B},\bar{C}^{(q,r)})$ via
\eqref{TA.1}--\eqref{A4}.
\end{enumerate}

\section{The Marchenko system}
\label{sec:section6}

In this section we introduce the linear system \eqref{Z.0} resembling \eqref{Tx.1}, and we call it the Marchenko system for \eqref{1.1}. The system \eqref{Z.0} uses as input the scalar quantities $\Omega_k^{(q,r)}$ and $\bar{\Omega}_k^{(q,r)},$ which are defined as in \eqref{Tx.3}, and hence it is appropriate that we refer to \eqref{Z.0} as the Marchenko system for \eqref{1.1}. We also describe how to obtain the potentials
$q_n$ and $r_n$ from the solution to the Marchenko system \eqref{Z.0}.

The formulation of the Marchenko system for \eqref{1.1} is a significant step in the analysis of inverse problems related to integrable evolution equations. We expect that our method of formulating \eqref{Z.0} can be applied on some other linear systems, both in the continuous and discrete cases, for which a directly relevant Marchenko theory has not yet been established.

Even though a Marchenko system such as \eqref{Z.0} directly related to \eqref{1.1} is desirable, such a system does not yet seem to exist in the literature. We obtain \eqref{Z.0} by exploiting the connection between \eqref{1.1} and \eqref{1.2ab}. The only slight difference from the standard Marchenko theory is that the formulas for $q_n$ and $r_n$  are expressed in terms of the solution to the Marchenko system \eqref{Z.0} not as in \eqref{Tx.8} or \eqref{Tx.9} but as in \eqref{Z.7}
and \eqref{Z.7a}.

In the next theorem we present the derivation of the Marchenko system for \eqref{1.1}.

\begin{theorem}
	\label{thm:theorem 6.1}
	Assume that the potentials $q_n$ and $r_n$ appearing in \eqref{1.1} are rapidly decaying and satisfy \eqref{1.1a}. Then, the Marchenko system given in \eqref{Tx.1} holds with the relevant quantities listed in \eqref{Tx.2}--\eqref{Tx.4} all related to \eqref{1.1}, i.e. we have
	\begin{equation}\label{Z.0}
	\begin{split}
	&\begin{bmatrix}
	\bar{M}_{nm}^{(q,r)}&M_{nm}^{(q,r)}
	\end{bmatrix}+\begin{bmatrix}
	0&\bar{\Omega}_{n+m}^{(q,r)}\\
	\noalign{\medskip}
	\Omega_{n+m}^{(q,r)}&0
	\end{bmatrix}\\&\phantom{xxx}+\sum_{l=n+1}^{\infty}\begin{bmatrix}
	\bar{M}_{nl}^{(q,r)}&M_{nl}^{(q,r)}
	\end{bmatrix}\begin{bmatrix}
	0&\bar{\Omega}_{l+m}^{(q,r)}\\
	\noalign{\medskip}
	\Omega_{l+m}^{(q,r)}&0
	\end{bmatrix}=\begin{bmatrix}
	0&0\\
	\noalign{\medskip}
	0&0
	\end{bmatrix},\qquad m>n.
	\end{split}
	\end{equation}
	Here, the scalar quantities $\Omega_k^{(q,r)}$ and $\bar{\Omega}_k^{(q,r)}$ are related to the scattering data set for \eqref{1.1} as in \eqref{Tx.3}, i.e.
	\begin{equation}\label{F.1}
\begin{cases}
	\Omega_{k}^{(q,r)}:=\hat{R}_{k}^{(q,r)}+C^{(q,r)}A^{k-1}B,
\qquad
k \text{ \rm{even}},
\\
\noalign{\medskip} \bar{\Omega}_{k}^{(q,r)}:=\hat{\bar{R}}_{k}^{(q,r)}+\bar{C}^{(q,r)}(\bar{A})^{-k-1}\bar{B},
\qquad
k \text{ \rm{even}},
\\
\noalign{\medskip}
\Omega_{k}^{(q,r)}:=0,\quad \bar{\Omega}_{k}^{(q,r)}:=0,\qquad
k \text{ \rm{odd}},
\end{cases}
	\end{equation}
	with $\hat{R}_{k}^{(q,r)}$ and $\hat{\bar{R}}_{k}^{(q,r)}$ being related to the reflection coefficients $R^{(q,r)}$ and $\bar{R}^{(q,r)}$ as in \eqref{Tx.4} and the matrix triplets $(A, B, C^{(q,r)})$ and $(\bar{A}, \bar{B}, \bar{C}^{(q,r)})$ are as in \eqref{TA.1} and \eqref{TA.2}, respectively.
\end{theorem}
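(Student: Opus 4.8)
The plan is to prove \eqref{Z.0} by exploiting the connection between \eqref{1.1} and the Ablowitz--Ladik system \eqref{1.2ab}, rather than by a direct Fourier-transform derivation from \eqref{1.1}. The Marchenko system \eqref{Tx.1} is already available for the pair $(p,s)$: its derivation is exactly the one sketched in the paragraph preceding the theorem (applied to \eqref{1.2ab}), and its standing alongside \eqref{1.2aa} is secured by Theorem~\ref{thm:theorem x4.5}. I would record it in the compact form
\[
\mathcal{K}_{nm}^{(p,s)}+\Theta_{n+m}^{(p,s)}+\sum_{l=n+1}^{\infty}\mathcal{K}_{nl}^{(p,s)}\,\Theta_{l+m}^{(p,s)}=0,\qquad m>n,
\]
where $\mathcal{K}_{nl}^{(p,s)}:=[\,\bar K_{nl}^{(p,s)}\ \ K_{nl}^{(p,s)}\,]$ and $\Theta_k^{(p,s)}:=\begin{bmatrix}0 & \bar\Omega_k^{(p,s)}\\ \Omega_k^{(p,s)} & 0\end{bmatrix}$, and I would emphasize that by \eqref{x2.12aa} and \eqref{x2.14aa} the leading coefficient matrix $\mathcal{K}_{nn}^{(p,s)}$ is the identity, which is what places the bare kernel $\Theta_{n+m}^{(p,s)}$ outside the sum.

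First I would relate the two sets of Marchenko kernels. Combining the definitions \eqref{F.1}, \eqref{Tx.3}, \eqref{Tx.4} with the reflection-coefficient relations \eqref{x3.11}, \eqref{x3.11a} and the norming-constant relations \eqref{T.4}, and using that the bound-state matrices $A,B,\bar A,\bar B$ are common to all three systems (only $C,\bar C$ differ), one obtains $\hat R_k^{(p,s)}=(D_\infty^{(q,r)}/E_\infty^{(q,r)})\hat R_k^{(q,r)}$ and $C^{(p,s)}A^{k-1}B=(D_\infty^{(q,r)}/E_\infty^{(q,r)})C^{(q,r)}A^{k-1}B$, hence
\[
\Omega_k^{(q,r)}=\frac{E_\infty^{(q,r)}}{D_\infty^{(q,r)}}\,\Omega_k^{(p,s)},\qquad \bar\Omega_k^{(q,r)}=\frac{D_\infty^{(q,r)}}{E_\infty^{(q,r)}}\,\bar\Omega_k^{(p,s)}.
\]
The two scalings are reciprocals of one another, and this is the structural fact that drives the whole argument.

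The heart of the proof is then a constant diagonal conjugation. Setting $G:=\mathrm{diag}\!\left(E_\infty^{(q,r)},D_\infty^{(q,r)}\right)$, the reciprocal scalings above say precisely that the off-diagonal kernel block obeys $\Theta_k^{(p,s)}=G\,\Theta_k^{(q,r)}\,G^{-1}$, equivalently $G^{-1}\Theta_k^{(p,s)}G=\Theta_k^{(q,r)}$. Premultiplying the $(p,s)$ system by $G^{-1}$, postmultiplying by $G$, inserting $GG^{-1}$ between $\mathcal{K}_{nl}^{(p,s)}$ and $\Theta_{l+m}^{(p,s)}$ in each summand, and defining $\mathcal{M}_{nl}^{(q,r)}:=G^{-1}\mathcal{K}_{nl}^{(p,s)}G=:[\,\bar M_{nl}^{(q,r)}\ \ M_{nl}^{(q,r)}\,]$, I would obtain $\mathcal{M}_{nm}^{(q,r)}+\Theta_{n+m}^{(q,r)}+\sum_{l=n+1}^{\infty}\mathcal{M}_{nl}^{(q,r)}\Theta_{l+m}^{(q,r)}=0$, which is exactly \eqref{Z.0}. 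Because $G$ is independent of $n,l,m$ and of $z$, it passes freely through the contour integrals and the summations, and because $\mathcal{K}_{nn}^{(p,s)}=I$ we get $\mathcal{M}_{nn}^{(q,r)}=I$, so the bare kernel term is indeed $\Theta_{n+m}^{(q,r)}$ with the sum starting at $l=n+1$, matching the stated form of \eqref{Z.0}.

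I expect the main obstacle to be conceptual rather than computational: one must resist deriving \eqref{Z.0} directly from the scattering relations \eqref{Tx.5} written for $(q,r)$. That direct route, after Fourier transforming and collecting the bound-state residues through Definition~\ref{thm:theorem x4.3b} and the $\pm z_j$ pairing of Theorem~\ref{thm:theorem x4.3ab}, produces the boundary term $K_{nn}^{(q,r)}\,\Omega_{n+m}^{(q,r)}$ at $l=n$; since the leading Jost coefficients $K_{nn}^{(q,r)}$ and $\bar K_{nn}^{(q,r)}$ in \eqref{x5.2} and \eqref{x5.5} are \emph{not} the unit vectors $\begin{bmatrix}0\\1\end{bmatrix}$ and $\begin{bmatrix}1\\0\end{bmatrix}$, this term fails to collapse to the bare $\Theta_{n+m}^{(q,r)}$ demanded by \eqref{Z.0}. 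It is exactly the renormalization implicit in $\mathcal{M}_{nl}^{(q,r)}=G^{-1}\mathcal{K}_{nl}^{(p,s)}G$ (as opposed to using the raw $(q,r)$ coefficients) that absorbs this discrepancy and restores the standard Marchenko form, which is also why the potentials must be read off from the solution of \eqref{Z.0} via \eqref{Z.7} and \eqref{Z.7a} rather than via the naive formulas \eqref{Tx.8} and \eqref{Tx.9}. The remaining work — tracking the constant factors $D_\infty^{(q,r)},E_\infty^{(q,r)}$ and checking parities so that the odd-index kernels vanish consistently with \eqref{F.1} — is then routine bookkeeping.
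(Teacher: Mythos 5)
Your proposal is correct, and its overall strategy is the same as the paper's: both derive \eqref{Z.0} by starting from the Marchenko system \eqref{Tx.1} for the pair $(p,s)$ of \eqref{1.2ab} and transporting it to the $(q,r)$ setting via the kernel relations $\Omega_k^{(p,s)}=\tfrac{D_\infty^{(q,r)}}{E_\infty^{(q,r)}}\Omega_k^{(q,r)}$, $\bar{\Omega}_k^{(p,s)}=\tfrac{E_\infty^{(q,r)}}{D_\infty^{(q,r)}}\bar{\Omega}_k^{(q,r)}$ (the paper's \eqref{Z.5}), which you obtain from the same ingredients (\eqref{x3.11}, \eqref{x3.11a}, \eqref{T.4}, and the commonality of $A,\bar A,B,\bar B$). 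Where you genuinely diverge is in the execution of the final step. The paper first converts the $(p,s)$ system into one written in the $(q,r)$ Jost coefficients, using $\mathcal{K}_{nl}^{(q,r)}=\Lambda_n^{(q,r)}\,\mathcal{K}_{nl}^{(p,s)}\,G$ with $G=\mathrm{diag}(E_\infty^{(q,r)},D_\infty^{(q,r)})$ (its \eqref{Z.2}/\eqref{x6.15}), arrives at the intermediate system \eqref{Z.x1} whose leading coefficient $\mathcal{K}_{nn}^{(q,r)}$ is not the identity, and then normalizes by premultiplying with $\bigl(\mathcal{K}_{nn}^{(q,r)}\bigr)^{-1}$, defining $M$ via \eqref{x6.00}. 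You instead conjugate the $(p,s)$ system once by the constant diagonal matrix $G$, using $\Theta_k^{(p,s)}=G\,\Theta_k^{(q,r)}\,G^{-1}$, and set $\mathcal{M}_{nl}^{(q,r)}:=G^{-1}\mathcal{K}_{nl}^{(p,s)}G$. These two definitions coincide: since $\mathcal{K}_{nn}^{(p,s)}=I$ one has $\mathcal{K}_{nn}^{(q,r)}=\Lambda_n^{(q,r)}G$, so
\begin{equation*}
\bigl(\mathcal{K}_{nn}^{(q,r)}\bigr)^{-1}\mathcal{K}_{nm}^{(q,r)}
=G^{-1}\bigl(\Lambda_n^{(q,r)}\bigr)^{-1}\Lambda_n^{(q,r)}\,\mathcal{K}_{nm}^{(p,s)}\,G
=G^{-1}\mathcal{K}_{nm}^{(p,s)}G,
\end{equation*}
i.e.\ the $n$-dependent matrices $\Lambda_n^{(q,r)}$ cancel, which is exactly why your shortcut works. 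Your route is leaner: it bypasses \eqref{Z.x1}--\eqref{Z.x3} and never needs the invertibility discussion of $\Lambda_n^{(q,r)}$. What the paper's longer route buys is the explicit identity \eqref{x6.00} expressing $M_{nm}^{(q,r)},\bar M_{nm}^{(q,r)}$ in terms of the Jost coefficients of \eqref{1.1} itself; that identity is precisely what is invoked later (in \eqref{x6.17}--\eqref{x6.19}) to prove the recovery formulas \eqref{Z.7} and \eqref{Z.7a}, and it also substantiates the theorem's phrase that the quantities in \eqref{Z.0} are ``related to \eqref{1.1}.'' If you adopt your conjugation argument, you should still record the identity $\mathcal{M}_{nm}^{(q,r)}=\bigl(\mathcal{K}_{nn}^{(q,r)}\bigr)^{-1}\mathcal{K}_{nm}^{(q,r)}$ as a corollary (it is one line, as above), or Theorem~\ref{thm:theorem 6.2} loses its starting point. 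One small quibble: your closing remark suggests the direct derivation from \eqref{Tx.5} for $(q,r)$ must be ``resisted''; in fact the paper notes it is viable, because the troublesome boundary term at $l=n$ is removed by the very same normalization, so the direct route fails only if one omits that step.
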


\begin{proof}
A direct proof can be given by using the procedure described in \eqref{Tx.5}--\eqref{Tx.7}. We present an alternate proof, and this is done by
exploiting the connection between
\eqref{1.1} and \eqref{1.2ab}
when the potential pairs $(q,r)$ and $(p,s)$ are related as in
\eqref{x3.3} and \eqref{x3.4}.
Starting with the Marchenko system \eqref{Tx.1} with the relevant quantities all related to the potential pair $(p,s)$ of \eqref{1.2ab}, we transform that Marchenko system and the relevant quantities so that they are all related to the potential pair $(q,r)$ of \eqref{1.1}. From \eqref{x3.5a} and \eqref{x3.6a} we see that
\begin{equation}\label{Z.1}
\psi_{n}^{(p,s)}=\ds\frac{1}{D_\infty^{(q,r)}}\left(\Lambda_n^{(q,r)}\right)^{-1}\psi_{n}^{(q,r)},\quad \bar{\psi}_{n}^{(p,s)}=\ds\frac{1}{E_\infty^{(q,r)}}\left(\Lambda_n^{(q,r)}\right)^{-1}\bar{\psi}_{n}^{(q,r)},
\end{equation}
where $\Lambda_n^{(q,r)}$ is the matrix defined in \eqref{x4.110} and the quantities $D_\infty^{(q,r)}$ and $E_\infty^{(q,r)}$ are the scalar constants appearing in \eqref{x2.42} and \eqref{x2.43}, respectively. From \eqref{x4.111} we know that the matrix $\Lambda_n^{(q,r)}$ is invertible for all $n\in\mathbb{Z}.$ Thus, with the help of \eqref{Tx.2} and \eqref{Z.1} we conclude that
\begin{equation}\label{Z.2}
K_{nm}^{(p,s)}=\ds\frac{1}{D_\infty^{(q,r)}}\left(\Lambda_n^{(q,r)}\right)^{-1}K_{nm}^{(q,r)},\quad \bar{K}_{nm}^{(p,s)}=\ds\frac{1}{E_\infty^{(q,r)}}\left(\Lambda_n^{(q,r)}\right)^{-1}\bar{K}_{nm}^{(q,r)}.
\end{equation}
From the second equalities in \eqref{x3.11} and \eqref{x3.11a}, with the help of \eqref{Tx.4} we obtain
\begin{equation}\label{Z.3}
\hat{R}^{(p,s)}=\ds\frac{D_\infty^{(q,r)}}{E_\infty^{(q,r)}}\,\hat{R}^{(q,r)},\quad \hat{\bar{R}}^{(p,s)}=\ds\frac{E_\infty^{(q,r)}}{D_\infty^{(q,r)}}\,\hat{\bar{R}}^{(q,r)}.
\end{equation}
From the second equality in \eqref{x3.9a} it follows that the poles of $T_{\rm r}^{(p,s)}$ and $T^{(q,r)}$ coincide, and from the second equality in \eqref{x3.10a} we see that the poles of $\bar{T}_{\rm r}^{(p,s)}$ and $\bar{T}^{(q,r)}$ coincide. Hence, the matrices $A,$ $\bar{A},$ $B,$ $\bar{B}$ appearing in \eqref{Tx.3} are common to the potential pairs $(p,s)$ and $(q,r).$
Using \eqref{T.4} and \eqref{Z.3} in \eqref{Tx.3} we conclude that
\begin{equation}\label{Z.5}
\Omega_k^{(p,s)}=\ds\frac{D_\infty^{(q,r)}}{E_\infty^{(q,r)}}\,\Omega_k^{(q,r)},\quad \bar{\Omega}_k^{(p,s)}=\ds\frac{E_\infty^{(q,r)}}{D_\infty^{(q,r)}}\,\bar{\Omega}_k^{(q,r)}.
\end{equation}
With the help of the first equalities in
\eqref{x2.12aa} and \eqref{x2.14aa},
we observe that the Marchenko system \eqref{Tx.1} related to the potential pair $(p,s)$ is equivalent to the system given in \eqref{M.2}. Using \eqref{Z.2} and \eqref{Z.5} we transform \eqref{M.2} into
\begin{equation}\label{Z.6}
\begin{cases}
\bar{K}_{nm}^{(q,r)}+\ds\sum_{l=n}^{\infty}K_{nl}^{(q,r)}\,\Omega_{l+m}^{(q,r)}=0,\qquad m>n,\\
K_{nm}^{(q,r)}+\ds\sum_{l=n}^{\infty}\bar{K}_{nl}^{(q,r)}\,\bar{\Omega}_{l+m}^{(q,r)}=0,\qquad m>n.
\end{cases}
\end{equation}
The system in \eqref{Z.6} can be written in the matrix form as
\begin{equation}\label{Z.x1}
\begin{split}
&\begin{bmatrix}
\bar{K}_{nm}^{(q,r)}&K_{nm}^{(q,r)}
\end{bmatrix}+\begin{bmatrix}
\bar{K}_{nn}^{(q,r)}&K_{nn}^{(q,r)}
\end{bmatrix}\begin{bmatrix}
0&\bar{\Omega}_{n+m}^{(q,r)}\\
\noalign{\medskip}
\Omega_{n+m}^{(q,r)}&0
\end{bmatrix}\\&\phantom{xxx}+\sum_{l=n+1}^{\infty}\begin{bmatrix}
\bar{K}_{nl}^{(q,r)}&K_{nl}^{(q,r)}
\end{bmatrix}\begin{bmatrix}
0&\bar{\Omega}_{l+m}^{(q,r)}\\
\noalign{\medskip}
\Omega_{l+m}^{(q,r)}&0
\end{bmatrix}=\begin{bmatrix}
0&0\\
\noalign{\medskip}
0&0
\end{bmatrix},\qquad m>n.
\end{split}
\end{equation}
The matrix $[
\bar{K}_{nn}^{(p,s)}\quad K_{nn}^{(p,s)}
],$ as seen from the first equalities in \eqref{x2.12aa} and \eqref{x2.14aa}, is equal to the $2\times 2$ identity matrix, and hence the second term on the left-hand side of \eqref{Tx.1} does not contain the matrix $[
\bar{K}_{nn}^{(p,s)}\quad K_{nn}^{(p,s)}
].$ However, the matrix $[
\bar{K}_{nn}^{(q,r)}\quad K_{nn}^{(q,r)}
]$ appearing in \eqref{Z.x1} is not equal to the identity matrix. From \eqref{x5.2} and \eqref{x5.5} it follows that
\begin{equation}\label{Z.x2}
\begin{bmatrix}
\bar{K}_{nn}^{(q,r)}&K_{nn}^{(q,r)}
\end{bmatrix}=\Lambda_n^{(q,r)}\begin{bmatrix}
E_\infty^{(q,r)}&0\\
\noalign{\medskip}
0&D_\infty^{(q,r)}
\end{bmatrix},
\end{equation}
where we recall that $\Lambda_n^{(q,r)}$ is the invertible matrix appearing in \eqref{x4.110}. Hence, the matrix on the left-hand side of \eqref{Z.x2} is invertible and we have
\begin{equation}
\label{Z.x3}
\begin{bmatrix}
\bar{K}_{nn}^{(q,r)}&K_{nn}^{(q,r)}
\end{bmatrix}^{-1}=\begin{bmatrix}
\ds\frac{1}{E_\infty^{(q,r)}}&0\\
\noalign{\medskip}
0&\ds\frac{1}{D_\infty^{(q,r)}}
\end{bmatrix}\left(\Lambda_n^{(q,r)}\right)^{-1}.
\end{equation}
Premultiplying both sides of \eqref{Z.x1} by $[
\bar{K}_{nn}^{(q,r)}\quad K_{nn}^{(q,r)}
]^{-1},$ we obtain \eqref{Z.0}, where we have defined
\begin{equation}\label{x6.00}
\begin{bmatrix}
\bar{M}_{nm}^{(q,r)}&M_{nm}^{(q,r)}
\end{bmatrix}:=\begin{bmatrix}
\bar{K}_{nn}^{(q,r)}&K_{nn}^{(q,r)}
\end{bmatrix}^{-1}\begin{bmatrix}
\bar{K}_{nm}^{(q,r)}&K_{nm}^{(q,r)}
\end{bmatrix}.
\end{equation}
\end{proof}

Note that \eqref{x6.00} implies that
$M_{nm}^{(q,r)}=0$ and
$\bar{M}_{nm}^{(q,r)}=0$
when $n+m$ is odd because
we have $K_{nm}^{(q,r)}=0$ and
$\bar{K}_{nm}^{(q,r)}=0$
when $n+m$ is odd as stated in Theorem~\ref{thm:theorem x3.4a}.

We can uncouple the Marchenko system \eqref{Z.0} as in \eqref{ta002} and \eqref{ta1001}.
Hence, without a proof we state the result in the next corollary.

\begin{corollary}
\label{thm:theorem 6.1a}
Assume that the potentials $q_n$ and $r_n$ appearing in \eqref{1.1} are rapidly decaying and satisfy \eqref{1.1a}. Then, the Marchenko system  \eqref{Z.0} is equivalent to the uncoupled system, for $m>n$, given by
\begin{equation}\label{6.1a}
\begin{cases}
\begin{bmatrix} M_{nm}^{(q,r)}\end{bmatrix}_1
+\bar{\Omega}_{n+m}^{(q,r)}-
\ds\sum_{l=n+1}^{\infty}
\ds\sum_{j=n+1}^{\infty}\begin{bmatrix} M_{nj}^{(q,r)}\end{bmatrix}_1\,
\Omega_{j+l}^{(q,r)}\,\bar{\Omega}_{l+m}^{(q,r)}=0,\\
\noalign{\medskip}
\begin{bmatrix}\bar{M}_{nm}^{(q,r)}\end{bmatrix}_2
+\Omega_{n+m}^{(q,r)}-
\ds\sum_{l=n+1}^{\infty}
\ds\sum_{j=n+1}^{\infty}\begin{bmatrix}
\bar{M}_{nj}^{(q,r)}\end{bmatrix}_2\,\bar{\Omega}_{j+l}^{(q,r)}
\,
\Omega_{l+m}^{(q,r)}=0,
\end{cases}
\end{equation}
and with $[\bar{M}_{nm}^{(q,r)}]_1$ and $[M_{nm}^{(q,r)}]_2$ obtained from the solution to \eqref{6.1a} as
\begin{equation*}
\begin{cases}
\begin{bmatrix}\bar{M}_{nm}^{(q,r)}\end{bmatrix}_1
=-\ds\sum_{l=n+1}^{\infty}
\begin{bmatrix}
M_{nl}^{(q,r)}\end{bmatrix}_1\,
\Omega_{l+m}^{(q,r)}
,\\
\noalign{\medskip}
\begin{bmatrix} M_{nm}^{(q,r)}\end{bmatrix}_2
=-\ds\sum_{l=n+1}^{\infty}
\begin{bmatrix} \bar{M}_{nl}^{(q,r)}\end{bmatrix}_2\,
\bar{\Omega}_{l+m}^{(q,r)}
,
\end{cases}
\end{equation*}
where we recall that
$[\cdot]_1$ and $[\cdot]_2$ denote the first and second components of the relevant column vectors, as indicated in \eqref{ta001}.
\end{corollary}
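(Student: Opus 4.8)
The plan is to mimic verbatim the uncoupling that took the generic Marchenko system \eqref{Tx.1} into the scalar pair \eqref{ta002}--\eqref{ta1001}, since \eqref{Z.0} has exactly the same algebraic shape, with the column vectors $\bar{M}_{nm}^{(q,r)},M_{nm}^{(q,r)}$ playing the roles of $\bar{K}_{nm},K_{nm}$ and the kernels $\Omega_k^{(q,r)},\bar{\Omega}_k^{(q,r)}$ playing the roles of $\Omega_k,\bar{\Omega}_k$. First I would perform the $2\times 2$ matrix multiplication on the left-hand side of \eqref{Z.0} entry by entry, splitting each column vector into its two scalar components via the notation \eqref{ta001}. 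Because the kernel matrix $\begin{bmatrix}0&\bar{\Omega}_{l+m}^{(q,r)}\\ \Omega_{l+m}^{(q,r)}&0\end{bmatrix}$ is anti-diagonal, the product $\begin{bmatrix}\bar{M}_{nl}^{(q,r)}&M_{nl}^{(q,r)}\end{bmatrix}$ times this kernel has first column built from $[M_{nl}^{(q,r)}]_1\,\Omega_{l+m}^{(q,r)}$ and $[M_{nl}^{(q,r)}]_2\,\Omega_{l+m}^{(q,r)}$ and second column built from $[\bar{M}_{nl}^{(q,r)}]_1\,\bar{\Omega}_{l+m}^{(q,r)}$ and $[\bar{M}_{nl}^{(q,r)}]_2\,\bar{\Omega}_{l+m}^{(q,r)}$; this is the step that actually disentangles the two columns.

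Equating the four entries to the zero matrix then yields four scalar relations. The $(1,1)$ and $(2,2)$ entries carry no inhomogeneous term and immediately deliver the two auxiliary identities stated after \eqref{6.1a}, namely $[\bar{M}_{nm}^{(q,r)}]_1=-\sum_{l>n}[M_{nl}^{(q,r)}]_1\,\Omega_{l+m}^{(q,r)}$ and $[M_{nm}^{(q,r)}]_2=-\sum_{l>n}[\bar{M}_{nl}^{(q,r)}]_2\,\bar{\Omega}_{l+m}^{(q,r)}$. The $(1,2)$ and $(2,1)$ entries retain the inhomogeneities $\bar{\Omega}_{n+m}^{(q,r)}$ and $\Omega_{n+m}^{(q,r)}$ and still involve the ``cross'' components $[\bar{M}_{nl}^{(q,r)}]_1$ and $[M_{nl}^{(q,r)}]_2$. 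The decisive step is to eliminate those cross components: I would read the two auxiliary identities at the summation index $l$ in place of $m$ and substitute them into the $(1,2)$ and $(2,1)$ entries. This turns the single sum $\sum_{l>n}[\bar{M}_{nl}^{(q,r)}]_1\,\bar{\Omega}_{l+m}^{(q,r)}$ into the iterated sum $-\sum_{l>n}\sum_{j>n}[M_{nj}^{(q,r)}]_1\,\Omega_{j+l}^{(q,r)}\,\bar{\Omega}_{l+m}^{(q,r)}$, reproducing precisely the first line of \eqref{6.1a}, and analogously produces the second line. The converse is immediate: given a solution of \eqref{6.1a}, one \emph{defines} $[\bar{M}_{nm}^{(q,r)}]_1$ and $[M_{nm}^{(q,r)}]_2$ by the auxiliary identities and then reverses the grouping to recover all four matrix entries, hence \eqref{Z.0}.

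I expect the only point requiring care, as opposed to pure bookkeeping, to be the legitimacy of substituting a convergent series term by term and regrouping the result as the iterated double sum in \eqref{6.1a}. I would justify this by absolute convergence: the rapid decay of $q_n$ and $r_n$, together with \eqref{1.1a}, forces rapid decay of $\Omega_k^{(q,r)}$ and $\bar{\Omega}_k^{(q,r)}$ in $k$ and of the solution components $[M_{nm}^{(q,r)}]_i$, $[\bar{M}_{nm}^{(q,r)}]_i$ in $m$, so each inner sum defines a finite quantity and the iterated sums converge. A parallel minor remark, exactly as noted below \eqref{ta1001}, is that $M_{nm}^{(q,r)}=\bar{M}_{nm}^{(q,r)}=0$ when $n+m$ is odd and $\Omega_k^{(q,r)}=\bar{\Omega}_k^{(q,r)}=0$ when $k$ is odd, so every sum effectively ranges over even-parity indices and the lower limit $n+1$ may be used interchangeably with $n+2$ without altering any value. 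Since \eqref{Z.0} is itself the specialization of \eqref{Tx.1} already established in Theorem~\ref{thm:theorem 6.1}, and the passage to \eqref{ta002}--\eqref{ta1001} has been carried out in full there, this corollary is genuinely a transcription of that argument with the superscript $(q,r)$ attached, which is why no separate obstacle arises.
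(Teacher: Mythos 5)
Your proposal is correct and is essentially the paper's own route: the paper states this corollary without proof, remarking only that \eqref{Z.0} uncouples ``as in \eqref{ta002} and \eqref{ta1001},'' and your entry-by-entry expansion of the $2\times 2$ system---reading the homogeneous $(1,1)$ and $(2,2)$ entries as the two auxiliary identities and substituting them (with $m$ replaced by the summation index) into the inhomogeneous $(1,2)$ and $(2,1)$ entries---is exactly that uncoupling, with the converse handled by defining the cross components through the auxiliary identities. Your added remarks on absolute convergence and on the even-parity vanishing of $M_{nm}^{(q,r)},$ $\bar{M}_{nm}^{(q,r)},$ $\Omega_k^{(q,r)},$ $\bar{\Omega}_k^{(q,r)}$ are consistent with the paper and do not alter the argument.
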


In the next theorem we describe the recovery of $q_n$ and $r_n$ from the solution to the Marchenko system \eqref{Z.0}.

\begin{theorem}
\label{thm:theorem 6.2}
Assume that the potentials $q_n$ and $r_n$ appearing in \eqref{1.1} are rapidly decaying and satisfy \eqref{1.1a}. Then, $q_n$ and $r_n$ are recovered from the solution to the Marchenko system given in \eqref{Z.0} via
\begin{equation}\label{Z.7}
q_n=\ds\frac{\ds\sum_{l=n}^{\infty}
\begin{bmatrix}M_{nl}^{(q,r)}\end{bmatrix}_1\,
\ds\sum_{k=n}^{\infty}
\begin{bmatrix}M_{nk}^{(q,r)}\end{bmatrix}_2}
{\ds\sum_{l=n}^{\infty}
\begin{bmatrix}\bar{M}_{nl}^{(q,r)}\end{bmatrix}_1\,
\ds\sum_{k=n}^{\infty}
\begin{bmatrix}M_{nk}^{(q,r)}\end{bmatrix}_2
-\ds\sum_{l=n}^{\infty}
\begin{bmatrix}M_{nl}^{(q,r)}\end{bmatrix}_1\,
\ds\sum_{k=n}^{\infty}
\begin{bmatrix}
\bar{M}_{nk}^{(q,r)}\end{bmatrix}_2},
\end{equation}
\begin{equation}\label{Z.7a}
r_n=\ds\frac{\ds\sum_{l=n-1}^{\infty}
\begin{bmatrix}\bar{M}_{(n-1)l}^{(q,r)}\end{bmatrix}_2}
{\ds \sum_{l=n-1}^{\infty}
\begin{bmatrix}M_{(n-1)l}^{(q,r)}\end{bmatrix}_2}
-\ds\frac{\ds\sum_{l=n}^{\infty}
\begin{bmatrix}\bar{M}_{nl}^{(q,r)}\end{bmatrix}_2}
{\ds \sum_{l=n}^{\infty}
\begin{bmatrix}
M_{nl}^{(q,r)}\end{bmatrix}_2},
\end{equation}
where $[\cdot]_1$ and $[\cdot]_2$ denote the first and second components of the relevant column vectors, as indicated in \eqref{ta001}.
\end{theorem}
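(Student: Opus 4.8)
The plan is to reduce the recovery formulas \eqref{Z.7} and \eqref{Z.7a} to the formulas \eqref{6.6hh} and \eqref{6.8} of Theorem~\ref{prop:proposition6.1ff}, which already express $q_n$ and $r_n$ through the values at $z=1$ of the Jost solutions $\psi_n^{(p,s)}$ and $\bar{\psi}_n^{(p,s)}$ of \eqref{1.2ab}. The bridge is the defining relation \eqref{x6.00} together with the inverse computed in \eqref{Z.x3} and the transformation identities \eqref{Z.2}.

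First I would convert \eqref{x6.00} into explicit componentwise relations between the Marchenko solution for $(q,r)$ and the Jost-solution coefficients for $(p,s)$. Substituting \eqref{Z.x3} into \eqref{x6.00} and then replacing $(\Lambda_n^{(q,r)})^{-1}K_{nm}^{(q,r)}$ by $D_\infty^{(q,r)}K_{nm}^{(p,s)}$ and $(\Lambda_n^{(q,r)})^{-1}\bar{K}_{nm}^{(q,r)}$ by $E_\infty^{(q,r)}\bar{K}_{nm}^{(p,s)}$ (both read off from \eqref{Z.2}), the scalar factors are absorbed into the diagonal prefactor $\mathrm{diag}(1/E_\infty^{(q,r)},1/D_\infty^{(q,r)})$ of \eqref{Z.x3}, yielding
$$M_{nm}^{(q,r)}=\begin{bmatrix}\ds\frac{D_\infty^{(q,r)}}{E_\infty^{(q,r)}}&0\\\noalign{\medskip}0&1\end{bmatrix}K_{nm}^{(p,s)},\qquad \bar{M}_{nm}^{(q,r)}=\begin{bmatrix}1&0\\\noalign{\medskip}0&\ds\frac{E_\infty^{(q,r)}}{D_\infty^{(q,r)}}\end{bmatrix}\bar{K}_{nm}^{(p,s)}.$$
In components these say $[M_{nm}^{(q,r)}]_1=(D_\infty^{(q,r)}/E_\infty^{(q,r)})[K_{nm}^{(p,s)}]_1$, $[M_{nm}^{(q,r)}]_2=[K_{nm}^{(p,s)}]_2$, $[\bar{M}_{nm}^{(q,r)}]_1=[\bar{K}_{nm}^{(p,s)}]_1$, and $[\bar{M}_{nm}^{(q,r)}]_2=(E_\infty^{(q,r)}/D_\infty^{(q,r)})[\bar{K}_{nm}^{(p,s)}]_2$.

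Next I would sum each relation over $m$ from $n$ to $\infty$; the termwise summation is justified by the uniform convergence at $z=1$ of the expansions \eqref{x2.11aa} and \eqref{x2.13aa} under the rapid-decay hypothesis. By \eqref{6.1cccc} the resulting sums $\sum_{l\ge n}K_{nl}^{(p,s)}$ and $\sum_{l\ge n}\bar{K}_{nl}^{(p,s)}$ equal $\psi_n^{(p,s)}(1)$ and $\bar{\psi}_n^{(p,s)}(1)$, so that
$$\ds\sum_{l=n}^{\infty}[M_{nl}^{(q,r)}]_1=\ds\frac{D_\infty^{(q,r)}}{E_\infty^{(q,r)}}\,[\psi_n^{(p,s)}(1)]_1,\qquad \ds\sum_{l=n}^{\infty}[M_{nl}^{(q,r)}]_2=[\psi_n^{(p,s)}(1)]_2,$$
together with $\sum_{l\ge n}[\bar{M}_{nl}^{(q,r)}]_1=[\bar{\psi}_n^{(p,s)}(1)]_1$ and $\sum_{l\ge n}[\bar{M}_{nl}^{(q,r)}]_2=(E_\infty^{(q,r)}/D_\infty^{(q,r)})[\bar{\psi}_n^{(p,s)}(1)]_2$.

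Finally I would substitute these four summed identities into the right sides of \eqref{Z.7} and \eqref{Z.7a}. In \eqref{Z.7} the lone factor $D_\infty^{(q,r)}/E_\infty^{(q,r)}$ stays in the numerator, while in the denominator the two cross products each carry a factor $D_\infty^{(q,r)}/E_\infty^{(q,r)}$ and $E_\infty^{(q,r)}/D_\infty^{(q,r)}$ that multiply to $1$, so the expression collapses to \eqref{6.6hh}; in \eqref{Z.7a} each ratio of second components produces exactly one factor $E_\infty^{(q,r)}/D_\infty^{(q,r)}$, so the difference collapses to \eqref{6.8}. Since \eqref{6.6hh} and \eqref{6.8} are established in Theorem~\ref{prop:proposition6.1ff}, the recovery follows. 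I expect the main obstacle to be purely bookkeeping: keeping track of the $D_\infty^{(q,r)}/E_\infty^{(q,r)}$ factors so that they cancel in exactly the right places, and checking that the contour-integral definition \eqref{Tx.2} commutes with the summation over $m$ at $z=1$.
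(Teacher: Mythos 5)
Your proposal is correct and follows essentially the same route as the paper's proof: both use \eqref{x6.00}, \eqref{Z.x3}, and \eqref{Z.2} to relate the Marchenko solution for $(q,r)$ componentwise to $K_{nm}^{(p,s)}$ and $\bar{K}_{nm}^{(p,s)}$, then sum over $m$ and identify the sums with $\psi_n^{(p,s)}(1)$ and $\bar{\psi}_n^{(p,s)}(1)$ via \eqref{6.1cccc}. The only difference is organizational: where you close the argument by invoking \eqref{6.6hh} and \eqref{6.8} from Theorem~\ref{prop:proposition6.1ff}, the paper instead uses \eqref{6.1cc} to write the summed identities as the explicit matrix \eqref{x6.19} and reads off $q_n$ and $r_n$ from its entries, which is the same computation in a different wrapper.
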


\begin{proof}
Using \eqref{x3.5a}, \eqref{x3.6a}, and \eqref{Tx.2} we get	
\begin{equation}\label{x6.15}
\begin{bmatrix}
\bar{K}_{nl}^{(q,r)}&K_{nl}^{(q,r)}
\end{bmatrix}=\Lambda_n^{(q,r)}\begin{bmatrix}
\bar{K}_{nl}^{(p,s)}&K_{nl}^{(p,s)}
\end{bmatrix}\begin{bmatrix}
E_\infty^{(q,r)}&0\\
\noalign{\medskip}
0&D_\infty^{(q,r)}
\end{bmatrix},
\end{equation}
where $\Lambda_n^{(q,r)}$ is the invertible matrix in \eqref{x4.110}; $D_\infty^{(q,r)}$ and $E_\infty^{(q,r)}$ are the scalar constants appearing in \eqref{x2.42} and \eqref{x2.43}, respectively; $K_{nl}^{(q,r)}$ and $\bar{K}_{nl}^{(q,r)}$ are the column vectors in \eqref{x5.1} and \eqref{x5.4}, respectively; $K_{nl}^{(p,s)}$ and $\bar{K}_{nl}^{(p,s)}$ are the column vectors in \eqref{x2.11aa} and \eqref{x2.13aa}, respectively.
With the help of \eqref{Z.x3}, from \eqref{x6.15} for $l\ge n$ we get
\begin{equation}\label{x6.17}
\begin{split}
&\begin{bmatrix}
\bar{K}_{nn}^{(q,r)}&K_{nn}^{(q,r)}
\end{bmatrix}^{-1}\begin{bmatrix}
\bar{K}_{nl}^{(q,r)}&K_{nl}^{(q,r)}
\end{bmatrix}
\\&\phantom{xxxxxxxxx}=\begin{bmatrix}
E_\infty^{(q,r)}&0\\
\noalign{\medskip}
0&D_\infty^{(q,r)}
\end{bmatrix}^{-1}\begin{bmatrix}
\bar{K}_{nl}^{(p,s)}&K_{nl}^{(p,s)}
\end{bmatrix}\begin{bmatrix}
E_\infty^{(q,r)}&0\\
\noalign{\medskip}
0&D_\infty^{(q,r)}
\end{bmatrix}.
\end{split}
\end{equation}
We note that the left-hand side of \eqref{x6.17} is equal to the left-hand side of \eqref{x6.00}. Using the summation with $l\ge n,$ from \eqref{x6.17} we obtain
\begin{equation}\label{x6.18}
\begin{split}
&\ds\sum_{l=n}^{\infty}\begin{bmatrix}
\bar{M}_{nl}^{(q,r)}&M_{nl}^{(q,r)}
\end{bmatrix}
\\
&\phantom{xxxxxxx}=\begin{bmatrix}
E_\infty^{(q,r)}&0\\
\noalign{\medskip}
0&D_\infty^{(q,r)}
\end{bmatrix}^{-1}\sum_{l=n}^{\infty}\begin{bmatrix}
\bar{K}_{nl}^{(p,s)}&K_{nl}^{(p,s)}
\end{bmatrix}\begin{bmatrix}
E_\infty^{(q,r)}&0\\
\noalign{\medskip}
0&D_\infty^{(q,r)}
\end{bmatrix}.
\end{split}
\end{equation}
We remark that the summation on the right-hand side in \eqref{x6.18} is related
to $[
\bar{\psi}_n^{(p,s)}\quad \psi_n^{(p,s)}
]$ evaluated at $z=1$, as seen from \eqref{x2.11aa} and
\eqref{x2.13aa}. With the help of \eqref{6.1cc} and \eqref{6.1cccc}
we express the right-hand side of \eqref{x6.18} in terms of the matrix
on the right-hand side of \eqref{6.1cc}, and we get
\begin{equation}\label{x6.19}
\ds\sum_{l=n}^{\infty}\begin{bmatrix}
\bar{M}_{nl}^{(q,r)}&M_{nl}^{(q,r)}
\end{bmatrix}=\begin{bmatrix}
\ds\frac{E_{n-1}^{(q,r)}}{E_\infty^{(q,r)}}\,\left(1+q_n\,
\sum_{j=n+1}^{\infty} r_j\right)&q_n\,\ds\frac{E_{n-1}^{(q,r)}}{E_\infty^{(q,r)}}\\
\noalign{\medskip}
\ds\frac{D_n^{(q,r)}}{D_\infty^{(q,r)}}\,\sum_{j=n+1}^{\infty} r_j
&\ds\frac{D_n^{(q,r)}}{D_\infty^{(q,r)}}
\end{bmatrix}.
\end{equation}
Using the notation of \eqref{ta001}, from the $(2,1)$ and $(2,2)$
entries in \eqref{x6.19} we obtain \eqref{Z.7a}.
Then, from the $(1,1)$ and $(1,2)$ entries in \eqref{x6.19}
and using \eqref{Z.7a}, we obtain \eqref{Z.7}.
\end{proof}

\section{The alternate Marchenko system}
\label{sec:section7}

In this section we derive the pair of scalar Marchenko equations given in \eqref{6.22d} and \eqref{6.23}, which resembles the uncoupled Marchenko system given in \eqref{ta002}. We refer to the uncoupled system composed of  \eqref{6.22d} and \eqref{6.23} as the alternate Marchenko system. Such a system is the discrete analog of the Marchenko system given in $(6.22)$ and $(6.23)$ of \cite{AE19} in the continuous case.  In this section we also show that the potentials $q_n$ and $r_n$
are recovered as in \eqref{6.21} and \eqref{6.22} from the solution to the alternate Marchenko system.

We remark that the uncoupled alternate Marchenko equation \eqref{6.22d} is closely related to the system \eqref{1.2aa} with the potential pair $(u,v),$ and hence we use the superscript $(u,v)$ in the quantities appearing in \eqref{6.22d}. Similarly, the uncoupled alternate Marchenko equation \eqref{6.23} involves the quantities closely related to \eqref{1.2ab} with the potential pair $(p,s),$ and hence we use the superscript $(p,s)$ in the quantities appearing in \eqref{6.23}.
Our alternate Marchenko equations \eqref{6.22d} and \eqref{6.23}
and our recovery formulas
\eqref{6.21} and \eqref{6.22} are closely related to (4.12c), (4.12d),
(4.21a), and (4.21b), respectively, of \cite{tsuchida2010new}.
We remark that Tsuchida in \cite{tsuchida2010new} assumes that the bound states are
all simple, and we also mention that, contrary to our own
 \eqref{6.22d} and \eqref{6.23}, Tsuchida's
(4.12c) and (4.12d) in \cite{tsuchida2010new} lack the appropriate symmetry
for a standard Marchenko system apparent in \eqref{ta002} in the discrete case.

Let us make a comparison between the alternate Marchenko system used in this section and the Marchenko system introduced in Section~\ref{sec:section6}. The Marchenko system \eqref{Z.0} has the same
standard form used in other inverse problems arising in applications, but the recovery of the potentials $q_n$ and $r_n$ from the solution to \eqref{Z.0} is not standard, i.e. the recovery is not of the form given in \eqref{Tx.8} or \eqref{Tx.9}. On the other hand, certain terms in the alternate Marchenko system
 involve some discrete spacial derivatives and hence the alternate Marchenko system slightly differs from the standard Marchenko system \eqref{Tx.1}. However, the recovery of the potentials $q_n$ and $r_n$ is similar to recovery described in \eqref{Tx.8} and \eqref{Tx.9}, which are used as the standard recovery formulas for other standard Marchenko systems.

Inspired by \eqref{6.6} and \eqref{6.8} we define the scalar quantities  $\mathscr{K}_{nm}^{(u,v)}$ and $\bar{\mathscr{K}}_{nm}^{(p,s)},$ respectively, as
\begin{equation}\label{6.15}
\mathscr{K}_{nm}^{(u,v)}:=\ds\frac{\ds\sum_{l=m}^{\infty}
\begin{bmatrix}K_{nl}^{(u,v)}\end{bmatrix}_1 }
{\ds\sum_{l=n}^{\infty}
\begin{bmatrix}
\bar{K}_{nl}^{(u,v)}\end{bmatrix}_1 },\qquad m\ge n,
\end{equation}
\begin{equation}\label{6.17}
\bar{\mathscr{K}}_{nm}^{(p,s)}:=\ds\frac{\ds\sum_{l=m}^{\infty}
\begin{bmatrix}\bar{K}_{nl}^{(p,s)}\end{bmatrix}_2}
{\ds\sum_{l=n}^{\infty}
\begin{bmatrix}K_{nl}^{(p,s)}\end{bmatrix}_2},\qquad m\ge n,
\end{equation}
where we use the notation of \eqref{ta001} and recall that $K_{nl}^{(u,v)}$ and $\bar{K}_{nl}^{(u,v)}$ satisfy (f) and (g) of Theorem~\ref{thm:theorem x2.1}, and similarly, $K_{nl}^{(p,s)}$ and $\bar{K}_{nl}^{(p,s)}$ satisfy (a) and (b) of Corollary~\ref{thm:theorem x2.1a}.
We remark that the $m$-dependence of $\mathscr{K}_{nm}^{(u,v)}$ and $\bar{\mathscr{K}}_{nm}^{(p,s)}$ occurs only in the numerators in \eqref{6.15} and \eqref{6.17}. When $m=n,$ with the help of \eqref{6.1bbbb}, \eqref{6.1cccc}, \eqref{6.15}, and \eqref{6.17} we obtain
\begin{equation}\label{6.15c}
\mathscr{K}_{nn}^{(u,v)}=\ds\frac{\ds\sum_{l=n}^{\infty}
\begin{bmatrix}K_{nl}^{(u,v)}\end{bmatrix}_1}
{\ds\sum_{l=n}^{\infty}
\begin{bmatrix}\bar{K}_{nl}^{(u,v)}\end{bmatrix}_1}
=\ds\frac{\begin{bmatrix}\psi_{n}^{(u,v)}(1)\end{bmatrix}_1}
{\begin{bmatrix}\bar{\psi}_{n}^{(u,v)}(1)\end{bmatrix}_1},
\end{equation}
\begin{equation}\label{6.17c}
\bar{\mathscr{K}}_{nn}^{(p,s)}=\ds\frac{\ds\sum_{l=n}^{\infty}
\begin{bmatrix}\bar{K}_{nl}^{(p,s)}\end{bmatrix}_2}
{\ds\sum_{l=n}^{\infty}
\begin{bmatrix}K_{nl}^{(p,s)}\end{bmatrix}_2 }
=\ds\frac{\begin{bmatrix}\bar{\psi}_{n}^{(p,s)}(1)\end{bmatrix}_2}
{\begin{bmatrix}\psi_{n}^{(p,s)}(1)\end{bmatrix}_2}.
\end{equation}
Comparing \eqref{6.6}, \eqref{6.8}, \eqref{6.15c}, and \eqref{6.17c} we observe that the potentials $q_n$ and $r_n$ are recovered from $\mathscr{K}_{nm}^{(u,v)}$ and $\bar{\mathscr{K}}_{nm}^{(p,s)},$ respectively, as
\begin{equation}\label{6.21}
q_n=\ds\frac{D_\infty^{(q,r)}}{E_\infty^{(q,r)}}
\left(\mathscr{K}_{nn}^{(u,v)}-\mathscr{K}_{(n+1)(n+1)}^{(u,v)}\right),
\end{equation}
\begin{equation}\label{6.22}
r_n=\ds\frac{E_\infty^{(q,r)}}{D_\infty^{(q,r)}}
\left(\bar{\mathscr{K}}_{(n-1)(n-1)}^{(p,s)}-\bar{\mathscr{K}}_{nn}^{(p,s)}\right),
\end{equation}
where we recall that $D_\infty^{(q,r)}$ and $E_\infty^{(q,r)}$ are the constants
appearing in \eqref{x2.42} and \eqref{x2.43}, respectively.

In the next theorem we show that the scalar quantities $\mathscr{K}_{nm}^{(u,v)}$ and $\bar{\mathscr{K}}_{nm}^{(p,s)}$ given in \eqref{6.15} and \eqref{6.17}
satisfy the respective alternate Marchenko equations, for $m>n,$ given by
\begin{equation}\label{6.22d}
\begin{split}
&\mathscr{K}_{nm}^{(u,v)}+\bar{G}_{n+m}^{(u,v)}
\\ &
\phantom{xxx}
+\sum_{l=n+1}^{\infty}\sum_{j=n+1}^{\infty}\left(
\mathscr{K}_{n(j+1)}^{(u,v)}-\mathscr{K}_{nj}^{(u,v)}\right)G_{j+l}^{(u,v)}\left(\bar{G}_{l+m}^{(u,v)}
-\bar{G}_{l+m-1}^{(u,v)}\right)=0,
\end{split}
\end{equation}
\begin{equation}\label{6.23}
\begin{split}
&\bar{\mathscr{K}}_{nm}^{(p,s)}+G_{n+m}^{(p,s)}
\\ &
\phantom{xxx}
+\sum_{l=n+1}^{\infty}\sum_{j=n+1}^{\infty}\left(
\bar{\mathscr{K}}_{n(j+1)}^{(p,s)}-\bar{\mathscr{K}}_{nj}^{(p,s)}\right)\bar{G}_{j+l}^{(p,s)}\left(G_{l+m}^{(p,s)}
-G_{l+m-1}^{(p,s)}\right)=0,
\end{split}
\end{equation}
where  we have defined
\begin{equation}\label{6.24}
G_n^{(u,v)}:=\sum_{j=n}^{\infty}\Omega_j^{(u,v)},\quad \bar{G}_n^{(u,v)}:=\sum_{j=n}^{\infty}\bar{\Omega}_j^{(u,v)},
\end{equation}
\begin{equation}\label{6.25}
G_n^{(p,s)}:=\sum_{j=n}^{\infty}\Omega_j^{(p,s)},\quad \bar{G}_n^{(p,s)}:=\sum_{j=n}^{\infty}\bar{\Omega}_j^{(p,s)},
\end{equation}
with the scalar functions $\Omega^{(u,v)}_j,$ $\bar{\Omega}^{(u,v)}_j,$ $\Omega^{(p,s)}_j,$ $\bar{\Omega}^{(p,s)}_j$ defined as in \eqref{Tx.3} for the potentials pairs $(u,v)$ and $(p,s),$ respectively.

\begin{theorem}
	\label{thm:theorem6.3}
	Assume that the potentials $q_n$ and $r_n$ appearing in \eqref{1.1} are rapidly decaying and satisfy \eqref{1.1a}. Assume further that the potential pairs $(u,v)$ and $(p,s)$ are related to $(q,r)$ as in \eqref{x3.1}--\eqref{x3.4}. Let  $\mathscr{K}_{nm}^{(u,v)}$ and $\bar{\mathscr{K}}_{nm}^{(p,s)}$ be the scalar quantities defined as in \eqref{6.15} and \eqref{6.17}, respectively, and let $G^{(u,v)}_n,$ $\bar{G}^{(u,v)}_n,$  $G^{(p,s)}_n,$ $\bar{G}^{(p,s)}_n$ be the quantities defined in \eqref{6.24} and \eqref{6.25}. Then, $\mathscr{K}_{nm}^{(u,v)}$ and $\bar{\mathscr{K}}_{nm}^{(p,s)}$ satisfy the alternate Marchenko system given in \eqref{6.22d} and \eqref{6.23}, respectively.
\end{theorem}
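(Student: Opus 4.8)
The plan is to derive the alternate Marchenko equation \eqref{6.22d} directly from the standard uncoupled Marchenko system \eqref{ta002} and its companion relations \eqref{ta1001}, both written for the pair $(u,v)$, which we may use since $K_{nl}^{(u,v)}$ and $\bar{K}_{nl}^{(u,v)}$ are the Marchenko solutions for \eqref{1.2aa}. Throughout I introduce the local abbreviations $\mathcal{N}_{nm}:=\sum_{l=m}^{\infty}[K_{nl}^{(u,v)}]_1$ and $\mathcal{D}_{n}:=\sum_{l=n}^{\infty}[\bar{K}_{nl}^{(u,v)}]_1$, using the notation of \eqref{ta001}, so that $\mathscr{K}_{nm}^{(u,v)}=\mathcal{N}_{nm}/\mathcal{D}_{n}$ and $\mathcal{D}_n$ is independent of $m$. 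The first step is to apply the tail summation $\sum_{m'=m}^{\infty}$ to the first line of \eqref{ta002}; recognizing with \eqref{6.24} that $\sum_{m'=m}^{\infty}\bar{\Omega}_{n+m'}^{(u,v)}=\bar{G}_{n+m}^{(u,v)}$ and $\sum_{m'=m}^{\infty}\bar{\Omega}_{l+m'}^{(u,v)}=\bar{G}_{l+m}^{(u,v)}$, this produces
\[
\mathcal{N}_{nm}+\bar{G}_{n+m}^{(u,v)}
-\sum_{l=n+1}^{\infty}\sum_{j=n+1}^{\infty}
[K_{nj}^{(u,v)}]_1\,\Omega_{j+l}^{(u,v)}\,\bar{G}_{l+m}^{(u,v)}=0,
\]
which is the target form into which \eqref{6.22d} must be transformed.

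Next I would start from \eqref{6.22d} and simplify its two difference factors. From \eqref{6.15} one has $\mathscr{K}_{n(j+1)}^{(u,v)}-\mathscr{K}_{nj}^{(u,v)}=-[K_{nj}^{(u,v)}]_1/\mathcal{D}_n$, and from \eqref{6.24} one has $\bar{G}_{l+m}^{(u,v)}-\bar{G}_{l+m-1}^{(u,v)}=-\bar{\Omega}_{l+m-1}^{(u,v)}$. Substituting these and multiplying through by $\mathcal{D}_n$ turns \eqref{6.22d} into a statement carrying the double sum $\sum_{l,j}[K_{nj}^{(u,v)}]_1 G_{j+l}^{(u,v)}\bar{\Omega}_{l+m-1}^{(u,v)}$ together with the term $\mathcal{D}_n\,\bar{G}_{n+m}^{(u,v)}$. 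The crucial maneuver is then an Abel summation by parts in the variable $l$: writing $\Omega_{j+l}^{(u,v)}=G_{j+l}^{(u,v)}-G_{j+l+1}^{(u,v)}$ and shifting the index, the boundary term at $l=\infty$ vanishes by the rapid decay of all tails, leaving a bulk term $\sum_{l,j}[K_{nj}^{(u,v)}]_1\Omega_{j+l}^{(u,v)}\bar{G}_{l+m}^{(u,v)}$ together with a single boundary contribution proportional to $\left(\sum_{j=n+1}^{\infty}[K_{nj}^{(u,v)}]_1 G_{j+n+1}^{(u,v)}\right)\bar{G}_{n+m}^{(u,v)}$.

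The final step reconciles the normalization by $\mathcal{D}_n$. Using the first line of \eqref{ta1001}, namely $[\bar{K}_{nl}^{(u,v)}]_1=-\sum_{p=n+1}^{\infty}[K_{np}^{(u,v)}]_1\Omega_{p+l}^{(u,v)}$, together with $[\bar{K}_{nn}^{(u,v)}]_1=1$ from \eqref{x2.14} and the elementary rewriting $G_{j+n+1}^{(u,v)}=\sum_{l=n+1}^{\infty}\Omega_{j+l}^{(u,v)}$ from \eqref{6.24}, I would establish the scalar identity
\[
\mathcal{D}_n+\sum_{j=n+1}^{\infty}[K_{nj}^{(u,v)}]_1\,G_{j+n+1}^{(u,v)}=1,
\]
in which the double sum from $\mathcal{D}_n-1$ cancels exactly against $\sum_{j}[K_{nj}^{(u,v)}]_1 G_{j+n+1}^{(u,v)}$. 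This identity collapses the boundary contribution and the factor $\mathcal{D}_n\,\bar{G}_{n+m}^{(u,v)}$ down to a single $\bar{G}_{n+m}^{(u,v)}$, so the transformed \eqref{6.22d} becomes precisely the summed standard equation displayed above and hence holds. Equation \eqref{6.23} for $\bar{\mathscr{K}}_{nm}^{(p,s)}$ follows by the identical argument applied to the second lines of \eqref{ta002} and \eqref{ta1001} for the pair $(p,s)$, with the roles of barred and unbarred quantities interchanged. I expect the main obstacle to be the bookkeeping in the summation-by-parts step and the rigorous justification that interchanging the order of summation and discarding the boundary term at infinity are legitimate; both rest on the rapid decay of $q_n$ and $r_n$, which guarantees absolute convergence of every double sum and the decay $G_n^{(u,v)},\bar{G}_n^{(u,v)}\to 0$.
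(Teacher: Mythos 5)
Your proposal is correct and uses essentially the same ingredients as the paper's proof, just run in the opposite direction: the paper starts from the $(1,2)$ entry of the Marchenko system \eqref{Tx.1} for $(u,v)$, normalizes by $[\bar{\psi}_n^{(u,v)}(1)]_1$ through an add-and-subtract trick together with $[\bar{K}_{nn}^{(u,v)}]_1=1$, substitutes the $(1,1)$ entry, sums over the tail index, and finishes with the Abel-summation identity \eqref{6.53a}, which is precisely your summation-by-parts step, while your scalar identity $\mathcal{D}_n+\sum_{j=n+1}^{\infty}[K_{nj}^{(u,v)}]_1G_{j+n+1}^{(u,v)}=1$ is the same use of the $(1,1)$ entry (i.e.\ \eqref{ta1001}) packaged as a normalization lemma instead. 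Since every step in your chain is a reversible identity and $\mathcal{D}_n=[\bar{\psi}_n^{(u,v)}(1)]_1\ne 0$, deducing \eqref{6.22d} backward from the tail-summed uncoupled equation is legitimate, and the $(p,s)$ case follows by the symmetric argument exactly as in the paper.
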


\begin{proof}
In the notation of \eqref{ta001}, the $(1,2)$ entry in the matrix Marchenko system \eqref{Tx.1} for the potential pair $(u,v)$ is given by
\begin{equation}\label{6.31a}
\begin{bmatrix}K_{nk}^{(u,v)}\end{bmatrix}_1
+\bar{\Omega}_{n+k}^{(u,v)}
+\sum_{l=n+1}^{\infty}\begin{bmatrix}
\bar{K}_{nl}^{(u,v)}\end{bmatrix}_1
\bar{\Omega}_{l+k}^{(u,v)}=0,
\qquad k>n.
\end{equation}
Adding and subtracting $\bar{\Omega}_{n+k}^{(u,v)}$ to $\bar{\Omega}_{l+k}^{(u,v)}$ in \eqref{6.31a}, we obtain
\begin{equation}\label{6.43}
\begin{split}
\begin{bmatrix}K_{nk}^{(u,v)}\end{bmatrix}_1
+&\bar{\Omega}_{n+k}^{(u,v)}+\sum_{l=n+1}^{\infty}
\begin{bmatrix}\bar{K}_{nl}^{(u,v)}\end{bmatrix}_1
\bar{\Omega}_{n+k}^{(u,v)}\\&+\sum_{l=n+1}^{\infty}
\begin{bmatrix}\bar{K}_{nl}^{(u,v)}\end{bmatrix}_1
\left(\bar{\Omega}_{l+k}^{(u,v)}-\bar{\Omega}_{n+k}^{(u,v)}\right)=0.
\end{split}
\end{equation}
Using $[\bar{K}_{nn}^{(u,v)}]_1=1,$ as seen from the first equality in \eqref{x2.14},
we combine the second and third terms on the left-hand side of \eqref{6.43} to obtain
\begin{equation}\label{6.44}
\begin{split}
\begin{bmatrix}K_{nk}^{(u,v)}\end{bmatrix}_1
+&\bar{\Omega}_{n+k}^{(u,v)}
\sum_{l=n}^{\infty}
\begin{bmatrix}\bar{K}_{nl}^{(u,v)}\end{bmatrix}_1
\\&+\sum_{l=n+1}^{\infty}
\begin{bmatrix}\bar{K}_{nl}^{(u,v)}\end{bmatrix}_1
\left(\bar{\Omega}_{l+k}^{(u,v)}-\bar{\Omega}_{n+k}^{(u,v)}\right)=0.
\end{split}
\end{equation}
From \eqref{6.1bbbb} we see that the summation in the second term on the left-hand side of \eqref{6.44} is equal to $[\bar{\psi}_n^{(u,v)}(1)]_1,$ and hence by dividing \eqref{6.44} by that term we get
\begin{equation}\label{6.45}
\ds\frac{\begin{bmatrix}K_{nk}^{(u,v)}\end{bmatrix}_1}
{\begin{bmatrix}\bar{\psi}_{n}^{(u,v)}(1)\end{bmatrix}_1}
+\bar{\Omega}_{n+k}^{(u,v)}
+\sum_{l=n+1}^{\infty}
\ds\frac{\begin{bmatrix}\bar{K}_{nl}^{(u,v)}\end{bmatrix}_1}
{\begin{bmatrix}\bar{\psi}_{n}^{(u,v)}(1)\end{bmatrix}_1}
\left(\bar{\Omega}_{l+k}^{(u,v)}
-\bar{\Omega}_{n+k}^{(u,v)}\right)=0.
\end{equation}
Using the $(1,1)$ entry in the Marchenko system \eqref{Tx.1} for the potential pair $(u,v),$  we write \eqref{6.45} as
\begin{equation}\label{6.46}
\begin{split}
\ds\frac{\begin{bmatrix}K_{nk}^{(u,v)}\end{bmatrix}_1}
{\begin{bmatrix}\bar{\psi}_{n}^{(u,v)}(1)\end{bmatrix}_1}
&+\bar{\Omega}_{n+k}^{(u,v)}
\\&
-\ds\sum_{l=n+1}^{\infty}\ds\sum_{j=n+1}^{\infty}
\ds\frac{\begin{bmatrix}K_{nj}^{(u,v)}\end{bmatrix}_1}
{\begin{bmatrix}\bar{\psi}_{n}^{(u,v)}(1)\end{bmatrix}_1}
\,\Omega_{j+l}^{(u,v)}\left(\bar{\Omega}_{l+k}^{(u,v)}-\bar{\Omega}_{n+k}^{(u,v)}\right)=0.
\end{split}
\end{equation}
Taking the summation for $k\ge m$ in \eqref{6.46} and using \eqref{6.24} we get
\begin{equation}\label{6.48}
\begin{split}
\sum_{k=m}^{\infty}
&\ds\frac{\begin{bmatrix}K_{nk}^{(u,v)}\end{bmatrix}_1}
{\begin{bmatrix}\bar{\psi}_{n}^{(u,v)}(1)\end{bmatrix}_1}
+\bar{G}_{n+m}^{(u,v)}
\\&-\sum_{l=n+1}^{\infty}
\ds\sum_{j=n+1}^{\infty}
\ds\frac{\begin{bmatrix}K_{nj}^{(u,v)}\end{bmatrix}_1}
{\begin{bmatrix}\bar{\psi}_{n}^{(u,v)}(1)\end{bmatrix}_1}\,
\Omega_{j+l}^{(u,v)}
\left(\bar{G}_{l+m}^{(u,v)}-\bar{G}_{n+m}^{(u,v)}\right)=0.
\end{split}
\end{equation}
Further, using \eqref{6.1bbbb}, \eqref{6.15}, and \eqref{6.24} in \eqref{6.48}, for $m>n$ we obtain
\begin{equation}\label{6.49}
\begin{split}
&\mathscr{K}_{nm}^{(u,v)}+\bar{G}_{n+m}^{(u,v)}
\\ &
-\sum_{l=n+1}^{\infty}\sum_{j=n+1}^{\infty}\left(\mathscr{K}_{nj}^{(u,v)}-
\mathscr{K}_{n(j+1)}^{(u,v)}\right)\left(G_{l+j}^{(u,v)}-G_{l+j+1}^{(u,v)}\right)
\left(\bar{G}_{l+m}^{(u,v)}-\bar{G}_{n+m}^{(u,v)}\right)=0.
\end{split}
\end{equation}
It is lengthy but straightforward to show that
\begin{equation}\label{6.53a}
\begin{split}
\sum_{l=n+1}^{\infty}\left(G_{l+j}^{(u,v)}-G_{l+j+1}^{(u,v)}\right)
&\left(\bar{G}_{l+m}^{(u,v)}-\bar{G}_{n+m}^{(u,v)}\right)\\&= \sum_{l=n+1}^{\infty}G_{j+l}^{(u,v)}\left(\bar{G}_{l+m}^{(u,v)}
-\bar{G}_{l+m-1}^{(u,v)}\right).
\end{split}
\end{equation}
Finally, using \eqref{6.53a} in \eqref{6.49} we obtain \eqref{6.22d}. The derivation of \eqref{6.23} is similarly obtained with the help of the $(2,1)$ and $(2,2)$ entries of \eqref{Tx.1} for the potential pair $(p,s).$
\end{proof}

\section{The solution to the inverse problem}
\label{sec:section8}

In this section we describe various methods to recover the potentials $q_n$ and $r_n$ when the scattering data set for \eqref{1.1} is available. We recall that the scattering data set consists of the scattering coefficients and the bound-state information. Because of Theorem~\ref{thm:theorem x2.4}, the four scattering coefficients $T^{(q,r)},$ $\bar{T}^{(q,r)},$ $R^{(q,r)},$ $\bar{R}^{(q,r)}$ contain all the information about the scattering coefficients for \eqref{1.1}. Similarly, because of Theorem~\ref{thm:theorem x4.1}, \eqref{TA.1}--\eqref{A4}, and \eqref{T.2}, the matrix triplets $(A, B, C^{(q,r)})$ and $(\bar{A}, \bar{B}, \bar{C}^{(q,r)})$ contain all the information related to the bound states of \eqref{1.1}. We let
\begin{equation}\label{8.1}
\mathbf{D}^{(q,r)}:= \{T^{(q,r)}, \bar{T}^{(q,r)}, R^{(q,r)}, \bar{R}^{(q,r)}, (A, B, C^{(q,r)}), (\bar{A}, \bar{B}, \bar{C}^{(q,r)})\},
\end{equation}
and refer to $\mathbf{D}^{(q,r)}$ as the scattering data set for \eqref{1.1}. Let us mention
that the relevant constants $D_\infty^{(q,r)}$ and $E_\infty^{(q,r)}$ are obtained from $T^{(q,r)}$ and $\bar{T}^{(q,r)}$
via \eqref{x.1}, and hence $D_\infty^{(q,r)}$ and $E_\infty^{(q,r)}$  are known if $\mathbf{D}^{(q,r)}$ is known.

Using the theory developed in Sections~\ref{sec:section2}--\ref{sec:section7}, we are able to solve the inverse problem for \eqref{1.1} in various ways, and we outline below some of those methods.

\begin{enumerate}

\item [\text{\rm(a)}] \textbf{The standard Marchenko method.} In this method, using the scattering data set $\mathbf{D}^{(q,r)}$ described in \eqref{8.1}, we construct the scalar quantities $\Omega_k^{(q,r)}$  and $\bar{\Omega}_k^{(q,r)}$ defined in \eqref{F.1} and use them as input to the Marchenko system \eqref{Z.0}. It can be proved in the standard way that \eqref{Z.0} is uniquely solvable
     via iteration. From the solution
$[\bar{M}_{nm}^{(q,r)}\quad M_{nm}^{(q,r)}]$ to \eqref{Z.0} we recover $q_n$ and $r_n$ via \eqref{Z.7} and \eqref{Z.7a}, respectively.

\item [\text{\rm(b)}] \textbf{The alternate Marchenko method.} In this method, using the scattering data set $\mathbf{D}^{(q,r)},$ we first obtain the constants $D_\infty^{(q,r)}$ and $E_\infty^{(q,r)}$  via \eqref{x.1}
    and also obtain $\Omega_k^{(q,r)}$  and $\bar{\Omega}_k^{(q,r)}$
   defined in \eqref{F.1}. Then, we
    construct the scalar quantities $\Omega_k^{(p,s)}$  and $\bar{\Omega}_k^{(p,s)}$ via \eqref{Z.5}.
    Moreover, using \eqref{R.12}, \eqref{R.12aaaa}, and \eqref{Z.5}
    we construct $\Omega_k^{(u,v)}$  and $\bar{\Omega}_k^{(u,v)}$ as
\begin{equation}
\label{ta4001}
\Omega_k^{(u,v)}=\ds\frac{D_\infty^{(q,r)}}{E_\infty^{(q,r)}}\left(\Omega_k^{(q,r)}-
\Omega_{k-2}^{(q,r)}\right)
,\quad \bar{\Omega}_k^{(u,v)}=\ds\frac{E_\infty^{(q,r)}}{D_\infty^{(q,r)}}\,
\ds\sum_{l=0}^\infty \bar{\Omega}_{k+2l}^{(q,r)}.
\end{equation}
 Next, we use \eqref{6.24} to obtain $G_k^{(u,v)}$ and $\bar{G}_k^{(u,v)}$ and use \eqref{6.25} to get $G_k^{(p,s)}$ and $\bar{G}_k^{(p,s)}.$ Using $G_k^{(u,v)}$ and $\bar{G}_k^{(u,v)}$ as input to the uncoupled alternate Marchenko equation \eqref{6.22d}, we obtain $\mathscr{K}_{nm}^{(u,v)}.$ Similarly, using $G_k^{(p,s)}$ and $\bar{G}_k^{(p,s)}$ as input to the uncoupled alternate Marchenko equation \eqref{6.23}, we obtain $\bar{\mathscr{K}}_{nm}^{(p,s)}.$ Finally, we recover the potentials $q_n$ and $r_n$ via \eqref{6.21} and \eqref{6.22}, respectively.

\item [\text{\rm(c)}] \textbf{Inversion with the help of the Marchenko system for \eqref{1.2aa}.} In this method, from the scattering data set $\mathbf{D}^{(q,r)}$ we first obtain the constants $D_\infty^{(q,r)}$ and
    $E_\infty^{(q,r)}$ via \eqref{x.1}
    and also obtain $\Omega_k^{(q,r)}$  and $\bar{\Omega}_k^{(q,r)}$
    defined in \eqref{F.1}.
    Then, we get $\Omega_k^{(u,v)}$  and $\bar{\Omega}_k^{(u,v)}$ via \eqref{ta4001}. Using  $\Omega_k^{(u,v)}$  and $\bar{\Omega}_k^{(u,v)}$ as input to the Marchenko system \eqref{Tx.1}, we obtain $K_{nm}^{(u,v)}$ and $\bar{K}_{nm}^{(u,v)}.$ Next, using \eqref{6.1bbbb} we recover the $2\times 2$ matrix
$[\bar{\psi}_n^{(u,v)}(1)\quad 	\psi_n^{(u,v)}(1)]$ from $K_{nm}^{(u,v)}$ and $\bar{K}_{nm}^{(u,v)}.$ Finally, we use \eqref{6.6} and \eqref{6.6kk} to recover the potentials $q_n$ and $r_n$, respectively.

\item [\text{\rm(d)}] \textbf{Inversion with the help of the Marchenko system for \eqref{1.2ab}.} In this method, using  \eqref{x.1}  we first obtain the constants $D_\infty^{(q,r)}$ and $E_\infty^{(q,r)}$
    and also obtain $\Omega_k^{(q,r)}$  and $\bar{\Omega}_k^{(q,r)}$
    defined in \eqref{F.1}
    from the scattering data set $\mathbf{D}^{(q,r)}.$  Then, we get  $\Omega_k^{(p,s)}$  and $\bar{\Omega}_k^{(p,s)}$ via \eqref{Z.5}. Next, using  $\Omega_k^{(p,s)}$  and $\bar{\Omega}_k^{(p,s)}$ as input to the Marchenko system \eqref{Tx.1}, we obtain
$[\bar{K}_{nm}^{(p,s)}\quad K_{nm}^{(p,s)}].$ Then, via \eqref{6.1cccc} we get
$[\bar{\psi}_n^{(p,s)}(1)\quad \psi_n^{(p,s)}(1)].$ Finally, we use \eqref{6.6hh} and \eqref{6.8} to recover the potentials $q_n$ and $r_n$, respectively.

\item [\text{\rm(e)}] \textbf{Inversion by first recovering the potentials $u_n$ and $s_n$.} In this method, from the scattering data set $\mathbf{D}^{(q,r)}$  we first obtain the constants $D_\infty^{(q,r)}$ and $E_\infty^{(q,r)}$  via  \eqref{x.1}
    and also obtain $\Omega_k^{(q,r)}$  and $\bar{\Omega}_k^{(q,r)}$
    defined in \eqref{F.1}.
    Then, we construct $\Omega_k^{(u,v)}$ and $\bar{\Omega}_k^{(u,v)}$
    via \eqref{ta4001} and
    also construct $\Omega_k^{(p,s)}$ and $\bar{\Omega}_k^{(p,s)}$ via \eqref{Z.5}. Next, using  $\Omega_k^{(u,v)}$ and $\bar{\Omega}_k^{(u,v)}$ as input in the uncoupled Marchenko equation given in the first line of \eqref{ta002} related to $(u,v)$, we obtain $[K_{nm}^{(u,v)}]_1,$ from which we recover $u_n$ as in the first equality in \eqref{Tx.8}. Similarly, using  $\Omega_k^{(p,s)}$ and $\bar{\Omega}_k^{(p,s)}$ as input in the uncoupled Marchenko equation given in the second line of \eqref{ta002} related to $(p,s)$, we
    obtain $[\bar{K}_{nm}^{(p,s)}]_2,$ from which we recover $s_n$ as in the second equality in \eqref{Tx.9}. Finally, we use \eqref{x.603} and \eqref{x.604} with input $(u_n, s_n)$ and recover the potentials $q_n$ and $r_n.$

\end{enumerate}

\section{The time evolution of the scattering data}
\label{sec:section9}

In this section we consider an application of our results
in integrable semi-discrete systems, and we
provide the solution to the nonlinear system \eqref{1.2a} via the method of the inverse scattering transform. This is done by describing the time evolution of the scattering data for \eqref{1.1} and determining the corresponding time-evolved potentials $q_n$ and $r_n.$ Hence, each of the methods to solve the inverse problem for \eqref{1.1} presented in Section~\ref{sec:section8} can be used to solve \eqref{1.2a} if we replace the scattering data set $\mathbf{D}^{(q,r)}$ appearing in \eqref{8.1} with its time-evolved version.
In this section, we also present certain solution formulas for \eqref{1.2a}
expressed explicitly in terms of the matrix triplets
$(A, B, C)$ and $(\bar{A}, \bar{B}, \bar{C})$ for the linear system
\eqref{1.1}. Such solution formulas
correspond to the reflectionless scattering data for \eqref{1.1}, in which case
the corresponding Marchenko integral system for
\eqref{1.1} has separable kernels
and hence is solved in closed form by using standard linear algebraic
methods.

Let us mention
\cite{tsuchida2010new}
that the system \eqref{1.2a} is the semi-discrete analog of the nonlinear system
\begin{equation}\label{1.2b}
\begin{cases}
iq_t+q_{xx}-i(q\,r\,q)_x=0,\\
\noalign{\medskip}
ir_t-r_{xx}-i(r\,q\,r)_x=0,\\
\end{cases}
\end{equation}
where $q(x,t)$ and $r(x,t)$ are the continuous analogs of $q_n$ and $r_n$ when the latter quantities depend on both $n$ and $t$. The nonlinear system \eqref{1.2b} is known as the derivative NLS system or  the Kaup-Newell system.

It is already known \cite{ablowitz149inverse,tsuchida2010new} that \eqref{1.2a} can be derived by imposing the compatibility condition
\begin{equation}\label{1.2c}
\dot{\mathcal{X}}_n+\mathcal{X}_n\,\mathcal{T}_{n+1}-\mathcal{T}_n\,\mathcal{X}_n=0,
\end{equation}
where $(\mathcal{X}_n,\mathcal{T}_n)$ is the AKNS pair with $\mathcal{X}_n$ being the 2$\times$2 coefficient matrix appearing in \eqref{1.1} and $\mathcal{T}_n$ is the 2$\times$2 matrix given by
\begin{equation*}
\mathcal{T}_n=\begin{bmatrix}
\ds\frac{-i(z^2-1)[1+(z^2+1)\,q_{n-1}\,r_n]}{z^2(1+q_{n-1}\,r_n)} & \ds\frac{i(z^2-1)q_{n-1}}{1+q_{n-1}\,r_n}-\ds\frac{i(z^2-1)q_n}{z^2(1-q_n\,r_n)}\\
\noalign{\medskip}
\ds\frac{-ir_{n-1}}{1-q_{n-1}\,r_{n-1}}+\ds\frac{i\,z^2\,r_n}{1+q_{n-1}\,r_n}& \ds\frac{i(z^2-1)}{1+q_{n-1}\,r_n}
\end{bmatrix},
\end{equation*}
which plays the key role in the time evolution of the potential pair $(q_n,r_n).$
We recall that an overdot denotes the derivative with respect to $t.$
Let us remark that the AKNS pair for a given nonlinear system is not unique. One can use the transformation
\begin{equation*}
\begin{cases}
\Psi_n \mapsto \tilde{\Psi}_n:= \mathcal{G}_n\Psi_n,\\
\noalign{\medskip}
\mathcal{X}_n\mapsto\tilde{\mathcal{X}_n}:= \mathcal{G}_n\,\mathcal{X}_n\,\mathcal{G}_{n+1}^{-1}, \\
\noalign{\medskip}
\mathcal{T}_n\mapsto\tilde{\mathcal{T}_n}:= \dot{\mathcal{G}}_n\,\mathcal{G}_n^{-1}+\mathcal{G}_n\mathcal{T}_n\,\mathcal{G}_{n}^{-1},
\end{cases}
\end{equation*}
for any appropriate invertible matrix $\mathcal{G}_n$, and the corresponding compatibility condition
\begin{equation*}
\dot{\mathcal{\tilde{X}}}_n+\mathcal{\tilde{X}}_n\,
\mathcal{\tilde{T}}_{n+1}-\mathcal{\tilde{T}}_n\,\mathcal{\tilde{X}}_n=0,
\end{equation*}
yields the same integrable nonlinear system that \eqref{1.2c} yields. Since the choice of $\mathcal{X}_n$ is not unique, instead of analyzing the linear system
\begin{equation*}
\Psi_n=\mathcal{X}_n\,\Psi_{n+1},
\end{equation*}
one can alternatively analyze the system
\begin{equation*}
\tilde{\Psi}_n=\mathcal{\tilde{X}}_n\,\tilde{\Psi}_{n+1}.
\end{equation*}

The linear system \eqref{1.2aa} is associated with the integrable nonlinear system
\begin{equation}\label{x.8}
\begin{cases}
i\dot{u}_n+u_{n-1}-2u_n+u_{n+1}-u_{n-1}\,u_n\,v_n-u_n\,u_{n+1}\,v_n=0,\\
\noalign{\medskip}
i\dot{v}_n-v_{n-1}+2v_n-v_{n+1}+u_n\,v_{n-1}\,v_n+u_n\,v_n\,v_{n+1}=0.
\end{cases}
\end{equation}
The AKNS pair $(\mathcal{X}_n,\mathcal{T}_n)$ for \eqref{x.8}
consists of the matrix $\mathcal{X}_n$ appearing as the coefficient matrix in \eqref{1.2aa} and
the matrix $\mathcal{T}_n$ given by
$$\mathcal{T}_n=
\begin{bmatrix}
i(z^2-1)-i\,u_{n-1}\,v_n&i\, z^2\, u_{n-1}-i\, u_n
\\
\noalign{\medskip}
-\ds\frac{i}{z^2}\,v_{n-1}+i\, v_n&
i\left(1-\ds\frac{1}{z^2}\right)+iu_n\, v_{n-1}
\end{bmatrix}.$$
Similarly, the linear system \eqref{1.2ab} is associated with the integrable system \eqref{x.8} with $(u_n, v_n)$ replaced by $(p_n, s_n)$ there.

In the following theorem we summarize the time evolution of the scattering data for \eqref{1.2aa}. A proof is omitted because the time evolution of the scattering coefficients is described in
\cite{tsuchida2010new} and the time evolution of
the norming constants for simple bound states
described in \cite{tsuchida2010new} is readily
generalized to the case of non-simple bound states and hence
to the time evolution of the matrix triplets.

\begin{theorem}
	\label{thm:theorem x9.1}
Assume that the potentials $u_n$ and $v_n$ appearing in \eqref{1.2aa} and \eqref{x.8} are rapidly decaying and $1-u_n v_n\ne 0$ for $n\in\mathbb{Z}$. Then, the corresponding reflection coefficients evolve in time as
\begin{equation*}
\begin{cases}
R^{(u,v)}\mapsto R^{(u,v)}\,e^{-it(z-z^{-1})^2},\quad \bar{R}^{(u,v)}
\mapsto \bar{R}^{(u,v)}\,e^{it(z-z^{-1})^2},\\
\noalign{\medskip}
L^{(u,v)}\mapsto L^{(u,v)}\,e^{it(z-z^{-1})^2},\quad \bar{L}^{(u,v)}
\mapsto \bar{L}^{(u,v)}\,e^{-it(z-z^{-1})^2},
\end{cases}
\end{equation*}
and the transmission coefficients $T_{\rm l}^{(u,v)},$ $T_{\rm r}^{(u,v)},$ $\bar{T}_{\rm l}^{(u,v)},$ $\bar{T}_{\rm r}^{(u,v)}$ do not change in time. Furthermore, in the corresponding matrix triplets $(A, B, C^{(u,v)})$ and $(\bar{A}, \bar{B}, \bar{C}^{(u,v)})$ describing the bound-state data for \eqref{1.2aa}, the row vectors $C^{(u,v)}$ and $\bar{C}^{(u,v)}$ evolve in time as
\begin{equation}\label{x.9a}
C^{(u,v)}\mapsto C^{(u,v)}\,e^{-it(A-A^{-1})^2},\quad \bar{C}^{(u,v)}
\mapsto \bar{C}^{(u,v)}\,e^{it [\bar{A}-(\bar{A})^{-1}]^2},
\end{equation}
and the matrices $A,$ $\bar{A},$ $B,$ $\bar{B}$ do not change in time. Moreover, the constant $D_\infty^{(u,v)}$ appearing in \eqref{x2.D_n} does not change in time, either.
\end{theorem}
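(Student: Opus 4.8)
The plan is to exploit the Lax-pair structure behind \eqref{x.8}, using the compatibility condition \eqref{1.2c} to promote the purely spatial analysis of \eqref{1.2aa} to a time-dependent one. First I would require that every solution $\Psi_n$ of \eqref{1.2aa} evolve in $t$ through the auxiliary equation $\dot{\Psi}_n=\mathcal{T}_n\,\Psi_n$, where $\mathcal{T}_n$ is the $2\times 2$ matrix displayed for the pair $(u,v)$ just before the theorem. The central observation is that, for a Jost solution normalized by a fixed (time-independent) asymptotics such as \eqref{x2.3}, the combination $\Phi_n:=\dot{\Psi}_n-\mathcal{T}_n\,\Psi_n$ is again a solution of \eqref{1.2aa}: differentiating $\Psi_n=\mathcal{X}_n\,\Psi_{n+1}$ in $t$ and eliminating $\dot{\mathcal{X}}_n$ through \eqref{1.2c} gives $\Phi_n=\mathcal{X}_n\,\Phi_{n+1}$, so $\Phi_n$ lies in the two-dimensional solution space of \eqref{1.2aa}.

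Next I would compute the asymptotic value $\mathcal{T}_\infty:=\lim_{n\to\pm\infty}\mathcal{T}_n$, which, since $u_n$ and $v_n$ decay, equals the diagonal matrix $\mathrm{diag}\{i(z^2-1),\,i(1-z^{-2})\}$ at both ends. Applying this to each Jost solution through its normalizing asymptotics \eqref{x2.3}--\eqref{x2.6}, and using $\dot{\Psi}_n\to 0$ there, one reads off a scalar $c(z)$ for which $\Phi_n=c(z)\,\Psi_n$: for $\psi_n^{(u,v)}$ one finds $c(z)=-i(1-z^{-2})$ and for $\phi_n^{(u,v)}$ one finds $c(z)=-i(z^2-1)$, with $\bar{\psi}_n^{(u,v)}$ sharing the value of $\phi_n^{(u,v)}$ and $\bar{\phi}_n^{(u,v)}$ sharing that of $\psi_n^{(u,v)}$. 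Feeding the resulting identities $\dot{\Psi}_n=\mathcal{T}_n\,\Psi_n-c(z)\,\Psi_n$ into the \emph{opposite}-end asymptotics \eqref{x2.7}--\eqref{x2.9} and matching the two components of the column vectors then isolates first-order linear ODEs in $t$: the transmission-coefficient components force $\dot{T}_{\rm l}^{(u,v)}=\dot{T}_{\rm r}^{(u,v)}=\dot{\bar T}_{\rm l}^{(u,v)}=\dot{\bar T}_{\rm r}^{(u,v)}=0$, while the reflection-coefficient components yield $\dot{R}^{(u,v)}=-i(z-z^{-1})^2R^{(u,v)}$ together with the analogous equations for $\bar R^{(u,v)},L^{(u,v)},\bar L^{(u,v)}$, using $z^2+z^{-2}-2=(z-z^{-1})^2$. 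Integrating in $t$ produces exactly the claimed exponential factors.

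For the bound-state data I would argue as follows. Because the four transmission coefficients are time-independent, so are their poles and multiplicities; hence $N$, $\bar N$, the sets $\{\pm z_j,m_j\}$ and $\{\pm\bar z_j,\bar m_j\}$, and therefore the Jordan matrices $A,\bar A$ and the vectors $B,\bar B$ of \eqref{TA.1}--\eqref{A4}, do not change. To evolve the dependency constants I would differentiate the linear-dependence relation \eqref{x4.103} in $t$ at $z=z_j$ and substitute the evolution laws for $\phi_n^{(u,v)}$ and $\psi_n^{(u,v)}$ just obtained; the $\mathcal{T}_n$ terms cancel, and in the simple case one gets $\dot{\gamma}_{j0}^{(u,v)}=-i(z_j-z_j^{-1})^2\,\gamma_{j0}^{(u,v)}$, with $\dot{\bar\gamma}_{j0}^{(u,v)}=i(\bar z_j-\bar z_j^{-1})^2\,\bar\gamma_{j0}^{(u,v)}$ following from \eqref{x4.107}. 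Since the residues $t_{jk}^{(u,v)}$ and $\bar t_{jk}^{(u,v)}$ are time-independent, the norming-constant formulas \eqref{A5aa} and \eqref{A6aa} transfer these laws verbatim to $c_{jk}^{(u,v)}$ and $\bar c_{jk}^{(u,v)}$, giving \eqref{x.9a} in the simple-bound-state case. Finally, the constancy of $D_\infty^{(u,v)}$ is immediate from \eqref{x.501}, since $1/T_{\rm r}^{(u,v)}(0)=1/D_\infty^{(u,v)}$ and $T_{\rm r}^{(u,v)}$ does not change in time.

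The hard part will be the promotion to nonsimple bound states. There one must differentiate \eqref{x4.103} and \eqref{x4.107} repeatedly in $z$, track the full tower $\{\gamma_{jk}^{(u,v)}\}_{k=0}^{m_j-1}$, and verify that the Leibniz-rule couplings among these constants assemble precisely into the action of $e^{-it(A_j-A_j^{-1})^2}$ on the row vector $C_j^{(u,v)}$ --- that is, that $f(z)=-i(z-z^{-1})^2$ evaluated on a Jordan block reproduces, along its superdiagonals, the derivatives $f^{(l)}(z_j)/l!$ generated by the differentiated dependency relations, and symmetrically for the barred triplet. This bookkeeping is the only genuinely delicate step; once the identity $\Phi_n=\mathcal{X}_n\,\Phi_{n+1}$ is in hand, the scattering-coefficient computations above are routine.
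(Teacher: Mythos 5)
Your proposal is correct, but it takes a genuinely different route from the paper, because the paper in fact gives no argument for this theorem at all: it omits the proof, citing Tsuchida's work for the time evolution of the scattering coefficients and of the norming constants at simple bound states, and asserting that the extension to non-simple bound states is ``readily'' obtained. What you do instead is reconstruct the underlying Lax-pair mechanism in a self-contained way: the observation that $\Phi_n:=\dot{\Psi}_n-\mathcal{T}_n\,\Psi_n$ again solves \eqref{1.2aa} (via \eqref{1.2c}), the identification of the proportionality constants $-i(1-z^{-2})$ and $-i(z^2-1)$ from the diagonal limit $\mathrm{diag}\{i(z^2-1),\,i(1-z^{-2})\}$ of $\mathcal{T}_n$ together with the normalizations \eqref{x2.3}--\eqref{x2.6}, and the matching at the opposite ends via \eqref{x2.7}--\eqref{x2.9}. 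I checked these computations: they do give $\dot{T}_{\rm l}^{(u,v)}=\dot{T}_{\rm r}^{(u,v)}=\dot{\bar{T}}_{\rm l}^{(u,v)}=\dot{\bar{T}}_{\rm r}^{(u,v)}=0$ and exactly the four claimed exponential factors, using $z^2+z^{-2}-2=(z-z^{-1})^2$. Your treatment of the simple bound states (time-differentiating \eqref{ta3001}/\eqref{x4.103}, cancellation of the $\mathcal{T}_n$ terms, time-independence of the residues, then \eqref{A5aa} and \eqref{A6aa}) is likewise correct, and your derivation of the constancy of $D_\infty^{(u,v)}$ directly from \eqref{x.501} and the constancy of $T_{\rm r}^{(u,v)}$ is cleaner than anything the paper offers. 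What your argument buys is self-containedness and an explicit mechanism; what the paper's citation buys is brevity. The one caveat: on the genuinely delicate point --- assembling the Leibniz-coupled ODEs for the tower $\{\gamma_{jk}^{(u,v)}\}_{k=0}^{m_j-1}$ into the Jordan-block exponential $e^{-it(A_j-A_j^{-1})^2}$ acting on $C_j^{(u,v)}$ as in \eqref{x.9a} --- you, like the paper, only describe the bookkeeping rather than carry it out, so your proposal is no less complete than the paper's treatment, but that is the step where a fully rigorous write-up would have to invest its effort.
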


We remark that Theorem~\ref{thm:theorem x9.1} holds in the same way for the system \eqref{1.2ab} with the potential pair $(p,s).$
Next, we present the time evolution of the scattering data for \eqref{1.1}.

\begin{theorem}
	\label{thm:theorem x9.2}
	Assume that the potentials $q_n$ and $r_n$ appearing in \eqref{1.1} and \eqref{1.2a} are rapidly decaying  and satisfy \eqref{1.1a}. Then, the corresponding reflection coefficients evolve in time as
	\begin{equation}\label{x.9b}
	\begin{cases}
	R^{(q,r)}\mapsto R^{(q,r)}\,e^{-it(z-z^{-1})^2},\quad \bar{R}^{(q,r)}
	\mapsto \bar{R}^{(q,r)}\,e^{it(z-z^{-1})^2},\\
	\noalign{\medskip}
	L^{(q,r)}\mapsto L^{(q,r)}\,e^{it(z-z^{-1})^2},\quad \bar{L}^{(q,r)}
	\mapsto \bar{L}^{(q,r)}\,e^{-it(z-z^{-1})^2},
	\end{cases}
	\end{equation}
	and the corresponding transmission coefficients $T^{(q,r)}$ and $\bar{T}^{(q,r)}$  do not change in time. Furthermore, in the matrix triplets $(A, B, C^{(q,r)})$ and $(\bar{A}, \bar{B}, \bar{C}^{(q,r)})$ describing the bound-state data for \eqref{1.1}, the row vectors $C^{(q,r)}$ and $\bar{C}^{(q,r)}$ evolve in time according to
	\begin{equation}\label{x.9ab}
	C^{(q,r)}\mapsto C^{(q,r)}\,e^{-it(A-A^{-1})^2},\quad \bar{C}^{(q,r)}
	\mapsto \bar{C}^{(q,r)}\,e^{it [\bar{A}-(\bar{A})^{-1} ]^2}.
	\end{equation}
Moreover, neither of constants $D_\infty^{(q,r)}$ and $E_\infty^{(q,r)}$  appearing in \eqref{x2.42} and \eqref{x2.43}, respectively, changes in time.
\end{theorem}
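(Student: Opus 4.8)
The plan is to read off the time evolution of every component of the scattering data set $\mathbf{D}^{(q,r)}$ in \eqref{8.1} from the already established evolution for $(u,v)$ in Theorem~\ref{thm:theorem x9.1} (which also holds for $(p,s)$), using the explicit relations of Theorem~\ref{thm:theorem x3.3} between the scattering coefficients and of Theorem~\ref{thm:theorem x4.4} between the matrix triplets. The key preliminary remark is that the only potentially $t$-dependent factors in those relations are the scalar constants $D_\infty^{(q,r)}$ and $E_\infty^{(q,r)}$: the remaining prefactors, such as $1-1/z^2$ in \eqref{x3.11}, \eqref{x3.12}, \eqref{x3.12a} and the matrices $A,\bar A,B,\bar B$, depend only on $z$ and on the ($t$-independent) common bound-state locations of Corollary~\ref{thm:theorem x3.4}. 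Thus the whole statement reduces to showing that $D_\infty^{(q,r)}$ and $E_\infty^{(q,r)}$ do not change in time, followed by routine bookkeeping.

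I expect this time-independence to be the hard part, because Theorem~\ref{thm:theorem x9.1} by itself delivers only that the \emph{product} $D_\infty^{(q,r)}E_\infty^{(q,r)}$ is conserved: from \eqref{x.103} we have $D_\infty^{(u,v)}=1/(D_\infty^{(q,r)}E_\infty^{(q,r)})$, and $D_\infty^{(u,v)}$ is $t$-independent, which fixes only the product. To separate the two factors I would appeal to the temporal Lax matrix $\mathcal{T}_n$ displayed after \eqref{1.2c}: as $n\to\pm\infty$ the potentials decay and $\mathcal{T}_n$ tends to the \emph{same} diagonal limit $\mathrm{diag}(-i(1-z^{-2}),\,i(z^2-1))$ at both ends, so the standard inverse-scattering-transform argument forces the transmission coefficients $T^{(q,r)}$ and $\bar T^{(q,r)}$ to be $t$-independent. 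Then \eqref{x.1}, which gives $1/T^{(q,r)}(0)=D_\infty^{(q,r)}$ and $1/\bar T^{(q,r)}(\infty)=E_\infty^{(q,r)}$, yields the $t$-independence of $D_\infty^{(q,r)}$ and $E_\infty^{(q,r)}$ separately, and establishes at once the claim that $T^{(q,r)}$ and $\bar T^{(q,r)}$ are unchanged in time. Equivalently, one could verify directly that $\tfrac{d}{dt}\log D_\infty^{(q,r)}=-\sum_n\tfrac{d}{dt}\log(1-q_nr_n)$ and the analogous sum for $E_\infty^{(q,r)}$ telescope to zero upon substituting \eqref{1.2a}, exhibiting $D_\infty^{(q,r)}$ and $E_\infty^{(q,r)}$ as conserved quantities of the flow; this is elementary but lengthy, which is why I prefer the Lax route.

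With $D_\infty^{(q,r)}$, $E_\infty^{(q,r)}$, $T^{(q,r)}$, $\bar T^{(q,r)}$ all $t$-independent, the reflection coefficients are immediate. First I would treat $L^{(q,r)}$ and $\bar L^{(q,r)}$, whose relations \eqref{x3.12} and \eqref{x3.12a} to $L^{(u,v)}$ and $\bar L^{(u,v)}$ carry only the $t$-independent factor $1-1/z^2$; substituting the evolution of $L^{(u,v)},\bar L^{(u,v)}$ from Theorem~\ref{thm:theorem x9.1} gives at once $L^{(q,r)}\mapsto L^{(q,r)}e^{it(z-z^{-1})^2}$ and $\bar L^{(q,r)}\mapsto \bar L^{(q,r)}e^{-it(z-z^{-1})^2}$, matching \eqref{x.9b}. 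I would then obtain $R^{(q,r)}$ and $\bar R^{(q,r)}$ from the symmetry relations \eqref{x2.38}, namely $R^{(q,r)}=-(T^{(q,r)}/\bar T^{(q,r)})\,\bar L^{(q,r)}$ and $\bar R^{(q,r)}=-(\bar T^{(q,r)}/T^{(q,r)})\,L^{(q,r)}$, so that the $t$-independence of the transmission coefficients transfers the exponential factors from $\bar L^{(q,r)},L^{(q,r)}$ to $R^{(q,r)},\bar R^{(q,r)}$ with the signs in \eqref{x.9b}; alternatively one reads them off from \eqref{x3.11} and \eqref{x3.11a}, now legitimate since the ratio $D_\infty^{(q,r)}/E_\infty^{(q,r)}$ is $t$-independent.

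Finally, for the norming constants I would invert \eqref{T.4aaa} to $C^{(q,r)}=(E_\infty^{(q,r)}/D_\infty^{(q,r)})\,C^{(u,v)}(I-A^{-2})^{-1}$ and $\bar C^{(q,r)}=(D_\infty^{(q,r)}/E_\infty^{(q,r)})\,\bar C^{(u,v)}[I-(\bar A)^{-2}]$. Because $(I-A^{-2})^{-1}$ and $e^{-it(A-A^{-1})^2}$ are both functions of the $t$-independent matrix $A$ they commute, and likewise for $\bar A$; inserting the evolution \eqref{x.9a} of $C^{(u,v)},\bar C^{(u,v)}$ and using the $t$-independence of $E_\infty^{(q,r)}/D_\infty^{(q,r)}$ and of $A,\bar A$ produces exactly \eqref{x.9ab}. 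The same observation shows that $A,\bar A,B,\bar B$ are unchanged in time, since they are built only from the common, $t$-independent bound-state data, which completes the proof. The single genuine obstacle, to reiterate, is the separate conservation of $D_\infty^{(q,r)}$ and $E_\infty^{(q,r)}$; everything else is a transcription of Theorem~\ref{thm:theorem x9.1} through the transformation formulas.
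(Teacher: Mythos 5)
Your proposal is correct, and it isolates the right crux: Theorem~\ref{thm:theorem x9.1} plus the transformation formulas conserve only the product $D_\infty^{(q,r)}E_\infty^{(q,r)}$ (via $D_\infty^{(u,v)}=1/(D_\infty^{(q,r)}E_\infty^{(q,r)})$ from \eqref{x.103}), so the separate conservation of the two factors needs a genuinely new input. You and the paper supply that input differently. The paper takes precisely the route you mention in passing and discard as ``elementary but lengthy'': it multiplies the first line of \eqref{1.2a} by $r_n$ and the second by $q_n$ (respectively, the first by $r_{n+1}$ and the index-shifted second by $q_n$), sums over $n$, and shows that $\sum_n(\dot q_nr_n+q_n\dot r_n)/(1-q_nr_n)$ and $\sum_n(\dot q_nr_{n+1}+q_n\dot r_{n+1})/(1+q_nr_{n+1})$ are telescoping series converging to zero; this yields $\dot D_\infty^{(q,r)}=\dot E_\infty^{(q,r)}=0$ directly from the flow, after which the constancy of $T^{(q,r)},\bar T^{(q,r)}$, the evolution \eqref{x.9b}, and the evolution \eqref{x.9ab} are read off from Theorem~\ref{thm:theorem x3.3}, Theorem~\ref{thm:theorem x9.1}, \eqref{x.9a}, and \eqref{T.4aaa}, just as in your closing paragraphs. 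You instead prove the constancy of $T^{(q,r)}$ and $\bar T^{(q,r)}$ first, by the temporal-Lax-matrix argument, and then obtain $\dot D_\infty^{(q,r)}=\dot E_\infty^{(q,r)}=0$ from \eqref{x.1}. This reversal of the logical order is legitimate and shorter, and it delivers the transmission-coefficient statement without detouring through the $(u,v)$ system; its cost is reliance on the standard inverse-scattering-transform machinery (time-differentiability of the Jost solutions and of their asymptotics, uniformly in $n$), which the paper nowhere develops, whereas the paper's telescoping computation is elementary and self-contained given \eqref{1.2a}.

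One caveat deserves attention. Your Lax step quotes the matrix $\mathcal{T}_n$ displayed after \eqref{1.2c}, whose diagonal limits as $n\to\pm\infty$ are $\tau_1=-i(1-z^{-2})$ and $\tau_2=i(z^2-1)$, so that $\tau_2-\tau_1=i(z^2-z^{-2})$. If that matrix were a genuine compatible temporal matrix for \eqref{1.2a}, the very same Lax argument you invoke would force $R^{(q,r)}\mapsto R^{(q,r)}e^{it(z^2-z^{-2})}$, contradicting \eqref{x.9b}; and \eqref{x.9b} is forced independently by Theorem~\ref{thm:theorem x9.1} together with \eqref{x3.11}, since $R^{(q,r)}$ and $R^{(u,v)}$ differ by a $t$-independent factor. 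Indeed, linearizing \eqref{1.2c} about the zero potential confirms the inconsistency: a compatible temporal matrix must have $\tau_2-\tau_1=-i(z-z^{-1})^2$, exactly as the displayed temporal matrix for \eqref{x.8} does, so the paper's $(q,r)$ matrix apparently carries a sign error in its $(2,2)$ entry. Your proof survives this because the transmission-constancy argument uses only that the temporal matrix has one and the same diagonal limit at both spatial infinities, a property every correct temporal Lax matrix for \eqref{1.2a} with decaying potentials possesses; but as written your argument rests on a formula that cannot be taken at face value, so you should either verify compatibility of the matrix you use (it fails) or phrase the step as invoking the existence of a compatible temporal matrix with coinciding diagonal limits at the two ends.
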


\begin{proof}
Let us first prove that $D_\infty^{(q,r)}$  does not change in time, i.e. we have $\dot{D}_\infty^{(u,v)}=0,$ where we recall that we use an overdot to denote the time derivative. From the second equality in \eqref{x2.42} and the fact that $D_\infty^{(q,r)}\ne 0,$ we see that $\dot{D}_\infty^{(q,r)}=0$ if and only if $\dot{D}_\infty^{(q,r)}/D_\infty^{(q,r)}=0,$ which is equivalent to having
\begin{equation}\label{x.9ac}
\ds\sum_{n=-\infty}^{\infty}\ds\frac{\dot{q}_n\,r_n+q_n\,\dot{r}_n}{1-q_n\,r_n}=0.
\end{equation}
In order to prove that \eqref{x.9ac} holds, we multiply the first line of \eqref{1.2a} with $r_n$ and the second line of \eqref{1.2a}  with $q_n$ and then we add the resulting equations. Using the summation over $n,$ after some straightforward simplifications, we get
\begin{equation}\label{x.9ad}
\ds\sum_{n=-\infty}^{\infty}\ds\frac{\dot{q}_n\,r_n+q_n\,\dot{r}_n}{1-q_n\,r_n}=
\ds\sum_{n=-\infty}^{\infty}\left(\Delta_{n+1}-\Delta_n\right),
\end{equation}
where we have let
\begin{equation}\label{x.9ae}
\Delta_n:=i\left[\ds\frac{1}{1+q_{n-1}\,r_n}-1+\ds\frac{q_n\,r_{n-1}}{(1-q_{n-1}\,r_{n-1})(1-q_n\,r_n)}\right].
\end{equation}
Since the potentials $q_n$ and $r_n$ are rapidly decaying as $n\to\pm\infty$ and satisfy \eqref{1.1a}, from \eqref{x.9ae} we see that $\Delta_n$ is well defined and rapidly decaying as $n\to\pm\infty.$ Hence, the telescoping series on the right-hand side of \eqref{x.9ad} converges to $0,$ which completes the proof that $\dot{D}_\infty^{(q,r)}=0.$ The proof of $\dot{E}_\infty^{(q,r)}=0$ is obtained in a similar manner by establishing that $\dot{E}_\infty^{(q,r)}/E_\infty^{(q,r)}=0,$ which is equivalent to having
\begin{equation}\label{x.9af}
\ds\sum_{n=-\infty}^{\infty}\ds\frac{\dot{q}_n\,r_{n+1}+q_n\,\dot{r}_{n+1}}{1+q_n\,r_{n+1}}=0.
\end{equation}
In order to prove \eqref{x.9af}, we replace $n$ by $n+1$ in the second line of \eqref{1.2a} and multiply the resulting equation by $q_n,$ and then we add to that equation the first line of \eqref{1.2a} multiplied by $r_{n+1}.$ Then, a summation over $n$, after some straightforward simplifications, yields
\begin{equation}\label{x.9ah}
\ds\sum_{n=-\infty}^{\infty}\ds\frac{\dot{q}_n\,r_{n+1}+q_n\,\dot{r}_{n+1}}{1+q_n\,r_{n+1}}=
\ds\sum_{n=-\infty}^{\infty}\left(\Theta_{n+1}-\Theta_n\right),
\end{equation}
where we have let
\begin{equation*}
\Theta_n:=i\left[\ds\frac{1}{1-q_n\,r_n}-1-\ds\frac{q_{n-1}\,r_{n+1}}
{(1+q_{n-1}\,r_n)(1+q_n\,r_{n+1})}\right].
\end{equation*}
From the properties of $q_n$ and $r_n,$ it follows that $\Theta_n$ is well defined and rapidly decaying as $n\to\pm\infty.$ Thus, the telescoping series in \eqref{x.9ah} is convergent to $0,$ which establishes the proof that $\dot{E}_\infty^{(q,r)}=0.$ When the potential pairs $(q,r),$ $(u,v),$ $(p,s)$ are related to each other as in \eqref{x3.1}--\eqref{x3.4}, we have the matrices $A,$ $\bar{A},$ $B,$ $\bar{B}$ appearing in \eqref{Tx.3} are common and the scattering coefficients for \eqref{1.1}, \eqref{1.2aa}, \eqref{1.2ab} are related as described in Theorem~\ref{thm:theorem x3.3}. Thus, with the help of Theorem~\ref{thm:theorem x3.3}, Theorem~\ref{thm:theorem x9.1}, and the fact that $\dot{D}_\infty^{(q,r)}=0$  and $\dot{E}_\infty^{(q,r)}=0,$ we conclude that the transmission coefficients $T^{(q,r)}$ and $\bar{T}^{(q,r)}$ do not change in time and the reflection coefficients evolve as in \eqref{x.9b}. Furthermore, from  \eqref{T.4aaa}, \eqref{x.9a}, and the fact that  $\dot{D}_\infty^{(q,r)}=0$  and $\dot{E}_\infty^{(q,r)}=0,$ we obtain \eqref{x.9ab}.
\end{proof}

Next we consider explicit solutions to the integrable systems \eqref{1.2a} and \eqref{x.8} by using the method of \cite{aktosun2007exact,Aktosunkdv}. Such explicit solutions correspond to the zero reflection coefficients and the time-evolved scattering data sets. From Theorem~\ref{thm:theorem x9.1} and Theorem~\ref{thm:theorem x9.2} we see that the matrix triplets corresponding to \eqref{1.2a} and \eqref{x.8} have similar time evolutions described as
\begin{equation}\label{x9.1}
(A,B,C)\mapsto (A,B,C\,\mathcal{E}) ,\quad (\bar{A},\bar{B},\bar{C})\mapsto (\bar{A},\bar{B},\bar{C}\,\bar{\mathcal{E}}),
\end{equation}
where we have defined
\begin{equation}\label{x9.2}
\mathcal{E}:=e^{-it(A-A^{-1})^2},\quad \bar{\mathcal{E}}:= e^{it [\bar{A}-(\bar{A})^{-1} ]^2}.
\end{equation}
Let us remark that \eqref{Tx.1} and \eqref{Tx.3} for the potential pair $(u,v)$ and \eqref{Z.0} and \eqref{F.1} for the potential
pair $(q,r)$ are similar, and hence
the solution to \eqref{Z.0} is obtained in a similar way the solution to \eqref{Tx.1} is obtained.

Our goal now is to present the corresponding explicit solutions to  \eqref{Tx.1} and  \eqref{Z.0}  when their Marchenko kernels are given by
\begin{equation}\label{x9.3}
\Omega_{n+m}=C\mathcal{E}A^{n+m-1}B,\quad \bar{\Omega}_{n+m}=\bar{C}\bar{\mathcal{E}}(\bar{A})^{-n-m-1}\bar{B}.
\end{equation}
Note that we impose no restriction on the values of $N,$ $\bar{N},$ $z_j,$ $\bar{z}_j,$ $m_j,$ $\bar{m}_j$ in the matrix triplets $(A, B, C)$ and $(\bar{A}, \bar{B}, \bar{C})$ appearing in \eqref{TA.1} and  \eqref{TA.2}, respectively. Hence, this method yields an enormous number of explicit solutions to each of \eqref{Tx.1} and  \eqref{Z.0}.
From \eqref{x9.3} we see that the Marchenko kernels $\Omega_{n+m}$ and $\bar{\Omega}_{n+m}$ are separable in $n$ and $m,$ i.e. we can write them as the matrix products given by
\begin{equation}\label{x9.4}
\Omega_{n+m}= \left[C\,A^{n}\right]
\left[\mathcal{E}\,A^{m-1}\,B\right],\quad
\bar{\Omega}_{n+m}=\left[\bar{C}\,(\bar{A})^{-n}\right]
\left[\bar{\mathcal{E}}(\bar{A})^{-m-1}\,\bar{B}\right],
\end{equation}
where we have used the fact that the matrices $A$ and $\mathcal{E}$ commute and that the matrices
$\bar{A}$ and $\bar{\mathcal{E}}$ commute.

Before we present our explicit solutions to \eqref{Tx.1} and \eqref{Z.0}, we introduce some auxiliary quantities. In terms of the positive integers $m_j,$ $N,$ $\bar{m}_j,$ $\bar{N}$ appearing in \eqref{TA.1}--\eqref{A4}, we introduce the positive integers $\mathscr{N}$ and $\bar{\mathscr{N}}$ as
\begin{equation}\label{x9.5}
\mathscr{N}:=\ds\sum_{j=1}^{N}m_j,\quad \bar{\mathscr{N}}:=\ds\sum_{j=1}^{\bar{N}}\bar{m}_j.
\end{equation}
From the results in Section~\ref{sec:section4} it follows that $2(\mathscr{N}+\bar{\mathscr{N}})$ corresponds to the total number of bound states including the multiplicities for \eqref{1.1} and \eqref{1.2aa}. In terms of the matrix triplets $(A, B, C)$ and $(\bar{A}, \bar{B}, \bar{C})$ let us introduce the $\mathscr{N}\times \bar{\mathscr{N}}$ matrix $\Upsilon$ and the $\bar{\mathscr{N}}\times \mathscr{N}$ matrix $\bar{\Upsilon}$ as
\begin{equation}\label{x9.6}
\Upsilon:=\ds\sum_{k=0}^{\infty}A^{k}\,B\,\bar{C}\,(\bar{A})^{-k},\quad \bar{\Upsilon}:=\ds\sum_{k=0}^{\infty}(\bar{A})^{-k}\,\bar{B}\,C\,A^{k}.
\end{equation}
In terms of the two matrix triplets let us also define
the $\mathscr{N}\times \mathscr{N}$ matrix $U_n$ and
the $\bar{\mathscr{N}}\times \bar{\mathscr{N}}$ matrix
$\bar{U}_n$ as
\begin{equation}\label{x9.12}
U_n:=I-\bar{\mathcal{E}}\,(\bar{A})^{-n-2}\,
\bar{\Upsilon}\,\mathcal{E}\,A^{2n+1}\,\Upsilon\, (\bar{A})^{-n-1},
\end{equation}
\begin{equation}\label{x9.13}
\bar{U}_n:=I-\mathcal{E}A^{n}
\Upsilon\bar{\mathcal{E}}(\bar{A})^{-2n-3}\,\bar{\Upsilon}\,A^{n+1},
\end{equation}
where we recall that the $\mathscr{N}\times \mathscr{N}$ matrix
$\mathcal{E}$ and
the $\bar{\mathscr{N}}\times \bar{\mathscr{N}}$
matrix $\bar{\mathcal{E}}$ are defined in \eqref{x9.2}.

In the next proposition we elaborate on the matrices
$\Upsilon$ and $\bar{\Upsilon}.$

\begin{proposition}
	\label{thm:theorem x9.3}
	Let  $(A, B, C)$ and $(\bar{A}, \bar{B}, \bar{C})$ be the matrix triplets appearing in \eqref{TA.1}--\eqref{A4} with $\left|z_j\right|<1$ for $1\le j \le N$ and $\left|\bar{z}_j\right|>1$ for $1\le j \le \bar{N}.$ Then, the matrices $\Upsilon$ and $\bar{\Upsilon}$ defined in \eqref{x9.6} are the unique solutions to the respective linear systems
	\begin{equation}\label{x9.7}
\Upsilon-A\Upsilon(\bar{A})^{-1}=B\,\bar{C},\quad
\bar{ \Upsilon}-(\bar{A})^{-1}\,\bar{\Upsilon}A=\bar{B}\,C.
	\end{equation}
\end{proposition}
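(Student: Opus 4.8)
The plan is to recognize \eqref{x9.7} as a pair of discrete Sylvester equations, to verify directly that the series in \eqref{x9.6} converge and solve them, and then to establish uniqueness by a short iteration argument. The strict separation of the spectra of $A$ and $\bar{A}$ afforded by the hypotheses $|z_j|<1$ and $|\bar{z}_j|>1$ is what makes everything work.

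First I would establish absolute convergence of the two series in \eqref{x9.6}. By \eqref{TA.1} and \eqref{A1} the matrix $A$ is block diagonal with Jordan blocks carrying the eigenvalues $z_j$ on their diagonals, so its spectral radius equals $\max_{1\le j\le N}|z_j|<1$; likewise $(\bar{A})^{-1}$ has spectral radius $\max_{1\le j\le \bar{N}}|\bar{z}_j|^{-1}<1$. Choosing operator norms on the two underlying spaces and constants $\rho_1,\rho_2\in(0,1)$ above the respective spectral radii, the standard power bound for a matrix of spectral radius below $1$ gives $\|A^{k}\|\le C_1\rho_1^{k}$ and $\|(\bar{A})^{-k}\|\le C_2\rho_2^{k}$ for all $k\ge 0$. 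Hence $\|A^{k}B\bar{C}(\bar{A})^{-k}\|\le C_1C_2\,(\rho_1\rho_2)^{k}\,\|B\bar{C}\|$, which is geometrically summable, so the series for $\Upsilon$ converges; the same estimate with $\|\bar{B}C\|$ in place of $\|B\bar{C}\|$ handles $\bar{\Upsilon}$.

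Next I would verify that these limits solve \eqref{x9.7}. Using the absolute convergence just obtained I may shift the summation index, so that
\[
A\,\Upsilon\,(\bar{A})^{-1}=\sum_{k=0}^{\infty}A^{k+1}B\bar{C}(\bar{A})^{-k-1}=\sum_{k=1}^{\infty}A^{k}B\bar{C}(\bar{A})^{-k}=\Upsilon-B\bar{C},
\]
which rearranges to the first equation in \eqref{x9.7}. The identical computation with the roles of $A,B,\bar C$ and $(\bar A)^{-1},\bar B,C$ interchanged yields $(\bar{A})^{-1}\bar{\Upsilon}A=\bar{\Upsilon}-\bar{B}C$, i.e.\ the second equation in \eqref{x9.7}.

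Finally, for uniqueness, suppose $X$ is the difference of two solutions of the first equation; then $X$ satisfies the homogeneous relation $X=A\,X\,(\bar{A})^{-1}$. Iterating this identity $k$ times gives $X=A^{k}X(\bar{A})^{-k}$ for every $k\ge 1$, whence $\|X\|\le C_1C_2\,(\rho_1\rho_2)^{k}\|X\|$ with $\rho_1\rho_2<1$; letting $k\to\infty$ forces $X=0$. The same iteration applied to $X=(\bar{A})^{-1}X A$ settles uniqueness for $\bar{\Upsilon}$. The only place demanding any care is the spectral-radius bookkeeping in the first step, where the strict separation $|z_j|<1<|\bar{z}_j|$ guarantees $\rho_1\rho_2<1$; once that is in hand, both the verification and the uniqueness steps are purely algebraic, and I expect no genuine obstacle beyond it.
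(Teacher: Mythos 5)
Your proof is correct, and your verification step (shifting the summation index so that the series telescopes) is the same computation the paper performs by premultiplying the first equality in \eqref{x9.6} by $A$, postmultiplying by $(\bar{A})^{-1}$, and subtracting. Where you genuinely differ is in how existence and uniqueness are settled: the paper disposes of both at once by citing Theorem 18.2 of Dym's book, which states that Sylvester-type systems such as \eqref{x9.7} are uniquely solvable if and only if no product of an eigenvalue of $A$ with an eigenvalue of $(\bar{A})^{-1}$ equals $1$, and then checks that condition from $|z_j|<1$ and $|\bar{z}_j|>1$. You instead make the argument self-contained: Gelfand-type power bounds $\|A^{k}\|\le C_1\rho_1^{k}$ and $\|(\bar{A})^{-k}\|\le C_2\rho_2^{k}$ with $\rho_1\rho_2<1$ give absolute convergence of the defining series (a point the paper leaves implicit), and iterating the homogeneous identity to $X=A^{k}X(\bar{A})^{-k}$ forces any difference of two solutions to vanish. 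Your route buys elementariness --- no appeal to external Sylvester theory, and an explicit convergence proof --- at the cost of needing the spectral-radius product strictly below $1$, which is exactly what the modulus separation $|z_j|<1<|\bar{z}_j|$ supplies; the paper's cited criterion is sharper in principle, since it only requires eigenvalue products to avoid $1$ (so it would also cover configurations where both spectra lie inside the unit disk), but that extra generality is not needed under the stated hypotheses.
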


\begin{proof}
By premultiplying the first equality in \eqref{x9.6} by $A$ and postmultiplying it by $(\bar{A})^{-1}$ and subtracting the resulting matrix equality from the original equality, we obtain the first linear system in \eqref{x9.7}. The second equality in \eqref{x9.7} is similarly obtained from the second equality in \eqref{x9.6}.
The existence and uniqueness of the solutions to the two matrix systems in \eqref{x9.7} can be analyzed as in Theorem~$18.2$ of \cite{Dym}. Given the matrix triplets $(A, B, C)$ and $(\bar{A}, \bar{B}, \bar{C}),$ we have the unique solutions $\Upsilon$ and $\bar{\Upsilon}$ to \eqref{x9.7} if and only if the product of an eigenvalue of  $A$ and an eigenvalue of  $(\bar{A})^{-1}$ is never equal to $1.$ The satisfaction of the latter condition directly follows from the fact that  $\left|z_j\right|<1$ for $1\le j \le N$ and $\left|\bar{z}_j\right|>1$ for $1\le j \le \bar{N}.$ Thus, the solutions $\Upsilon$ and $\bar{\Upsilon}$ to \eqref{x9.7} are unique and given by \eqref{x9.6}.
\end{proof}

Next, we present the explicit solution formula for the Marchenko system \eqref{Tx.1} corresponding to the Marchenko kernels given in \eqref{x9.4} for the potential
pair $(u,v).$

\begin{theorem}
\label{thm:theorem x9.4}
Using the time-evolved reflectionless Marchenko kernels $\Omega_{n+m}^{(u,v)}$ and $\bar{\Omega}_{n+m}^{(u,v)}$ that have the form as
in \eqref{x9.4} for the potential pair
$(u,v),$ the corresponding Marchenko system \eqref{Tx.1}, in the notation of \eqref{ta001}, has the solution given by
\begin{equation}\label{x9.8}
\begin{bmatrix}K_{nm}^{(u,v)}\end{bmatrix}_1
=-\bar{C}^{(u,v)}\,(\bar{A})^{-n}\,(U_n^{(u,v)})^{-1}\,\bar{\mathcal{E}}\,(\bar{A})^{-m-1}\,\bar{B},
\end{equation}
\begin{equation}\label{x9.9}
\begin{bmatrix}K_{nm}^{(u,v)}\end{bmatrix}_2
=C^{(u,v)}\,A^{n}\,(\bar{U}_n^{(u,v)})^{-1}\,\mathcal{E}\,A^{n}\,
\Upsilon^{(u,v)}\,\bar{\mathcal{E}}\,(\bar{A})^{-m-n-2}\,\bar{B},
\end{equation}
\begin{equation}\label{x9.10}
\begin{bmatrix}\bar{K}_{nm}^{(u,v)}\end{bmatrix}_1
=\bar{C}^{(u,v)}\,(\bar{A})^{-n}\,(U_n^{(u,v)})^{-1}\,
\bar{\mathcal{E}}\,(\bar{A})^{-n-2}\,
\bar{\Upsilon}^{(u,v)}\,\mathcal{E}\,(A)^{n+m}\,B,
\end{equation}
\begin{equation}\label{x9.11}
\begin{bmatrix}\bar{K}_{nm}^{(u,v)}\end{bmatrix}_2
=-C^{(u,v)}\,A^{n}\,(\bar{U}_n^{(u,v)})^{-1}\,\mathcal{E}\,A^{m-1}\,B,
\end{equation}
where $\Upsilon^{(u,v)}$ and $\bar{\Upsilon}^{(u,v)}$ are the matrices appearing in \eqref{x9.6} for the potential pair
$(u,v)$ and the matrices $U_n^{(u,v)}$ and $\bar{U}_n^{(u,v)}$ are defined as in
\eqref{x9.12} and \eqref{x9.13} for the potential pair
$(u,v).$
\end{theorem}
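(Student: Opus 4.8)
The plan is to solve the Marchenko system \eqref{Tx.1} for the potential pair $(u,v)$ by the classical degenerate-kernel method, exploiting the fact that the reflectionless kernels written in \eqref{x9.4} are separable in $n$ and $m$. Since the entire $m$-dependence of $\bar\Omega_{n+m}^{(u,v)}$ is carried by the column $\bar{\mathcal{E}}\,(\bar A)^{-m-1}\bar B$ and that of $\Omega_{n+m}^{(u,v)}$ by $\mathcal{E}A^{m-1}B$, I would first uncouple \eqref{Tx.1} into the scalar form \eqref{ta002}, \eqref{ta1001} (in the notation of \eqref{ta001}) and posit that the two genuinely coupled unknowns inherit this separated structure, namely
\[
\begin{bmatrix}K_{nm}^{(u,v)}\end{bmatrix}_1=\Phi_n\,\bar{\mathcal{E}}\,(\bar A)^{-m-1}\bar B,\qquad
\begin{bmatrix}\bar K_{nm}^{(u,v)}\end{bmatrix}_2=\Psi_n\,\mathcal{E}A^{m-1}B,
\]
where $\Phi_n$ is an $m$-independent $1\times\bar{\mathscr{N}}$ row vector and $\Psi_n$ an $m$-independent $1\times\mathscr{N}$ row vector to be determined.

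Next I would substitute this ansatz into the first line of \eqref{ta002}. The double sum over $j$ and $l$ then factors, and the two inner series $\ds\sum_{j=n+1}^{\infty}(\bar A)^{-j}\bar B\,C A^{j}$ and $\ds\sum_{l=n+1}^{\infty}A^{l-1}B\,\bar C(\bar A)^{-l}$ are summed in closed form as $(\bar A)^{-n-1}\bar\Upsilon\,A^{n+1}$ and $A^{n}\,\Upsilon\,(\bar A)^{-n-1}$, respectively, by the definitions \eqref{x9.6} of $\Upsilon$ and $\bar\Upsilon$ and the convergence guaranteed by Proposition~\ref{thm:theorem x9.3}. Collecting the factors and recognizing, via \eqref{x9.12}, the combination $\bar{\mathcal{E}}(\bar A)^{-n-2}\bar\Upsilon\,\mathcal{E}A^{2n+1}\Upsilon(\bar A)^{-n-1}=I-U_n^{(u,v)}$, the Marchenko equation collapses to $\left[\Phi_n\,U_n^{(u,v)}+\bar C^{(u,v)}(\bar A)^{-n}\right]\bar{\mathcal{E}}(\bar A)^{-m-1}\bar B=0$ for every $m>n$. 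Forcing the bracketed row vector to vanish gives $\Phi_n=-\bar C^{(u,v)}(\bar A)^{-n}\,(U_n^{(u,v)})^{-1}$, which is exactly \eqref{x9.8}; an entirely parallel computation on the second line of \eqref{ta002}, with $\bar U_n^{(u,v)}$ from \eqref{x9.13}, produces $\Psi_n$ and hence \eqref{x9.11}.

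The remaining two components are not independent unknowns: they are furnished explicitly by the auxiliary relations \eqref{ta1001} once the diagonal components are in hand. I would insert the formulas just obtained for $[K_{nl}^{(u,v)}]_1$ and $[\bar K_{nl}^{(u,v)}]_2$ into \eqref{ta1001}, perform the same geometric summation (again reducing the series to $\Upsilon$ and $\bar\Upsilon$), and read off \eqref{x9.10} and \eqref{x9.9}, respectively.

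The hard part will not be any single identity but the index bookkeeping in the double summations: one must verify that the shifted geometric series telescope precisely into the shifted products $(\bar A)^{-n-1}\bar\Upsilon A^{n+1}$ and $A^{n}\Upsilon(\bar A)^{-n-1}$, and that $\mathcal{E}$ and $\bar{\mathcal{E}}$ may be commuted freely past powers of $A$ and $\bar A$ (true because they are functions of $A$, $\bar A$ by \eqref{x9.2}). A genuinely substantive step is the passage from the functional identity, valid for all $m>n$, to the algebraic identity for $\Phi_n$ and $\Psi_n$: this requires that the columns $\{\bar{\mathcal{E}}(\bar A)^{-m-1}\bar B\}_{m>n}$ and $\{\mathcal{E}A^{m-1}B\}_{m>n}$ span $\C^{\bar{\mathscr{N}}}$ and $\C^{\mathscr{N}}$, equivalently that $(\bar A,\bar B)$ and $(A,B)$ are controllable pairs, which follows from the cyclic Jordan structure and distinct eigenvalues in \eqref{A1}--\eqref{A4} (or, elementarily, from the linear independence of the sequences $\{m^{a}z_j^{-m}\}$). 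Finally, the invertibility of $U_n^{(u,v)}$ and $\bar U_n^{(u,v)}$ needed to write the inverses in \eqref{x9.8}--\eqref{x9.11} is the algebraic shadow of the unique solvability of \eqref{Tx.1} and can be invoked from the general Marchenko solvability theory.
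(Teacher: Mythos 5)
Your proposal is correct and follows essentially the same route as the paper's proof: uncouple \eqref{Tx.1} into \eqref{ta002} and \eqref{ta1001}, observe that the separability of the kernels in \eqref{x9.4} forces $[K_{nm}^{(u,v)}]_1$ and $[\bar{K}_{nm}^{(u,v)}]_2$ into the separated form, sum the double series into $\Upsilon^{(u,v)}$ and $\bar{\Upsilon}^{(u,v)}$ via \eqref{x9.6} so that the equations collapse to $\Phi_n\,U_n^{(u,v)}=-\bar{C}^{(u,v)}(\bar{A})^{-n}$ and its analog, and then read off \eqref{x9.9} and \eqref{x9.10} from \eqref{ta1001}. Your additional remark on controllability of the pairs $(A,B)$ and $(\bar{A},\bar{B})$ --- needed to pass from an identity valid for all $m>n$ to the vanishing of the bracketed row vector --- is a point the paper handles only implicitly (via unique solvability of the Marchenko system), so it is a welcome refinement rather than a deviation.
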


\begin{proof}
For simplicity, we suppress the superscript $(u,v)$ in the proof.
We already know that the Marchenko system \eqref{Tx.1} is equivalent to the combination of the uncoupled system \eqref{ta002} and the system \eqref{ta1001}. To obtain \eqref{x9.8} we proceed as follows. Using \eqref{x9.4} as input to the first line of \eqref{ta002} we get
\begin{equation}\label{x9.14}
\begin{split}
&\begin{bmatrix} K_{nm}\end{bmatrix}_1
+\bar{C}\,(\bar{A})^{-n}\,\bar{\mathcal{E}}\,(\bar{A})^{-m-1}\,\bar{B}
\\&
\phantom{xxx}
-\ds\sum_{l=n+1}^{\infty}
\ds\sum_{j=n+1}^{\infty}\begin{bmatrix} K_{nj}\end{bmatrix}_1\,
C\,\mathcal{E}\,A^{j+l-1}\,
B\,\bar{C}\,(\bar{A})^{-l}\,
\bar{\mathcal{E}}\,(\bar{A})^{-m-1}\,\bar{B}=0,
\end{split}
\end{equation}
where we have used the fact that $\mathcal{E}$ and $A$ commute and $\bar{\mathcal{E}}$ and $\bar{A}$ commute. From \eqref{x9.14} we see that $[K_{nm}]_1$ has the form
\begin{equation}\label{x9.15}
\begin{bmatrix} K_{nm}\end{bmatrix}_1=H_n\,\bar{\mathcal{E}}(\bar{A})^{-m-1}\bar{B},
\end{equation}
where $H_n$ satisfies
\begin{equation}\label{x9.16}
H_n\left(I-\ds\sum_{l=n+1}^{\infty}
\ds\sum_{j=n+1}^{\infty}
\bar{\mathcal{E}}\,(\bar{A})^{-j-1}\,\bar{B}\,C\mathcal{E}\,A^{j+l-1}\,
B\,\bar{C}\,(\bar{A})^{-l}
\right)=-\bar{C}\,(\bar{A})^{-n}.
\end{equation}
Using \eqref{x9.6} on the left-hand side of \eqref{x9.16}, we write \eqref{x9.16} as
\begin{equation}\label{x9.17}
H_n\,U_n=-\bar{C}\,(\bar{A})^{-n},
\end{equation}
where $U_n$ is the matrix defined in \eqref{x9.12}. From \eqref{x9.17} we get
\begin{equation}\label{x9.18}
H_n=-\bar{C}\,(\bar{A})^{-n}\,(U_n)^{-1},
\end{equation}
and using \eqref{x9.18} in \eqref{x9.15} we obtain \eqref{x9.8}. The solution formula for $\begin{bmatrix}\bar{K}_{nm}\end{bmatrix}_2$ appearing in \eqref{x9.11} is obtained in a similar manner by using the second line of \eqref{ta002}. Then, using \eqref{x9.4} and \eqref{x9.11} in the second line of \eqref{ta1001} we obtain the formula for $[K_{nm}]_2$ given in \eqref{x9.9}. Similarly, by using \eqref{x9.4} and \eqref{x9.8} in the first line of \eqref{ta1001}, we obtain the formula for $[\bar{K}_{nm}]_1$ given in \eqref{x9.10}.
\end{proof}

We remark that the result
 of Theorem~\ref{thm:theorem x9.4} remains
 valid for the Marchenko system \eqref{Z.0} because of the resemblance between \eqref{Tx.1} and \eqref{Z.0}
and the fact that
\eqref{x9.1} and \eqref{x9.2} have the same appearance for
the potential pairs $(u,v)$ and $(q,r).$ So, without a proof we state
the result in the next corollary.

\begin{corollary}
\label{thm:theorem x9.5}
Using the time-evolved reflectionless Marchenko kernels $\Omega_{n+m}^{(q,r)}$ and $\bar{\Omega}_{n+m}^{(q,r)}$ that have the form as
in \eqref{x9.4} for the potential pair
$(q,r),$ the corresponding Marchenko system \eqref{Z.0}, in the notation of \eqref{ta001}, has the solution given by
\begin{equation}\label{x9.8aa}
\begin{bmatrix}M_{nm}^{(q,r)}\end{bmatrix}_1
=-\bar{C}^{(q,r)}\,(\bar{A})^{-n}\,(U_n^{(q,r)})^{-1}\,\bar{\mathcal{E}}\,(\bar{A})^{-m-1}\,\bar{B},
\end{equation}
\begin{equation}\label{x9.9aa}
\begin{bmatrix}M_{nm}^{(q,r)}\end{bmatrix}_2
=C^{(q,r)}\,A^{n}\,(\bar{U}_n^{(q,r)})^{-1}\,\mathcal{E}\,A^{n}\,
\Upsilon^{(q,r)}\,\bar{\mathcal{E}}\,(\bar{A})^{-n-m-2}\,\bar{B},
\end{equation}
\begin{equation}\label{x9.10aa}
\begin{bmatrix}\bar{M}_{nm}^{(q,r)}\end{bmatrix}_1
=\bar{C}^{(q,r)}\,(\bar{A})^{-n}\,(U_n^{(q,r)})^{-1}\,
\bar{\mathcal{E}}\,(\bar{A})^{-n-2}\,
\bar{\Upsilon}^{(q,r)}\,\mathcal{E}\,(A)^{n+m}\,B,
\end{equation}
\begin{equation}\label{x9.11aa}
\begin{bmatrix}\bar{M}_{nm}^{(q,r)}\end{bmatrix}_2
=-C^{(q,r)}\,A^{n}\,(\bar{U}_n^{(q,r)})^{-1}\,\mathcal{E}\,A^{m-1}\,B,
\end{equation}
where $\Upsilon^{(q,r)}$ and $\bar{\Upsilon}^{(q,r)}$ are the matrices appearing in \eqref{x9.6} for the potential pair
$(q,r)$ and the matrices $U_n^{(q,r)}$ and $\bar{U}_n^{(q,r)}$ are defined as
\eqref{x9.12} and \eqref{x9.13} for the potential pair
$(q,r).$
\end{corollary}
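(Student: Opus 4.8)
The plan is to transcribe, essentially verbatim, the derivation used in the proof of Theorem~\ref{thm:theorem x9.4}, exploiting the fact that the Marchenko system \eqref{Z.0} for the pair $(q,r)$ is structurally identical to the generic system \eqref{Tx.1} once the unknowns $\bar{K}_{nm}$ and $K_{nm}$ are renamed $\bar{M}_{nm}^{(q,r)}$ and $M_{nm}^{(q,r)}$ and the kernels are taken to be $\Omega_k^{(q,r)}$ and $\bar{\Omega}_k^{(q,r)}$. First I would invoke Corollary~\ref{thm:theorem 6.1a}, which shows that \eqref{Z.0} uncouples into the scalar system \eqref{6.1a} together with its companion formulas for $[\bar{M}_{nm}^{(q,r)}]_1$ and $[M_{nm}^{(q,r)}]_2$, in exact parallel with the uncoupling of \eqref{Tx.1} into \eqref{ta002} and \eqref{ta1001}.

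Next I would record that in the reflectionless, time-evolved situation the kernels $\Omega_{n+m}^{(q,r)}$ and $\bar{\Omega}_{n+m}^{(q,r)}$ have the separable form \eqref{x9.4}, built from the triplets $(A,B,C^{(q,r)})$ and $(\bar{A},\bar{B},\bar{C}^{(q,r)})$. The crucial point, already noted before the statement, is that the matrices $A,\bar{A},B,\bar{B}$ and the exponential factors $\mathcal{E},\bar{\mathcal{E}}$ of \eqref{x9.2} are common to the pairs $(u,v)$ and $(q,r)$: this follows from Corollary~\ref{thm:theorem x3.4} and the coincidence of the time evolutions \eqref{x.9a} and \eqref{x.9ab} established in Theorem~\ref{thm:theorem x9.2}. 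Hence the only data that change between the two problems are the row vectors $C^{(q,r)}$ and $\bar{C}^{(q,r)}$, while the associated auxiliary matrices $\Upsilon^{(q,r)},\bar{\Upsilon}^{(q,r)},U_n^{(q,r)},\bar{U}_n^{(q,r)}$ are defined through \eqref{x9.6}, \eqref{x9.12}, \eqref{x9.13} in exactly the same manner.

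With these parallels in place, I would run the argument of Theorem~\ref{thm:theorem x9.4} line by line. Substituting \eqref{x9.4} into the first line of \eqref{6.1a} forces $[M_{nm}^{(q,r)}]_1$ to have the separable form $H_n\,\bar{\mathcal{E}}(\bar{A})^{-m-1}\bar{B}$, the analog of \eqref{x9.15}; collapsing the resulting double sum by means of the defining relation \eqref{x9.6} for $\Upsilon^{(q,r)}$ and $\bar{\Upsilon}^{(q,r)}$ reduces the equation for $H_n$ to $H_n\,U_n^{(q,r)}=-\bar{C}^{(q,r)}(\bar{A})^{-n}$, the analog of \eqref{x9.17}, and inverting $U_n^{(q,r)}$ yields \eqref{x9.8aa}. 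The same manipulation applied to the second line of \eqref{6.1a} gives \eqref{x9.11aa}, and feeding \eqref{x9.8aa} and \eqref{x9.11aa} back into the companion formulas of Corollary~\ref{thm:theorem 6.1a} produces \eqref{x9.9aa} and \eqref{x9.10aa}.

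The only genuine point to verify, rather than a mere relabeling, is the invertibility of $U_n^{(q,r)}$ and $\bar{U}_n^{(q,r)}$, i.e. the unique solvability that legitimizes the passage from \eqref{x9.17} to \eqref{x9.18}. I expect this to be the main (and essentially the sole) obstacle, and it is handled exactly as in Proposition~\ref{thm:theorem x9.3}: since $\Upsilon^{(q,r)}$ and $\bar{\Upsilon}^{(q,r)}$ solve the Sylvester-type systems \eqref{x9.7}, whose unique solvability rests only on $|z_j|<1$ and $|\bar{z}_j|>1$ — conditions independent of the choice of the row vectors $C^{(q,r)}$ and $\bar{C}^{(q,r)}$ — the reflectionless Marchenko system is uniquely solvable for $(q,r)$ just as for $(u,v)$. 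This is precisely why the formulas \eqref{x9.8aa}--\eqref{x9.11aa} are literally \eqref{x9.8}--\eqref{x9.11} with the superscript $(u,v)$ replaced by $(q,r)$, so that the corollary follows with no new computation.
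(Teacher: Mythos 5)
Your proposal is correct and coincides with the paper's own treatment: the paper states this corollary without proof, remarking only that the result of Theorem~\ref{thm:theorem x9.4} carries over because of the resemblance between \eqref{Tx.1} and \eqref{Z.0} and the identical form of \eqref{x9.1}--\eqref{x9.2} for the pairs $(u,v)$ and $(q,r)$, which is exactly the line-by-line transcription you describe. Your extra remark on invertibility is harmless but note that Proposition~\ref{thm:theorem x9.3} establishes unique solvability of the Sylvester systems \eqref{x9.7} for $\Upsilon^{(q,r)}$ and $\bar{\Upsilon}^{(q,r)}$ rather than invertibility of $U_n^{(q,r)}$ and $\bar{U}_n^{(q,r)}$ itself, a point the paper likewise leaves implicit (it is subsumed in the unique solvability of \eqref{Z.0} asserted in Section~\ref{sec:section8}).
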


In the next proposition, when  \eqref{x3.1}--\eqref{x3.4} hold, we show how some relevant quantities for the potential pairs $(u,v)$ and $(p,s)$ are related to the corresponding quantities for the potential pair $(q,r).$
These results will enable us to obtain explicit solutions to
the nonlinear system \eqref{1.2a} by using the input data directly related to
the potential pair $(q,r).$

\begin{proposition}
	\label{thm:theorem9.5a}
	Assume that the potentials $q_n$ and $r_n$ appearing in \eqref{1.1}  are rapidly decaying and satisfy \eqref{1.1a}. Assume further that the potential pairs $(u,v)$ and $(p,s)$ are related to the potential pair $(q,r)$ as in \eqref{x3.1}--\eqref{x3.4}. Then, we have the following:

	\begin{enumerate}

		\item [\text{\rm(a)}] The matrices $\Upsilon$ and $\bar{ \Upsilon}$ appearing in \eqref{x9.6} corresponding to the potential pairs $(u,v)$ and $(p,s)$ are related to those with the potential pair $(q,r)$ as
		\begin{equation}\label{9.4}
		\begin{cases}
	   \Upsilon^{(u,v)}=\ds\frac{E_\infty^{(q,r)}}{D_\infty^{(q,r)}}
        \,\Upsilon^{(q,r)}\left[I-(\bar{A})^{-2}\right]^{-1},\\
        \noalign{\medskip}
  	   \bar{\Upsilon}^{(u,v)}=	\ds\frac{D_\infty^{(q,r)}}{E_\infty^{(q,r)}}
	   \,\bar{ \Upsilon}^{(q,r)}\,\left(I-A^{-2}\right),
	   \end{cases}
    	\end{equation}
	 \begin{equation}\label{9.5}
	\Upsilon^{(p,s)}=\ds\frac{E_\infty^{(q,r)}}{D_\infty^{(q,r)}}\,\Upsilon^{(q,r)},\quad
	\bar{\Upsilon}^{(p,s)}=	\ds\frac{D_\infty^{(q,r)}}{E_\infty^{(q,r)}}\,\bar{ \Upsilon}^{(q,r)},
	\end{equation}
	where $D_\infty^{(q,r)}$ and $E_\infty^{(q,r)}$ are the constants appearing in \eqref{x2.42} and \eqref{x2.43}, respectively.
		
		\item [\text{\rm(b)}] The matrices $U_n$ and $\bar{U}_n$ appearing in \eqref{x9.12} and \eqref{x9.13}, respectively,  corresponding to the potential pairs $(u,v)$ and $(p,s)$ are related to the quantities relevant to the potential pair $(q,r)$ as
		\begin{equation}\label{9.6}
\begin{split}
		&U_n^{(u,v)}
=I\\
&\phantom{x}+\bar{\mathcal{E}}\,(\bar{A})^{-n-2}\,
\bar{\Upsilon}^{(q,r)}\,\mathcal{E}\,A^{2n-1}\left(I-A^{2}\right)\Upsilon^{(q,r)} \, (\bar{A})^{-n-1}\left[I-(\bar{A})^{-2}\right]^{-1},
\end{split}
		\end{equation}
		\begin{equation}\label{9.7}
\begin{split}
		&
		\bar{U}_n^{(u,v)}
=I\\
&\phantom{xx}+\mathcal{E}\,A^{n}\,
		\Upsilon^{(q,r)}\,\bar{\mathcal{E}}\,(\bar{A})^{-2n-3}
\left[I-(\bar{A})^{-2}\right]^{-1}\bar{ \Upsilon}^{(q,r)}\,A^{n-1}\left(I-A^{2}\right),
\end{split}
		\end{equation}
		\begin{equation}\label{9.8}
		U_n^{(p,s)}=U_n^{(q,r)},\quad \bar{U}_n^{(p,s)}=\bar{U}_n^{(q,r)},
		\end{equation}
		where $\Upsilon^{(q,r)}$ and $\bar{\Upsilon}^{(q,r)}$ are the matrices appearing in \eqref{x9.6} for the potential pair $(q,r).$
	\end{enumerate}
\end{proposition}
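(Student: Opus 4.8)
The plan is to prove both parts by direct substitution into the series definition \eqref{x9.6} and the matrix definitions \eqref{x9.12}--\eqref{x9.13}, exploiting the fact that when \eqref{x3.1}--\eqref{x3.4} hold the matrices $A,$ $\bar{A},$ $B,$ $\bar{B}$ are common to all three potential pairs, and hence so are the matrices $\mathcal{E}$ and $\bar{\mathcal{E}}$ defined in \eqref{x9.2}. Consequently the only quantities that differ between the pairs are the row vectors $C$ and $\bar{C},$ which are related by \eqref{T.4} and \eqref{T.4aaa}; I would take these as the starting ingredients.

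For part (a), I would insert $\bar{C}^{(p,s)}$ and $C^{(p,s)}$ from \eqref{T.4} into \eqref{x9.6}. Since the prefactors $E_\infty^{(q,r)}/D_\infty^{(q,r)}$ and $D_\infty^{(q,r)}/E_\infty^{(q,r)}$ are scalar constants, they pull outside the summations and \eqref{9.5} follows at once. For \eqref{9.4} I would instead insert $\bar{C}^{(u,v)}$ and $C^{(u,v)}$ from \eqref{T.4aaa}; here the extra matrix factor $[I-(\bar{A})^{-2}]^{-1}$ (respectively $(I-A^{-2})$) is a function of $\bar{A}$ (respectively of $A$) and therefore commutes with every power $(\bar{A})^{-k}$ (respectively $A^{k}$) occurring in \eqref{x9.6}, so it can be pulled past the entire series to the right, yielding \eqref{9.4}. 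The only point of care is keeping the factor on the correct side, which is dictated by whether it sits to the right of $\bar{C}$ in $\Upsilon$ or to the right of $C$ in $\bar{\Upsilon}.$

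For part (b), I would substitute the results of (a) into \eqref{x9.12} and \eqref{x9.13}. The key observation for \eqref{9.8} is that in the product $\bar{\Upsilon}^{(p,s)}\mathcal{E}A^{2n+1}\Upsilon^{(p,s)}$ the two scalar prefactors from \eqref{9.5} are mutual reciprocals and cancel, so this product equals $\bar{\Upsilon}^{(q,r)}\mathcal{E}A^{2n+1}\Upsilon^{(q,r)}$ and $U_n^{(p,s)}=U_n^{(q,r)}$ follows immediately; the same cancellation in $\Upsilon^{(p,s)}\bar{\mathcal{E}}(\bar{A})^{-2n-3}\bar{\Upsilon}^{(p,s)}$ gives $\bar{U}_n^{(p,s)}=\bar{U}_n^{(q,r)}.$ For \eqref{9.6} and \eqref{9.7} the scalar constants again cancel, leaving the matrix factors $(I-A^{-2})$ and $[I-(\bar{A})^{-2}]^{-1}$ inherited from \eqref{9.4}; I would push these through using commutativity of functions of the same matrix and simplify with the elementary identities $(I-A^{-2})A^{2n+1}=-A^{2n-1}(I-A^{2})$ and $(I-A^{-2})A^{n+1}=-A^{n-1}(I-A^{2}),$ the sign change accounting for the passage from a subtracted term in \eqref{x9.12}--\eqref{x9.13} to an added term in \eqref{9.6}--\eqref{9.7}.

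The computation is entirely mechanical; the main obstacle is purely bookkeeping---tracking which matrix factor sits on which side of $\Upsilon^{(q,r)}$ and $\bar{\Upsilon}^{(q,r)},$ and getting the powers of $A$ and $\bar{A}$ together with the overall sign correct after the commutations. No analytic estimate or convergence argument beyond the well-definedness of the series \eqref{x9.6}---already guaranteed by Proposition~\ref{thm:theorem x9.3}---is needed.
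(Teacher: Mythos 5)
Your proposal is correct and follows essentially the same route as the paper: part (a) by inserting \eqref{T.4aaa} and \eqref{T.4} into the series \eqref{x9.6} (using that $A,$ $\bar{A},$ $B,$ $\bar{B},$ $\mathcal{E},$ $\bar{\mathcal{E}}$ are common to all three pairs and that functions of the same matrix commute), and part (b) by substituting \eqref{9.4} and \eqref{9.5} into \eqref{x9.12}--\eqref{x9.13}, with the scalar factors cancelling and the identities $(I-A^{-2})A^{2n+1}=-A^{2n-1}(I-A^{2})$ and $(I-A^{-2})A^{n+1}=-A^{n-1}(I-A^{2})$ producing the sign flip from the subtracted to the added term. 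The paper's proof states these substitutions without spelling out the commutation and sign bookkeeping, which your write-up supplies correctly.
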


\begin{proof}
Using \eqref{T.4aaa} in \eqref{x9.6} we get \eqref{9.4}. Similarly, using \eqref{T.4} in \eqref{x9.6} we have \eqref{9.5}. Next, using \eqref{9.4} in \eqref{x9.12} we obtain \eqref{9.6}. Then, using \eqref{9.4} in \eqref{x9.13} we get \eqref{9.7}. Finally, using \eqref{9.5} in \eqref{x9.12} and \eqref{x9.13} we obtain \eqref{9.8}.
\end{proof}

In the next theorem we present the explicit solution formulas for the alternate Marchenko equations \eqref{6.22d} and \eqref{6.23} corresponding to the time-evolved reflectionless scattering data expressed in terms of the matrix triplets $(A,B,C^{(u,v)}),$ $(\bar{A},\bar{B},\bar{C}^{(u,v)}),$ $(A,B,C^{(p,s)}),$ and $(\bar{A},\bar{B},\bar{C}^{(p,s)}).$

\begin{theorem}
	\label{thm:theorem x9.6}
	Using as input the time-evolved reflectionless Marchenko kernels $\Omega_{n+m}^{(u,v)}$ and $\bar{\Omega}_{n+m}^{(u,v)}$ that have the form as
	in \eqref{x9.4} for the potential pair $(u,v),$ the corresponding alternate Marchenko equation \eqref{6.22d} has the explicit solution given by
	\begin{equation}\label{9.41}
	\mathscr{K}_{nm}^{(u,v)}
	=-\bar{C}^{(u,v)}\,(\bar{A})^{-n-1}
\left[I-(\bar{A})^{-1}\right]^{-1}(V_n^{(u,v)})^{-1}\,\bar{\mathcal{E}}\,(\bar{A})^{-m}\,\bar{B},
	\end{equation}
	where the $\bar{\mathscr{N}}\times \bar{\mathscr{N}}$ matrix $V_n^{(u,v)}$ is defined as
	\begin{equation*}
\begin{split}
	&V_n^{(u,v)}:=I\\
&\phantom{xx}+\bar{\mathcal{E}}\,(\bar{A})^{-n-1}
\left[I-(\bar{A})^{-1}\right]	\bar{\Upsilon}^{(u,v)}\,\mathcal{E}\,A^{2n+1}\,(I-A)^{-1}\Upsilon^{(u,v)}\,(\bar{A})^{-n-1},
\end{split}
	\end{equation*}
	with $\bar{\mathscr{N}}$ being the positive integer defined in \eqref{x9.5} and with $\Upsilon^{(u,v)}$ and $\bar{\Upsilon}^{(u,v)}$ being the matrices appearing in \eqref{x9.6} for the potential pair
	$(u,v).$ Similarly, using as input the time-evolved reflectionless Marchenko kernels $\Omega_{n+m}^{(p,s)}$ and $\bar{\Omega}_{n+m}^{(p,s)}$ that have the form as
	in \eqref{x9.4} for the potential pair $(p,s),$ the corresponding alternate Marchenko equation \eqref{6.23} has the explicit solution given by
	\begin{equation}\label{9.43}
	\bar{\mathscr{K}}_{nm}^{(p,s)}
	=-C^{(p,s)}\,A^{n-1}(I-A)^{-1}\,(\bar{V}_n^{(p,s)})^{-1}\,\mathcal{E}\,A^{m}\,B,
	\end{equation}
		where $\bar{V}_n^{(p,s)}$ is the $\mathscr{N}\times \mathscr{N}$ matrix defined as
	\begin{equation*}
	\bar{V}_n^{(p,s)}:=I+\mathcal{E}\,A^{n+1}\,(I-A)
	\Upsilon^{(p,s)}\,\bar{\mathcal{E}}\,(\bar{A})^{-2n-3}
\left[I-(\bar{A})^{-1}\right]^{-1}\bar{\Upsilon}^{(p,s)}\,A^{n-1},
	\end{equation*}
with $\mathscr{N}$ being the positive integer defined in \eqref{x9.5} and with $\Upsilon^{(p,s)}$ and $\bar{\Upsilon}^{(p,s)}$ being the matrices appearing in \eqref{x9.6} for the potential pair $(p,s).$
\end{theorem}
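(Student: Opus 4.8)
The plan is to verify \eqref{9.41} and \eqref{9.43} by the same separable-kernel technique used in the proof of Theorem~\ref{thm:theorem x9.4}, now applied to the uncoupled alternate Marchenko equations rather than to \eqref{ta002}. I will treat \eqref{6.22d} in detail and obtain \eqref{9.43} by the symmetric argument. First I would put the reflectionless kernels into closed form: summing the geometric series in \eqref{6.24} with the kernels \eqref{x9.4}, and using $|z_j|<1$, $|\bar z_j|>1$ together with Proposition~\ref{thm:theorem x9.3}, I get $G_n^{(u,v)}=C\mathcal{E}A^{\,n-1}(I-A)^{-1}B$ and $\bar G_n^{(u,v)}=\bar C\bar{\mathcal{E}}(\bar A)^{-n-1}[I-(\bar A)^{-1}]^{-1}\bar B$ (suppressing the superscript $(u,v)$ on the triplets). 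The key simplifying observation is the telescoping identity $\bar G_{l+m}^{(u,v)}-\bar G_{l+m-1}^{(u,v)}=-\bar\Omega_{l+m-1}^{(u,v)}$, which collapses the inner difference in \eqref{6.22d} to the single separable factor $-\bar C\bar{\mathcal{E}}(\bar A)^{-l-m}\bar B$.

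Motivated by the common $m$-dependence of all three terms, I would posit the ansatz $\mathscr{K}_{nm}^{(u,v)}=F_n\,\bar{\mathcal{E}}(\bar A)^{-m}\bar B$ with $F_n$ an unknown $1\times\bar{\mathscr N}$ row vector independent of $m$. Substituting this into \eqref{6.22d}, the difference $\mathscr{K}_{n(j+1)}^{(u,v)}-\mathscr{K}_{nj}^{(u,v)}$ becomes $F_n\bar{\mathcal{E}}(\bar A)^{-j}[(\bar A)^{-1}-I]\bar B$, and every factor of $\mathcal{E}$ (resp. $\bar{\mathcal{E}}$) commutes with $A$ (resp. $\bar A$). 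The central step is then to evaluate the double sum over $j$ and $l$: writing $A^{j+l-1}=A^{j}A^{-1}A^{l}$, shifting both indices by $n+1$, and recognizing the series in \eqref{x9.6}, the $j$-sum yields $(\bar A)^{-n-1}\bar\Upsilon A^{n+1}$ and the $l$-sum yields $A^{n+1}\Upsilon(\bar A)^{-n-1}$. Collecting the intervening powers as $A^{n+1}A^{-1}(I-A)^{-1}A^{n+1}=A^{2n+1}(I-A)^{-1}$ and using that $[I-(\bar A)^{-1}]$ commutes with $(\bar A)^{-n-1}$, the whole double-sum contribution becomes $F_n\,(V_n^{(u,v)}-I)\,\bar{\mathcal{E}}(\bar A)^{-m}\bar B$ with $V_n^{(u,v)}$ exactly the matrix displayed in the theorem. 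Adding the three terms and factoring out $\bar{\mathcal{E}}(\bar A)^{-m}\bar B$ reduces \eqref{6.22d} to $F_n V_n^{(u,v)}=-\bar C(\bar A)^{-n-1}[I-(\bar A)^{-1}]^{-1}$; solving for $F_n$ then gives \eqref{9.41}.

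Two points need care. First, rather than arguing that $\bar{\mathcal{E}}(\bar A)^{-m}\bar B$ may be cancelled for every $m$, I would only require the bracketed coefficient row vector to vanish; this is sufficient for the ansatz to satisfy \eqref{6.22d} for all $m>n$, and since the alternate Marchenko equation is uniquely solvable (as noted in Section~\ref{sec:section8}, through its equivalence via \eqref{6.15} to the standard system of Theorem~\ref{thm:theorem x9.4}), the function so constructed is the solution. Second, I must check that $V_n^{(u,v)}$ is invertible, so that $F_n$ is well defined; I would establish this by relating $\det V_n^{(u,v)}$ to the nonvanishing Marchenko determinant $\det U_n^{(u,v)}$ underlying Theorem~\ref{thm:theorem x9.4}. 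I expect the bookkeeping of matrix orderings and the index-shifting that converts the double sum into $\Upsilon$ and $\bar\Upsilon$ to be the main obstacle; the remaining manipulations are routine.

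Finally, \eqref{9.43} follows by repeating the argument on \eqref{6.23}, with the roles of $A$ and $\bar A$, of $\Upsilon$ and $\bar\Upsilon$, and of $\mathcal{E}$ and $\bar{\mathcal{E}}$ interchanged: here the inhomogeneous term is $G_{n+m}^{(p,s)}$ with $m$-dependence $\mathcal{E}A^{m}B$, the telescoping reads $G_{l+m}^{(p,s)}-G_{l+m-1}^{(p,s)}=-\Omega_{l+m-1}^{(p,s)}$, and the ansatz is $\bar{\mathscr{K}}_{nm}^{(p,s)}=\bar F_n\,\mathcal{E}A^{m}B$, producing the matrix $\bar V_n^{(p,s)}$ and formula \eqref{9.43}.
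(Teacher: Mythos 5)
Your proposal is correct and follows essentially the same route as the paper: the paper likewise computes the closed forms $G_n^{(u,v)}=C^{(u,v)}\mathcal{E}A^{n-1}(I-A)^{-1}B$ and $\bar{G}_n^{(u,v)}=\bar{C}^{(u,v)}\bar{\mathcal{E}}(\bar{A})^{-n-1}[I-(\bar{A})^{-1}]^{-1}\bar{B}$ and then solves \eqref{6.22d} ``in a similar way as in the proof of Theorem~\ref{thm:theorem x9.4},'' i.e.\ by exactly the separable-kernel ansatz and $\Upsilon$, $\bar{\Upsilon}$ resummation you carry out explicitly. Your telescoping identity, the evaluation of the double sum yielding $F_n(V_n^{(u,v)}-I)\bar{\mathcal{E}}(\bar{A})^{-m}\bar{B}$, and the resulting equation $F_nV_n^{(u,v)}=-\bar{C}^{(u,v)}(\bar{A})^{-n-1}[I-(\bar{A})^{-1}]^{-1}$ are precisely the details the paper leaves implicit, and they check out.
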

\begin{proof}
Using \eqref{x9.3} with the potential pair $(u,v),$ from \eqref{6.24} we obtain
\begin{equation*}
G_n^{(u,v)}=\sum_{k=n}^{\infty}C^{(u,v)}\mathcal{E}A^{k-1}B,
\end{equation*}
which is equivalent to
\begin{equation}\label{9.45}
G_n^{(u,v)}=C^{(u,v)}\,\mathcal{E}A^{n-1}\,(I-A)^{-1}\,B.
\end{equation}
Let us remark that $(I-A)^{-1}$ is well defined because $|z_j|<1$ for $1\le j \le N,$ as seen from \eqref{A1} and Theorem~\ref{thm:theorem x4.1}. In the same way, from  \eqref{6.25} and \eqref{x9.3} we get
\begin{equation*}
\bar{G}_n^{(u,v)}=\sum_{k=n}^{\infty}\bar{C}^{(u,v)}\,\bar{\mathcal{E}}\,(\bar{A})^{-k-1}\,\bar{B},
\end{equation*}
or equivalently
\begin{equation}\label{9.46}
\bar{G}_n^{(u,v)}=\bar{C}^{(u,v)}\bar{\mathcal{E}}(\bar{A})^{-n-1}\left[I-(\bar{A})^{-1}\right]^{-1}\bar{B}.
\end{equation}
Using \eqref{9.45} and \eqref{9.46} in \eqref{6.22d} and by proceeding in a similar way as in the proof of Theorem~\ref{thm:theorem x9.4}, we obtain \eqref{9.41}. The explicit solution given in \eqref{9.43} is obtained similarly by using in \eqref{6.23} the analogs of \eqref{9.45} and \eqref{9.46} for the potential pair $(p,s).$
\end{proof}

Let us remark that, using \eqref{x9.8} and \eqref{x9.11} in \eqref{Tx.8}, respectively, we obtain an explicit solution formula for the nonlinear system \eqref{x.8}, where $u_n$ and $v_n$ are expressed explicitly in terms of the matrix triplets $(A,B,C^{(u,v)})$ and $(\bar{A},\bar{B},\bar{C}^{(u,v)})$ as
\begin{equation}\label{9.52aaa}
\begin{cases}
u_n=-\bar{C}^{(u,v)}(\bar{A})^{-n}\left(U_n^{(u,v)}\right)^{-1}\,\bar{\mathcal{E}}\,(\bar{A})^{-n-3}\bar{B},\\
 \noalign{\medskip}
v_n=-C^{(u,v)}\,A^{n}\left(\bar{U}_n^{(u,v)}\right)^{-1}\,\mathcal{E}\,A^{n+1}\,B,
\end{cases}
\end{equation}
where  $\mathcal{E}$ and $\bar{\mathcal{E}}$ are  the matrices defined in \eqref{x9.2}, and $U_n^{(u,v)}$ and $\bar{U}_n^{(u,v)}$ are  the matrices appearing in \eqref{x9.12} and \eqref{x9.13}, respectively, for the potential pair $(u,v).$

Let us finally discuss explicit solutions to the nonlinear system \eqref{1.2a}. We can express any time-evolved reflectionless scattering data for the potential pair $(q,r)$ in
terms of the Marchenko kernels $\Omega_{n+m}^{(q,r)}$ and $\bar{\Omega}_{n+m}^{(q,r)}$ appearing in \eqref{x9.3}. Hence, as seen from \eqref{x9.2} and \eqref{x9.3}, we can explicitly determine the corresponding solution to \eqref{1.2a}, where $q_n$ and $r_n$ are explicitly expressed in terms of the matrix triplets $(A,B,C^{(q,r)})$ and $(\bar{A},\bar{B},\bar{C}^{(q,r)}).$  In fact, using these two matrix triplets as input in any of the inversion methods outlined in Section~\ref{sec:section8}, we are able to obtain explicit solution formulas for \eqref{1.2a}.

For example,
using these two matrix triplets on the right-hand sides of \eqref{x9.8aa}--\eqref{x9.11aa}, we first obtain the four scalar quantities
$[M_{nm}^{(q,r)}]_1,$
$[M_{nm}^{(q,r)}]_2,$
$[\bar{M}_{nm}^{(q,r)}]_1,$
$[\bar{M}_{nm}^{(q,r)}]_2,$
and use them in \eqref{Z.7} and \eqref{Z.7a}
to obtain the solution $(q_n,r_n)$ to
\eqref{1.2a} explicitly displayed
in terms of the matrix triplets
$(A,B,C^{(q,r)})$ and $(\bar{A},\bar{B},\bar{C}^{(q,r)}).$

We can obtain another explicit solution formula for
\eqref{1.2a} by expressing the right-hand sides of
\eqref{6.21} and \eqref{6.22} in terms of
the matrix triplets
$(A,B,C^{(q,r)})$ and $(\bar{A},\bar{B},\bar{C}^{(q,r)}).$
That formula is given by
\begin{equation*}
q_n=\tau_n-\tau_{n+1},\quad r_n=\bar{\tau}_{n-1}-\bar{\tau}_n,
\end{equation*}
where we have defined
\begin{equation}\label{S.2aa}
\tau_n:=\ds\frac{D_\infty^{(q,r)}}{E_\infty^{(q,r)}}\,\mathscr{K}_{nn}^{(u,v)},
\quad \bar{\tau}_n:=\ds\frac{E_\infty^{(q,r)}}{D_\infty^{(q,r)}}\,\bar{\mathscr{K}}_{nn}^{(p,s)},
\end{equation}
and the right-hand sides in \eqref{S.2aa}
are expressed in terms of the quantities
relevant to the
potential pair $(q,r)$ with the help
of
\eqref{T.4},
\eqref{T.4aaa}, \eqref{9.4}, \eqref{9.5},
\eqref{9.41},
\eqref{9.43}.
We get
\begin{equation*}
\begin{cases}
\tau_n=-\bar{C}^{(q,r)}\left[I-(\bar{A})^{-2}\right]^{-1}
\left[I-(\bar{A})^{-1}\right]^{-1}
(\bar{A})^{-n-1}
\left(V_n^{(q,r)}\right)^{-1}\bar{\mathcal{E}}\,(\bar{A})^{-n}\,\bar{B},\\
\noalign{\medskip}
\bar{\tau}_n=-C^{(q,r)}\,A^{n-1}(I-A)^{-1}\left(\bar{V}_n^{(q,r)}\right)^{-1}\mathcal{E}\,A^n\,B,
\end{cases}
\end{equation*}
where we have defined
\begin{equation*}
\begin{cases}
V_n^{(q,r)}:=I+\bar{\mathcal{E}}\,(\bar{A})^{-n-1}\left[I-(\bar{A})^{-1}\right]
\bar{\Upsilon}^{(q,r)}\,
(I-A^{-2})\,\mathcal{E}\,A^{2n+1}\\
\noalign{\medskip}
\phantom{xxxxxxxxxxxxxxx}
\times (I-A)^{-1}\,\Upsilon^{(q,r)}\left[I-(\bar{A}^{-2})\right]^{-1}
(\bar{A})^{-n-1},
\\
\noalign{\medskip}
\bar{V}_n^{(q,r)}:=I+\mathcal{E}\,A^{n+1}\,(I-A)
\Upsilon^{(q,r)}\,\bar{\mathcal{E}}\,(\bar{A})^{-2n-3}
\left[I-(\bar{A})^{-1}\right]^{-1}\bar{\Upsilon}^{(q,r)}\,A^{n-1},
\end{cases}
\end{equation*}
with $\Upsilon^{(q,r)}$ and $\bar{\Upsilon}^{(q,r)}$ denoting the matrices in \eqref{x9.6} for $(q,r).$

We can also obtain an explicit solution formula for \eqref{1.2a} by
using $q_n$ and $r_n$ given in \eqref{6.6} and
\eqref{6.6kk}, respectively,
after expressing their right-hand sides
in terms of the quantities relevant
to the potential pair $(q,r),$ and
this can be achieved
with the help of \eqref{6.1bbbb},
\eqref{T.4aaa}, \eqref{x9.8}--\eqref{x9.11},
\eqref{9.6}, and \eqref{9.7}.
In a similar way,
it is possible to obtain
an explicit solution formula by
using $q_n$ and $r_n$ given in \eqref{6.6hh} and
\eqref{6.8}, respectively,
after expressing their right-hand sides
in terms of the quantities relevant
to the potential pair $(q,r).$ Still another
solution formula for
\eqref{1.2a} is obtained via
\eqref{x.603} and \eqref{x.604},
and this is done as follows.
We first express the right-hand side of
the first line of
\eqref{9.52aaa}
in terms of
the matrix triplet
for the potential pair $(q,r),$
and hence recover
$u_n$ in terms of the quantities relevant to
$(q,r).$ In a similar way, we use
the analog of the second line of
\eqref{9.52aaa} for the potential pair
$(p,s)$ and obtain
$s_n$
in terms of the quantities relevant to
$(q,r).$
Finally, we use the resulting expressions for
$u_n$ and $s_n$
on the right-hand sides of
\eqref{x.603} and \eqref{x.604}
and obtain a solution formula
for
$q_n$ and $r_n$ as a solution
to \eqref{1.2a}.

\end{document}